\definecolor{gainsboro}{rgb}{0.86, 0.86, 0.86}
\theoremstyle{acmplain}
\newtheorem*{theorem*}{Theorem}}
\newcommand{\namedref}[2]{\hyperref[#2]{#1~\ref*{#2}}}
\newcommand{\sectionref}[1]{\namedref{Section}{#1}}
\newcommand{\figureref}[1]{\namedref{Figure}{#1}}
\newcommand{\equationref}[1]{\hyperref[#1]{(\ref*{#1})}}
\newcommand{\theoremref}[1]{\hyperref[#1]{Theorem~\ref*{#1}}}
\newcommand{\lemmaref}[1]{\hyperref[#1]{Lemma~\ref*{#1}}}
\newcommand{\noteref}[1]{\hyperref[#1]{note~\ref*{#1}}}
\newcommand{\appendixref}[1]{\hyperref[#1]{Appendix~\ref*{#1}}}
\newcommand{\corollaryref}[1]{\hyperref[#1]{Corollary~\ref*{#1}}}
\newcommand{\algorithmref}[1]{\hyperref[#1]{Algorithm~\ref*{#1}}}
\newcommand{\definitionref}[1]{\hyperref[#1]{Definition~\ref*{#1}}}
\newcommand{\observationref}[1]{\hyperref[#1]{Observation~\ref*{#1}}}
\renewcommand{\Pr}{\mathbb P}
\newcommand{\E}{\mathbb{E}}
\newcommand{\norm}[1]{\left|\left|#1\right|\right|}
\newcommand{\eigen}{\varphi}
\renewcommand{\vec}[1]{\vec{#1}}
\DeclareMathOperator*{\argmin}{argmin}
\DeclareMathOperator*{\argmax}{argmax}
\DeclareMathOperator*{\Geom}{Geom}
\DeclareMathOperator\sink{sink}
\newcommand{\divinfty}[2]{D_\infty\left(#1 \parallel #2\right)}
\newcommand{\divkl}[2]{D_{KL}\left(#1 \parallel #2\right)}
\newcommand{\execs}{\mathcal E}
\newcommand{\sigmaalgebra}{\Sigma}
\newcommand{\sigmaevent}{\sigma}
\newcommand{\prior}{p}
\newcommand{\degset}[2]{\deg_{#1}\left(#2\right)}
\renewcommand{\vec}{\bm} 
\newcommand{\Bin}[2]{\operatorname{Bin} \left(#1,#2\right)}
\newcommand{\fadv}{\Psi^{(F)}}
\newcommand{\seq}[1]{S^{\left(#1\right)}}
\newcommand{\seqadv}[1]{S^{(#1)}_\textsc{adv}}
\newcommand{\tadv}{t_\textsc{adv}}
\newcommand{\uniform}[1]{\mathcal{U}\left(#1\right)}
\newcommand{\uniformsubset}[2]{\mathcal{U}_{#1}\left(#2\right)}
\newcommand{\smap}{\hat{s}_{MAP}}
\newcommand{\smle}{\hat{s}_{MLE}}
\newcommand{\event}{E}
\newcommand{\myparagraph}[1]{\smallskip \textbf{#1}}
\newcommand{\honest}{V\setminus F}
\newcommand{\starttime}{{t_\star}}
\newcommand{\passage}[3][ ]{\pi_C(#2\xRightarrow{#1} #3)}
\newcommand{\passagehat}[3][ ]{\pi_A(#2\xRightarrow{#1} #3)}
\newcommand{\passagedandelion}[3][ ]{\pi_D(#2\xRightarrow{#1} #3)}
\newcommand{\bandwidth}{\mathcal B}
\newcommand{\commset}{C}
\newcommand{\activeset}{X}
\newcommand{\anaconda}{Y}
\newcommand{\anonphase}{anonPhase}
\newcommand{\nice}{\mathfrak{I}}
\newcommand{\round}{S}
\newcommand{\fakecomm}{Z}
\newcommand{\ind}{\mathds{1}}
\title{On the Inherent Anonymity of Gossiping} 
\author{Rachid	Guerraoui}{Ecole Polytechnique Fédérale de Lausanne (EPFL), Switzerland}{rachid.guerraoui@epfl.ch}{https://orcid.org/0000-0002-4794-8902}{}
\author{Anne-Marie	Kermarrec}{Ecole Polytechnique Fédérale de Lausanne (EPFL), Switzerland}{anne-marie.kermarrec@epfl.ch}{https://orcid.org/0000-0001-8187-724X}{}
\author{Anastasiia	Kucherenko}{Ecole Polytechnique Fédérale de Lausanne (EPFL), Switzerland}{anastasiia.kucherenko@epfl.ch}{}{}
\author{Rafael	Pinot}{Ecole Polytechnique Fédérale de Lausanne (EPFL), Switzerland}{rafael.pinot@epfl.ch}{https://orcid.org/0000-0001-5372-8300}{}
\author{Sasha	Voitovych}{University of Toronto, Canada\footnote{Part of the work was done when Sasha Voitovych was an intern at EPFL as part of the EPFL Excellence Research Internship Program.}}{sasha.voitovych@mail.utoronto.ca}{https://orcid.org/0000-0003-1840-476X}{}
\authorrunning{R. Guerraoui, A.-M. Kermarrec, A. Kucherenko, R. Pinot, and S. Voitovych} 
\keywords{Gossip protocol, Source anonymity, Differential privacy} 
\begin{document}

\maketitle
\addtocontents{toc}{\protect\setcounter{tocdepth}{0}}

\begin{abstract}
  Detecting the \emph{source of a gossip} is a critical issue, related to identifying \emph{patient zero} in an epidemic, or the \emph{origin of a rumor} in a social network. Although it is widely acknowledged that random and local gossip communications make source identification difficult, there exists no general quantification of the level of anonymity provided to the source. This paper presents a principled method based on $\varepsilon$-\emph{differential privacy} to analyze the inherent source anonymity of gossiping for a large class of graphs. First, we quantify the fundamental limit of source anonymity any gossip protocol can guarantee in an arbitrary communication graph. In particular, our result indicates that when the graph has poor connectivity, no gossip protocol can guarantee any meaningful level of differential privacy. This prompted us to further analyze graphs with controlled connectivity. We prove on these graphs that a large class of gossip protocols, namely \emph{cobra walks}, offers tangible differential privacy guarantees to the source. In doing so, we introduce an original proof technique based on the reduction of  a gossip protocol to what we call a \emph{random walk with probabilistic die out}. This proof technique is of independent interest to the gossip community and readily extends to other protocols inherited from the security community, such as the \emph{Dandelion} protocol. Interestingly, our tight analysis precisely captures the \emph{trade-off} between dissemination time of a gossip protocol and its source anonymity.
\end{abstract}

\section{Introduction}
\label{sec:intro}

A gossip protocol (a.k.a., an epidemic protocol) is a distributed algorithm that disseminates information in a peer-to-peer system~\cite{pittel1987spreading,acan2017push,karp2000randomized,kowalski2013estimating,dwork1988consensus, 10.1145/2450142.2450147}. Gossip protocols have been long used to model the propagation of infectious diseases~\cite{hethcote2000mathematics, math_of_epidemics, using_features_predict_epidemics}, as well as rumors in social networks where users randomly exchange messages~\cite{doerr2011social,giakkoupis2015privacy}. It is commonly accepted that random and local communications between the users make source identification hard, and thus provide \emph{inherent} anonymity to the source of the gossip, i.e.,  anonymity that comes solely from the spreading dynamic without relying on any additional cryptographic primitives (as in~\cite{gossip_that_preserves_privacy_for_distr_computing}). Source anonymity in gossip protocols constitutes an active area of research. On the one hand, many works aim to establish \emph{privacy guarantees} for the source of the gossip by concealing it against an adversary, e.g., hiding the whistleblower on social media~\cite{Karol2017LocationHiding, 5961737, hiding_the_source_fanti,giakkoupis2015privacy, bojja_venkatakrishnan_dandelion_2017, irreg_source_obfuscation}. On the other hand, a large effort  is put towards identifying \emph{privacy limits} for the source of a gossip by designing adversarial strategies that accurately recover the source, e.g., ``patient zero’’ identification in epidemics~\cite{review_on_identifying_sources,degradation_anonymous_protocols,Pinto_2012,who_is_the_culprit, 9318999, liu2019information}.

Although a significant amount of research  is dedicated to the investigation of source anonymity, existing approaches (as summarized in~\cite{review_on_identifying_sources}) mainly focus on specific settings, such as locating the source of a gossip for a particular protocol, hiding it against a chosen adversarial strategy or examining the problem on a narrow family of graphs (trees, complete graphs, etc.). This prevents the results from being generalized, and it remains unclear how hard it is to recover the source of a gossip in general,  naturally raising the following question.

\begin{center}
\emph{What are the fundamental limits and guarantees on the inherent \\ source anonymity of gossiping in a general setting?}    
 \end{center}
 
We take an important step towards addressing this question by adapting the celebrated mathematical framework of $\varepsilon$-differential privacy ($\varepsilon$-DP) to our context~\cite{Dwork_2006,Dwork_2013}. Although the concept is a gold standard to measure privacy leakage from queries on tabular databases, it can be also adapted to different privacy semantics and threat models~\cite{DesfontainesPejo2020Survey}. In our context, we use $\varepsilon$-DP to measure the \emph{inherent} source anonymity of gossiping in general graphs. We adopt a widely used threat model where the adversary aims to guess the source by monitoring the communications of a set of \emph{curious} nodes in the graph~\cite{review_on_identifying_sources, Pinto_2012, Anonymous_communication_Crowds, degradation_anonymous_protocols, towards_measuring_anonymity, hiding_the_source_fanti}. Using differential privacy enables us to overcome the limitations of previous work, as DP guarantees hold regardless of the exact strategy of the attacking adversary. Additionally, DP guarantees can be combined with any prior knowledge the adversary has on the location of the source, making our results generalizable. Our contributions can be summarized as follows.

\subsection{Main results}

We propose a mathematical framework that adapts the concept of differential privacy to quantify source anonymity in any graph (Section~\ref{sec:privacy-gossip}). In doing so, we highlight the importance of considering two types of adversaries: 
the \emph{worst-case} and the \emph{average-case}. For the worst-case adversary, we focus on privacy guarantees that hold \emph{regardless} of the location of the curious nodes in the graph. In other words, these guarantees hold even if the adversary knows the communication graph in advance and chooses curious nodes strategically. For the average-case adversary, we focus on privacy guarantees that hold with high probability when curious nodes are chosen uniformly at random. Here, the adversary does not know the structure of the underlying communication graph in advance. Within our mathematical framework, we establish the following results for both adversarial cases.

\myparagraph{Privacy limits.} We first quantify a fundamental limit on the level of $\varepsilon$-DP any gossip protocol can provide on any graph topology (Section \ref{sec:impossibility}).  This result indicates that no gossip protocol can ensure any level of differential privacy on poorly connected graphs. This motivates us to consider graphs with controlled connectivity, namely expander graphs.
Expanders are an important family of strongly connected graphs that are commonly considered in the gossip protocols literature~\cite{boyd2006randomized,guo2014gossip,CobraExpanders2016}. On this class, we get the following results. 

\myparagraph{Privacy guarantees.} We prove that a large class of gossip protocols provides tangible differential privacy guarantees to the source (\sectionref{sec:guarantees}).  We first consider the parameterized family of gossip protocols known as $(1+\rho)$-cobra walks~\cite{Dutta2013Cobra,CobraExpanders2016,Mitzenmacher2018BetterBF,berenbrin_tight_2018}, which constitutes a natural generalization of a simple random walk. A cobra walk can be seen as an SIS (Susceptible-Infected-Susceptible) epidemic, a well-established model for analyzing the spread of epidemics and viruses in computer networks~\cite{hethcote2000mathematics, math_of_epidemics}. In particular, a $(1+\rho)$-cobra walk is an instance of an SIS epidemic scheme where active nodes constitute the infectious set, the duration of the infectious phase is equal to one and every infected node can only infect one or two of its neighbors at a time. In order to establish  differential privacy guarantees on this class of gossip protocols, we rely on the critical observation that the cobra walk has a quantifiable probability of mixing before hitting a curious node (see Section~\ref{sec:ProofTechniques} for more details on this observation). This characteristic is not unique to cobra walks, as it is shared by several other types of gossip protocols. Accordingly, we also show how to generalize our privacy guarantees to the $\rho$-Dandelion protocol~\cite{bojja_venkatakrishnan_dandelion_2017}, first introduced as an anonymous communication scheme for Blockchains.

\myparagraph{Dissemination time vs. privacy trade-off.}
As an important by-product of our analysis, we precisely capture the trade-off between dissemination time and privacy of a large class of gossip protocols operating on sufficiently dense graphs we call near-Ramanujan graphs. The privacy-latency tension has been  suggested several times in the literature~\cite{bojja_venkatakrishnan_dandelion_2017, who_started_this_rumor, DP_gossip_in_general_networks}. However, our work presents the first formal proof of this long-standing empirical observation. Specifically, we show that our privacy guarantees are tight for both $(1+\rho)$-cobra walks~\cite{CobraExpanders2016} and $\rho$-Dandelion protocol~\cite{bojja_venkatakrishnan_dandelion_2017}. Additionally, we give a tight analysis of the dissemination time as a function of parameter $\rho$. This analysis leads  us to conclude that increasing parameter $\rho$ results in a faster dissemination, but decreases privacy guarantees of the protocol, formally establishing the existence of a trade-off between privacy and dissemination time. As cobra walks are strongly related to SIS-epidemics, and Dandelion to anonymous protocols in peer-to-peer networks, our results are relevant for both epidemic and source anonymity communities.

\subsection{Technical challenges \& proof techniques} 
\label{sec:ProofTechniques}

A major technical contribution of our paper is the privacy guarantee of $(1+\rho)$-cobra walks in non-complete graphs. The derivation of this result has been  challenging to achieve for two reasons.  Firstly, our objective is to establish differential privacy guarantees in general graphs, which is a more complex scenario than that of complete graphs (as seen in~\cite{who_started_this_rumor}), where any communication between pairs of nodes is equiprobable, and symmetry arguments can be utilized. Yet, this technique is no longer applicable to our work. The fact that no symmetry assumptions about graph structure can be made calls for new more sophisticated proof techniques. Second, cobra walks are challenging to analyze directly. State-of-the-art approaches analyzing the dissemination time of cobra walks circumvent this issue by analyzing a dual process instead, called BIPS~\cite{CobraExpanders2016, cooper_improved_2017, berenbrin_tight_2018}. There, the main idea is to leverage the duality of BIPS and cobra walks with respect to hitting times~\cite{CobraExpanders2016}. While hitting times provide sufficient information for analyzing the dissemination time of a cobra walk, they cannot be used to evaluate differential privacy, as they do not provide sufficient information about the probability distribution of the dissemination process. We overcome this difficulty through a two-step proof technique, described below.

\myparagraph{Step I: Reduction to a random walk with probabilistic die out.} To establish $\varepsilon$-differential privacy, we essentially show that two executions of the same $(1+\rho)$-cobra walk that started from different sources are statistically indistinguishable to an adversary monitoring a set of curious nodes. In doing so, we design a novel proof technique that involves reducing the analysis of gossip dissemination in the presence of curious nodes, to a \emph{random walk with probabilistic die out}. Such a protocol behaves as a simple random walk on the communication graph $G$, but it is killed at each step (i) if it hits a curious node, or otherwise (ii) with probability $\rho$. We show that disclosing the death site of such a random walk to the adversary results in a bigger privacy loss than all the observations reported by the curious nodes during the gossip dissemination. Then, we can reduce the privacy analysis of cobra walks to the study of such a random walk with probabilistic die out.

\myparagraph{Step II: Analysis of a random walk 
with probabilistic die out.} To study a random walk with probabilistic die out, we characterize the spectral properties of the (scaled) adjacency matrix $\vec Q$ corresponding to the subgraph of $G$ induced by the non-curious nodes. In particular, we show that if curious nodes occupy a small part of every neighborhood in $G$, then the subgraph induced by non-curious nodes (i) is also an expander graph (\lemmaref{lemma:spectral-Q}) and (ii) has an almost-uniform first eigenvector (\lemmaref{lemma:Q-delocalization}). While (i) is a direct consequence of the Cauchy Interlacing Theorem, (ii) is more challenging to obtain. We need to bound $\vec Q$ from above and below by carefully designed matrices with an explicit first eigenvector (\lemmaref{lemma:spectral-Q-hat}). Combining (i) and (ii) allows us to precisely estimate the behavior of the random walk with probabilistic die out, which yields the desired differential privacy guarantees.

\myparagraph{Generality of the proof.} The reduction to a random walk with probabilistic die out is the most critical step of our proof. It is general and allows us to analyze several other protocols without having to modify the most technical part of the proof (Step II above). We demonstrate the generality of this technique by applying this reduction to the Dandelion protocol and obtain similar privacy guarantees to cobra walks (\lemmaref{lemma:dandelion-reduction} and \theoremref{thm:main-dandelion}).

\subsection{Related work}

\myparagraph{Inherent anonymity of gossiping.} 
To the best of our knowledge, only two previous works have attempted to quantify the inherent source anonymity of gossiping through differential privacy~\cite{who_started_this_rumor,DP_gossip_in_general_networks}. The former work~\cite{who_started_this_rumor} is the first to analyze source anonymity using differential privacy. It measures the guarantees of a class of gossip protocols with a muting parameter (which we call ``muting push'' protocols) and contrasts these guarantees with the dissemination time of these protocols on a complete graph. Both the threat model and the nature of the technical results in~\cite{who_started_this_rumor} heavily depend on the completeness of the graph. In such a context, the analysis is considerably simplified for two reasons. Firstly, the presence of symmetry allows for the curious node locations to be ignored, rendering the average-case and the worst-case adversaries equivalent. 
Secondly, in contrast to what would happen in non-complete graphs, since any node can communicate with any other node in each round, a single round of communication is sufficient to hide the identity of the source. However, when considering the spread of epidemics or the propagation of information in social networks, communication graphs are seldom complete~\cite{Melancon2006HowDenseAreGraphs}. Our work highlights that non-completeness of the graph potentially challenges the differential privacy guarantees that gossip protocols can achieve and also makes it important to distinguish between average and worst-case threat models. Therefore, our results constitute a step toward a finer-grained analysis of the anonymity of gossiping in general graphs. Note that our work can be seen as a strict generalization of the results of~\cite{who_started_this_rumor}, since, in addition to cobra walks and Dandelion, we also show that our proof techniques described in \sectionref{sec:ProofTechniques} apply to ``muting push'' protocols (see \appendixref{sec:muting-privacy}).

The second approach~\cite{DP_gossip_in_general_networks} addresses a problem that appears to be similar to ours at first glance, as it aims to quantify source anonymity in non-complete graphs. However, the authors consider a different threat model, where an adversary can witness any communication with some probability instead of only those passing through the curious nodes. Furthermore, the paper only gives negative results and does not provide any differential privacy guarantees, which is the most technically challenging part of our paper.  

\myparagraph{Dissemination time vs. privacy trade-off.}
Several previous works~\cite{venkitasubramaniam_anonymity_2008, beimel2003buses, das2018anonymity, snader2008tune} have suggested the existence of a tension between source anonymity (i.e., privacy) and latency of message propagation. Under the threat model we consider in this work (with curious nodes),~\cite{bojja_venkatakrishnan_dandelion_2017} conjectured that the Dandelion protocol would exhibit a trade-off between (their definition of) source anonymity and dissemination time. Later, works~\cite{who_started_this_rumor} and~\cite{DP_gossip_in_general_networks} provided more tangible evidence for the existence of a dissemination time vs. privacy trade-off when analyzing source anonymity through differential privacy. 
However, these works do not provide a tight analysis of the tension between dissemination time and privacy, hence making their observation incomplete. To the best of our knowledge, our work is the first to rigorously demonstrate the existence of a trade-off between the dissemination time of a gossip protocol and the privacy of its source thanks to the \emph{tightness} of our analysis.

\section{Preliminaries}
\label{sec:preliminaries}
For a vector $\vec x \in \mathbb R^m$, we denote by $x_i$ its $i$th coordinate, i.e., $\vec x = (x_1, x_2, \ldots,  x_m)^\top$. Similarly, for a matrix $\vec M \in \mathbb R^{m \times m'}$, we denote by $M_{ij}$ its entry for the $i$th row and $j$th column. Furthermore, for any symmetric matrix $\vec M \in \mathbb R^{m \times m}$, we denote by $\lambda_1(\vec M) \ge \lambda_2(\vec M) \ge \ldots \ge \lambda_m(\vec M)$ its eigenvalues. We use $\vec 1_m \in \mathbb R^m$ to denote an all-one vector, $\vec I_m \in \mathbb R^{m\times m}$ to denote the identity matrix, $\vec J_{m}\in \mathbb R^{m\times m}$ to denote an all-one square matrix, and $\vec O_{m \times m'} \in \mathbb R^{m\times m'}$ to denote an all-zero matrix. Finally, for any $\vec x \in \mathbb R^m$, we denote by $\norm{\vec x}_p \triangleq \left( \sum_{i=1}^{m} |x_i|^p \right)^{1/p}$ the $\ell_p$ norm of $\vec x$ for $p \in [1, \infty)$ and by $\norm{\vec x}_\infty \triangleq \max_{i \in m} |x_i|$ the $\ell_\infty$ norm of $\vec x$.

Throughout the paper, we use the \emph{maximum divergence} to measure similarities between probability distributions. We consider below a common measurable space $(\Omega, \Sigma)$ on which the probability measures are defined. Let $\mu$, $\nu$ be two probability measures over $\Sigma$.
The \emph{max divergence} between $\mu$ and $\nu$ is defined as\footnote{Note that we allow $\nu(\sigma) = 0$ in the definition. If $\nu(\sigma) = 0$ but $\mu(\sigma) > 0$ for some $\sigma \in \Sigma$, the max divergence is set to $\infty$ by convention.}
\[
\divinfty{\mu}{\nu} \ \triangleq \sup_{\sigma \in \Sigma,~ \mu(\sigma) > 0} \ln\frac{\mu(\sigma)}{\nu(\sigma)} .
\]
Furthermore, for two random variables $X,Y$ with laws $\mu$ and $\nu$ respectively, we use the notation $\divinfty{X}{Y}$ to denote $\divinfty{\mu}{\nu}$.

\subsection{Graph theoretical terminology}
\label{sec:pre-graphs}
Consider an undirected connected graph $G = (V,E)$, where $V$ is the set of nodes and $E$ is the set of edges. $G$ cannot have self-loops or multiple edges.  For any $v \in V$, we denote by $N(v)$ the set containing the neighbours of $v$ in $G$ and by $\deg(v)$ the number of edges incident to $v$. Furthermore, $G$ is said to be a \emph{regular graph}, if there exists $d(G)$ such that $\deg(v) = d(G)$ for every $v \in V$; $d(G)$ is called the degree of the graph. Additionally, for a set $U \subseteq V$ and $v \in V$, we denote by $\degset{U}{v}$ the number of neighbours of $v$ contained in $U$, i.e., $\degset{U}{v} = |N(v) \cap U|$. Below, we introduce some additional graph terminology. 

\begin{definition}[Vertex cut \& connectivity]
    A \emph{vertex cut} of $G$ is a subset of vertices $K \subseteq V$ whose removal disconnects $G$ or leaves just one vertex. A minimum vertex cut of $G$ is a vertex cut of the smallest size. The size of a minimum vertex cut for $G$, denoted $\kappa(G)$, is called the vertex connectivity of $G$.
\end{definition} 

Consider an undirected connected graph $G = (V,E)$ of size $n$ where $V$ is an ordered set of nodes. We denote by $\vec A$ the adjacency matrix of $G$, i.e., $A_{vu} = 1$ if $\{v,u\} \in E$ and $A_{vu} = 0$ otherwise. We also denote by $\Hat{\vec A} = \vec D^{-1/2}\vec A\vec D^{-1/2}$ the normalized adjacency matrix of $G$, where $\vec D$ is the diagonal degree matrix, i.e., $D_{vu} = \deg(v)$ if $v=u$ and $0$ otherwise. Since $\Hat{\vec A}$ is a symmetric and normalized matrix, the eigenvalues of $\Hat{\vec A}$ are real valued and $\lambda_1(\Hat{\vec A}) = 1$. Using this terminology, the \emph{spectral expansion} of $G$ is defined as
\begin{equation}
    \lambda(G) \triangleq \max\{|\lambda_2(\Hat{\vec A})|, |\lambda_n(\Hat{\vec A})|\}.
\end{equation}

\begin{definition}[Expander graph]
\label{def:expander-graphs}
Consider an undirected regular graph $G$. If $d(G) = d$ and $\lambda(G) \le \lambda$, then $G$ is said to be a $(d,\lambda)$-expander graph.
\end{definition}

\subsection{Gossip protocols}
\label{section:gossip}
Consider an undirected connected communication graph $G=(V,E)$ where two nodes $u,v \in V$ can directly communicate if and only if $\{u,v\} \in E$. One node $s \in V$, called the \emph{source}, holds a unique gossip $g$ to be propagated throughout the graph. In this context, \emph{a gossip protocol} is a predefined set of rules that orchestrates the behavior of the nodes with regard to the propagation of $g$. Essentially, the goal of a protocol is that with probability $1$ every node in $G$ eventually receives $g$. We assume discrete time steps and synchronous communication, i.e., the executions proceed in rounds of one time step.\footnote{Although, for clarity, we focus on a synchronous communication, our analysis of privacy guarantees in \sectionref{sec:guarantees} readily extends to an asynchronous setting.} While every node in $G$ has access to the global clock, we assume that the execution of the protocol starts at a time $\starttime\in\mathbb Z$, which is \emph{only} known to the source $s$. 

\myparagraph{Execution of a gossip protocol.} At any point
of the execution of the protocol, a node $u\in V$ can either be active or non-active. Only active nodes are allowed to send messages during the round. A gossip protocol always starts with the source $s$ being the only active node, and at every given round $t+1$ active nodes are the nodes that received the gossip at round $t$. We will use $\activeset_t \subseteq V$ to denote the set of active nodes at the beginning of round $t\ge\starttime$ and set $\activeset_{\starttime} = \{s\}$ by convention. 
Denoting by $(u \to v)$ a communication between nodes $u$ and $v$, we define $\mathcal C$ to be the set of all possible communications in $G$, i.e., $\mathcal C = \left \{(u\to v): \{u,v\} \in E\right\} \cup \left \{( u\to u): u \in V\right\}$. Note that we allow an active node $u$ to send a fictitious message to itself to stay active in the next communication round. Then, the $t^{\text{th}}$ round of an execution for a given protocol $\mathcal{P}$ can be described by a pair $(\activeset_t, \commset_t)$, where $\activeset_t \subseteq V$ is a set of active nodes, and $\commset_t$ is the (multi)set of communications of $\mathcal C$ which happened at round $t$. We denote by $S$ the random variable characterizing the \emph{execution} of the protocol. Naturally, an \emph{execution} is described by a sequence of rounds, i.e., $S = \{(\activeset_t, \commset_t)\}_{t \ge \starttime}$. We define \emph{expected dissemination time} of the protocol as the expected number of rounds for all nodes to receive the gossip during an execution. Finally, we denote $\execs$ the set of all possible executions.

\myparagraph{Cobra and random walk. } Coalescing-branching random walk protocol (a.k.a., cobra walk)~\cite{Dutta2013Cobra, CobraExpanders2016, Mitzenmacher2018BetterBF,berenbrin_tight_2018} is a natural generalization of a simple random walk that is notably useful to model and understand Susceptible-Infected-Susceptible (SIS) epidemic scheme~\cite{hethcote2000mathematics, math_of_epidemics}. We consider a $(1 + \rho)$-cobra walk as studied in~\cite{CobraExpanders2016} with $\rho \in [0,1]$\footnote{Some prior works also study $k$-cobra walks with branching parameter $k \ge 3$ ~\cite{Dutta2013Cobra}. We do not consider this class, since our negative result for a $2$-cobra walk (\theoremref{thm:cobra-privacy-lower}) implies that a $k$-cobra walk for any $k \ge 3$ does not satisfy a reasonable level of differential privacy.}. This is a gossip protocol where at every round $t \geq \starttime$, each node $u \in \activeset_t$ samples a token from a Bernoulli distribution with parameter $\rho$. If the token equals zero, $u$ samples uniformly at random a node $v$ from its neighbors $N(u)$ and communicates the gossip to it, i.e., $(u\rightarrow v)$ is added to $C_t$. If the token equals one, the protocol \emph{branches}. Specifically, $u$ independently samples two nodes $v_1$ and $v_2$ at random (with replacement) from its neighbors and communicates the gossip to both of them, i.e., $(u\rightarrow v_1),$ and $(u\rightarrow v_2)$ are added to $C_t$. At the end of the round, each node $u \in \activeset_t$ deactivates.  Note that, when $\rho = 0$, this protocol degenerates into a simple random walk on the graph; hence it has a natural connection with this random process.

\myparagraph{Dandelion protocol.} Dandelion is a gossip protocol designed to enhance source anonymity in the 
Bitcoin peer-to-peer network. Since it was introduced in~\cite{bojja_venkatakrishnan_dandelion_2017}, it has received a lot of attention from the cryptocurrency community. Dandelion consists of two phases: (i) the anonymity phase, and (ii) the spreading phase. The protocol is parameterized by $\rho \in [0,1)$, the probability of transitioning from the anonymity phase to the spreading phase. Specifically, the phase of the protocol is characterized by a token $\anonphase \in \{0,1\}$ held by a global oracle and initially equal to $0$.
At the beginning of each round of the Dandelion execution, if $\anonphase = 1$ the global oracle sets $\anonphase = 0$ with probability $\rho$ and keeps $\anonphase = 1$ with probability $1-\rho$. Once $\anonphase = 0$, the global oracle stops updating the token. Based on this global token, at each round, active nodes behave as follows. If the $\anonphase = 1$, the execution is in the anonymity phase and an active node $u$ samples a node $v$ uniformly at random from its neighborhood $N(u)$ and communicates the gossip to it, i.e., $(u\rightarrow v)$ is added to $C_t$. Afterwards, node $u$ deactivates, i.e., in the anonymity phase only one node is active in each round. If the $\anonphase = 0$, the execution is in the spreading phase. Then the gossip is broadcast, i.e., each node $u \in \activeset_{t}$ communicates the gossip to all of its neighbors and for $\forall v \in N(u)$, $(u\rightarrow v)$ is added to $C_t$.

\section{Mathematical framework for source anonymity in general graphs}
\label{sec:privacy-gossip}
Given a source and a gossip protocol, we fix the probability space $(\execs, \sigmaalgebra, \Pr)$, where $\sigmaalgebra$ is the standard cylindrical $\sigma$-algebra on $\execs$ (as defined in Appendix A.1 of~\cite{Yu2014AUV}) and $\Pr$ is a probability measure characterizing the executions of the protocol. In the remaining, to avoid measurability issues, we  only refer to subsets of $\execs$ from $\sigmaalgebra$.

\subsection{Measuring source anonymity with differential privacy}

\label{sec:source-anonymity-with-DP}

We now describe the mathematical framework we use to quantify source anonymity of gossiping. We consider a threat model where an external adversary has access to a subset $F \subset V$ of size $f < n -1$ of \emph{curious} nodes. Curious nodes in $F$ execute the protocol correctly, but report their communications to the adversary. The adversary aims to identify the source of the gossip using this information. We distinguish two types of adversaries, namely worst-case and average-case, depending on the auxiliary information they have on the graph.

\myparagraph{Threat models: worst-case and average-case adversaries.} On the one hand, a \emph{worst-case} adversary is aware of the structure of the graph $G$ and may choose the set of curious nodes to its benefit. On the other hand, the \emph{average-case} adversary is not aware of the topology of $G$ before the start of the dissemination, hence the set of curious nodes is chosen uniformly at random among all subsets of $V$ of size $f$. We assume that the messages shared in the network are unsigned and are passed unencrypted. Also, the contents of transmitted messages (containing the gossip) do not help to identify the source of the gossip. In other words, adversaries can only use the information they have on the dissemination of the gossip through the graph to locate the source. We also assume that the adversary does not know the exact starting time $\starttime \in \mathbb Z$ of the dissemination. 
To formalize the observation received by the external adversary given a set of curious nodes $F$, we introduce a function $\fadv$ that takes as input communications $\commset$ from a single round and outputs only the communications of $\commset$ visible to the adversary. Note that a communication $(v\to u)$ is visible to the adversary if and only if either $v$ or $u$ belongs to $F$. 
Consider an execution $S = \{(\activeset_t, \commset_t)\}_{t\ge \starttime}$ of a gossip protocol, and denote by $\tadv$ the first round in which one of the curious nodes received the gossip. Then we denote by $S_\textsc{adv} = \{\fadv(\commset_t)\}_{t \ge \tadv}$ the random variable characterizing the observation of the adversary for the whole execution. Note that the adversary does not know $\starttime$, hence it cannot estimate how much time passed between $\starttime$ and $\tadv$.

\begin{remark}
    For Dandelion, the adversary actually also has access to the value of $\anonphase$ in round $t$, i.e., we have $S_\textsc{adv} = \{\fadv(\commset_t), \anonphase_t\}_{t \ge \tadv}$. We omit this detail from the main part of the paper for simplicity of presentation, but it does not challenge our results on privacy guarantees. See \appendixref{sec:dandelion-privacy} for more details.
\end{remark}

\myparagraph{Measuring source anonymity.} We formalize source anonymity below by adapting the well-established definition of differential privacy. In the remaining of the paper, for a random variable $A$, we will write $A^{(s)}$ to denote this random variable conditioned on the node $s\in \honest$ being the source. In our setting, we say that a gossip protocol satisfies differential privacy if for any $u,v \in V$ the random sequences $\seqadv{v}$ and $\seqadv{u}$ are statistically indistinguishable. More formally, we define differential privacy as follows. 

\begin{definition}[Differential privacy]
    Consider an undirected graph $G=(V,E)$ and a set of curious nodes $F \subset V$. Then, a gossip protocol satisfies $\varepsilon$-\emph{differential privacy} ($\varepsilon$-DP) for the set $F$ if, for any two nodes $v,u\in \honest$, the following holds true
\[
\divinfty{\seqadv{v}}{\seqadv{u}} \le \varepsilon.
\]
\end{definition}

When establishing differential privacy guarantees against a \emph{worst-case adversary}, we aim to find a value $\varepsilon$ which only depends on the number of curious nodes $f$, and is \emph{independent} of the identity of the nodes in $F$. Accordingly, we say that a gossip protocol satisfies \emph{$\varepsilon$-DP against a worst-case adversary} if it satisfies $\varepsilon$-DP for any set $F \subset V$ such that $|F|=f$. 
    
When establishing differential privacy against an \emph{average-case adversary}, we aim to find a value of $\varepsilon$ for which the protocol satisfies $\varepsilon$-DP \emph{with high probability}\footnote{An event is said to hold with high probability on graph $G$ of size $n$, if it holds with probability $\ge 1 - 1/n$.} when choosing the $f$ curious nodes uniformly at random from $V$. Formally, let $\uniformsubset{f}{V}$ be the uniform distribution over all subsets of $V$ of size $f$, a gossip protocol satisfies \emph{$\varepsilon$-DP against an average-case adversary} if
\begin{equation}
\label{eq:average-guarantee-probability}  
  \Pr_{F \sim \uniformsubset{f}{V}} \left[\max_{v,u\in \honest} \divinfty{\seqadv{v}}{\seqadv{u}} \le \varepsilon \right] \ge 1 - \frac{1}{n}.
\end{equation}

\subsection{Semantic of source anonymity}
\label{subsection:DPsemantics}

Differential privacy is considered the gold standard definition of privacy, since $\varepsilon$-DP guarantees hold \emph{regardless} of the strategy of the adversary and any prior knowledge it may have on the location of the source. Yet, the values of $\varepsilon$ are notoriously hard to interpret~\cite{lee2011much, 6957125}. To better understand the semantic of our definition of differential privacy, we consider below two simple examples of adversarial strategies: maximum a posteriori and maximum likelihood estimations. For these strategies, we derive bounds on the probability of an adversary successfully guessing the source in an effort to give a reader an intuition on the meaning of the parameter $\varepsilon$. The proofs are deferred to \appendixref{appendix:semantic}.

\myparagraph{Maximum a posteriori strategy.}
Maximum a posteriori (MAP) strategy can be described as follows. Suppose an adversary has an a priori distribution $\prior$ that assigns to every node in $V\backslash F$ a probability of being the source of the gossip. Intuitively, $\prior$ corresponds to the set of beliefs the adversary has on the origin of the gossip before observing the dissemination. This prior might reflect information acquired from any auxiliary authority or some expert knowledge on the nature of the protocol. Suppose the adversary observes an event $\sigmaevent$. Then, a MAP-based adversary ``guesses'' which node is the most likely to be the source, assuming event $\sigma$ occurred and assuming the source has been sampled from the prior distribution $\prior$. Such guess is given by
\begin{equation}
    \label{eq:map}
    \smap = \argmax_{v\in \honest}\Pr_{s \sim \prior}\left[v = s \mid \seqadv{s} \in \sigmaevent \right] =\argmax_{v\in \honest}\Pr\left[\seqadv{v} \in \sigmaevent \right] \prior(v). 
\end{equation}
Using $\varepsilon$-DP, we can upper bound the success probability of such a guess. Suppose the protocol satisfies $\varepsilon$-DP, then the probability of correctly identifying a source $s\sim p$ conditioned on $\sigma$ happening is upper bounded as follows
\begin{equation}
    \label{eq:map-success}
    \Pr_{s \sim \prior}\left[\smap = s \mid \seqadv{s} \in \sigmaevent \right] \le \exp(\varepsilon) p\left(\smap\right). 
\end{equation}
Such an upper bound has a simple interpretation. Note that $p(\smap)$ characterizes the maximum probability of a successfully guessing $\smap$ based solely on adversary's prior knowledge. Then, the upper bound above states that the probability of a successful guess after observing the dissemination is amplified by a factor of at most $\exp(\varepsilon)$ compared to success probability of a guess based on a priori knowledge only.

\myparagraph{Maximum likelihood strategy.} Maximum likelihood estimation (MLE) occupies a prominent place~\cite{hiding_the_source_fanti, who_is_the_culprit, 10.1145/1811099.181106, Pinto_2012} in the literature, both for designing source location attacks, and for defending against adversaries that follow an MLE strategy. This method is a special instance of MAP estimator in (\ref{eq:map}) with a uniform prior distribution $\prior = \uniform{\honest}$ on the source. 
We can show that, if the protocol satisfies $\varepsilon$-DP, such guess has a bounded success probability.
\begin{equation}
\label{eq:mle-success}
\Pr_{s \sim \uniform{\honest}}\left[\smle = s \mid \seqadv{s} \in \sigmaevent \right] \le \frac{\exp(\varepsilon)}{n-f}.
\end{equation}

\section{Fundamental limits of source anonymity: lower bound on $\mathbf{\varepsilon}$}
\label{sec:impossibility}
We start by studying the fundamental limits of differential privacy in general graphs. Specifically, we aim to show that vertex connectivity constitutes a hard threshold on the level of source anonymity gossiping can provide. First, we present a warm-up example indicating that in a poorly connected graph, no gossip protocol can achieve any meaningful level of differential privacy against a worst-case adversary. We then validate this intuition by devising a universal lower bound on $\varepsilon$ that applies for any gossip protocol and any undirected connected graph. Complete proofs related to this section can be found in \appendixref{appendix:impossibility}.

\subsection{Warm-up}
\label{subsect:warmup}

Consider a non-complete graph $G=(V,E)$ and $K \subset V$, a vertex cut of $G$. Then, by definition, deleting $K$ from $G$ partitions the graph into two disconnected subgraphs. When $f \geq \left| K \right|$, a worst-case adversary can take $F$ such that $K\subseteq F$. Then, the curious nodes can witness all the communications that pass from one subgraph to the other. Intuitively, this means that any two nodes that are not in the same subgraph are easily distinguishable by the adversary. Hence, differential privacy cannot be satisfied. This indicates that the level of differential privacy any gossip protocol can provide in a general graph fundamentally depends on the connectivity of this graph. 
To validate this first observation and determine the fundamental limits of gossiping in terms of source anonymity, we now determine a lower bound on $\varepsilon$.

\subsection{Universal lower bound on $\varepsilon$}
\label{section:universal-LB}

We present, in \theoremref{thm:universal-impossibility}, a universal lower bound on $\varepsilon$ which holds for any gossip protocol, on any connected graph and for both the worst-case and the average-case adversaries.

\begin{restatable}{theorem}{UniversalImprossibility}
    \label{thm:universal-impossibility}
    Consider an undirected connected graph $G = (V,E)$ of size $n$, a number of curious nodes $f > 1$, and an arbitrary gossip protocol $\mathcal{P}$. If $\mathcal{P}$ satisfies $\varepsilon$-DP against an average-case or a worst-case adversary, then 
    \[\varepsilon \ge \ln(f - 1).\]
    Moreover, if $\kappa(G) \le f$, then $\mathcal{P}$ cannot satisfy $\varepsilon$-DP with $\varepsilon < \infty$ against a worst-case adversary.
\end{restatable}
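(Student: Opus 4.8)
The plan is to prove the two claims separately, starting with the easier second claim and then turning to the universal bound $\varepsilon \ge \ln(f-1)$.

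\textbf{The worst-case impossibility when $\kappa(G) \le f$.} Here I would formalize the warm-up argument. Let $K$ be a minimum vertex cut, so $|K| = \kappa(G) \le f$, and let $A, B$ be (unions of) connected components of $G \setminus K$, which are non-empty and vertex-disjoint. A worst-case adversary picks $F \supseteq K$ with $|F| = f$ (possible since $\kappa(G) \le f < n-1$; one should check there are enough non-$K$ vertices to pad $F$ to size $f$ while still leaving two honest nodes, which holds because the two sides together with $K$ exhaust $V$ and $f < n-1$). Now pick $v \in A \setminus F$ and $u \in B \setminus F$ (again one needs to argue these exist; if one side is entirely inside $F$ one can swap which padding vertices go into $F$). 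The key observation: since every path from $A$ to $B$ passes through $K \subseteq F$, the gossip originating at $v$ can only reach $B$ after first being reported by a curious node in $K$; in particular there is an event $\sigma$ — e.g. ``the first communication ever observed by the adversary has its sending endpoint in $A \cup K$'' — which has positive probability under $\seqadv{v}$ but probability zero under $\seqadv{u}$, because when $u \in B$ is the source the adversary's first observation must involve a node reached only via a path that exits $B$ through $K$, hence the first observed sender lies in $B \cup K$, not in $A$. (One has to phrase $\sigma$ carefully so it is a cylinder event in $\Sigma$ and genuinely has the claimed zero/positive probabilities; the cleanest version may condition on which side of the cut the gossip is on at time $\tadv$.) Then $\divinfty{\seqadv{v}}{\seqadv{u}} = \infty$, so no finite $\varepsilon$ works.

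\textbf{The universal bound $\varepsilon \ge \ln(f-1)$.} For this I would use a counting / averaging argument over the source together with the observation that the adversary's view always reveals the round $\tadv$ and the communications into the first curious node. Fix any set $F$ of $f$ curious nodes (for the worst-case statement) or argue for a random such set (for the average-case statement). Consider the event that the very first observed communication is a single message $(w \to c)$ arriving at a curious node $c \in F$ with no other observed activity in that round — or more robustly, condition on the multiset $\fadv(\commset_{\tadv})$ and the earlier (empty) observations. The point is that if node $c \in F$ is the first curious node to hear the gossip and it heard it from neighbor $w$, then from the adversary's perspective $w$ is a plausible last relay, but $w$ itself could have been the source, or $w$ could have received it from any of its $\deg(w)-1$ other neighbors, etc. To extract the factor $f-1$: consider a node $w$ with at least $f-1$ neighbors that are themselves plausible ``penultimate'' nodes — more carefully, I expect the right move is to fix a curious node $c$, look at the $f-1$ other curious nodes... no: rather, fix an honest node $w$ adjacent to a curious node $c$, and compare the $f-1$ honest nodes among $\{w\} \cup (N(c)\setminus F)$ or similar, showing that an observation consisting of ``$c$ first hears the gossip from $w$ at some round'' is consistent, with comparable probability, with each of several distinct sources, forcing the divergence ratio up to $f-1$. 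Concretely: pick a curious node $c$ and honest neighbors; the event $\event = \{\tadv \text{ is minimal possible and } \fadv(\commset_{\tadv}) = \{(w\to c)\}\}$ has, when $w$ is the source, probability at least (prob. $w$ sends to $c$ immediately); when some other honest $w' \in N(c)$, $w'\ne w$ is the source it may have probability $0$. That gives $\infty$, which is too strong and can't be right for general protocols — so the correct event must be one the protocol is *forced* to be able to produce from $f-1$ sources.

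\textbf{Main obstacle.} The genuinely delicate part is the universal $\ln(f-1)$ bound: unlike the cut argument, it must hold for *every* protocol, including ones that deliberately obfuscate, and including the average-case adversary where $F$ is random. I expect the heart of it is a clean combinatorial lemma: for a random $f$-subset $F$ and any protocol, with high probability there is a ``bottleneck'' configuration — a node $w$ through which, conditionally on the adversary's first observation, at least $f-1$ distinct honest nodes remain equally likely (up to the factor $e^\varepsilon$) to be the source — together with a pigeonhole step: among $f$ curious nodes and the first one to be hit, the remaining $f-1$ curious nodes' (non-)observations partition the source candidates, and one part must be large; then $\varepsilon$-DP across the sources in that part against the part's characteristic event forces $e^\varepsilon \ge f-1$. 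Making the random-$F$ version go through with probability $\ge 1 - 1/n$ (the definition of average-case) is where I anticipate the real work, since one must ensure the bottleneck configuration survives for a typical $F$; I would handle this by a first-moment argument showing the ``bad'' placements of $F$ (those that avoid creating any bottleneck) are rare.
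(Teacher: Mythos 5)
Your treatment of the second claim (impossibility for a worst-case adversary when $\kappa(G)\le f$) is essentially the paper's own argument: take $F\supseteq K$ for a minimum vertex cut $K$, keep at least one honest node on each side, and use the event that the first observed senders lie on one fixed side of the cut, which has probability one for a source on that side and zero for a source on the other; apart from minor bookkeeping this part is sound.

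The genuine gap is the universal bound $\varepsilon\ge\ln(f-1)$, which you do not actually prove: your candidate events give an infinite ratio (which, as you yourself note, cannot hold for arbitrary protocols), and the repair you gesture at — a ``bottleneck'' lemma asserting that, for a typical $F$, at least $f-1$ honest nodes remain comparably likely sources given the first observation — is both unproved and not something one can expect for arbitrary protocols, since a protocol can make the conditional law of the source given the first observation arbitrarily lopsided for any particular placement of $F$. The paper's argument avoids any such per-configuration claim. It fixes an ordering $\chi$ of $V$ and, for each honest $v$, defines $\sigma_v$ as the observations in which at round $\tadv$ some communication $(v\to\ast)$ is seen and every observed sender $w$ has $\chi(w)\ge\chi(v)$; for every source these events partition the space, so $\sum_{v}\sum_{u}\Pr[\seqadv{u}\in\sigma_v]=n-f$, while $\sum_v \Pr[\seqadv{v}\in\sigma_v]\ge\xi(F)$, where $\xi(F)$ aggregates over honest $v$ the probability that $v$'s first external contact is curious. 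A pigeonhole step then yields $v_\star,u_\star$ with $\exp\left(\divinfty{\seqadv{v_\star}}{\seqadv{u_\star}}\right)\ge (n-f-1)\,\xi(F)/(n-f-\xi(F))$, and the factor $f-1$ comes only at the end, from $\E_{F\sim\uniformsubset{f}{V}}[\xi(F)]=f(n-f)/(n-1)$ plus Markov's inequality, which give $\xi(F)\ge (f-1)(n-f)/(n-2)$ with probability strictly greater than $1/n$. That probability threshold is all that is needed: the average-case definition demands the divergence bound with probability at least $1-1/n$, so refuting it only requires failure with probability greater than $1/n$ — not the high-probability statement you set yourself, which is why your planned first-moment argument over ``bad'' placements of $F$ looks like the hard part when in fact no such step is required. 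Finally, the worst-case bound $\varepsilon\ge\ln(f-1)$ is deduced from the average-case one by existence (a random $F$ violates the bound with positive probability, hence some fixed $F_\star$ does), not by proving it for every fixed $F$ as your ``Fix any set $F$'' opening suggests; the latter is not what the theorem asserts and is not what the paper establishes.
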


\begin{proof}[Proof sketch]
    To establish the above lower bound, we assume that the adversary simply 
    predicts that the first non-curious node to contact the curious set is the source of the gossip. As the definition of differential privacy does not assume a priori knowledge of the adversarial strategy, computing the probability of success for this attack provides a lower bound on $\varepsilon$. 
    
    We first demonstrate the result for the average-case adversary. Assume that $F$ is sampled uniformly at random from $V$. We can show that there exists $v \in V$ such that the attack implemented by the adversary succeeds with large enough probability when $v$ is the source of the gossip. This fact essentially means that this $v$ is easily distinguishable from any other node in the graph, which yields the lower bound $\varepsilon \geq \ln(f-1)$ in the average case.     
   We now consider the worst-case adversary. Assume that $F$ can be chosen by the adversary. As the lower bound $\varepsilon \ge \ln(f-1)$ holds with positive probability when $F$ is chosen at random, there exists at least one set $F$ for which it holds. Choosing this set of curious nodes establishes the claim for the worst-case adversary. Furthermore, when $\kappa(G) \le f$, we follow the intuition from \sectionref{subsect:warmup} to build a set $F$ that disconnects the graph. Using this set, we prove that $\varepsilon$ cannot be finite.
\end{proof}

\theoremref{thm:universal-impossibility} shows that the connectivity of the graph is an essential bottleneck for differential privacy in a non-complete graph. This stipulates us to study graphs with controlled connectivity, namely $(d,\lambda)$-expander graphs. Note that in a $(d,\lambda)$-expander, the vertex connectivity does not exceed $d$. Hence, \theoremref{thm:universal-impossibility} implies that no gossip protocol can satisfy any meaningful level of differential privacy against a worst-case adversary on a $(d,\lambda)$-expander if $f \ge d$. Considering this constraint, while studying a gossip against a worst-case adversary, we only focus on cases where the communication graph $G$ has a large enough degree $d$.

\section{Privacy guarantees: upper bound on $\mathbf{\varepsilon}$ }
\label{sec:guarantees}
We now present a general upper bound on $\varepsilon$ that both holds for $(1 + \rho)$-cobra walks and $\rho$-Dandelion on $d$-regular graphs with fixed expansion, i.e., $(d,\lambda)$-expander graphs. Complete proofs related to this section can be found in \appendixref{sec:appendix-privacy-guarantees}. Our privacy guarantees are quite technical, which is justified by the intricacies of the non-completeness of the graph. Recall that, in the case of complete topologies analyzed in~\cite{who_started_this_rumor}, after one round of dissemination all information on the source is lost unless a curious node has been contacted. However, in a general expander graph, this property does not hold anymore.  Indeed, even after multiple rounds of propagation, the active set of the protocol can include nodes that are close to the location of the source $s$. Thus, differential privacy may be compromised.

\subsection{Adversarial density}
\label{sec:adversarialdensity}
The attainable level of source anonymity for a given protocol is largely influenced by the location of curious nodes. However, accounting for all possible placements of curious nodes is a very challenging and intricate task. To overcome this issue and state our main result, we first introduce the notion of \emph{adversarial density} that measures the maximal fraction of curious nodes that any non-curious node may have in its neighborhood. Upper bounding the adversarial density of a graph is a key element to quantifying the differential privacy guarantees of a gossip protocol. Formally, this notion is defined as follows.  

\begin{definition}
    Consider an undirected connected $d$-regular graph $G=(V,E)$, and an arbitrary set of curious nodes $F \subseteq V$. The \emph{adversarial density} of $F$ in $G$, denoted $\alpha_F$, is the maximal fraction of curious nodes that any node $v \in V \setminus F$ has in its neighborhood. Specifically, \[\alpha_F \triangleq \max_{v \in \honest} \frac{\deg_F(v)}{d}.\]
\end{definition}

For any set of curious nodes $F$, we have $\alpha_F \leq f/d$. Hence, even when $F$ is chosen by a worst-case adversary, the adversarial density is always upper bounded by $f/d$. However, for the average-case adversary we can obtain a much tighter bound, stated in~\lemmaref{lemma:adversarial-density} below. 

\begin{restatable}{lemma}{AdversarialDensity}
\label{lemma:adversarial-density}
    Consider an undirected connected $d$-regular graph $G=(V,E)$ of size $n$ and a set of curious nodes $F \sim \uniformsubset{f}{V}$, with adversarial density $\alpha_F$. We denote $\beta = f/n$ and $\gamma = \ln(n)/(e d)$, where $e$ is Euler's constant. Then, with probability at least $1-1/n$, $\alpha_F \leq \alpha$ with
    \[\alpha \leq 4e\frac{\max\{\gamma, \beta\}}{1 + \max \{\ln(\gamma) - \ln(\beta), 0\}}. \]
    Furthermore, if there exist $\delta > 0, c > 0$ such that $f/n > c$ and $d > \ln(n)/(c^2\delta^2)$ then a similar statement holds with $\alpha \leq (1 + \delta)\beta$. 
\end{restatable}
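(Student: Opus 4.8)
The plan is a union bound over vertices driven by tail inequalities for the hypergeometric distribution. Fix a vertex $v \in V$. Since $F$ is a uniformly random $f$-subset of $V$ and $|N(v)| = d$, the random variable $\degset{F}{v} = |N(v)\cap F|$ is hypergeometric with population $n$, $f$ marked elements and $d$ draws; in particular $\E[\degset{F}{v}] = d\beta$ with $\beta = f/n$. Writing $B_v$ for the event $\{v \notin F\} \cap \{\degset{F}{v} > \alpha d\}$, one has the exact identity $\{\alpha_F > \alpha\} = \bigcup_{v\in V} B_v$ (valid since $\honest \neq \emptyset$ as $f < n-1$) and the trivial domination $\Pr[B_v] \le \Pr[\degset{F}{v} > \alpha d]$. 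Hence it suffices to exhibit, for the relevant value of $\alpha$, a single-vertex tail bound $\Pr[\degset{F}{v} > \alpha d] \le n^{-2}$; a union bound over the at most $n$ vertices then yields the claimed $1-1/n$ guarantee.

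For the first statement I would invoke Hoeffding's comparison between sampling with and without replacement: the hypergeometric $\degset{F}{v}$ is dominated by $\Bin{d}{\beta}$ for convex test functions, so the multiplicative Chernoff bound gives $\Pr[\degset{F}{v} \ge k] \le (e d\beta/k)^{k}$ for every integer $k \ge d\beta$. I would then plug in $k = \lceil \alpha^\star d\rceil$ for $\alpha^\star \triangleq 4e\,\max\{\gamma,\beta\}/(1+\max\{\ln(\gamma)-\ln(\beta),0\})$ with $\gamma = \ln(n)/(ed)$. Using that $\lceil \alpha^\star d\rceil \ge \alpha^\star d > d\beta$ (so the bound applies and $k\mapsto (ed\beta/k)^k$ is decreasing), the target $(e d\beta/k)^{k} \le n^{-2}$ reduces to the scalar inequality $\alpha^\star \ln(\alpha^\star/(e\beta)) \ge 2\ln n/d = 2e\gamma$. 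I expect verifying this inequality to be the technical heart of the argument, and I would split it into two cases. If $\gamma \le \beta$, then $\alpha^\star = 4e\beta$ and the inequality reads $4e\beta\ln 4 \ge 2e\gamma$, which is immediate from $\gamma \le \beta$ and $2\ln 4 > 1$. If $\gamma > \beta$, set $L \triangleq \ln(\gamma/\beta) > 0$, so that $\alpha^\star = 4e\gamma/(1+L)$ and $\alpha^\star/(e\beta) = 4e^{L}/(1+L)$; the inequality then simplifies to $2\ln 4 - 1 + L - 2\ln(1+L) \ge 0$, which holds for all $L \ge 0$ by a routine one-variable analysis (the left-hand side is decreasing on $[0,1]$ and increasing afterwards, with minimum value $2\ln 2 > 0$ attained at $L = 1$). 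One also records, en route, that $\alpha^\star > \beta$ in both cases, as this is precisely what justifies the monotonicity step above.

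For the second statement, where $\beta = f/n > c$ and $d > \ln(n)/(c^2\delta^2)$, I would instead use a Hoeffding/Serfling-type tail for the hypergeometric, $\Pr[\degset{F}{v} \ge (\beta + t)d] \le e^{-2 t^2 d}$. Taking $t = \delta\beta$ and $\alpha = (1+\delta)\beta$ gives $\Pr[\degset{F}{v} \ge \alpha d] \le e^{-2\delta^2\beta^2 d}$; and since $\beta > c$ forces $d > \ln(n)/(c^2\delta^2) \ge \ln(n)/(\beta^2\delta^2)$, the exponent is at least $2\ln n$, so the single-vertex bound is at most $n^{-2}$ and the union bound closes the case.

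The only remaining points are bookkeeping ones that I would dispatch routinely: the ceiling in $k = \lceil \alpha^\star d\rceil$ (the event $\{\degset{F}{v} > \alpha^\star d\}$ is contained in $\{\degset{F}{v} \ge \lceil \alpha^\star d\rceil\}$, and the loss from replacing $\alpha^\star d$ by its ceiling is absorbed since the tail is monotone), and the fact that conditioning on $v \notin F$ only decreases the relevant probabilities, so the clean hypergeometric law with mean $d\beta$ may be used throughout.
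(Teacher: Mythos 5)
Your proof is correct and takes essentially the same route as the paper: a per-vertex Chernoff-type tail bound for the hypergeometric variable $\deg_F(v)$ at level $n^{-2}$, a union bound over the $n$ vertices, the same case split ($\gamma\le\beta$ versus $\gamma>\beta$) for the first claim, and a Hoeffding-type bound $e^{-2\delta^2\beta^2 d}$ for the second. The only cosmetic difference is that you use the $(ed\beta/k)^k$ form of the Chernoff bound whereas the paper uses the relative-entropy form together with $D_{KL}\left(B(\alpha)\parallel B(\beta)\right)\ge \alpha\left(\ln(\alpha/\beta)-1\right)$, which reduces to exactly the scalar inequality $\alpha\ln\left(\alpha/(e\beta)\right)\ge 2\ln(n)/d$ that you verify.
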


We deliberately state this first lemma in a very general form. This allows us to precisely quantify how the upper bound on the adversarial density improves as $f$ decreases. To make this dependency clearer, we provide special cases in which the bound on $\alpha_F$ is easily interpreted.  First, assume that $d \in \omega_n(\log(n))$ and $f/n \in \Omega_n(1)$.
Then, $\alpha_F$ is highly concentrated around $f/n$, up to a negligible multiplicative constant, when $n$ is large enough. On the other hand, when the ratio $f/n$ becomes subconstant, the concentration becomes looser. In particular, if $d \in \omega_n(\log(n))$ and $f/n \in o_n(1)$, then $\alpha_F \in o_n(1)$ with high probability. Finally, if $f/n$ drops even lower (e.g., when $f/n \in n^{-\Omega_n(1)}$), we get $\alpha_F \in O_n(1/d)$ or $\alpha_F \in n^{-\Omega_n(1)}$ with high probability for any $d$. 

\subsection{General upper bound on $\varepsilon$}

Thanks to~\lemmaref{lemma:adversarial-density} bounding adversarial density, we can now state our main theorem providing a general upper bound on $\varepsilon$ for $(1+\rho)$-cobra walks and $\rho$-Dandelion. 

\begin{restatable}{theorem}{MainThm}
\label{thm:main}
\label{thm:main-dandelion}
Consider an undirected connected $(d,\lambda)$-expander graph $G=(V,E)$ of size $n$, let $f$ be the number of curious nodes, and let $\mathcal P$ be a $(1+\rho)$-cobra walk with $\rho < 1$. Set $\alpha = f/d$ (resp. set $\alpha$ as in \lemmaref{lemma:adversarial-density}). If $\lambda < 1 - \alpha$, then $\mathcal P$ satisfies $\varepsilon$-DP against a worst-case adversary (resp. an average-case adversary) with 
\[
\varepsilon = \ln(\rho(n-f) + f) - 2\Tilde{T}\ln(1 - \alpha) - \Tilde{T}\ln(1 - \rho) - \ln(1-\lambda) + \ln(24),
\]
and $\Tilde{T}  = \left\lceil\log_{\frac{\lambda}{1 - \alpha}}\left(\frac{1 - \alpha}{4(n-f)}\right)\right\rceil \left(\log_{\frac{\lambda}{1 - \alpha}}(1 - \alpha) + 2\right) + 2.$ \\[10pt]
The above statement also holds if $\mathcal P$ is a $\rho$-Dandelion protocol with $\rho < 1$.
\end{restatable}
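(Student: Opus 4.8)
I would follow the two-step scheme sketched in \sectionref{sec:ProofTechniques}. \emph{Step~I} is the reduction. For a $(1+\rho)$-cobra walk, couple the execution with the \emph{random walk with probabilistic die out} on $G$ relative to $F$ — the walk that at each step dies on contact with $F$, otherwise dies with probability $\rho$, otherwise moves to a uniform neighbour — by tracing the infection lineage from the source down to the first curious contact and treating every cobra-branching as a ``death''. Writing $\mathcal D^{(s)}$ for the resulting death site when $s\in\honest$ is the source, this coupling presents the adversary's view $\seqadv{s}$ as a source-independent randomised post-processing of $\mathcal D^{(s)}$, so data processing for the max divergence gives
\[
\divinfty{\seqadv{v}}{\seqadv{u}}\ \le\ \divinfty{\mathcal D^{(v)}}{\mathcal D^{(u)}}\qquad\text{for all }v,u\in\honest,
\]
and the same inequality holds for $\rho$-Dandelion via \lemmaref{lemma:dandelion-reduction}, since its anonymity phase \emph{is} exactly such a random walk (transitioning to the broadcast phase with probability $\rho$ each round). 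It therefore suffices to bound $\divinfty{\mathcal D^{(v)}}{\mathcal D^{(u)}}$ by $\varepsilon$. For the average-case adversary I would first condition on $\{\alpha_F\le\alpha\}$, which by \lemmaref{lemma:adversarial-density} holds with probability at least $1-1/n$; conditionally on this event the argument is word-for-word the worst-case one with the stated $\alpha$, so from now on $F$ is fixed with $\alpha_F\le\alpha$ and $\lambda<1-\alpha$.

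\textbf{Spectral setup (Step~II).} Since $G$ is $d$-regular, $\hat{\vec A}=\tfrac1d\vec A$; let $\vec Q$ be the principal submatrix of $\hat{\vec A}$ on the $n-f$ non-curious nodes, a symmetric sub-stochastic matrix. Unrolling the random walk with die out, for non-curious $x$ one gets $\Pr[\mathcal D^{(s)}=x]=\rho\sum_{k\ge0}(1-\rho)^k(\vec Q^k)_{sx}=\rho\big((\vec I-(1-\rho)\vec Q)^{-1}\big)_{sx}$, and for curious $w$, $\Pr[\mathcal D^{(s)}=w]=\tfrac{1-\rho}{d}\sum_{k\ge0}(1-\rho)^k\sum_{x\in\honest}(\vec Q^k)_{sx}A_{xw}$; in both cases the probability is the \emph{same} non-negative combination of the rows $\big((\vec Q^k)_{s\cdot}\big)_{k\ge0}$, so everything reduces to controlling $(\vec Q^k)_{sx}$ uniformly over $s$. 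Write $\vec Q=\sum_i\mu_i\,\eigen_i\eigen_i^\top$ with $\mu_1\ge\dots\ge\mu_{n-f}$. Every row of $\vec Q$ sums to $\degset{\honest}{v}/d\ge1-\alpha$, so $\mu_1\ge1-\alpha$, while Cauchy interlacing forces $|\mu_i|\le\lambda$ for $i\ge2$ (this is \lemmaref{lemma:spectral-Q}); under $\lambda<1-\alpha$ this makes $\vec Q/\mu_1$ an expander of spectral expansion at most $\lambda/(1-\alpha)<1$. Finally, \lemmaref{lemma:Q-delocalization} — obtained by squeezing $\vec Q$ between matrices with explicit Perron vectors (\lemmaref{lemma:spectral-Q-hat}) — gives $\eigen_1(v)=\Theta\big(1/\sqrt{n-f}\,\big)$ for every $v\in\honest$ with explicit constants.

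\textbf{Bounding the divergence.} Fix $v,u\in\honest$ and a support point $x$, and compare the numerator and denominator as a \emph{ratio} (the common prefactors $\rho$, resp. $\tfrac{1-\rho}{d}$, cancel), after splitting each geometric sum at $k=\tilde T$. For $k\ge\tilde T$ the spectral decomposition gives $\big|(\vec Q^k)_{sx}-\mu_1^k\eigen_1(s)\eigen_1(x)\big|\le\lambda^k$, and $\tilde T$ is chosen precisely so that $(\lambda/(1-\alpha))^{\tilde T}$ — hence $\lambda^k/\mu_1^k$ for all $k\ge\tilde T$ — sits far below the $\Theta(1/(n-f))$ size that delocalization gives $\eigen_1(s)\eigen_1(x)$; so, up to a universal constant, the tail $\sum_{k\ge\tilde T}$ of each death-site probability is independent of the source and admits an explicit lower bound of order $(1-\rho)^{\tilde T}(1-\alpha)^{\tilde T}/(n-f)$ for non-curious $x$, with an extra factor $1/(1-\lambda)$ for curious $x$ (here one also uses that a curious site in the support has a non-curious neighbour). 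The early terms $k<\tilde T$ are the only ones carrying information about the source; writing $\vec Q^k=\vec Q^{\tilde T}\vec Q^{k-\tilde T}$ for $k\ge\tilde T$ and estimating the first $\tilde T$ terms crudely by sub-stochasticity, one bounds the whole numerator against the source-free tail of the denominator. Collecting constants gives, for every support point $x$,
\[
\frac{\Pr[\mathcal D^{(v)}=x]}{\Pr[\mathcal D^{(u)}=x]}\ \le\ 24\cdot\frac{\rho(n-f)+f}{(1-\alpha)^{2\tilde T}(1-\rho)^{\tilde T}(1-\lambda)}\ =\ \exp(\varepsilon),
\]
the factor $\rho(n-f)+f$ being the order of the reciprocal of the smallest death-site probability (total mass $\approx\rho$ spread over the $n-f$ non-curious sites, plus total mass $\approx1$ spread over the $f$ curious sites). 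Hence $\divinfty{\mathcal D^{(v)}}{\mathcal D^{(u)}}\le\varepsilon$, and with Step~I the theorem follows; the $\rho$-Dandelion case is identical, as Step~II uses only properties of the random walk with die out.

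\textbf{Main obstacle.} The routine parts are the reduction bookkeeping of Step~I and the interlacing bound \lemmaref{lemma:spectral-Q}. The genuine difficulty is concentrated in the delocalization of the Perron vector of $\vec Q$ (\lemmaref{lemma:Q-delocalization}): interlacing controls the bulk spectrum but says nothing about the top eigenvector, which is what forces the sandwiching argument of \lemmaref{lemma:spectral-Q-hat} and its propagation through the resolvent $(\vec I-(1-\rho)\vec Q)^{-1}$. Within the present proof, the delicate point is pinning the constants so that the cutoff takes the stated doubly-logarithmic value of $\tilde T$ and the $k<\tilde T$ ``memory'' terms can be absorbed into the source-free tail without inflating the leading constant past $24$.
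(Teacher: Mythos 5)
Your overall strategy is the paper's: reduce to the random walk with probabilistic die out, then control the resolvent $(\vec I_{n-f}-(1-\rho)\vec Q)^{-1}$ through the Perron eigenvalue bounds and the delocalization of the top eigenvector. However, your Step~I has a real flaw in the choice of the disclosed random variable. You let $\mathcal D^{(s)}$ equal the \emph{curious} node $w$ when the walk dies by contacting $F$, and then invoke data processing to get $\divinfty{\seqadv{v}}{\seqadv{u}}\le\divinfty{\mathcal D^{(v)}}{\mathcal D^{(u)}}$. This inequality does not follow: when death occurs by a curious contact, the adversary's first observation is the communication $(x\to w)$, which reveals the last non-curious position $x$, and the conditional law of $x$ given $w$ depends on the source (a source adjacent to $w$ makes $x=s$ more likely). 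So $\seqadv{s}$ is \emph{not} a source-independent post-processing of your $\mathcal D^{(s)}$ — your own formula for $\Pr[\mathcal D^{(s)}=w]$ marginalizes over $x$, discarding exactly what the adversary sees. The paper instead discloses the last non-curious position $W^{(s)}$ in all cases (the active set at the stopping time when the walk first branches, enters the spreading phase, or contacts $F$); by the strong Markov property, everything observed afterwards is generated by a kernel depending only on $W^{(s)}$, which is what legitimizes the data-processing step (\lemmaref{lemma:sufficient-varepsilon-0}, \lemmaref{lemma:dandelion-reduction}). The fix costs you nothing in Step~II: the absorption probability at $W^{(s)}=w$ is $\bigl(\rho+(1-\rho)\degset{F}{w}/d\bigr)$ times the resolvent entry, the prefactor cancels in the ratio over sources, and you are left with exactly the quantity you bound — but as written the reduction is not justified.

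In Step~II, bounding the first $\Tilde{T}$ terms of the numerator ``crudely by sub-stochasticity'' does not deliver the stated bound in general: those terms contribute up to $\min\{\Tilde{T},1/\rho\}$, and dividing by the source-free tail of order $(1-\alpha)^{\Tilde{T}}(1-\rho)^{\Tilde{T}}\,n/\bigl((\rho(n-f)+f)(n-f)\bigr)$ leaves an extra multiplicative $\min\{\Tilde{T},1/\rho\}$ that is not absorbed by the constant $24$, nor in general by the slack between $\Tilde{T}$ and $2\Tilde{T}$ in the exponent of $(1-\alpha)$, since $\Tilde{T}$ can be of order $\log n$ (e.g.\ constant $\lambda$, small $\alpha$). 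The paper avoids any truncation in the upper bound: \lemmaref{lemma:matrix-powers-first-eigenvector} is applied for every $t\ge 0$, so the whole series is bounded by $\vec I_{n-f}$ plus a rank-one term plus a geometric error summing to $\lambda/(1-\lambda)$; a truncation threshold (chosen so that the rank-one term dominates the $\lambda^t$ error) is used only for the \emph{lower} bound on the denominator (\lemmaref{lemma:Q-powers-bound}). You flag this delicacy yourself, but as written the sketch would prove a weaker $\varepsilon$ (with an extra $\ln\Tilde{T}$ or $\ln(1/\rho)$ term) rather than the theorem's explicit one.
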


Note that the upper bound on $\varepsilon$ in \theoremref{thm:main} improves as the number of curious nodes $f$ decreases (since $\alpha$ decreases with $f$) or when the expansion improves (as $\lambda$ decreases, $\Tilde{T}$ also decreases). Yet, there is a complex interplay between the parameters $n,f,d,$ and $\lambda$ above. Additionally, we point out that for a worst-case adversary the privacy guarantees can be established only if $f/d < 1$. For the average-case, this assumption can be dropped, and we are able to establish positive results for $f$ as high as $\Theta_n(n)$. 
\section{Proof sketch for \theoremref{thm:main}}
\label{subsec:proof-structure}

Although results for worst-case and average-case adversaries have their own technical specificity, they both share the same general idea. Specifically, we introduce a random process that helps bounding from above the value of $\varepsilon$. This random process resembles a random walk that at each step reveals its position to the adversary with some probability that depends on $\rho$ and on the state of the process. We call this process a \emph{random walk with probabilistic die out}. Then, we show that such random walk mixes sufficiently well before its position is revealed, which provides indistinguishability between any two possible sources. 

The first half of our proof (step I) relies on the reduction of a gossip protocol to a random walk with probabilistic die out. This part is slightly different for different protocols, but for simplicity we only present step I for the cobra walk, and defer the proof for Dandelion to \appendixref{sec:dandelion-privacy}. In the second half (step II), we only analyze a random walk with probabilistic die out. It is hence universal and applies to both cobra walks and Dandelion protocols. 

\subsection{Step I: reduction to a random walk with probabilistic die out}

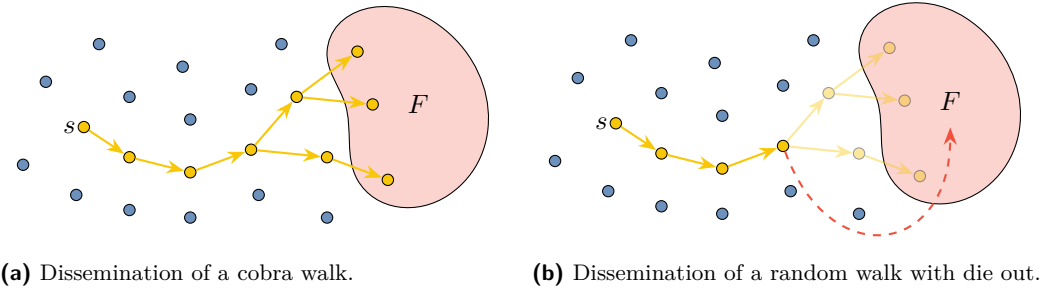
\begin{figure}[!ht]
    \centering
    \begin{subfigure}[b]{.5\textwidth}
    \centering
    \begin{tikzpicture}
        \begin{scope}[scale=2.0]
            \path[draw,use Hobby shortcut,closed=true, fill=curious, fill opacity=0.25]
            (2.4,-0.45) .. (2.3,0.7) .. (1.6,0.5) .. (1.72,0.15) .. (1.8,-0.35);
            \node[circle] at (2.2, 0.15) {$F$};
        \end{scope}
    
        \begin{scope}[every node/.style={circle, draw, inner sep=0pt,text width=1.5mm,fill=cobra}, scale=2.0]
            \node[label=left:{$s$}] (A) at (0,0) {};            
            \node (B) at (0.3,-0.2) {};
            \node (C) at (0.7,-0.3) {};
            \node (D) at (1.1,-0.15) {};
            \node (E) at (1.4,0.2) {};
            \node (F) at (1.6,-0.2) {};
            \node (G) at (1.8,0.5) {};
            \node (H) at (1.9,0.15) {};
            \node (I) at (2.0,-0.35) {};
        \end{scope}
        \begin{scope}[every node/.style={circle, draw, inner sep=0pt,text width=1.5mm,fill=node}, scale=2.0]
            \node (1) at (0.3,0.2) {};
            \node (2) at (0.7,0.05) {};
            \node (3) at (-0.05,-0.45) {};
            \node (4) at (0.1,0.55) {};
            \node (5) at (0.65,0.4) {};
            \node (6) at (0.3,-0.55) {};
            \node (7) at (0.7,-0.6) {};
            \node (8) at (1.15,-0.45) {};
            \node (9) at (1.1,0.25) {};
            \node (10) at (1.3,0.55) {};
            \node (11) at (1.6,-0.6) {};
            \node (12) at (-0.25,0.3) {};
            \node (13) at (-0.4,-0.25) {};
        \end{scope}
        
        \begin{scope}[>={Stealth[cobra]},
                      every edge/.style={draw=cobra, thick}]
            \path[->] (A) edge (B);
            
            \path[->] (B) edge (C);
            \path[->] (C) edge (D);
            \path[->] (D) edge (E);
            \path[->] (D) edge (F);
    
            \path[->] (E) edge (G);
            \path[->] (E) edge (H);
            \path[->] (F) edge (I);
    
        \end{scope}
    \end{tikzpicture}
    \vspace{0.30cm}

    \caption{Dissemination of a cobra walk.}
    \label{fig:test1}
    \end{subfigure}%
    \begin{subfigure}[b]{.5\textwidth}
    \centering
    \begin{tikzpicture}
        \begin{scope}[scale=2.0]
            \path[draw,use Hobby shortcut,closed=true, fill=curious, fill opacity=0.25]
            (2.4,-0.45) .. (2.3,0.7) .. (1.6,0.5) .. (1.72,0.15) .. (1.8,-0.35);
            \node[circle] (curious) at (2.2, 0.15) {$F$};
        \end{scope}
    
        \begin{scope}[every node/.style={circle, draw, inner sep=0pt,text width=1.5mm,fill=cobra}, scale=2.0]
            \node[label=left:{$s$}] (A) at (0,0) {};            
            \node (B) at (0.3,-0.2) {};
            \node (C) at (0.7,-0.3) {};
            \node (D) at (1.1,-0.15) {};
        \end{scope}
        \begin{scope}[every node/.style={circle, draw, inner sep=0pt,text width=1.5mm,fill=node}, scale=2.0]
            \node (1) at (0.3,0.2) {};
            \node (2) at (0.7,0.05) {};
            \node (3) at (-0.05,-0.45) {};
            \node (4) at (0.1,0.55) {};
            \node (5) at (0.65,0.4) {};
            \node (6) at (0.3,-0.55) {};
            \node (7) at (0.7,-0.6) {};
            \node (8) at (1.15,-0.45) {};
            \node (9) at (1.1,0.25) {};
            \node (10) at (1.3,0.55) {};
            \node (11) at (1.6,-0.6) {};
            \node (12) at (-0.25,0.3) {};
            \node (13) at (-0.4,-0.25) {};
        \end{scope}
        \begin{scope}[>={Stealth[cobra]},
                      every edge/.style={draw=cobra, thick}]
            \path[->] (A) edge (B);
            
            \path[->] (B) edge (C);
            \path[->] (C) edge (D);
        \end{scope}
        \begin{scope}[>={Stealth[curious]},
                      every edge/.style={draw=curious, thick}]
            \path[->,dashed,thick] (D) edge[in=270,out=290, looseness=2] node [left] {} (curious);
        \end{scope}
        
        \begin{scope}[transparency group, opacity=0.4]
            \begin{scope}[every node/.style={circle, draw, inner sep=0pt,text width=1.5mm,fill=cobra}, scale=2.0]
                \node (E) at (1.4,0.2) {};
                \node (F) at (1.6,-0.2) {};
                \node (G) at (1.8,0.5) {};
                \node (H) at (1.9,0.15) {};
                \node (I) at (2.0,-0.35) {};
            \end{scope}
            
            \begin{scope}[>={Stealth[cobra]},
                      every edge/.style={draw=cobra, thick}]
                \path[->] (D) edge (E);
                \path[->] (D) edge (F);
                \path[->] (E) edge (G);
                \path[->] (E) edge (H);
                \path[->] (F) edge (I);
            \end{scope}
        \end{scope}

    \end{tikzpicture}
    \vspace{-0.37cm}
    
    \caption{Dissemination of a random walk with die out.}
    \label{fig:test2}
    \end{subfigure}

    \caption{Illustration of the reduction from a cobra walk (Fig.~\ref{fig:test1}) to a random walk with probabilistic die out (Fig.~\ref{fig:test2}). In Fig.~\ref{fig:test1}, the dissemination continues after the walk branches and hits the curious set $F$ in several places. In the random walk with die out, instead of letting the dissemination branch, we stop the dissemination as soon as the cobra walk branches and report the position of the branching node.}
    \label{fig:prob-die-out}
\end{figure}

Consider a $(1+\rho)$-cobra walk started at $s$ and denote $W^{(s)}$ the random variable indicating the last position of the cobra walk before it either branches or hits a curious node. More formally, if the round at which the cobra walk branches or contacts a curious node for the first time is $\tau$, then the active set at this round would be $\activeset_\tau^{(s)} = \{W^{(s)}\}$, with $W^{(s)} \in \honest$. We first show that disclosing $W^{(s)}$ to the adversary reveals more information about the source than $\seqadv{s}$ (see \lemmaref{lemma:sufficient-varepsilon-0}). Intuitively, this follows from the Markov property of the active set $\left \{\activeset_t^{(s)} \right \}_{t \geq \starttime}$ of the cobra walk. In fact, by definition of $\tau$, we have $\tau \leq \tadv$. Hence, the sequence of adversarial observations $\seqadv{v}$ can be obtained from $\activeset_\tau^{(s)} = \left \{ W^{(s)} \right \}$ via a randomized mapping independent of the initial source $s$. Then, using the data processing inequality Theorem 14 of~\cite{dataprocessing}) we show that for any two possible sources $u,v \in V \backslash F$, we have
\begin{equation}
    \label{eq:inequalitydiv}
    \divinfty{\seqadv{v}}{\seqadv{u}} \le \divinfty{W^{(v)}}{W^{(u)}}.
\end{equation}
This means that it suffices to obtain an upper bound on $\divinfty{W^{(v)}}{W^{(u)}}$ for any $u,v \in V \backslash F$ to obtain an appropriate value for $\varepsilon$. Then, we note that $W^{(s)}$ can be described as the death site of a process we refer to as \emph{random walk with probabilistic die out}, which was started at $s$. Such a process constitutes a random walk which is killed at each step either (i) if it hits a curious node, or otherwise (ii) with probability $\rho$. We illustrate this process in Figure \ref{fig:prob-die-out} and how it relates to the cobra walk.

\subsection{Step II: upper bounding the max divergence between death sites}

The rest of the proof is dedicated to analyzing the probability distribution of the death site of such a process. Let $\vec Q = \Hat{\vec A}[\honest]$ be the principled submatrix of $\Hat{\vec A}$ induced by the rows and columns of $V \setminus F$ and let $\vec R$ be a diagonal matrix of size $(n-f)\times (n-f)$ such that $R_{ww} = \degset{F}{w}/d$ for every $w \in \honest$. Then, $W^{(s)}$ can be described as an absorbing Markov chain. More precisely, let nodes from $\honest$ be transient states, and equip every node $w \in \honest$ with an absorbing state $\sink(w)$ which corresponds to the event of dying at $w$.
The transition matrix of our absorbing Markov chain can be written in a block form as
\begin{equation}
\label{eq:block-form-absorbing-markov}
\vec P = 
\begin{bmatrix}
(1 - \rho)\vec Q & \vec O_{(n-f) \times (n-f)} \\
\rho\vec I_{n-f} + (1 - \rho)\vec R & \vec I_{n - f} 
\end{bmatrix}.
\end{equation}
In the above, $\vec P_{xy}$ denotes the transition probability from a state $y$ to a state $x$. The first $n-f$ columns correspond to transition probabilities from transient states $w \in \honest$ and the last $n-f$ ones correspond to transition probabilities from absorbing states $\sink(w)$ for $w\in \honest$. The probability of transitioning between two transient states $v, u \in \honest$ (top-left block of $\vec P$) is defined similarly to a simple random walk on $G$, multiplied by the probability of not branching $(1 - \rho)$. The transition probability between $w$ and $\sink(w)$ (bottom-left block of $\vec P$) is naturally defined as the probability of branching plus the probability of contacting a curious node at the current step without branching. 

According to the above, being absorbed in $\sink(w)$ corresponds to the event $W^{(s)} = w$. Hence, using $\vec Q$ and $\vec R$ to compute a closed form expression for absorbing probabilities of the above Markov chain (see \lemmaref{lemma:absorbtion-prob}), we can rewrite  $\divinfty{W^{(v)}}{W^{(u)}}$ as follows
\begin{equation}
   \label{eq:equalitydiv}
   \divinfty{W^{(v)}}{W^{(u)}} = \max_{w\in \honest} \ln\frac{(\vec I_{n-f} - (1 - \rho)\vec Q)^{-1}_{vw}}{(\vec I_{n-f} - (1 - \rho)\vec Q)^{-1}_{uw}}.
\end{equation}

To conclude the proof, we now need to upper bound the right-hand side~\eqref{eq:equalitydiv}. To do so, we first note that, as per Theorem 3.2.1 in~\cite{Kemeny1960FiniteMC}, we can use the following series decomposition,
\begin{equation}
    \label{eq:Q-series}
(\vec I_{n-f} - (1 - \rho)\vec Q)^{-1} = \sum_{t = 0}^{\infty} (1 - \rho)^t \vec Q^t.
\end{equation}
This means that we can reduce the computation of $\divinfty{W^{(v)}}{W^{(u)}}$ to analyzing the powers of the matrix $\vec Q^t$. Furthermore, for large values of $t$, we can approximate $\vec Q^t$ by a one-rank matrix using the first eigenvalue and the first eigenvector of $\vec Q$ (see \lemmaref{lemma:matrix-powers-first-eigenvector}). This motivates us to study the spectral properties of $\vec Q$. We begin by showing (see \lemmaref{lemma:spectral-Q}) that $\vec Q$ is dominated by its first eigenvalue. To further estimate the coordinates of the first eigenvector of $\vec Q$, we need to introduce subsidiary matrices $\overline{\vec Q}$ and $\underline{\vec Q}$ (see \lemmaref{lemma:spectral-Q-hat}). We carefully design these matrices to have an explicit first eigenvector and so that their entries bound from above and below respectively those of $\vec Q$. Using these two properties, we obtain a measure of how far the first eigenvector of $\vec Q$ is from the uniform vector $\vec 1_{n-f} /\sqrt{n-f}$ (see \lemmaref{lemma:Q-delocalization}). By controlling spectral properties of $\vec Q$, we establish efficient one-rank approximations of high powers of $\vec Q$. Applying this to~\eqref{eq:equalitydiv}, we obtain an upper bound on the max divergence between $W^{(v)}$ and $W^{(u)}$, for any $u , v \in \honest$. Specifically, assuming that the adversarial density $\alpha_F < 1-\lambda$, we get
\begin{equation*}
    \divinfty{W^{(v)}}{W^{(u)}} \le \ln(\rho(n-f) + f) - 2\Tilde{T}\ln(1 - \alpha_F) - \Tilde{T}\ln(1 - \rho) - \ln(1-\lambda) + \ln(24) ,
\end{equation*}
where $\Tilde{T}  = \left\lceil\log_{\frac{\lambda}{1 - \alpha_F}}\left(\frac{1 - \alpha_F}{4(n-f)}\right)\right\rceil \left(\log_{\frac{\lambda}{1 - \alpha_F}}(1 - \alpha_F) + 2\right) + 2$. Finally, substituting~\eqref{eq:inequalitydiv} in the above, and upper bounding $\alpha_F$ as per Section~\ref{sec:adversarialdensity} we get the expected result. 

\section{Trade-off: Dissemination time vs. privacy}
\label{sec:analysis}
Note that when the gossip protocol parameter $\rho$ decreases, the privacy guarantees in \theoremref{thm:main} improve. Yet, this worsens the dissemination time, which suggests the existence of a \emph{trade-off} between the dissemination time and the source anonymity of the protocol. In this section, we formalize this observation by showing the tightness of \theoremref{thm:main} on a family of strong expanders called \emph{near-Ramanujan graphs}. Intuitively, for dense enough graph topologies, most terms in \theoremref{thm:main} vanish, hence considerably simplifying the analysis of the result. Near-Ramanujan graphs can be defined as follows.

\begin{definition}[Near-Ramanujan family of graphs]
\label{def:nr-graphs}
Let $\mathcal G$ be an infinite family of regular graphs. $\mathcal G$ is called near-Ramanujan if there exists a constant $c > 0$ such that $\lambda(G) \le c d(G)^{-1/2}$ for any graph $G \in \mathcal G$ of large enough size. 
\end{definition}

This choice of graph family is motivated by the fact that near-Ramanujan graphs naturally arise in the study of dense random regular graphs. In fact, for any large enough $n$ and any $3 \le d \le n/2$ (with $dn$ even) a random $d$-regular graph on $n$ nodes is near-Ramanujan with high probability as shown in~\cite{Cook2018SizeBC,Tikhomirov2019TheSG}. That means that almost every $d$-regular graph is near-Ramanujan. 
Besides using near-Ramanujan graphs, we assume the topologies to be dense enough, i.e., $d \in n^{\Omega_n(1)}$. Refining the statement of \theoremref{thm:main} to this family of graphs, we obtain the following corollary.

 \begin{restatable}{corollary}{PrivacyNRWorst}
    \label{corollary:privacy-nr-worst}
    \label{corollary:privacy-nr-avg}
    \label{corollary:privacy-nr}
    Let $\mathcal P$ be a $(1+\rho)$-cobra walk and let $\mathcal G$ be a family of $d$-regular near-Ramanujan graphs with $n$ nodes and $d \in n^{\Omega_n(1)}$. Suppose $f/d \in 1 - \Omega_n(1)$ (resp. $f/n \in 1 - \Omega_n(1)$). Then, for any $G \in \mathcal G$ of large enough size $n$ and any $\rho \in 1 - \Omega_n(1)$, $\mathcal P$ satisfies $\varepsilon$-DP against a worst-case adversary (resp. an average-case adversary) for some 
    \[
    \varepsilon \in \ln\left(\rho(n-f) + f\right) + O_n(1).
    \]
    The above statement also holds if $\mathcal P$ is a $\rho$-Dandelion protocol with $\rho < 1$.
\end{restatable}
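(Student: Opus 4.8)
The plan is to plug the near-Ramanujan hypotheses into the closed-form value of $\varepsilon$ from \theoremref{thm:main} and show that every term except $\ln(\rho(n-f)+f)$ collapses to $O_n(1)$. First I would record the relevant asymptotics: since $\mathcal G$ is near-Ramanujan with $d \in n^{\Omega_n(1)}$, we have $\lambda = \lambda(G) \le c\, d^{-1/2} \in n^{-\Omega_n(1)}$, so in particular $\lambda \to 0$ and $\ln(1-\lambda) = -\Theta(\lambda) \in o_n(1)$, hence $-\ln(1-\lambda) + \ln(24) \in O_n(1)$. Next I would control $\alpha$. In the worst-case, $\alpha = f/d$ and the hypothesis $f/d \in 1 - \Omega_n(1)$ gives $1 - \alpha \in \Omega_n(1)$, so $-2\tilde T \ln(1-\alpha) \in O_n(\tilde T)$; similarly $\rho \in 1 - \Omega_n(1)$ gives $1-\rho \in \Omega_n(1)$ and $-\tilde T \ln(1-\rho) \in O_n(\tilde T)$. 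In the average-case, $\alpha$ is the bound from \lemmaref{lemma:adversarial-density}; I would invoke its second clause (with $c$ the constant from $f/n \in 1 - \Omega_n(1)$ — note $f/n > c$ for suitable $c$, and $d \in n^{\Omega_n(1)}$ easily exceeds $\ln(n)/(c^2\delta^2)$ for any fixed $\delta$) to get $\alpha \le (1+\delta)\beta$ with $\beta = f/n$; then I must check $1 - \alpha \in \Omega_n(1)$, which requires $(1+\delta)f/n$ to be bounded away from $1$ — choosing $\delta$ small enough relative to the $\Omega_n(1)$ gap in $f/n \le 1 - \Omega_n(1)$ makes this work, and it also gives $\lambda < 1 - \alpha$ so the theorem applies.

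The crux is then to show $\tilde T \in O_n(1)$, since $\varepsilon - \ln(\rho(n-f)+f)$ is $O_n(\tilde T) + O_n(1)$. Write $\mu = \lambda/(1-\alpha)$. From the previous paragraph $1 - \alpha \in \Omega_n(1)$ and $\lambda \in n^{-\Omega_n(1)}$, so $\mu \in n^{-\Omega_n(1)}$; in particular $\mu \le 1/2$ for large $n$ and $-\ln \mu \in \Omega_n(\ln n)$. Now
\[
\tilde T = \left\lceil \log_{\mu}\!\left(\frac{1-\alpha}{4(n-f)}\right)\right\rceil\left(\log_{\mu}(1-\alpha) + 2\right) + 2 .
\]
For the ceiling: $\log_\mu\!\big((1-\alpha)/(4(n-f))\big) = \frac{\ln(4(n-f)/(1-\alpha))}{-\ln\mu} = \frac{O_n(\ln n)}{\Omega_n(\ln n)} = O_n(1)$, using $n - f \le n$ and $1/(1-\alpha) = O_n(1)$. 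For the second factor: $\log_\mu(1-\alpha) = \frac{\ln(1-\alpha)}{\ln\mu}$, and since $\ln(1-\alpha)$ is a bounded negative quantity while $-\ln\mu \to \infty$, this term is $o_n(1)$, so $\log_\mu(1-\alpha) + 2 \in O_n(1)$. Multiplying, $\tilde T \in O_n(1)$.

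Assembling everything: $\varepsilon = \ln(\rho(n-f)+f) - 2\tilde T\ln(1-\alpha) - \tilde T\ln(1-\rho) - \ln(1-\lambda) + \ln(24)$, and each of the four trailing terms has been shown to be $O_n(1)$ (the first two because $\tilde T = O_n(1)$ and $\ln(1-\alpha)$, $\ln(1-\rho)$ are bounded, the third because $\lambda \to 0$, the fourth trivially), so $\varepsilon \in \ln(\rho(n-f)+f) + O_n(1)$, which is the claim; the Dandelion case is identical since \theoremref{thm:main} gives the same formula. The main obstacle I anticipate is bookkeeping the interaction between the two separate $\Omega_n(1)$ gaps (the one in $f/n$ and the freely chosen $\delta$ in \lemmaref{lemma:adversarial-density}) so that $1-\alpha$ remains bounded away from $0$ in the average case; everything else is a routine substitution of the stated asymptotics into the explicit formula.
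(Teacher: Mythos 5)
Your overall route is the one the paper takes: substitute the near-Ramanujan asymptotics into the explicit formula of \theoremref{thm:main}, show $\Tilde{T} \in O_n(1)$ via the ratio of logarithms (since $\lambda/(1-\alpha) \in n^{-\Omega_n(1)}$ while $(1-\alpha)/(4(n-f)) \in n^{-O_n(1)}$), and absorb the remaining terms into $O_n(1)$; the worst-case bookkeeping and the observation that Dandelion is covered by the same formula are also exactly as in the paper.

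There is, however, a genuine gap in your average-case bound on $\alpha$. You invoke the \emph{second} clause of \lemmaref{lemma:adversarial-density}, whose hypothesis is $f/n > c$ for some positive constant $c$, and you justify it by ``note $f/n > c$ for suitable $c$'' drawn from the assumption $f/n \in 1 - \Omega_n(1)$. That assumption only bounds $f/n$ away from $1$ from above; it gives no lower bound, and the corollary permits, say, $f \in O(1)$ or $f = n^{0.1}$, where $f/n \to 0$ and the second clause simply does not apply. In that regime your derivation of $\alpha \le (1+\delta)\beta$, hence of $1-\alpha \in \Omega_n(1)$ and of $\lambda < 1-\alpha$, has no support. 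The paper closes exactly this hole with a case split (its \lemmaref{lemma:alpha-nr}): when $\beta = f/n < 1/(8e)$ it uses the \emph{first} bound of \lemmaref{lemma:adversarial-density}, which yields $\alpha \le 4e\max\{\gamma,\beta\} < 1/2$ because $\gamma = \ln(n)/(ed) \in o_n(1)$ thanks to $d \in n^{\Omega_n(1)}$; when $\beta \ge 1/(8e)$ it applies the second clause (with $c = 1/(8e)$ and a suitable $\delta$), much as you do. With that repair — handling small $f/n$ via the general bound rather than the concentration clause — the rest of your argument goes through and coincides with the paper's proof.
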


From~\corollaryref{corollary:privacy-nr-worst}, when $\rho=0$, we obtain a level of differential privacy that matches, up to an additive constant, the universal lower bound $\varepsilon\geq \ln(f-1)$. Accordingly, $\rho=0$ leads to an \emph{optimal} differential privacy guarantee. However, in this case, both the cobra walk and the Dandelion protocol degenerate into simple random walks with dissemination time in $\Omega_n(n\log(n))$~\cite{Aldous1989LowerBF}. Increasing $\rho$ parameter makes the dissemination faster, but potentially worsens the privacy guarantees. 

Studying Dandelion and cobra walks, we show that the result in \corollaryref{corollary:privacy-nr} is tight up to an additive constant. Then, we formally validate our intuition that decreasing $\rho$ increases the dissemination time by providing corresponding tight guarantees on dissemination time.
Finally, to put our results in perspective, we compare them to a random walk (optimal privacy but high dissemination time), and to a $2$-cobra walk (optimal dissemination time with bad, completely vacuous, privacy guarantees). We summarize our findings for both worst-case and average-case adversaries in the table below and defer the detailed analysis to \appendixref{sec:appendix-tradeoff}.

\begin{table}[!ht]
    \centering
\label{tab:tradeoff}
\begin{tabular}{@{}cccc@{}} 
    \toprule
     Protocol & Privacy ($\varepsilon$) & Dissemination time & References \\
 \midrule
\multirow{2}*{Random walk} &   \multirow{2}*{$\ln(f) + \Theta_n\left(1\right)$} & \multirow{2}*{$\Theta_n\left(n\log{(n)}\right)$} & \corollaryref{corollary:privacy-nr},\\
& & & \theoremref{thm:universal-impossibility},~\cite{Aldous1989LowerBF}\\
 \midrule 
 \multirow{2}*{$\rho$-Dandelion} & \multirow{2}*{$\ln\left(\rho (n - f)+ f\right) + \Theta_n\left(1\right)$} & \multirow{2}*{$\Theta_n\left(\frac{1}{\rho} + D\right)$} & \corollaryref{corollary:privacy-nr}, \\
 & & & Theorem~\ref{thm:dandelion-privacy-lower} and~\ref{thm:dandelion-speed}\\
 \midrule
 \multirow{2}*{$(1+\rho)$-Cobra walk} &  \multirow{2}*{$ \ln\left(\rho (n - f)+ f\right) + \Theta_n\left(1\right)$} & \multirow{2}*{$O_n\left(\frac{\log{(n)}}{\rho^3}\right)$, $\Omega_n\left(\frac{\log{(n)}}{\rho}\right)$} & \corollaryref{corollary:privacy-nr},\\
 & & & Theorem~\ref{thm:cobra-privacy-lower} and~\ref{thm:cobra-speed}\\
 \midrule
 $2$-Cobra walk  &   $ \ln(n) + \Omega_n\left(1\right)$ & $\Theta_n\left(\log{(n)}\right)$ & Theorem~\ref{thm:cobra-privacy-lower} and~\ref{thm:cobra-speed}\\
 \bottomrule
\end{tabular}
\vspace{5pt}
    \caption{ Summary of the tension between differential privacy of a $(1+\rho)$-cobra walk and Dandelion gossip and their dissemination time on dense near-Ramanujan graphs. Graphs have diameter $D$ and consist of $n$ nodes, $f$ of which are curious. Note that the upper bounds on $\varepsilon$ hold under assumptions in \corollaryref{corollary:privacy-nr}. Lower bounds on $\varepsilon$ hold assuming $f/n \in 1 - \Omega_n(1)$, and for cobra walk we also assume $f \in n^{\Omega_n(1)}$. Dissemination time bounds for cobra walk and Dandelion hold for $\rho \in \omega_n\left(\sqrt{\log(n)/n}\right)$ and $\rho \in \Omega_n(1/n)$ respectively.
    }
\end{table}

\section{Summary \& future directions}
\label{sec:conclusion}
This paper presents an important step towards quantifying the inherent level of source anonymity that gossip protocols provide on general graphs.
We formulate our results through the lens of differential privacy. First, we present a universal lower bound on the level of differential privacy an arbitrary gossip protocol can satisfy. Then, we devise an in-depth analysis of the privacy guarantees of $(1+\rho)$-cobra walk and $\rho$-Dandelion protocols on expander graphs. When $\rho = 0$, the protocols spread the gossip via a random walk, which achieves optimal privacy, but has poor dissemination time. On the other hand, we show that increasing $\rho$ improves the dissemination time while the privacy deteriorates. In short, our tight analysis allows to formally establish the trade-off between dissemination time and the level of source anonymity these protocols provide. 
An interesting open research question would be to establish whether this ``privacy vs dissemination time'' trade-off is fundamental or if there exists a class of gossip protocols that could circumvent this trade-off. 

We consider differential privacy, because, unlike other weaker notions of privacy (e.g., MLE-based bounds), it can be applied against an \emph{arbitrary} strategy of the adversary, factoring in \emph{any} prior beliefs an adversary may have about the location of the source and the nature of the gossip protocol. This makes differential privacy strong and resilient. However, differential privacy is often criticized for being too stringent in some settings. Consequently, a number of possible interesting relaxations have been proposed in the literature such as Pufferfish~\cite{kifer2014pufferfish} and Renyi differential privacy~\cite{mironov2017renyi}. Adapting our analysis to these definitions constitutes an interesting open direction as it would enable consideration of less stringent graphs structures and probability metrics.

Finally, we believe that our results could be applied to solve privacy related problems in other settings. For example, it was recently observed in~\cite{Cyffers2020PrivacyAB} that sharing sensitive information via a randomized gossip can amplify the privacy guarantees of some learning algorithms, in the context of privacy-preserving decentralized machine learning. However, this work only considers the cases when the communication topology is a clique or a ring. We believe that the techniques we develop in this paper can be useful to amplify privacy of decentralized machine learning on general topologies. This constitutes an interesting open problem.


\bibliography{Bibliography}
\addtocontents{toc}{\protect\setcounter{tocdepth}{2}}
\renewcommand{\contentsname}{Appendix Contents}
\newpage
\tableofcontents
\newpage
\appendix
\section{Additional preliminaries \& notation}
\label{appendix:preliminaries}

\subsection{Notations}
\label{appendix:preliminaries-notations}

In the remaining, we consider cases where the sample space $\Omega$ is a \emph{discrete ordered} set. Then, we often characterize a probability measure $\mu$ on $\Omega$ by a vector $\boldsymbol \mu$ in $\mathbb R^{|\Omega|}$ where for any $i \in \Omega$ we have $\boldsymbol \mu_i = \mu(\{i\})$. Furthermore, for any two real-valued random variables $X$ and $Y$, we say that $X$ stochastically dominates $Y$ if and only if for every $x\in\mathbb R$ we have
\[
\Pr[X\ge x]\ge \Pr[Y \ge x].
\]
Additionally, for an assertion $E$, we denote by $\ind_E$ the indicator function that takes value $1$ if $E$ holds true and $0$ otherwise. Finally, let $G=(V,E)$ be an undirected graph. For $U \subseteq V$, we denote by $\partial U \subseteq E$ the set of edges connecting $U$ and $V\setminus U$. Also, for $U\subseteq V$, we denote by $N(U) = \bigcup_{u\in U} N(u)$ the neighbourhood of $U$ in $G$.

\subsection{Markov chains}

\label{sec:preliminaries-markov-chains}

For a Markov chain with finite ordered state space $\Lambda$, we will denote by $\vec P \in \mathbb{R}^{|\Lambda| \times |\Lambda|}$ its transition matrix. 
Specifically, for any $i,j \in \Lambda$, $\vec P_{ij}$ denotes the probability of transition from state $j$ to state $i$ in one step and thus $\vec P$ is a \emph{left stochastic matrix}.
If $\vec \mu$ is a vector corresponding to the distribution of a Markov chain at time $t-1$, then the distribution at time $t$ corresponds to $\vec P \vec \mu$. In particular, we will be interested in two types of Markov chains: random walks on a fixed graph and absorbing Markov chains. A random walk on a graph $G=(V,E)$ is a Markov chain with state space $\Lambda=V$ and the transition matrix defined by $P_{uv} = \frac{1}{\deg(v)}$ if $\{v,u\}\in E$ and $0$ otherwise. 
A Markov chain is called absorbing if there exists a state $s \in \Lambda$ such that $\vec P_{ss}=1$. Such a state is called absorbing; conversely, a non-absorbing state is called transient. 

After possibly permuting the set of states $\Lambda$, the transition matrix $\vec P$ of an absorbing Markov chain with $m$ transient states and $m'$ absorbing states can always be written in a block form as follows
\begin{equation}
\label{eq:block-form-absorbing-markov-generic}
\vec P = 
\begin{bmatrix}
\vec Q & \vec O_{m \times m'} \\
\vec R & \vec I_{m'} 
\end{bmatrix},
\end{equation}
where $\vec Q$ is a matrix characterizing transition between transient states, and $\vec R$ characterizes transitions between transient and absorbing states. Using this characterization, the following holds true. 

\begin{restatable}
[\cite{Kemeny1960FiniteMC}, Theorem 3.3.7]{lemma}{AbsorbtionProb}
\label{lemma:absorbtion-prob}
Let $\vec P$ be a transition matrix of an absorbing Markov chain with block form as in \equationref{eq:block-form-absorbing-markov-generic}. Then, $\vec I_{m} - \vec Q$ is invertible, and series $\sum_{i = 0}^\infty \vec R\vec Q^i$ converges. Moreover, 
\[
\vec M = \sum_{t = 0}^\infty \vec R\vec Q^t = \vec R(\vec I_{m} - \vec Q)^{-1},
\]
where $M_{ji}$ characterizes the probability of a Markov chain being absorbed in an absorbing state $j$ starting from a transient state $i$.
\end{restatable}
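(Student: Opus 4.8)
\textbf{Proof proposal for \lemmaref{lemma:absorbtion-prob}.}
The plan is to first establish the invertibility of $\vec I_m - \vec Q$ by arguing that $\vec Q$ has spectral radius strictly less than $1$. To see this, I would observe that $\vec Q$ is a substochastic matrix (its columns are nonnegative and sum to at most $1$), and that because the chain is absorbing, from every transient state there is a positive-probability path to some absorbing state. Hence there exists $k$ such that every column of $\vec Q^k$ sums to a quantity strictly below $1$, which forces $\norm{\vec Q^k}$ (in the appropriate operator norm) to be $<1$ and thus $\rho(\vec Q) < 1$. Once the spectral radius is controlled, the Neumann series $\sum_{t=0}^\infty \vec Q^t$ converges and equals $(\vec I_m - \vec Q)^{-1}$; multiplying on the left by $\vec R$ shows $\sum_{t=0}^\infty \vec R \vec Q^t$ converges to $\vec R (\vec I_m - \vec Q)^{-1}$, which is the claimed matrix identity.

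The remaining task is the probabilistic interpretation of the entries of $\vec M$. Here I would unwind the block structure~\eqref{eq:block-form-absorbing-markov-generic} by induction on the number of steps $t$. Writing $\vec P^t$ in block form, one checks that the top-left block is $\vec Q^t$ and the bottom-left block is $\sum_{\ell=0}^{t-1}\vec R \vec Q^\ell$; the induction step is a direct block-matrix multiplication using $\vec P^{t+1} = \vec P \cdot \vec P^t$ together with $\vec I_{m'}$ being idempotent. The entry $(\vec P^t)_{ji}$ for $j$ absorbing and $i$ transient is exactly the probability of being in $j$ after $t$ steps starting from $i$; since absorbing states are never left, this probability is nondecreasing in $t$ and its limit is the probability of ever being absorbed in $j$ from $i$. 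Taking $t \to \infty$ in the bottom-left block identifies this limit with $M_{ji}$, completing the argument.

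The main obstacle is really the first part: making rigorous that the spectral radius of $\vec Q$ is strictly less than $1$. The subtlety is that individual columns of $\vec Q$ may sum to exactly $1$ (a transient state all of whose one-step transitions go to other transient states), so one cannot simply bound a single power; one must use the ``reachability of an absorbing state from every transient state'' hypothesis to find a uniform $k$ and a uniform gap $\delta > 0$ with all column sums of $\vec Q^k$ bounded by $1 - \delta$. After that, everything is bookkeeping with block matrices and monotone convergence. Since this lemma is quoted verbatim from Kemeny--Snell (\cite{Kemeny1960FiniteMC}, Theorem 3.3.7), I would in practice simply cite it, but the sketch above is the self-contained route.
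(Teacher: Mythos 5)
Your proposal is correct and effectively coincides with the paper's treatment: the paper gives no proof of this lemma, quoting it verbatim from Kemeny--Snell (Theorem 3.3.7), and your sketch is the standard argument for that theorem (substochasticity plus reachability of the absorbing set forces $\rho(\vec Q)<1$ via a uniform power $\vec Q^k$ with column sums bounded away from $1$, the Neumann series then gives the matrix identity, and the block decomposition of $\vec P^t$ together with a monotone limit gives the probabilistic interpretation of $M_{ji}$). The only point worth flagging is that the reachability hypothesis you invoke is part of Kemeny--Snell's definition of an absorbing chain but not of the paper's own definition in \appendixref{sec:preliminaries-markov-chains} (which only asks that some absorbing state exist); it does hold for the chain in \equationref{eq:block-form-absorbing-markov} as used in the paper, since every transient state $w$ either is absorbed directly with probability $\rho + (1-\rho)\degset{F}{w}/d>0$ or, when $\rho=0$, can reach within the connected subgraph induced by $\honest$ a state adjacent to $F$, which has positive absorption probability.
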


\subsection{Useful spectral lemmas}

In this subsection, we restate some results that will be useful in the remaining proofs. First, recall the definition of the operator norm.
\begin{definition}
    Consider a real matrix $\vec M \in \mathbb R^{p\times n}$. Then, we define the operator norm of $\vec M$ as follows
    \[
    \norm{\vec M}_{op} \triangleq \sup_{\vec x \in \mathbb R^{n} : \norm{\vec x}_2 = 1} \norm{\vec M \vec x}_2.
    \]
\end{definition}

If for two real square symmetric matrices of the same size $A$ and $B$, if the operator norm of $A-B$ is small, then their spectra are close in $\ell_\infty$ norm. More precisely, the following holds true.
\begin{lemma}[Eigenvalue stability inequality {{\cite[Equation (1.64)]{tao2012topics}}}]
\label{lemma:eigenvalue-stability}
    Consider two real symmetric matrices $\vec A, \vec B \in \mathbb R^{p\times p}$. Then, for any $1 \le i \le p$
    \[
    \left|\lambda_i(\vec A) - \lambda_i(\vec B)\right| \le \norm{\vec A - \vec B}_{op}.
    \]
\end{lemma}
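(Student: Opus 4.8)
The plan is to prove the eigenvalue stability inequality from the Courant--Fischer (min--max) characterization of the eigenvalues of a symmetric matrix. First I would recall that for any real symmetric $\vec M \in \mathbb R^{p \times p}$, writing $\mathcal S_i$ for the family of $i$-dimensional linear subspaces of $\mathbb R^p$, one has
\[
\lambda_i(\vec M) \ = \ \max_{S \in \mathcal S_i} \ \min_{\vec x \in S,\ \norm{\vec x}_2 = 1} \vec x^\top \vec M \vec x .
\]
I would also record the elementary fact that for a symmetric matrix $\vec E$ the operator norm coincides with the spectral radius, so that $\norm{\vec E}_{op} = \max_{\norm{\vec x}_2 = 1} |\vec x^\top \vec E \vec x|$; this follows by diagonalizing $\vec E$ in an orthonormal eigenbasis. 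Note that $\vec A - \vec B$ is symmetric because $\vec A$ and $\vec B$ are, so this identity applies to it.

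Next, set $\vec E \triangleq \vec A - \vec B$, fix an index $i$, and let $S^\star \in \mathcal S_i$ be a subspace attaining the outer maximum in the Courant--Fischer formula for $\vec B$, so that $\lambda_i(\vec B) = \min_{\vec x \in S^\star,\ \norm{\vec x}_2 = 1} \vec x^\top \vec B \vec x$. Using $S^\star$ as a (not necessarily optimal) candidate subspace for $\vec A$ and the bound $\vec x^\top \vec E \vec x \ge -\norm{\vec E}_{op}$ on the unit sphere, I would write
\[
\lambda_i(\vec A) \ \ge \ \min_{\vec x \in S^\star,\ \norm{\vec x}_2 = 1} \vec x^\top \vec A \vec x \ = \ \min_{\vec x \in S^\star,\ \norm{\vec x}_2 = 1} \left( \vec x^\top \vec B \vec x + \vec x^\top \vec E \vec x \right) \ \ge \ \lambda_i(\vec B) - \norm{\vec E}_{op}.
\]
This yields $\lambda_i(\vec B) - \lambda_i(\vec A) \le \norm{\vec A - \vec B}_{op}$. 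Exchanging the roles of $\vec A$ and $\vec B$ (and using $\norm{\vec B - \vec A}_{op} = \norm{\vec A - \vec B}_{op}$) gives $\lambda_i(\vec A) - \lambda_i(\vec B) \le \norm{\vec A - \vec B}_{op}$, and combining the two one-sided bounds establishes the claim.

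Since this is a classical statement (a form of Weyl's inequality), I do not anticipate a genuine obstacle; the only points needing minor care are justifying the symmetric-matrix identity $\norm{\vec E}_{op} = \max_{\norm{\vec x}_2=1} |\vec x^\top \vec E \vec x|$, and making sure the Courant--Fischer formula is invoked in its ``max over subspaces, min over vectors'' form so that restricting to the fixed subspace $S^\star$ produces a \emph{lower} bound on $\lambda_i(\vec A)$. An alternative would be to quote Weyl's inequality $\lambda_i(\vec A) \le \lambda_j(\vec B) + \lambda_{i-j+1}(\vec E)$ directly, or to invoke that $\lambda_{\max}(\cdot)$ is convex and $1$-Lipschitz with respect to $\norm{\cdot}_{op}$ and apply it to $\pm \vec E$; the min--max argument above is, however, the most self-contained.
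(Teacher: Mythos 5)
Your proof is correct: the min--max argument via Courant--Fischer, using the optimal subspace for $\vec B$ as a test subspace for $\vec A$ together with $\left|\vec x^\top (\vec A - \vec B)\vec x\right| \le \norm{\vec A - \vec B}_{op}$ on the unit sphere, is the standard derivation of this Weyl-type stability bound. The paper does not prove this lemma at all --- it is quoted directly from Tao's book --- and your argument coincides with the classical proof given there, so there is nothing to flag beyond the minor care you already note about attainment of the outer maximum and the eigenvalue ordering convention $\lambda_1 \ge \cdots \ge \lambda_p$.
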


We also restate Cauchy interlacing law.
\begin{lemma}[Cauchy interlacing law {{\cite[Equation (1.75)]{tao2012topics}}}]
\label{lemma:cauchy-interlace}
    Consider a real symmetric matrix $\vec M \in \mathbb R^{p\times p}$ and let $\Hat{\vec M}$ be its principal submatrix of size $n\times n$ where $n < p$. Then, for any $1 \le i \le n$
    \[
    \lambda_i(\vec M) \ge \lambda_i(\Hat{\vec{M}}) \ge \lambda_{i + p - n}(\vec M).
    \]
\end{lemma}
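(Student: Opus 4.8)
The plan is to derive both inequalities from the Courant--Fischer min--max characterization of eigenvalues. Recall that for a real symmetric matrix $\vec N \in \mathbb R^{q\times q}$ and any $1 \le k \le q$,
\[
\lambda_k(\vec N) \;=\; \max_{\substack{S \subseteq \mathbb R^q \\ \dim S = k}} \ \min_{\substack{\vec x \in S \\ \norm{\vec x}_2 = 1}} \vec x^\top \vec N \vec x \;=\; \min_{\substack{S \subseteq \mathbb R^q \\ \dim S = q-k+1}} \ \max_{\substack{\vec x \in S \\ \norm{\vec x}_2 = 1}} \vec x^\top \vec N \vec x .
\]
Since conjugating $\vec M$ by a permutation matrix changes neither its eigenvalues nor those of any principal submatrix, I may assume without loss of generality that $\Hat{\vec M}$ is the top-left $n\times n$ block of $\vec M$. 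I then identify $\mathbb R^n$ with the coordinate subspace $U = \{\vec x \in \mathbb R^p : x_{n+1} = \cdots = x_p = 0\}$; for every $\vec x \in U$ one has $\vec x^\top \vec M \vec x = \hat{\vec x}^\top \Hat{\vec M}\,\hat{\vec x}$, where $\hat{\vec x} \in \mathbb R^n$ drops the last $p-n$ zero coordinates of $\vec x$, and $\norm{\vec x}_2 = \norm{\hat{\vec x}}_2$. Thus every $k$-dimensional subspace of $\mathbb R^n$ corresponds to a $k$-dimensional subspace of $\mathbb R^p$ contained in $U$, on which the Rayleigh quotients of $\Hat{\vec M}$ and $\vec M$ coincide.

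For the upper bound $\lambda_i(\Hat{\vec M}) \le \lambda_i(\vec M)$ I use the max--min form: the maximum defining $\lambda_i(\Hat{\vec M})$ ranges only over $i$-dimensional subspaces sitting inside $U$, which is a subfamily of those appearing in the max--min for $\lambda_i(\vec M)$, and maximizing over a smaller family can only decrease the value. For the lower bound $\lambda_i(\Hat{\vec M}) \ge \lambda_{i+p-n}(\vec M)$ I use the min--max form: $\lambda_i(\Hat{\vec M})$ minimizes $\max_{\vec x \in S, \norm{\vec x}_2 = 1} \vec x^\top \Hat{\vec M}\vec x$ over all $S \subseteq \mathbb R^n$ with $\dim S = n-i+1$, while $\lambda_{i+p-n}(\vec M)$ minimizes the same quantity (with $\vec M$, which agrees with $\Hat{\vec M}$ on $U$) over all subspaces of $\mathbb R^p$ of dimension $p-(i+p-n)+1 = n-i+1$; minimizing over the subfamily of subspaces contained in $U$ can only increase the value, so $\lambda_i(\Hat{\vec M}) \ge \lambda_{i+p-n}(\vec M)$. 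Combining the two bounds gives the claim.

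I expect no genuine obstacle here — the only point requiring care is the index bookkeeping $p-(i+p-n)+1 = n-i+1$, which ensures the two min--max problems range over subspaces of the same dimension so that the subfamily comparison is legitimate. If one prefers to sidestep even this, an alternative is to prove the single-row-and-column deletion case (where $\lambda_i(\vec M) \ge \lambda_i(\vec M') \ge \lambda_{i+1}(\vec M)$) and iterate it $p-n$ times; both routes are routine, so I would present the direct min--max argument above.
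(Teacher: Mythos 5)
Your argument is correct: the Courant--Fischer characterization, the reduction by permutation similarity to the top-left block, and the two subspace-subfamily comparisons (max over a smaller family decreases, min over a smaller family increases) together give exactly $\lambda_i(\vec M) \ge \lambda_i(\Hat{\vec M}) \ge \lambda_{i+p-n}(\vec M)$, and the index bookkeeping $p-(i+p-n)+1 = n-i+1$ checks out. Note, however, that the paper does not prove this lemma at all — it is imported verbatim from the literature (Tao, \emph{Topics in Random Matrix Theory}, Equation (1.75)) and used as a black box in Lemmas~\ref{lemma:connectivity-condition} and~\ref{lemma:spectral-Q} — so there is no in-paper proof to compare against; your min--max argument is the standard textbook proof and would serve as a self-contained justification, as would your alternative of iterating the one-row-and-column deletion case $p-n$ times.
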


\subsection{Partial order for matrices}

We now introduce a partial ordering on the set of real square matrices. For two square matrices $\vec A$, $\vec B$ of the same size, we write $\vec A \succeq \vec B$ to indicate that $A_{ij} \ge B_{ij}$ for any $i,j$. With this notation, we can present the following lemmas.
\begin{lemma}
\label{lemma:partial-order-mult}
Consider three matrices $\vec A,\vec B, \vec C \in \mathbb R^{p \times p}$. Suppose that $\vec A \succeq \vec B \succeq \vec O_{m \times m}$ and $\vec C \succeq \vec O_{m\times m}$. Then, $\vec A \vec C \succeq \vec B\vec C \text{\, and \, } \vec C \vec A \succeq \vec C \vec B.$
\end{lemma}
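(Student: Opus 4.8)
The plan is to verify the two entrywise inequalities directly from the definition of matrix multiplication, since the statement is a termwise consequence of the fact that multiplying a real inequality by a nonnegative number preserves its direction. First I would fix arbitrary indices $i,j \in \{1,\dots,p\}$ and expand $(\vec A \vec C)_{ij} = \sum_{k=1}^{p} A_{ik} C_{kj}$ and $(\vec B \vec C)_{ij} = \sum_{k=1}^{p} B_{ik} C_{kj}$. From $\vec A \succeq \vec B$ we have $A_{ik} \ge B_{ik}$ for every $k$, and from $\vec C \succeq \vec O_{m\times m}$ we have $C_{kj} \ge 0$, so each summand obeys $A_{ik} C_{kj} \ge B_{ik} C_{kj}$. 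Summing over $k$ gives $(\vec A \vec C)_{ij} \ge (\vec B \vec C)_{ij}$, and since $i,j$ were arbitrary this establishes $\vec A \vec C \succeq \vec B \vec C$.

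The second inequality $\vec C \vec A \succeq \vec C \vec B$ follows by the symmetric argument: write $(\vec C \vec A)_{ij} = \sum_{k=1}^{p} C_{ik} A_{kj}$ and $(\vec C \vec B)_{ij} = \sum_{k=1}^{p} C_{ik} B_{kj}$, then use $C_{ik} \ge 0$ and $A_{kj} \ge B_{kj}$ to compare the summands and sum over $k$. The only care required here is keeping the row/column bookkeeping straight when passing from right multiplication to left multiplication.

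I do not expect any genuine obstacle: the hypothesis $\vec B \succeq \vec O_{m\times m}$ is in fact not needed for either of these two inequalities (only $A_{ik}\ge B_{ik}$ together with $C_{kj}\ge 0$ enters the argument), and is retained merely because the lemma is applied in settings where all matrices under consideration are entrywise nonnegative. The whole proof is a short, routine computation with finite sums.
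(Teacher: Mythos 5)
Your proof is correct and follows essentially the same route as the paper: expand each entry of the product, use $A_{ik}\ge B_{ik}$ together with $C_{kj}\ge 0$ termwise, and sum over $k$, with the symmetric computation for left multiplication. Your observation that the hypothesis $\vec B \succeq \vec O$ is not actually used is also accurate; the paper's argument does not invoke it either.
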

\begin{proof}
Let $1 \le i \le p$ and $1 \le j \le p$. Then
\[
(\vec A\vec C)_{ij} = \sum_{k = 1}^{p} A_{ik}C_{kj}.
\]
Similarly
\[
(\vec B\vec C)_{ij} = \sum_{k = 1}^{p} B_{ik}C_{kj}.
\]
Since $C_{kj} \ge 0$ for all $k,j \in [p]$ and $A\succeq B$, we have $A_{ik}C_{kj} \ge B_{ik}C_{kj}$ for any $i, j, k\in [p]$. Hence,
\[
(\vec A\vec C)_{ij} = \sum_{k = 1}^{p} A_{ik}C_{kj} \ge \sum_{k = 1}^{p} B_{ik}C_{kj} = (\vec B\vec C)_{ij}.
\]
As this holds true  for any $1 \le i \le m$ and $1 \le j \le m$, we can write
\[
\vec A \vec C \succeq \vec B\vec C.
\]
Following the same steps, we also get $\vec C \vec A \succeq \vec C \vec B$.
\end{proof}

\begin{lemma}
\label{lemma:partial-order-power}
Consider $\vec A, \vec B \in \mathbb R^{p \times p}$ such that $\vec A \succeq \vec B \succeq \vec O_{p \times p}$. Then, for any $t \ge 0$,  $\vec A^{t} \succeq \vec B^{t}$.
\end{lemma}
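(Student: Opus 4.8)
\textbf{Proof plan for \lemmaref{lemma:partial-order-power}.}
The plan is to induct on $t$, using \lemmaref{lemma:partial-order-mult} as the engine for the inductive step. The base case $t=0$ is immediate since $\vec A^0 = \vec B^0 = \vec I_p$, and $t=1$ holds by hypothesis. For the inductive step, suppose $\vec A^{t} \succeq \vec B^{t} \succeq \vec O_{p\times p}$ for some $t \ge 1$; the final $\succeq \vec O_{p\times p}$ part of the inductive hypothesis is needed to feed into \lemmaref{lemma:partial-order-mult} and is itself clear because $\vec B \succeq \vec O_{p\times p}$ forces all entries of $\vec B^t$ to be sums of products of nonnegative numbers, hence nonnegative (and the same for $\vec A^t$).

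The key step is the chain
\[
\vec A^{t+1} = \vec A \vec A^{t} \succeq \vec B \vec A^{t} \succeq \vec B \vec B^{t} = \vec B^{t+1}.
\]
The first inequality applies \lemmaref{lemma:partial-order-mult} with the roles $(\vec A, \vec B, \vec C) \mapsto (\vec A, \vec B, \vec A^{t})$, which is legitimate since $\vec A \succeq \vec B \succeq \vec O_{p\times p}$ by hypothesis and $\vec A^{t} \succeq \vec O_{p\times p}$ by the inductive hypothesis. The second inequality applies \lemmaref{lemma:partial-order-mult} with $(\vec A, \vec B, \vec C) \mapsto (\vec A^{t}, \vec B^{t}, \vec B)$, which is legitimate since $\vec A^{t} \succeq \vec B^{t} \succeq \vec O_{p\times p}$ by the inductive hypothesis and $\vec B \succeq \vec O_{p\times p}$ by hypothesis. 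This closes the induction and gives $\vec A^{t} \succeq \vec B^{t}$ for all $t \ge 0$.

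There is no real obstacle here: the only subtlety is bookkeeping, namely making sure the nonnegativity side conditions required by \lemmaref{lemma:partial-order-mult} are carried along in the inductive hypothesis rather than just the bare inequality $\vec A^t \succeq \vec B^t$. Strengthening the statement being inducted on to include $\vec B^{t} \succeq \vec O_{p\times p}$ (equivalently, tracking that both matrices stay entrywise nonnegative) handles this cleanly.
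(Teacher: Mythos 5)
Your proof is correct and follows essentially the same route as the paper's: induction on $t$ with \lemmaref{lemma:partial-order-mult} providing the two-step chain in the inductive step (the only cosmetic difference is the order in which the two factors are replaced). Your explicit handling of the $t=0$ base case and of the entrywise nonnegativity of $\vec A^t$, $\vec B^t$ is a slightly more careful bookkeeping than the paper's, but the argument is the same.
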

\begin{proof}
The proof follows from \lemmaref{lemma:partial-order-mult} by induction. As a base for the induction, we will use $\vec A^1 \succeq \vec B^1$. Now, for the induction step, suppose $\vec A^{t-1} \succeq \vec B^{t-1}$ for some $t \ge 2$. Then
\begin{align*}
        \vec A^t &= \vec A \cdot \vec A^{t-1}.
        \intertext{By the induction hypothesis and \lemmaref{lemma:partial-order-mult}, we have}
        \vec A^t &\succeq \vec A \cdot \vec B^{t-1}.
        \intertext{From $\vec A\succeq \vec B$ and \lemmaref{lemma:partial-order-mult}, we get}
        \vec A^t &\succeq \vec B \cdot \vec B^{t-1} \\
        &= \vec B^t,
\end{align*}
which completes the induction step.
\end{proof}

We also present below the Perron-Frobenius theorem about the first eigenvectors and the first eigenvalues of matrices with non-negative entries.
\begin{restatable}[Perron-Frobenius theorem {{\cite{meyer2000matrix}}}]{lemma}{PerronFrobenius}
\label{lemma:perron}
    Consider $\vec A \in \mathbb R^{p \times p}$ such that $\vec A \succeq \vec O_{p \times p}$. Then the following assertions hold true.
    \begin{enumerate}[(a)]
    \item $\lambda_1(\vec A) \ge 0$ and for every $2 \le i \le p$ we have $\left|\lambda_i(\vec A)\right| \le \lambda_1(\vec A)$.
    \item There exists $\vec \eigen_1 \in \mathbb R^{p}$ such that $A\vec\eigen_1 = \lambda_1(A) \vec\eigen_1$ and all coordinates of $\vec\eigen_1$ are non-negative.
    \end{enumerate}
\end{restatable}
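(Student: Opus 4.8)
The plan is to derive both parts from the variational (Rayleigh quotient) characterization of the extreme eigenvalues of a symmetric matrix. Since the ordering $\lambda_1(\vec A)\ge\cdots\ge\lambda_p(\vec A)$ is only defined for symmetric matrices in \sectionref{sec:preliminaries} (and the matrices this lemma is later applied to, namely $\vec Q,\overline{\vec Q},\underline{\vec Q}$, are symmetric), I treat $\vec A$ as symmetric, so its spectrum is real and $\lambda_1(\vec A)=\max_{\norm{\vec x}_2=1}\vec x^\top\vec A\vec x$, $\lambda_p(\vec A)=\min_{\norm{\vec x}_2=1}\vec x^\top\vec A\vec x$. I will also use the standard consequence that any unit vector attaining the maximum in the first identity lies in the $\lambda_1$-eigenspace (expand in an orthonormal eigenbasis; equality in $\sum_i\lambda_ic_i^2\le\lambda_1\sum_ic_i^2$ forces $c_i=0$ whenever $\lambda_i<\lambda_1$).

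For part (a), I would first get $\lambda_1(\vec A)\ge 0$ by evaluating the Rayleigh quotient at the nonnegative unit vector $\vec 1_p/\sqrt p$, which yields $\lambda_1(\vec A)\ge\frac1p\sum_{i,j}A_{ij}\ge 0$ because $\vec A\succeq\vec O_{p\times p}$. For the bound $|\lambda_i(\vec A)|\le\lambda_1(\vec A)$, since $\lambda_i\le\lambda_1$ always holds, it only remains to show $-\lambda_p(\vec A)\le\lambda_1(\vec A)$ (trivial when $\lambda_p\ge0$). Taking a unit eigenvector $\vec v$ for $\lambda_p$ and its entrywise absolute value $|\vec v|$ (also a unit vector), entrywise nonnegativity of $\vec A$ and the triangle inequality give $|\vec v|^\top\vec A\,|\vec v|=\sum_{i,j}A_{ij}|v_i||v_j|\ge\bigl|\sum_{i,j}A_{ij}v_iv_j\bigr|=|\lambda_p|$, whence $\lambda_1\ge|\vec v|^\top\vec A\,|\vec v|\ge-\lambda_p$ by the variational characterization.

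For part (b), I would take a unit maximizer $\vec\eigen$ of the Rayleigh quotient, which by the fact recalled above satisfies $\vec A\vec\eigen=\lambda_1(\vec A)\vec\eigen$, and apply the same absolute-value trick: $|\vec\eigen|$ is a unit vector with $|\vec\eigen|^\top\vec A\,|\vec\eigen|\ge|\vec\eigen^\top\vec A\vec\eigen|=\lambda_1$ (using $\lambda_1\ge0$ from part (a)), while maximality forces $|\vec\eigen|^\top\vec A\,|\vec\eigen|\le\lambda_1$; hence $|\vec\eigen|$ is itself a maximizer, so $\vec A\,|\vec\eigen|=\lambda_1|\vec\eigen|$, and $\vec\eigen_1:=|\vec\eigen|$ is the required eigenvector with nonnegative coordinates.

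There is essentially no obstacle here; the statement is classical and one could simply cite \cite{meyer2000matrix}. The only point needing a moment's care is reducing $|\lambda_i|\le\lambda_1$ to $-\lambda_p\le\lambda_1$ and then resolving the sign by passing to $|\vec v|$. I would also note that, since $\vec A$ is not assumed irreducible, I claim only the stated weak inequalities — no simplicity of $\lambda_1$ and no strict positivity of $\vec\eigen_1$ — which is all that the later lemmas (\lemmaref{lemma:spectral-Q-hat}, \lemmaref{lemma:Q-delocalization}) require.
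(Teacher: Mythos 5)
Your proof is correct. The paper itself does not prove this lemma at all --- it states it as the classical Perron--Frobenius theorem and cites \cite{meyer2000matrix} --- so your route is necessarily different: you give a short self-contained argument via the Rayleigh-quotient characterization, restricted to symmetric $\vec A$. That restriction is legitimate here, as you note, since the ordering $\lambda_1 \ge \dots \ge \lambda_p$ is only defined in the paper for symmetric matrices and the lemma is only ever invoked for the symmetric matrices $\vec Q$, $\overline{\vec Q}$, $\underline{\vec Q}$. Each step checks out: $\lambda_1(\vec A) \ge 0$ from the test vector $\vec 1_p/\sqrt p$; the bound $|\lambda_i| \le \lambda_1$ correctly reduces to $-\lambda_p \le \lambda_1$, which follows from $\lambda_1 \ge |\vec v|^\top \vec A\, |\vec v| \ge |\lambda_p|$ for a unit $\lambda_p$-eigenvector $\vec v$ (and the case split $\lambda_i \ge 0$ versus $\lambda_i < 0$ closes the argument for intermediate indices); and the absolute-value trick applied to a unit Rayleigh maximizer, combined with the standard fact that maximizers lie in the top eigenspace, yields a $\lambda_1$-eigenvector with nonnegative coordinates. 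What your approach buys is a proof of exactly the weak statement the later lemmas need (no irreducibility assumption, hence no simplicity of $\lambda_1$ or strict positivity of $\vec\eigen_1$ claimed) without importing the general Perron--Frobenius machinery; what the paper's citation buys is coverage of the non-symmetric nonnegative case, which your symmetric Rayleigh argument would not give but which the paper never uses.
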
 

In the remaining, we will often approximate high powers of a matrix using its first eigenvalue and an outer product of its first eigenvector with itself. Formally, the following lemma holds true.
\begin{restatable}{lemma}{MatrixPowersFirstEigenvector}
\label{lemma:matrix-powers-first-eigenvector} 
Consider a real symmetric matrix $\vec M \in \mathbb R^{p\times p}$. Let $\lambda_1 > \lambda_2 > \ldots > \lambda_p$ be eigenvalues of $\vec M$. Let $\vec \eigen_1$ be $\ell_2$-normalized first eigenvector of $\vec M$ and let $\lambda = \max\left\{\left|\lambda_2\right|, \left|\lambda_p\right|\right\}$. Then,for any $t \ge 0$
\[
- \lambda^t \vec J_{p}  \preceq \vec M^t - \lambda_1^t \vec \eigen_1 \vec \eigen_1^\top \preceq \lambda^t \vec J_{p}.
\]
\end{restatable}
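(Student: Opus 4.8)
The plan is to use the spectral decomposition of the symmetric matrix $\vec M$. Since $\vec M$ is real symmetric, it admits an orthonormal eigenbasis $\vec\eigen_1, \ldots, \vec\eigen_p$ with corresponding eigenvalues $\lambda_1 > \lambda_2 > \ldots > \lambda_p$, so that $\vec M = \sum_{i=1}^p \lambda_i \vec\eigen_i \vec\eigen_i^\top$ and consequently $\vec M^t = \sum_{i=1}^p \lambda_i^t \vec\eigen_i \vec\eigen_i^\top$ for every $t \ge 0$. Subtracting the leading term gives $\vec M^t - \lambda_1^t \vec\eigen_1 \vec\eigen_1^\top = \sum_{i=2}^p \lambda_i^t \vec\eigen_i \vec\eigen_i^\top$. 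The goal is then to show that this ``tail'' matrix has all entries bounded in absolute value by $\lambda^t$, where $\lambda = \max\{|\lambda_2|, |\lambda_p|\}$.

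First I would fix two indices $a, b \in \{1, \ldots, p\}$ and look at the $(a,b)$ entry of the tail matrix, namely $\sum_{i=2}^p \lambda_i^t (\vec\eigen_i)_a (\vec\eigen_i)_b$. The natural move is to bound $|\lambda_i^t| \le \lambda^t$ for every $i \ge 2$ (this is exactly the definition of $\lambda$), pull $\lambda^t$ out, and then bound $\left|\sum_{i=2}^p (\vec\eigen_i)_a (\vec\eigen_i)_b\right|$. The key observation is that $\sum_{i=1}^p (\vec\eigen_i)_a (\vec\eigen_i)_b = (\vec I_p)_{ab}$, since the orthonormal eigenbasis gives $\sum_{i=1}^p \vec\eigen_i \vec\eigen_i^\top = \vec I_p$. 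Hence $\sum_{i=2}^p (\vec\eigen_i)_a (\vec\eigen_i)_b = (\vec I_p)_{ab} - (\vec\eigen_1)_a (\vec\eigen_1)_b$, and by Cauchy--Schwarz both $|(\vec\eigen_1)_a (\vec\eigen_1)_b| \le 1$ (since each coordinate of a unit vector is at most $1$ in absolute value) and $\left|\sum_{i=2}^p (\vec\eigen_i)_a (\vec\eigen_i)_b\right| \le \sqrt{\sum_{i=2}^p (\vec\eigen_i)_a^2}\sqrt{\sum_{i=2}^p (\vec\eigen_i)_b^2} \le 1$, using that $\sum_{i=1}^p (\vec\eigen_i)_a^2 = \|\vec e_a\|_2^2 = 1$. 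This yields $\left|\left(\vec M^t - \lambda_1^t \vec\eigen_1 \vec\eigen_1^\top\right)_{ab}\right| \le \lambda^t$, which is precisely the entrywise statement $-\lambda^t \vec J_p \preceq \vec M^t - \lambda_1^t \vec\eigen_1 \vec\eigen_1^\top \preceq \lambda^t \vec J_p$.

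I should be a little careful about the edge cases baked into the statement: the hypothesis writes strict inequalities $\lambda_1 > \lambda_2 > \ldots > \lambda_p$, so $\lambda_1$ is a simple eigenvalue and $\vec\eigen_1$ is well-defined up to sign (and $\vec\eigen_1 \vec\eigen_1^\top$ is sign-independent). Also the case $t = 0$ should be checked directly: there $\vec M^0 = \vec I_p$ and $\lambda^0 = 1$, and the bound reads $\left|(\vec I_p - \vec\eigen_1\vec\eigen_1^\top)_{ab}\right| \le 1$, which follows from the same Cauchy--Schwarz argument (indeed $\vec I_p - \vec\eigen_1 \vec\eigen_1^\top$ is an orthogonal projection, so its entries lie in $[-1,1]$). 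The argument does not require $\vec M \succeq \vec O$ or any positivity; it is purely spectral.

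I do not anticipate a genuine obstacle here — the proof is a short spectral-decomposition computation. If anything, the only point requiring mild attention is ensuring the Cauchy--Schwarz step is applied to the right quantity: one bounds $\left|\sum_{i=2}^p (\vec\eigen_i)_a (\vec\eigen_i)_b\right|$ rather than trying to bound $\sum_{i=2}^p |\lambda_i|^t |(\vec\eigen_i)_a||(\vec\eigen_i)_b|$ term by term with a coordinatewise Cauchy--Schwarz that would lose a factor. Factoring out $\lambda^t$ first and then applying a single Cauchy--Schwarz to the eigenvector coordinates is what keeps the constant at exactly $1$, matching the clean form of the claimed bound.
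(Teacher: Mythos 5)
Your overall route is the same as the paper's: diagonalize the symmetric matrix, subtract the rank-one leading term to get $\vec M^t - \lambda_1^t\vec\eigen_1\vec\eigen_1^\top = \sum_{i=2}^p \lambda_i^t\,\vec\eigen_i\vec\eigen_i^\top$, and show every entry of this tail is at most $\lambda^t$ in absolute value. The paper does the last step by observing that the tail has eigenvalues $0,\lambda_2^t,\ldots,\lambda_p^t$, hence $\norm{\vec M^t - \lambda_1^t\vec\eigen_1\vec\eigen_1^\top}_{op} \le \lambda^t$, and that any entry of a symmetric matrix is bounded by its operator norm via $\left|\left\langle \vec e_a, \left(\vec M^t - \lambda_1^t\vec\eigen_1\vec\eigen_1^\top\right)\vec e_b\right\rangle\right| \le \norm{\vec M^t - \lambda_1^t\vec\eigen_1\vec\eigen_1^\top}_{op}$.

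However, your specific entrywise estimate contains a step that fails: you cannot ``pull $\lambda^t$ out'' of the signed sum, i.e.\ the inequality $\left|\sum_{i\ge 2}\lambda_i^t(\vec\eigen_i)_a(\vec\eigen_i)_b\right| \le \lambda^t\left|\sum_{i\ge 2}(\vec\eigen_i)_a(\vec\eigen_i)_b\right|$ is false in general. Take $p=3$, $t=1$, eigenvalues $2>1>-1$ (so $\lambda=1$), and $\vec\eigen_2=\tfrac{1}{\sqrt2}(1,1,0)^\top$, $\vec\eigen_3=\tfrac{1}{\sqrt2}(1,-1,0)^\top$: for $a=1,b=2$ the left-hand side equals $1$ while the right-hand side equals $0$, because the products $(\vec\eigen_i)_a(\vec\eigen_i)_b$ cancel but the $\lambda_i^t$-weighted products do not. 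Your closing remark compounds this by dismissing exactly the route that repairs the argument: by the triangle inequality, $\left|\sum_{i\ge2}\lambda_i^t(\vec\eigen_i)_a(\vec\eigen_i)_b\right| \le \lambda^t\sum_{i\ge2}\left|(\vec\eigen_i)_a\right|\left|(\vec\eigen_i)_b\right|$, and one application of Cauchy--Schwarz to the whole sum gives $\sum_{i\ge2}\left|(\vec\eigen_i)_a\right|\left|(\vec\eigen_i)_b\right| \le \left(\sum_{i\ge2}(\vec\eigen_i)_a^2\right)^{1/2}\left(\sum_{i\ge2}(\vec\eigen_i)_b^2\right)^{1/2} \le 1$, since the rows of the orthogonal eigenvector matrix are unit vectors; this loses no factor and yields exactly the claimed bound. (Your identity $\sum_{i\ge2}\vec\eigen_i\vec\eigen_i^\top = \vec I_p - \vec\eigen_1\vec\eigen_1^\top$ only settles the case $t=0$.) With this one-line repair, or with the paper's operator-norm argument, the rest of your write-up (spectral decomposition, no positivity assumption needed, the $t=0$ check) is fine.
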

\begin{proof}
Let $\lambda_i$ be the $i^\text{th}$ eigenvalue of $\vec M$ and let $\vec \eigen_i$ be $i^\text{th}$ eigenvector of $\vec M$ normalized with respect to $\ell_2$-norm. Since $\vec M$ is real and symmetric, we can write $\vec M^t$ as
\[
\vec M^t = \sum_{k = 1}^p\lambda_k^t \vec \eigen_k\vec \eigen_k^\top.
\]
Then
\[
\vec M^t - \lambda_1^t\vec \eigen_1 \vec \eigen_1^\top = \sum_{k = 2}^p\lambda_k^t \vec \eigen_k\vec \eigen_k^\top.
\]
In other words, $\vec M^t - \lambda_1^t\vec \eigen_1 \vec \eigen_1^\top$ has eigenvectors $\vec \eigen_1, \vec \eigen_2, \ldots, \vec \eigen_p $ with corresponding eigenvalues $0, \lambda_2^t,\ldots, \lambda_p^t$. Then,$\norm{\vec M^t - \lambda_1^t\vec \eigen_1 \vec \eigen_1^\top}_{op} = 
\max\left\{|\lambda^t_2|, |\lambda^t_p|\right\}= \lambda^t$. Let $\vec e_i$ be $i^\text{th}$ coordinate unit vector. For any $1 \le i,j \le p$, by Cauchy-Shwarz inequality
\begin{align*}
\left|\left\langle \vec e_i, \left(\vec M^t - \lambda_1^t\vec \eigen_1 \vec \eigen_1^\top\right)\vec e_j\right\rangle\right| 
&\le \norm{\vec e_i}_2 \norm{\left(\vec M^t - \lambda_1^t\vec \eigen_1 \vec \eigen_1^\top \right)\vec e_j}_2.
\intertext{By definition of operator norm, since $\norm{\vec e_j}_2 = 1$, we have}
\left|\left\langle \vec e_i, \left(\vec M^t - \lambda_1^t\vec \eigen_1 \vec \eigen_1^\top\right)\vec e_j\right\rangle\right| 
&\le \norm{\vec e_i}_2 \norm{\vec M^t - \lambda_1^t\vec \eigen_1 \vec \eigen_1^\top}_{op}\\
&\le \lambda^t.
\end{align*}
Accordingly, for any $1 \le i,j \le p$
\[
\left|\left(\vec M^t - \lambda_1^t\vec \eigen_1 \vec \eigen_1^\top\right)_{ij}\right| \le \lambda^t.
\]
Hence,
\[
- \lambda^t \vec J_{p}  \preceq \vec M^t - \lambda_1^t \vec \eigen_1 \vec \eigen_1^\top \preceq \lambda^t \vec J_{p},
\]
which concludes the proof.
\end{proof}

\section{Fundamental limits of source anonymity: proofs of ~\sectionref{sec:impossibility}}

\label{appendix:impossibility}

\UniversalImprossibility*
\begin{proof}
We separate the proof in two parts: the first one for the average-case adversary and then the second one for the worst-case adversary.

\paragraph*{(i) Average-case adversary.} Here, we aim to show that there exists no gossip protocol that can satisfy $\varepsilon$-DP against an average-case adversary with $\varepsilon < \ln(f-1)$. To do so, let us consider an arbitrary gossip protocol $\mathcal P$. By definition of differential privacy against an average-case adversary (\sectionref{sec:source-anonymity-with-DP}), to obtain the desired result it suffices to show that for any $\varepsilon< \ln(f-1)$, the following holds true 
\[\Pr_{F \sim \uniformsubset{f}{V}} \left[\max_{v,u\in \honest} \divinfty{\seqadv{v}}{\seqadv{u}} \le \varepsilon \right] < 1 - \frac{1}{n}, \]
where $\seqadv{v}$ and $\seqadv{u}$ correspond to the executions of $\mathcal{P}$.

\paragraph*{(i.1) Preliminary computations.} Consider an arbitrary bijection $\chi \colon V \to [n]$. Let us consider an arbitrary set of curious nodes $F \subset V$ of size $f$. Recall that $\tadv$ is the first round in which one of the curious nodes received the gossip. Then, for any $v \in \honest$, we denote by $\sigma_{v}$ the set of all adversarial observations, where in round $\tadv$ the adversary observes a communication of the form $(v\to \ast)$ and all other observed communications are of the form $(w \to \ast)$ with $\chi(w) \ge \chi(v)$ (i.e., $v$ has a minimal index assigned by $\chi$). Note that, since no curious node can be active in the beginning of round $\tadv$, for any $u \in \honest$, the events $\{\seqadv{u} \in \sigma_v\}_{v \in \honest}$ represent the  partition of the probability space. Hence, for any $u\in \honest$, we have
\[
\sum_{v \in \honest} \Pr\left[\seqadv{u} \in \sigma_v\right]= 1,
\]
and
\begin{equation}
    \label{eq:impossibility-sum-vu}
    \sum_{v \in \honest} \sum_{u \in \honest} \Pr\left[\seqadv{u} \in \sigma_v\right] = n-f.
\end{equation}

Note that we allow a communication of a node to itself (e.g., $(v\to v)$ for an active node $v$). Hence, it is possible that the source communicates to itself for several rounds before it communicates to other nodes. For $v\in V$ and $u \in N(v)$, let us denote by $\Xi(v,u)$ the probability that when $v$ is the source, node $u$ is the first node $v$ communicates to, apart from itself. If $u$ also belongs to $F$, then we have $\seqadv{v} \in \sigma_v$, since in round $\tadv$ only node $v$ is active, and hence all communications in round $\tadv$ have a form of $(v\to\ast)$. Then, for any $v \in \honest$
\[
\Pr\left[\seqadv{v} \in \sigma_v\right] \ge \sum_{u \in N(v)} \ind_{\{u \in F\}} \Xi(v,u).
\]
Hence,
\begin{align*}
\sum_{v\in \honest} \Pr\left[\seqadv{v} \in \sigma_v\right] &\ge \sum_{v\in \honest} \sum_{u\in N(v)} \ind_{\{u\in F\}}\Xi(v,u) \\
                                                            &= \sum_{v \in V} \sum_{u\in N(v)} \ind_{\{v\in \honest\}}\ind_{\{u\in F\}}\Xi(v,u).
\end{align*}
Set $\xi(F) =  \sum_{v \in V} \sum_{u\in N(v)} \ind_{\{v\in \honest\}}\ind_{\{u\in F\}}\Xi(v,u)$. Then, from the above, we have
\begin{equation}
    \label{eq:impossibility-sum-vv}
\sum_{v \in \honest} \Pr\left[\seqadv{v} \in \sigma_v\right] \ge \xi(F).
\end{equation}
Note that, from \equationref{eq:impossibility-sum-vu} and \equationref{eq:impossibility-sum-vv}, we have
\begin{equation}
\label{eq:ratio-sums-impossibility}
\sum_{v \in \honest} \Pr\left[\seqadv{v} \in \sigma_v\right] \ge \frac{\xi(F)}{n-f} \sum_{v \in \honest} \sum_{u \in \honest} \Pr\left[\seqadv{u} \in \sigma_v\right].
\end{equation}
Note that for any two non-negative sequences $(a_i)_{i \in [K]} $ and $(b_i)_{i \in [K]}$, if $\sum_{i \in [K]} b_i > 0$ and for some $\ell \ge 0$ we have $\sum_{i \in [K]} a_i \ge \ell \sum_{i \in [K]} b_i$, then there exists $i_\star \in [K]$ such that $b_{i_\star} > 0$ and $a_{i_\star}/b_{i_\star} \ge \ell$. Then, from (\ref{eq:ratio-sums-impossibility}), there exists $v_\star \in \honest$ such that 
\begin{equation}
\label{eq:pr-vstar}
\frac{\Pr\left[\seqadv{v_\star} \in \sigma_{v_\star}\right]}{\sum_{u \in \honest} \Pr\left[\seqadv{u} \in \sigma_{v_\star}\right]}\ge \frac{\xi(F)}{n-f}.
\end{equation}
Let us take $u_\star = \argmin_{u \in V\setminus (F\cup \{v_\star\})}  \Pr\left[\seqadv{u} \in \sigma_{v_\star}\right]$. Then,
\begin{equation}
\label{eq:pr-ustar}
\Pr\left[\seqadv{u_\star} \in \sigma_{v_\star}\right] \le \frac{1}{n-f-1}\sum_{u \in V\setminus (F\cup\{v_\star\})} \Pr\left[\seqadv{u} \in \sigma_{v_\star}\right].
\end{equation}
Combining \equationref{eq:pr-vstar} and \equationref{eq:pr-ustar}, we get
\[
\frac{\Pr\left[\seqadv{v_\star} \in \sigma_{v_\star}\right]}{\Pr\left[\seqadv{v_\star} \in \sigma_{v_\star}\right] + (n-f-1)\Pr\left[\seqadv{u_\star} \in \sigma_{v_\star}\right]} \ge \frac{\xi(F)}{n-f}.
\]
Rearranging the terms gives us, by definition of $D_\infty$
\[
\exp\left(\divinfty{\seqadv{v_\star}}{\seqadv{u_\star}}\right) \ge \frac{\Pr\left[\seqadv{v_\star} \in \sigma_{v_\star}\right]}{\Pr\left[\seqadv{u_\star} \in \sigma_{v_\star}\right]} \ge \frac{(n-f-1) \frac{\xi(F)}{n-f}}{1 - \frac{\xi(F)}{n-f}} = \frac{(n-f-1)\xi(F)}{n - f - \xi(F)}.
\]
The existence of such pair $v_\star,u_\star$ implies that for any $F \subseteq V$ of size $f$ we have
\begin{equation}
\label{eq:impossibility-div-lb}
\max_{v,u\in \honest} \exp\left(\divinfty{\seqadv{v}}{\seqadv{u}}\right) \ge \exp\left(\divinfty{\seqadv{v_\star}}{\seqadv{u_\star}}\right) \ge \frac{(n-f-1)\xi(F)}{n - f - \xi(F)}.
\end{equation}

\paragraph*{(i.2) Computing the expectation $\E_{F\sim \uniformsubset{f}{V}} \left[\xi(F)\right]$.} Note that, until now, we chose $F$ arbitrarily, hence the above holds true for any $F$. We can thus use this property to compute the expectation of $\xi(F)$ when $F$ is chosen uniformly at random, i.e., $F \sim \uniformsubset{f}{V}$. Specifically, we have
\begin{align*}
\E_{F\sim \uniformsubset{f}{V}} \left[\xi(F)\right] &= \sum_{v \in V} \sum_{u\in N(v)} \Xi(v,u) \E_{F\sim \uniformsubset{f}{V}}\left[\ind_{\{v\in \honest\}}\ind_{\{u\in F\}}\right]\\
&= \sum_{v \in V} \sum_{u\in N(v)} \Xi(v,u) \Pr_{F\sim \uniformsubset{f}{V}}\left[u \in F \land v \in \honest\right].
\intertext{Using the identity $\Pr[A \land B] = \Pr[A \mid B] \Pr[B]$ with  $\Pr_{F\sim \uniformsubset{f}{V}}\left[v \in \honest \mid u \in F\right] =(n-f)/(n-1)$ and $\Pr_{F\sim \uniformsubset{f}{V}}\left[u \in F\right] =f/n$ we get}
\E_{F\sim \uniformsubset{f}{V}} \left[\xi(F)\right]&= \sum_{v \in V} \sum_{u\in N(v)} \Xi(v,u) \frac{f(n-f)}{n(n-1)}\\
&= \frac{f(n-f)}{n(n-1)} \sum_{v \in V} \sum_{u\in N(v)} \Xi(v,u).
\intertext{Note that, by definition of $\Xi$, for every $v\in V$ we have $\sum_{u\in N(v)} \Xi(v,u) = 1$. Hence}
\E_{F\sim \uniformsubset{f}{V}} \left[\xi(F)\right]&= \frac{f(n-f)}{n(n-1)} \sum_{v \in V} 1\\
&= \frac{f(n-f)}{n-1}.
\end{align*}
Finally, we get
\begin{equation}
\label{eq:xif-expected}
\E_{F\sim \uniformsubset{f}{V}} \left[\xi(F)\right] = f\frac{n-f}{n-1}.
\end{equation}
\paragraph*{(i.3) Applying Markov inequality.}
Note also that $\xi(F)$ can be bounded from above as follows
\begin{align*}
\xi(F) &= \sum_{v \in V} \sum_{u\in N(v)} \ind_{\{v\in \honest\}}\ind_{\{u\in F\}}\Xi(v,u) \\
       &= \sum_{v \in \honest}  \sum_{u\in N(v)} \ind_{\{u\in F\}}\Xi(v,u) \\
       &\le \sum_{v \in \honest}  \sum_{u\in N(v)} \Xi(v,u) \\
       &= \sum_{v \in \honest} 1\\
       &= n-f.
\end{align*}
Then, we can define $\Tilde{\xi}(F) = n-f - \xi(F)$ as a non-negative random variable. From \equationref{eq:xif-expected}, we have
\[
\E_{F\sim\uniformsubset{f}{V}} \left[ \Tilde{\xi}(F) \right] = (n-f) - f\frac{n-f}{n-1} = \frac{(n-f)(n-f-1)}{n-1}.
\]
Then, by Markov inequality 
\begin{align*}
\Pr\left[\xi(F) \le \frac{(f-1)(n-f)}{n - 2}\right] &= \Pr\left[\Tilde{\xi}(F) \ge n-f - \frac{(f-1)(n-f)}{n - 2}\right]\\
                                                          &\le \frac{\E_{F\sim\uniformsubset{f}{V}} \left[\Tilde{\xi}(F)\right]}{n-f - \frac{(f-1)(n-f)}{n - 2}}\\ &= \frac{\frac{(n-f)(n-f-1)}{n-1}}{n-f - \frac{(f-1)(n-f)}{n - 2}}\\
                                                          &= \frac{(n-f-1)(n - 2)}{(n-1)(n - f -1)} \\
                                                          &= \frac{n-2}{n-1} \\
                                                          & = 1 - \frac{1}{n-1} \\
                                                          &< 1 - \frac{1}{n}.
\end{align*}
Then
\begin{equation}
\label{eq:xif-probability}
\Pr\left[\xi(F) \ge \frac{(f-1)(n-f)}{n - 2}\right] > \frac{1}{n}.
\end{equation}

\paragraph*{(i.4) Conclusion.} Note that the right-hand side of \equationref{eq:impossibility-div-lb} is an increasing function of $\xi(F)$. Hence combining \equationref{eq:impossibility-div-lb} and \equationref{eq:xif-probability}, we obtain that, with probability at least $1/n$, the following holds true
\begin{align*}
\max_{v,u\in \honest} \exp\left(\divinfty{\seqadv{v}}{\seqadv{u}}\right) &\ge \frac{(n-f-1) \frac{(f-1)(n-f)}{n - 2}}{n - f - \frac{(f-1)(n-f)}{n - 2}} \\
&= \frac{(n-f-1) (f-1)}{n -f-1} \\
&= f-1.
\end{align*}
In other words, we just showed that for any $\varepsilon < \ln(f-1)$ we have
\begin{equation}
\label{eq:avg-case-lower}
 \Pr_{F \sim \uniformsubset{f}{V}} \left[\max_{v,u\in \honest} \divinfty{\seqadv{v}}{\seqadv{u}} \le \varepsilon \right] < 1 - \frac{1}{n},
\end{equation}
which concludes the first part of the proof.

\paragraph*{(ii) Worst-case adversary.} We now turn our attention to the second part of the proof and consider the worst-case adversary. Here, we consider two subcases: when $f < \kappa(G)$ and when $f \geq \kappa(G)$.  

\paragraph*{(ii.1) When $f < \kappa(G)$.} We aim to show that there exists no gossip protocol that can satisfy $\varepsilon$-DP against a worst-case adversary with $\varepsilon < \ln(f-1)$. We show this statement by contradiction. Suppose there exists a protocol $\mathcal P$ that satisfies $\varepsilon$-DP against a worst-case adversary for some $\varepsilon < \ln(f-1)$. Then, for such an $\varepsilon$, by definition of differential privacy against a worst-case adversary, the following holds true 
\begin{equation}
    \label{eq:wort-caseDPcontradiction}
    \max_{v,u\in \honest} \divinfty{\seqadv{v}}{\seqadv{u}} \leq \varepsilon, \space \space \forall F \subset V \text{ of size $f$}.
\end{equation}
However, note that by \equationref{eq:avg-case-lower}, we have 
\[
 \Pr_{F \sim \uniformsubset{f}{V}} \left[\max_{v,u\in \honest} \divinfty{\seqadv{v}}{\seqadv{u}} > \varepsilon \right] \ge \frac{1}{n} > 0.
\]
This means that there exists a set $F_\star \subset V$ of size $f$ and $v_\star, u_\star \in V\setminus F_\star$ such that $\divinfty{\seqadv{v_\star}}{\seqadv{u_\star}} > \varepsilon$. This contradicts \eqref{eq:wort-caseDPcontradiction}, hence concludes this subcase of the proof.

\paragraph*{(ii.2) When $f \geq \kappa(G)$.} We now show that, if $\kappa(G) \le f$, no gossip protocol can satisfy $\varepsilon$-DP with $\varepsilon < \infty$ against a worst-case adversary. We also prove this statement by contradiction. Suppose that there exist a protocol $\mathcal P$ that satisfies $\varepsilon$-DP against a worst-case adversary with finite $\varepsilon$. Since we are in the worst-case adversary setting, the adversary may choose the placement of curious nodes, knowing the structure of the graph. Specifically, consider a minimal vertex cut $K$ of size $\kappa(G)$. Note that $\kappa(G) \le f < n-1$, and, hence, removing vertices of $K$ disconnects the graph into at least two connected components. Suppose $V_1$ is one of the components obtained after deleting nodes in $K$ from $G$, and set $V_2 = V \setminus (K \cup V_1)$. Then, the adversary choose nodes from $K$ as curious and select the rest of curious nodes in such a way that there is at least one non-curious node both in $V_1$ and $V_2$ (it is possible since $f < n-1$, as mentioned in \sectionref{sec:source-anonymity-with-DP}). Let $F$ be such a set. 
    
As the protocol satisfies $\varepsilon$-DP, for every pair $v,u\in \honest$ we have by definition of $\varepsilon$-DP
\[\divinfty{\seqadv{v}}{\seqadv{u}} \le \varepsilon.\]
Then, for a given observation $\sigma$ of the adversary and any $v,u\in \honest$, the following holds true
\begin{equation}
\label{eq:kappa-lb}
\Pr\left[\seqadv{v} \in \sigma\right] \le \exp(\varepsilon) \Pr\left[\seqadv{u} \in \sigma\right].
\end{equation}

Let us now consider $w_1$ and $w_2$ two arbitrary non-curious nodes respectively from $V_1$ and $V_2$.  We denote by $\sigma^{(1)}$ the set of all adversarial observations in which the first round consists only of communications of the form $(v\to \ast)$ where $v \in V_1$. Every path from $V_1$ to $V_2$ passes through $K$ by definition of $K$ as a minimal vertex cut. As $K \subset F$, we get $\Pr[\seqadv{w_1} \in \sigma^{(1)}] = 1$ and $\Pr[\seqadv{w_2} \in \sigma^{(1)}] = 0$. Using \equationref{eq:kappa-lb}, we get
\[
1 = \Pr\left[\seqadv{w_1} \in \sigma^{(1)}\right] \le \exp(\varepsilon) \Pr\left[\seqadv{w_2} \in \sigma^{(1)}\right] = \exp(\varepsilon)\cdot 0,
\]
which contradicts $\varepsilon$ being finite. Then, $\mathcal P$ cannot satisfy $\varepsilon$-DP with finite $\varepsilon$ if $f \ge \kappa(G)$. This concludes the last subcase of our proof, and hence the proof.

\end{proof}
\section{Privacy guarantees: proofs of Section~\ref{sec:guarantees}}

\label{sec:appendix-privacy-guarantees}

This section presents the proof of the main positive result of the paper (\theoremref{thm:main}).  
First, we show in \appendixref{appendix:vertex-connectivity-expansion} how the vertex connectivity, which is an essential bottleneck for differential privacy (\theoremref{thm:universal-impossibility}), can be controlled using expansion of the graph. Second, we prove \lemmaref{lemma:adversarial-density} in \appendixref{appendix:adversarial-density}, which provides an upper bound on adversarial density in the average-case with high probability. We then present the proof of \lemmaref{lemma:sufficient-varepsilon-0} regarding cobra walks and its analog (\lemmaref{lemma:dandelion-reduction}) for Dandelion. Using these results, in \appendixref{appendix:max-divergence} we bound from above the maximal divergence for any pair of possible sources for both the worst-case and the average-case adversaries.

\subsection{Controlling vertex connectivity with spectral expansion}
\label{appendix:vertex-connectivity-expansion}

As shown in \theoremref{thm:universal-impossibility}, the vertex connectivity of the graph is a bottleneck for the possibility of differential privacy in the graph. Indeed, to establish differential privacy, it is essential that the subgraph induced by non-curious nodes $\honest$ is connected. The following lemma shows that removing any set of curious nodes with small adversarial density does not disconnect the graph.
\begin{lemma}
\label{lemma:connectivity-condition}
Consider a $(d,\lambda)$-expander graph $G$ of size $n$ and a set $F \subseteq V$ with adversarial density $\alpha_F < 1 - \lambda$. Then, the subgraph of $G$ induced by $\honest$ is connected. 
\end{lemma}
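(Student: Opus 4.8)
The plan is to use the spectral expansion of $G$ to rule out the existence of a vertex cut contained entirely in the curious set $F$. Recall that $\honest$ is disconnected precisely when $V$ admits a partition into nonempty sets $S$, $T$, and $K \subseteq F$ such that no edge of $G$ joins $S$ to $T$. So suppose, for contradiction, that such a partition exists, and without loss of generality assume $|S| \le |T|$. Since every neighbor of a vertex in $S$ lies in $S \cup K$, we have $N(S) \subseteq S \cup (F \cap N(S))$, which means at least a $(1 - \alpha_F)$ fraction of each vertex's $d$ neighbors must stay inside $S$; more precisely, for every $v \in S$, $\deg_{V\setminus F}(v) \le d$ but all its non-curious neighbors are in $S$. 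The goal is to show this forces $S$ to be so large that it cannot be ``small'' in the sense the expander mixing lemma needs, producing a contradiction with $\lambda(G) \le \lambda < 1 - \alpha_F$.

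The key step is a counting argument on edges leaving $S$. Let $s = |S|$. Every edge from $S$ to $V \setminus S$ must land in $F$ (it cannot land in $T$ by assumption, and if it landed in $S$ it wouldn't leave $S$). Counting from the $S$ side, the number of such edges is $\sum_{v \in S} \deg_{F}(v) \le \alpha_F \, d \, s$. Counting from the $F$ side, each curious node in $F\cap N(S)$ absorbs at most $d$ such edges, but more usefully I would apply the expander mixing lemma: for any subset $S$, the number of edges between $S$ and its complement satisfies $\left| e(S, V\setminus S) - \frac{d\, s (n - s)}{n} \right| \le \lambda d \sqrt{s(n-s)}$. Combining $e(S, V \setminus S) \le \alpha_F d s$ with the lower bound from mixing gives $\frac{d s(n-s)}{n} - \lambda d \sqrt{s(n-s)} \le \alpha_F d s$, i.e. $\frac{n - s}{n} - \lambda \sqrt{\frac{n-s}{s}} \le \alpha_F$. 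Since $s \le n/2$ we have $\frac{n-s}{n} \ge \frac12$ and $\sqrt{\frac{n-s}{s}} \ge 1$ only helps in the wrong direction, so I would instead bound $\sqrt{\frac{n-s}{s}} \le \sqrt{\frac{n-s}{s}}$ carefully: writing $x = s/n \le 1/2$, the inequality reads $1 - x - \lambda\sqrt{(1-x)/x} \le \alpha_F$. For this to fail (and hence reach a contradiction) I need $1 - x - \alpha_F > \lambda \sqrt{(1-x)/x}$ for all $x \in (0, 1/2]$; since the left side is at least $1 - x - \alpha_F \ge 1/2 - \alpha_F$ and $(1-x)/x \ge 1$, the cleanest route is to note that if $S$ is a single vertex or very small, $e(S, V\setminus S) = d \cdot s$ exactly (every edge leaves), forcing $\alpha_F \ge 1$, already a contradiction; and for larger $S$ one interpolates.

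The cleanest self-contained argument, which I would actually write, avoids case analysis: since all $e(S, V\setminus S)$ edges go into $F$, and $F \cap N(S)$ has size at most $\alpha_F d \cdot s$ trivially but we want a better handle — instead, observe directly that the subgraph $G[\honest]$ having $S$ as a union of connected components means the normalized adjacency matrix $\vec Q = \hat{\vec A}[\honest]$ is block-diagonal, so $\lambda_1(\vec Q)$ has multiplicity $\ge 1$ coming from the $S$-block. But by Cauchy interlacing (\lemmaref{lemma:cauchy-interlace}) applied to $\hat{\vec A}$ and its principal submatrix $\vec Q$, together with the observation that the $S$-block of $\vec Q$ is itself the normalized adjacency matrix of a graph where each vertex has "lost" at most $\alpha_F$ fraction of its degree, one shows $\lambda_1$ of that block is at least $1 - \alpha_F$; meanwhile interlacing bounds it above in terms of $\lambda(G)$. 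I expect the main obstacle to be getting the constant exactly right: the honest statement is just $\alpha_F < 1 - \lambda$, so I must be sure the mixing/interlacing bound I use loses nothing, which points to using the expander mixing lemma in its sharp form and handling the edge case $s = 1$ (equivalently an isolated curious-surrounded vertex) separately, since that is exactly the extremal configuration the hypothesis $\alpha_F < 1-\lambda \le 1$ is designed to forbid.
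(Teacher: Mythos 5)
Your first route (expander mixing applied to the edges leaving $S$) does not close at the stated threshold. All edges from $S$ land in $F$, so $e(S,V\setminus S)\le \alpha_F\, d\,|S|$, while even the sharp form of the mixing lemma only gives $e(S,V\setminus S)\ge(1-\lambda)\,d\,|S|(n-|S|)/n$; combining the two yields $(1-\lambda)\tfrac{n-|S|}{n}\le\alpha_F$, which is perfectly compatible with $\alpha_F<1-\lambda$ for every nonempty $S$ (it only forces $S$ to be somewhat large, it never produces a contradiction). The ``small $S$ separately, then interpolate'' step you gesture at is exactly the part that does not exist, so this branch has a genuine gap and should be dropped.

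Your second sketch is in fact the paper's argument, but the decisive step is misstated. The paper looks at the adjacency matrix of $G[\honest]$ (equivalently $\vec Q=\Hat{\vec A}[\honest]$): if $G[\honest]$ were disconnected, this matrix would be block-diagonal with at least \emph{two} blocks, and since every $v\in\honest$ keeps at least $(1-\alpha_F)d$ neighbours inside its own component, each block has minimum (normalized) row sum at least $1-\alpha_F$, hence top eigenvalue at least $1-\alpha_F$. Two such blocks give two eigenvalues of $\vec Q$ of size at least $1-\alpha_F$, so $\lambda_2(\vec Q)\ge 1-\alpha_F$; Cauchy interlacing (\lemmaref{lemma:cauchy-interlace}) then gives $\lambda_2(\vec Q)\le\lambda_2(\Hat{\vec A})\le\lambda$, contradicting $\alpha_F<1-\lambda$. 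In your write-up you lower bound the top eigenvalue of the $S$-block by $1-\alpha_F$ and then claim ``interlacing bounds it above in terms of $\lambda(G)$'' --- but interlacing only bounds the top eigenvalue of a principal submatrix by $\lambda_1(\Hat{\vec A})=1$, which yields nothing. The missing idea is precisely that disconnectedness produces \emph{two} large eigenvalues (one per block), so that it is $\lambda_2(\vec Q)$, not any block's $\lambda_1$, that gets squeezed against $\lambda$. With that correction, your second argument coincides with the paper's proof.
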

\begin{proof}
Set $f = |F|$. Let $G'$ be a subgraph of $G$ induced by $\honest$. Let $A$ be an adjacency matrix of $G$ and let $B$ be an adjacency matrix of $G'$. Then, $B$ is also a principal submatrix of $A$. Then, via Cauchy interlacing law (\lemmaref{lemma:cauchy-interlace}), we have
\begin{equation}
\label{eq:eignen-honest-subgraph}
\max\{\lambda_2(B), \lambda_{n - f}(B)\} \le \max\{\lambda_2(A), \lambda_{n}(A)\} \le d\lambda.
\end{equation}
By definition of adversarial density $\alpha_F$, every $v \in \honest$ has at most $\alpha_F d$ neighbors among nodes of $F$. Then, the degree of every node in $G'$ is at least $(1 - \alpha_F)d$. Then, if $G'$ was disconnected, it would have at least two eigenvalues greater or equal than $(1 - \alpha_F)d > d\lambda$, which is not the case by \equationref{eq:eignen-honest-subgraph}.
\end{proof}

Using this lemma, we show that graphs with good spectral expansion also have a large vertex connectivity.
\begin{lemma}
\label{lemma:vertex-connectivity-expanders}
Consider a $(d,\lambda)$-expander graph $G$ of size $n$. Then, $\kappa(G) \ge (1 - \lambda)d$.
\end{lemma}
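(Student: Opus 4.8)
The plan is to derive this directly from \lemmaref{lemma:connectivity-condition} by a short contradiction argument. Assume $\kappa(G) < (1-\lambda)d$ and fix a minimum vertex cut $K \subseteq V$, so that $|K| = \kappa(G)$. The first step is to observe that $K$ has small adversarial density: every $v \in V\setminus K$ has at most $|K|$ neighbours inside $K$, so $\alpha_K \le |K|/d = \kappa(G)/d < 1-\lambda$. Hence \lemmaref{lemma:connectivity-condition}, applied with $F = K$, shows that the subgraph of $G$ induced by $V\setminus K$ is connected.

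Next I would confront this with the definition of a vertex cut: deleting $K$ must either disconnect $G$ or leave exactly one vertex. Since $V\setminus\{v\}$ is itself a vertex cut for every $v$, the minimum cut satisfies $\kappa(G)\le n-1$, so $V\setminus K$ is nonempty; combined with the connectivity just established, the ``disconnect'' alternative is impossible, which forces $|V\setminus K| = 1$, i.e.\ $\kappa(G) = n-1$. But $G$ is $d$-regular, hence $d \le n-1$, and $\lambda(G)\ge 0$ by definition, so $\kappa(G) = n-1 \ge d \ge (1-\lambda)d$, contradicting the assumption $\kappa(G) < (1-\lambda)d$. Therefore $\kappa(G)\ge (1-\lambda)d$, as claimed.

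I do not anticipate any genuine obstacle here: the whole argument is a one-line reduction to \lemmaref{lemma:connectivity-condition} together with the elementary bound $\alpha_K \le |K|/d$ noted right after the definition of adversarial density. The only point requiring a moment's care is the degenerate branch in which the minimum vertex cut leaves a single vertex — precisely the complete-graph case — which is handled by the $\kappa(G) = n-1 \ge d$ estimate above.
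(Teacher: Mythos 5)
Your proposal is correct and takes essentially the same route as the paper: apply \lemmaref{lemma:connectivity-condition} (in contrapositive form) to a minimum vertex cut $K$, using the elementary bound $\alpha_K \le |K|/d = \kappa(G)/d$. The only difference is that you explicitly dispose of the degenerate branch in which removing $K$ leaves a single vertex (so $\kappa(G)=n-1\ge d\ge(1-\lambda)d$), a case the paper's proof silently skips by asserting that the induced subgraph is disconnected; your treatment is a small but welcome tightening of the same argument.
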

\begin{proof}
Consider the minimal vertex cut $K$ of $G$, $|K|=\kappa(G)$ and take $F = K$. Note that for any $v \in \honest$, $\degset{F}{v}/d \le \kappa(G)/d$, which implies $\alpha_F \le \kappa(G)/d$. However, a subgraph of $G$ induced by $V \setminus F$ is disconnected. Then, by the contrapositive of \lemmaref{lemma:connectivity-condition}, we must have $\kappa(G)/d \ge 1 - \lambda$.
\end{proof}

\subsection{Bound on the adversarial density for the average-case adversary (proof of \lemmaref{lemma:adversarial-density})} \label{subsec:appendix-bound-curious} \label{appendix:adversarial-density}

\AdversarialDensity*
\begin{proof} We establish the two lemma claims separately.

\paragraph*{(i) First part of the lemma.}
 Note that if $F \sim \uniformsubset{f}{V}$, then for any $v\in V$, the cardinality of the set $F \cap N(v)$ follows a hypergeometric distribution with mean $\frac{f}{n}|N(v)| = \frac{f}{n}d = \beta d$. By \cite{hoeffding1994probability, CHVATAL1979285}, $\degset{F}{v} = |F \cap N(v)|$ obeys Chernoff bounds. Then, for any $1 > \alpha \ge \beta$
\begin{equation}
    \label{eq:adversarial-density-kl-probability}
    \Pr\left[\degset{F}{v} \ge \alpha d\right] \le \exp\left(-d \divkl{B(\alpha)}{B(\beta)}\right),
\end{equation}
where $\divkl{B(x)}{B(y)} = x\ln\left(\frac{x}{y}\right) + (1-x)\ln\left(\frac{1 - x}{1 - y}\right)$ is the Kullback-Leibler divergence between Bernoulli distributed random variables $B(x)$ and $B(y)$ with success probabilities $x$ and $y$ respectively. Now, note that $\alpha, \beta < 1$. Hence,
\begin{align*}
    \divkl{B(\alpha)}{B(\beta)} &= \alpha\ln\left(\frac{\alpha}{\beta}\right) + (1 - 
    \alpha) \ln\left(\frac{1 - \alpha}{1 - \beta}\right) \\
    &\ge \alpha\ln\left(\frac{\alpha}{\beta}\right) + (1 - \alpha)\ln\left(1 - \alpha\right).
    \intertext{Futhermore, for any $0 \le x < 1$, we have $\ln(1-x) \ge 1 - \frac{1}{1 - x}$, hence}
    \divkl{B(\alpha)}{B(\beta)} &\ge \alpha\ln\left(\frac{\alpha}{\beta}\right) - \alpha\\
    &= \alpha\left(\ln\left(\alpha/\beta\right) - 1\right).
\end{align*}
Let us now consider the special case when $\alpha = C\frac{\max\{\gamma, \beta\}}{1 + \max\{\ln(\gamma) - \ln(\beta), 0\}}$, with $C = 4e$. In this case we have $C(\ln(C) - 1) \ge 2e$ and $C \ge e^2$. Note that if $\alpha \ge 1$, then the claim trivially holds true as $\alpha_F $ is always smaller than $1$ by definition of the adversarial density. Hence, without loss of generality, we will assume $\alpha < 1$. In the remaining, we consider two distinct cases: when $\beta \ge \gamma$ and when $\beta < \gamma$.

\paragraph*{(i.1) Case of $\beta \ge \gamma$.} First, suppose $\beta \ge \gamma$. Then, $\alpha = C\beta$ and
\begin{align*}
    d\divkl{B(\alpha)}{B(\beta)} &\ge d\alpha \left(\ln\left(\alpha/\beta\right) - 1\right) \\
                         &= dC\beta \left(\ln(C) - 1\right).
                         \intertext{Since $\beta \ge \gamma$, we have}
    d\divkl{B(\alpha)}{B(\beta)}&\ge dC\gamma\left(\ln(C) - 1\right) \\
                         &= \frac{C}{e}\left(\ln(C) - 1\right) \ln(n).
                         \intertext{Because $C(\ln(C) - 1) \ge 2e$, we have}
    d\divkl{B(\alpha)}{B(\beta)}&\ge 2\ln(n).
\end{align*}
Hence, by \equationref{eq:adversarial-density-kl-probability} and a union bound over all $v \in V$, the claim holds true in this case.

\paragraph*{(i.2) Case of $\beta < \gamma$.} Now, consider the second case when $\beta < \gamma$. Then, $\alpha = C\frac{\gamma}{\ln(\gamma) - \ln(\beta) + 1} = C \gamma \ln^{-1}\left(\frac{e \gamma}{\beta}\right)$. Then
\begin{align*}
    d\divkl{B(\alpha)}{B(\beta)} 
    &\ge d\alpha \left(\ln\left(\alpha/\beta\right) - 1\right) \\
    &= Cd\gamma \ln^{-1}\left(\frac{e \gamma}{\beta}\right) \left(\ln(C) - 1 + \ln\left(\frac{\gamma}{\beta}\right) - \ln\ln\left(\frac{e \gamma}{\beta}\right)\right)\\
    &= Cd\gamma \ln^{-1}\left(\frac{e \gamma}{\beta}\right) \left(\ln(C) - 2 + \ln\left(\frac{e\gamma}{\beta}\right) - \ln\ln\left(\frac{e \gamma}{\beta}\right)\right).
    \intertext{Using $\ln\ln(x) \le \frac{1}{2}\ln(x)$ for $x \ge e$, we have}
    d\divkl{B(\alpha)}{B(\beta)}
    &\ge Cd\gamma \ln^{-1}\left(\frac{e \gamma}{\beta}\right) \left(\ln(C) - 2 + \frac{1}{2}\ln\left(\frac{e\gamma}{\beta}\right) \right).
    \intertext{As $C \ge e^2$, $\ln(C) - 2\ge 0$, we get}
    d\divkl{B(\alpha)}{B(\beta)}
    &\ge Cd\gamma \ln^{-1}\left(\frac{e \gamma}{\beta}\right) \left( \frac{1}{2}\ln\left(\frac{e\gamma}{\beta}\right) \right)\\
    &= \frac{C}{2}d\gamma \\
    &= \frac{C}{2e}\ln(n).
    \intertext{Replacing $C$, we get}
    d\divkl{B(\alpha)}{B(\beta)}&\ge 2\ln(n).
\end{align*}
Hence, by \equationref{eq:adversarial-density-kl-probability} and a union bound over all $v \in V$, the claim holds true in this case as well. This concludes the first part of the lemma.

\paragraph*{(ii) Second part of the lemma.} We now consider the special case when $f/n > c$ for some positive constant $c$ and $d > \frac{\ln(n)}{c^2\delta^2}$ for some positive constant $\delta$. Set $\alpha = (1 + \delta) \beta$. Similarly to the previous case, without loss of generality, we will assume $\alpha < 1$, otherwise the claim trivially holds true. 
Then
\begin{align*}
    d\divkl{B(\alpha)}{B(\beta)} &= d\divkl{B((1 + \delta)\beta)}{B(\beta)}.
    \intertext{By~\cite{CHVATAL1979285}, we have} 
    d\divkl{B(\alpha)}{B(\beta)} &\ge 2d \beta^2 \delta^2.
    \intertext{Recall that $d > \frac{\ln(n)}{c^2\delta^2}$, hence}
    d\divkl{B(\alpha)}{B(\beta)}&\ge \frac{2\ln(n)}{c^2\delta^2} \beta^2\delta^2.
    \intertext{Since $\beta = f/n > c$, we get} 
    d\divkl{B(\alpha)}{B(\beta)}&\ge 2\ln(n).
\end{align*}
Hence, by \equationref{eq:adversarial-density-kl-probability} and a union bound over all $v \in V$, the second part of the lemma holds true.
\end{proof}

\subsection{Reduction to random walk with probabilistic die out for cobra walks}

\label{subsec:appendix-bound-divergence}
\label{appendix:max-divergence}

\begin{restatable}
{lemma}{SufficientEpsilonZero}
\label{lemma:sufficient-varepsilon-0}
\label{lemma:ratio-powers-of-Q}
\label{lemma:reduction-rw}

Consider a $(1 + \rho)$-cobra walk on a $d$-regular graph $G = (V,E)$. Let $F \subset V$ be a set of curious nodes such that the subgraph of $G$ induced by $\honest$ is connected. Let $\vec Q = \Hat{\vec A}[\honest]$ and, for $s \in \honest$, let $W^{(s)}$ be the absorbing state of the Markov chain as in \equationref{eq:block-form-absorbing-markov}. Then, for any  $v, u \in \honest$, the following holds true
\[
\divinfty{\seqadv{v}}{\seqadv{u}} \le \divinfty{W^{(v)}}{W^{(u)}} = \max_{w\in \honest} \ln\frac{(\vec I_{n-f} - (1 - \rho)\vec Q)^{-1}_{vw}}{(\vec I_{n-f} - (1 - \rho)\vec Q)^{-1}_{uw}}.\]
\end{restatable}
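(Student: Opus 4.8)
The plan is to split the statement into three parts: (a) the data-processing inequality $\divinfty{\seqadv{v}}{\seqadv{u}} \le \divinfty{W^{(v)}}{W^{(u)}}$; (b) the identification of $W^{(s)}$ as the absorbing state of the chain \eqref{eq:block-form-absorbing-markov}; and (c) the closed-form evaluation of $\divinfty{W^{(v)}}{W^{(u)}}$ in terms of the resolvent $(\vec I_{n-f} - (1-\rho)\vec Q)^{-1}$.

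For part (a), I would first observe that the active set $\{\activeset_t^{(s)}\}_{t \ge \starttime}$ of a $(1+\rho)$-cobra walk is a Markov chain. Let $\tau$ be the first round at which the walk either branches or a curious node becomes active; by the protocol description, up to round $\tau$ the walk behaves as a single token performing a simple random walk, so $\activeset_\tau^{(s)} = \{W^{(s)}\}$ is a singleton in $\honest$ (if branching or contact happens, it happens at the step \emph{out of} $W^{(s)}$). Crucially, $\tau \le \tadv$: a curious node can only appear in the active set after a communication into $F$ is made, which by definition of $\tau$ cannot have happened strictly before $\tau$. Hence the adversarial observation $\seqadv{s} = \{\fadv(\commset_t)\}_{t \ge \tadv}$ is a (measurable, randomized) function of the future of the process after $W^{(s)}$, and by the strong Markov property this function's conditional law given $\activeset_\tau^{(s)} = \{w\}$ does \emph{not} depend on $s$ — it is the same randomized map $\Phi$ applied to $w$ for every starting source. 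Applying the data-processing inequality for max-divergence (Theorem 14 of~\cite{dataprocessing}) to the channel $\Phi$ yields $\divinfty{\seqadv{v}}{\seqadv{u}} = \divinfty{\Phi(W^{(v)})}{\Phi(W^{(u)})} \le \divinfty{W^{(v)}}{W^{(u)}}$.

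For part (b), I would verify that the process "run a simple random walk on $G$; at each step, first with probability $\rho$ die in place, else if the step lands in $F$ die in place, else move" has exactly the transition matrix \eqref{eq:block-form-absorbing-markov}: from a transient state $w \in \honest$, with probability $(1-\rho)$ we do not branch, and then we move to a uniformly random neighbor; the sub-probability of landing on another non-curious node $v$ is $(1-\rho)/d = (1-\rho)Q_{vw}$ (using $\vec Q = \Hat{\vec A}[\honest]$ and $d$-regularity), while the probability of dying at $w$ is $\rho + (1-\rho)\degset{F}{w}/d = \rho + (1-\rho)R_{ww}$. This matches the block form with the absorbing state $\sink(w)$ standing for $W^{(s)} = w$. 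Then by \lemmaref{lemma:absorbtion-prob} the absorption probability of ending in $\sink(w)$ starting from $s$ equals $\big(\vec M\big)_{ws}$ where $\vec M = (\rho \vec I_{n-f} + (1-\rho)\vec R)(\vec I_{n-f} - (1-\rho)\vec Q)^{-1}$. Finally, for part (c), the diagonal factor $(\rho \vec I_{n-f} + (1-\rho)\vec R)_{ww}$ depends only on $w$, so in the ratio $\Pr[W^{(v)} = w]/\Pr[W^{(u)} = w]$ it cancels, leaving
\[
\divinfty{W^{(v)}}{W^{(u)}} = \max_{w \in \honest} \ln \frac{\Pr[W^{(v)} = w]}{\Pr[W^{(u)} = w]} = \max_{w \in \honest} \ln \frac{(\vec I_{n-f} - (1-\rho)\vec Q)^{-1}_{wv}}{(\vec I_{n-f} - (1-\rho)\vec Q)^{-1}_{wu}},
\]
and by symmetry of $\vec Q$ (hence of the resolvent) we may swap the index pairs to obtain the stated expression with $_{vw}$ and $_{uw}$. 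One must check the supremum over $\Sigma$ in the definition of $D_\infty$ is attained on the singletons $\{w\}$: since $W^{(s)}$ is supported on the finite set $\honest$, the sup over events reduces to the max over atoms, and atoms with $\Pr[W^{(u)} = w] = 0$ cannot occur because the subgraph induced by $\honest$ is connected (so $(\vec I_{n-f} - (1-\rho)\vec Q)^{-1} \succ \vec O$ entrywise, via the series \eqref{eq:Q-series} and irreducibility of $\vec Q$).

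I expect the main obstacle to be part (a): making precise that the adversarial observation is a bona fide randomized function of $W^{(s)}$ alone, uniform in $s$. The subtlety is the timing shift — the adversary sees the process starting from $\tadv$, not from $\tau$, and does not know $\starttime$ — so one must argue that conditioning on $\activeset_\tau^{(s)} = \{w\}$ fully determines the conditional law of the entire tail $\{(\activeset_t^{(s)}, \commset_t)\}_{t \ge \tau}$ independently of $s$ (this is the strong Markov property applied at the stopping time $\tau$), and that $\fadv$ applied round-by-round together with the truncation to $t \ge \tadv$ is a deterministic post-processing of that tail. Once this channel $\Phi$ is identified, the data-processing inequality does the rest; the algebraic parts (b) and (c) are then routine given \lemmaref{lemma:absorbtion-prob}.
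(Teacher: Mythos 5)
Your proposal is correct and follows essentially the same route as the paper's proof: stopping at the first ``unsafe'' round (branching or contact with $F$), invoking the strong Markov property at that stopping time together with the data-processing inequality for $D_\infty$, and then evaluating the absorption probabilities via \lemmaref{lemma:absorbtion-prob} with the diagonal factor cancelling and the symmetry of $(\vec I_{n-f}-(1-\rho)\vec Q)^{-1}$ justifying the index swap. The only cosmetic difference is that you package the reduction as a single randomized channel $\Phi$, whereas the paper applies the data-processing inequality in a short chain of steps (tail given $\round_\tau$, then $\round_\tau$ given $\activeset_\tau$, then the adversarial truncation), which is exactly the subtlety you flag yourself.
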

\begin{proof} We divide the proof in two parts: first, we show the left-hand side inequality, and then we show the right-hand side equality.

\paragraph*{(i) Left-hand side inequality.} 
To prove the left-hand side inequality, we will first introduce a notion of ``safe'' rounds. Intuitively, in a safe round, the execution behaves like a random walk on the set of non-curious nodes. Then, we will then use the definition of safe rounds to relate the cobra walk to absorbing Markov chain defined in (\ref{eq:block-form-absorbing-markov}). Finally, we will use Markovian property of cobra walks to apply Data Processing inequality and establish the left-hand side inequality.

\paragraph*{(i.1) Introducing safe rounds.}

For an execution with source node $s$, let us introduce an indicator variable $\nice_t^{(s)} \in \{0,1\}$ determined by $\activeset_t^{(s)}$ and $\commset_t^{(s)}$, which is defined as follows. 
\begin{equation}
\label{eq:nice-cobra}
\nice_t^{(s)} = \ind_{\left\{\exists v, u \in\honest \colon \activeset_t^{(s)} = \{u\} \land \commset_t^{(s)} = \{(u\to v)\}\right\}}.
\end{equation}
In other words, round $\nice_t^{(s)} = 1$ if and only if the active set consists of one non-curious node $u$ at the beginning of round $t$, and $u$ communicates a gossip to a single non-curious neighbour $v$ without branching. If $\nice_t^{(s)} = 1$, we will say that round $t$ is a safe round, and we will say it is unsafe otherwise.

\paragraph*{(i.2) Relating safe rounds to absorbing Markov Chain in (\ref{eq:block-form-absorbing-markov}).}

Recall that we consider $d$-regular graphs. Let $u,v \in \honest$ by two nodes connected by an edge. If $\activeset_t = \{u\}$, then the cobra walk does not branch in round $t$ with probability $1 - \rho$. Additionally, if it does not branch, $u$ 
contacts node $v \in N(u)$ with probability $1/d$. Hence, we have
\begin{align*}
    \Pr\left[\activeset_{t+1}^{(s)} = \{v\} \land \nice_t^{(s)} = 1 \mid \activeset_t^{(s)} = \{u\} \right] &= \frac{1 - \rho}{d} = (1-\rho) \vec Q_{vu}
\end{align*}
Note that for any $t$ and any $v,u \in \honest$ that are not connected by an edge, we have
\begin{align*}
\Pr[\activeset_{t+1}^{(s)} = \{v\} \land \nice_t^{(s)} = 1 \mid \activeset_t^{(s)} = \{u\}]= 0 = (1 - \rho)\vec Q_{vu}.
\end{align*}
Then, for any $v,u\in\honest$
\begin{align}
\label{eq:transient-flag-up}
\Pr[\activeset_{t+1}^{(s)} = \{v\} \land \nice_t^{(s)} = 1 \mid \activeset_t^{(s)} = \{u\}] = (1 - \rho)\vec Q_{vu},
\end{align}
which is equal to the transition probability between two transient states $v$ and $u$ of (\ref{eq:block-form-absorbing-markov}). Also, for any $u$, we have 
\begin{align}
\Pr[\nice_t^{(s)} = 0 \mid \activeset_t^{(s)} = \{u\}] &= 1 - (1-\rho)\frac{\deg_{\honest}(u)}{d} \notag\\
                                               &= \rho + \frac{\deg_F(u)}{d} (1-\rho) \notag\\
                                               &= \rho(\vec I_{n-f})_{uu} + (1 - \rho)\vec R_{uu} \label{eq:absorbing-flag-down},
\end{align}
which is equal to the probability of being absorbed at $\sink(u)$ from state $u$ in (\ref{eq:block-form-absorbing-markov}). Let $\tau^{(s)}$ be the first unsafe round, i.e.,
\begin{equation}
\label{eq:tau-s-cobra}
\tau^{(s)} = \min\{t \colon \nice_t^{(s)} = 0\}.
\end{equation}
Then $\tau^{(s)}$ is the first round in which cobra walk either branches (i.e., $|\commset_{\tau^{(s)}}| > 1$), or contacts a curious node by (\ref{eq:nice-cobra}). Then, from (\ref{eq:transient-flag-up}) and (\ref{eq:absorbing-flag-down}), at time $\tau^{(s)}$, we have $\activeset_{\tau^{(s)}}^{(s)} = \{W^{(s)}\}$, where $W^{(s)}$ is an absorbing state of chain defined in (\ref{eq:block-form-absorbing-markov}).

\paragraph*{(i.3) Applying the Data Processing inequality.}
Let $\round_t^{(s)} = (\activeset_t^{(s)}, \commset_t^{(s)})$ for every $t$, i.e., $\round_t^{(s)}$ describes round $t$ of the execution. Consider a sequence $\seq{s} = \{\round_t^{(s)}\}_{t \ge \starttime}$. Note that $\{\round_t^{(s)}\}_{t \ge \starttime}$ is \emph{Markovian} by definition of $\activeset_t^{(s)}$ and $\commset_t^{(s)}$ of cobra walk in \sectionref{section:gossip}. 
Also, note that $\tau^{(s)}$ is a \emph{stopping time} of this Markov process $\{\round_t^{(s)}\}_{t\ge \starttime}$ by definition of $\tau^{(s)}$ in (\ref{eq:tau-s-cobra}) (for definition of stopping time, see Section 6.2 of~\cite{LevinPeresWilmer2006}). 

Then, by Strong Markov Property (Proposition A.19 of~\cite{LevinPeresWilmer2006}), the law of $\{\round_t^{(s)}\}_{t \ge \tau^{(s)}}$ only depends on $\round_{\tau^{(s)}}^{(s)}$. Hence, by the Data Processing Inequality (Theorem 14 of~\cite{dataprocessing}), we have
\begin{align}
\label{eq:data-processing-one}
\divinfty{\{\round_t^{(v)}\}_{t \ge \tau^{(v)}}}{\{\round_t^{(u)}\}_{t \ge \tau^{(u)}}} \le \divinfty{\round_{\tau^{(v)}}^{(v)}}{\round_{\tau^{(u)}}^{(u)}}.
\end{align} 

Note that, the random variable $\commset_{\tau^{(s)}}^{(s)}$ characterizes a round in which the cobra walk has active set $\activeset_{\tau^{(s)}}^{(s)}$ and is conditioned to either branch or hit a curious node. Then, since cobra walk is Markovian, communications $\commset_{\tau^{(s)}}^{(s)}$ that happen in round $\tau^{(s)}$ only depend on $\activeset_{\tau^{(s)}}^{(s)}$. Hence, by the Data Processing Inequality (Theorem 14 of~\cite{dataprocessing}), we have $\divinfty{\round_{\tau^{(v)}}^{(v)}}{\round_{\tau^{(u)}}^{(u)}} \le \divinfty{\activeset_{\tau^{(v)}}^{(v)}}{\activeset_{\tau^{(u)}}^{(u)}}$. Then
\begin{align}
\label{eq:data-processing-two}
\divinfty{\{\round_t^{(v)}\}_{t \ge \tau^{(v)}}}{\{\round_t^{(u)}\}_{t \ge \tau^{(u)}}} \le \divinfty{\activeset_{\tau^{(v)}}^{(v)}}{\activeset_{\tau^{(u)}}^{(u)}}.
\end{align}
Finally, recall that we showed in (i.2) that $\activeset_{\tau^{(s)}}^{(s)} = \{W^{(s)}\}$, where $W^{(s)}$ is an absorbing state of the Markov chain (\ref{eq:block-form-absorbing-markov}). Hence,
\begin{align}
\label{eq:data-processing-three}
\divinfty{\{\round_t^{(v)}\}_{t \ge \tau^{(v)}}}{\{\round_t^{(u)}\}_{t \ge \tau^{(u)}}} \le \divinfty{W^{(v)}}{W^{(u)}}. 
\end{align}
    
    Note also that since no curious node is contacted before time $\tau^{(s)}$, $\seqadv{s}$ can be obtained from $\{\round_t^{(s)}\}_{t \ge \tau^{(s)}} = \{(\commset_{t}^{(s)}, \activeset_{t}^{(s)})\}_{t \ge \tau^{(s)}}$ via a deterministic mapping from definition of $\seqadv{s}$ in \sectionref{sec:source-anonymity-with-DP}. Then, applying the  Data Processing Inequality (Theorem 14 of~\cite{dataprocessing}) again, for any $v,u \in \honest$ we have 
    
    \begin{equation}
    \label{eq:data-process-seqadv-xt}
     \divinfty{\seqadv{v}}{\seqadv{u}} \le \divinfty{\{\round_t^{(v)}\}_{t \ge \tau^{(v)}}}{\{\round_t^{(u)}\}_{t \ge \tau^{(u)}}}.
    \end{equation}
    Combining the above with (\ref{eq:data-processing-three}), twe get
    \begin{equation}
    \label{eq:data-process-seqadv-w}
         \divinfty{\seqadv{v}}{\seqadv{u}} \leq \divinfty{W^{(v)}}{W^{(u)}},
    \end{equation}
    which concludes the proof of the left inequality.

    \paragraph*{(ii) Right-hand side equality.} 
    As we mentioned in (i.2), $W^{(s)}$ follows the Markov chain as in (\ref{eq:block-form-absorbing-markov}). Then, by \lemmaref{lemma:absorbtion-prob} the absorption probabilities matrix can be written as follows
    \[
    \vec M = \left((1-\rho)\vec R + \rho\vec I_{n-f}\right)\left(\vec I_{n - f} - (1-\rho)\vec Q\right)^{-1},
    \]
    where $M_{wv}$ corresponds to the probability of starting at $v$ and being absorbed at $\sink(w)$. Also, note that $\vec Q$ and $\vec I_{n-f}$ are symmetric, hence so is $\left(\vec I_{n - f} - (1-\rho)\vec Q\right)^{-1}$. Then, for any $v \in \honest$
    \begin{align*}
    \Pr[W^{(v)} = w] &= \left((1-\rho)R_{ww} + \rho\right) \left(\vec I_{n - f} - (1-\rho)\vec Q\right)^{-1}_{wv} \\
    &= \left((1-\rho)R_{ww} + \rho\right) \left(\vec I_{n - f} - (1-\rho)\vec Q\right)^{-1}_{vw}.
    \end{align*}
    Similarly for any $u\in \honest$
    \[
    \Pr[W^{(u)} = w] = \left((1-\rho)R_{ww} + \rho\right) \left(\vec I_{n - f} - (1-\rho)\vec Q\right)^{-1}_{uw}.
    \]
    Hence, by definition of max divergence, for any $v, u \in \honest$ we have
    \begin{align*}
    \divinfty{W^{(v)}}{W^{(u)}} &= \max_{w \in \honest}\ln \frac{\Pr[W^{(v)} = w]}{\Pr[W^{(u)} = w]}\\ 
                                &= \max_{w \in \honest}\ln \frac{ \left(\vec I_{n - f} - (1-\rho)\vec Q\right)^{-1}_{vw}}{ \left(\vec I_{n - f} - (1-\rho)\vec Q\right)^{-1}_{uw}}.
    \end{align*}
    The above concludes the proof.
\end{proof}

\begin{remark}
    One might notice that we in fact have an \emph{equality} in all of (\ref{eq:data-processing-one}), (\ref{eq:data-processing-two}) and (\ref{eq:data-processing-three}), since $W^{(s)}$ is such that we have $\round_{\tau^{(s)}}^{(s)} = (\{W^{(s)}\}, \commset_{\tau^{(s)}}^{(s)})$, i.e., we may recover $W^{(s)}$ simply by looking at the first element of sequence $\{\round_t^{(s)}\}_{t \ge \tau^{(s)}}$. Nevertheless, inequality in (\ref{eq:data-process-seqadv-xt}) cannot be reversed, since the curious nodes only observe part of the communications, and hence, the final inequality (\ref{eq:data-process-seqadv-w}) may not be an equality.
\end{remark}

\subsection{Reduction to random walk with probabilistic die out for Dandelion} \label{sec:dandelion-privacy}
We now show that the reduction in \lemmaref{lemma:sufficient-varepsilon-0} applies to $\rho$-Dandelion. The rest of the proof will be the same for Dandelion and cobra walks.

First, we explain some specificities of our threat model with respect to Dandelion protocol. Recall that $\commset_t$ is the set of communications occurred in round $t$, $\fadv$ is a function that takes as input communications $\commset_t$ from a single round and outputs only the communications of $\commset_t$ visible to the adversary and $\tadv$ is the first round in which one of the curious nodes received the gossip. Let $\anonphase_t$ be the phase indicator variable for round $t$ of the execution held by the oracle (equal to $1$ in the anonymity phase, and $0$ in the spreading phase). Note that any node can retrieve the value of $\anonphase$ by calling the oracle. Hence, besides observing communications in $\fadv\left(\commset_t\right)$, the adversary can also observe the phase in which round $t$ has occurred (i.e., the value $\anonphase_t$). To account for this, we modify slightly our definition of $S_\textsc{adv}$ in \sectionref{sec:privacy-gossip} that describes the observations of the adversary. Studying the Dandelion protocol, we define $S_\textsc{adv} = \{\left(\fadv(\commset_t), \anonphase_t\right)\}_{t \ge \tadv}$. That is, the definition of $S_\textsc{adv}$ is slightly stronger than the one given in \sectionref{sec:privacy-gossip} since we allow the adversary to observe the phase of the protocol.

Now, we demonstrate that the reduction in \lemmaref{lemma:sufficient-varepsilon-0} applies to Dandelion even in this slightly stronger setting.
\begin{restatable}{lemma}{DandelionReduction}
\label{lemma:dandelion-reduction}
    Consider $\rho$-Dandelion on a $d$-regular graph $G = (V,E)$. Let $F \subset V$ be a set of curious nodes such that the subgraph of $G$ induced by $\honest$ is connected. Let $\vec Q = \Hat{\vec A}[\honest]$ and, for $s \in \honest$, let $W^{(s)}$ be the absorbing state of the Markov chain as in \equationref{eq:block-form-absorbing-markov}. Then, for any  $v, u \in \honest$, the following holds true
    \[ \divinfty{\seqadv{v}}{\seqadv{u}} \le \divinfty{W^{(v)}}{W^{(u)}} = \max_{w\in \honest} \ln\frac{(\vec I_{n-f} - (1 - \rho)\vec Q)^{-1}_{vw}}{(\vec I_{n-f} - (1 - \rho)\vec Q)^{-1}_{uw}},\]
    where $\seqadv{s} = \left\{\left(\fadv(\commset_t^{(s)}), \anonphase_t^{(s)}\right)\right\}_{t \ge \tadv^{(s)}}$ as defined above.
\end{restatable}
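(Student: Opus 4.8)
The plan is to follow the proof of \lemmaref{lemma:sufficient-varepsilon-0} line by line, changing only the notion of a \emph{safe round} and re-deriving the two transition identities \equationref{eq:transient-flag-up} and \equationref{eq:absorbing-flag-down}; once $W^{(s)}$ is again identified with the absorbing state of the \emph{same} chain \equationref{eq:block-form-absorbing-markov}, the right-hand side equality of the lemma is word-for-word part (ii) of \lemmaref{lemma:sufficient-varepsilon-0}. For an execution of $\rho$-Dandelion started at $s$, I would call round $t$ \emph{safe} (write $\nice_t^{(s)} = 1$) if $\anonphase_t^{(s)} = 1$, the active set is a single non-curious node $\activeset_t^{(s)} = \{u\}$ with $u \in \honest$, and the unique communication of that round is $(u \to v)$ with $v \in \honest$; otherwise round $t$ is \emph{unsafe}. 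Set $\tau^{(s)} = \min\{t : \nice_t^{(s)} = 0\}$. Exactly as in the cobra-walk case, safeness of all rounds before $\tau^{(s)}$ forces $\activeset_{\tau^{(s)}}^{(s)} = \{W^{(s)}\}$ for a single node $W^{(s)} \in \honest$ ($W^{(s)} = s$ if $\tau^{(s)} = \starttime$, otherwise $W^{(s)}$ is the non-curious node handed the gossip in the safe round $\tau^{(s)} - 1$).

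For the transition identities, recall that the oracle updates $\anonphase$ at the \emph{start} of each round and that, since Dandelion begins in the anonymity phase, conditioning on rounds $\starttime, \dots, t-1$ being safe forces the oracle's token to equal $1$ at the start of round $t$; it then remains $1$ with probability $1 - \rho$, in which case the single active node forwards to a uniformly random neighbour. Hence for $u, v \in \honest$,
\[
\Pr[\activeset_{t+1}^{(s)} = \{v\} \land \nice_t^{(s)} = 1 \mid \activeset_t^{(s)} = \{u\}] = (1 - \rho)\vec Q_{vu},
\]
(both sides $0$ if $\{u,v\} \notin E$), which matches the transient--transient block of \equationref{eq:block-form-absorbing-markov}. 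Given $\activeset_t^{(s)} = \{u\}$ with $u \in \honest$, round $t$ is unsafe exactly when either the oracle switches to the spreading phase --- probability $\rho$, in which case $u$ \emph{broadcasts}, so the ``death site'' is $u$ itself because the phase flips before $u$ acts --- or the phase stays anonymous and $u$ forwards to a curious neighbour --- probability $(1 - \rho)\degset{F}{u}/d$. Therefore
\[
\Pr[\nice_t^{(s)} = 0 \mid \activeset_t^{(s)} = \{u\}] = \rho + (1 - \rho)\frac{\degset{F}{u}}{d} = \rho(\vec I_{n-f})_{uu} + (1 - \rho)\vec R_{uu},
\]
which matches the absorbing block of \equationref{eq:block-form-absorbing-markov}. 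Consequently $W^{(s)}$ is precisely the absorbing state of \equationref{eq:block-form-absorbing-markov}, and the right-hand side equality follows from \lemmaref{lemma:absorbtion-prob} and the symmetry of $(\vec I_{n-f} - (1 - \rho)\vec Q)^{-1}$ exactly as in part (ii) of \lemmaref{lemma:sufficient-varepsilon-0}.

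It remains to establish the left-hand inequality, for which I would repeat the strong-Markov and data-processing argument of \lemmaref{lemma:sufficient-varepsilon-0} with one extra piece of bookkeeping for the phase observations. The augmented process $\{(\round_t^{(s)}, \anonphase_t^{(s)})\}_{t \ge \starttime}$, with $\round_t^{(s)} = (\activeset_t^{(s)}, \commset_t^{(s)})$, is Markovian ($\activeset_{t+1}^{(s)}$ is read off $\commset_t^{(s)}$, $\anonphase_{t+1}^{(s)}$ depends only on $\anonphase_t^{(s)}$, and the round-$(t{+}1)$ communications depend only on $\activeset_{t+1}^{(s)}$, $\anonphase_{t+1}^{(s)}$ and fresh randomness), and $\tau^{(s)}$ is a stopping time of it. Since no curious node is involved before round $\tau^{(s)}$ we have $\tau^{(s)} \le \tadv^{(s)}$, so $\seqadv{s} = \{(\fadv(\commset_t^{(s)}), \anonphase_t^{(s)})\}_{t \ge \tadv^{(s)}}$ is a deterministic function of $\{(\round_t^{(s)}, \anonphase_t^{(s)})\}_{t \ge \tau^{(s)}}$. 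By the strong Markov property the law of the latter depends only on $(\round_{\tau^{(s)}}^{(s)}, \anonphase_{\tau^{(s)}}^{(s)})$; and given $W^{(s)} = w$ this pair equals $((\{w\}, \commset_{\tau^{(s)}}^{(s)}), \anonphase_{\tau^{(s)}}^{(s)})$, whose conditional law depends only on $w$ --- it is the mixture of ``$w$ broadcasts, $\anonphase = 0$'' with the renormalised phase-transition probability, and ``$w$ forwards to a uniformly random curious neighbour, $\anonphase = 1$'' otherwise --- and in particular is independent of $s$. Chaining the Data Processing Inequality (Theorem 14 of~\cite{dataprocessing}) through these three reductions, exactly as in \equationref{eq:data-processing-one}--\equationref{eq:data-process-seqadv-w}, gives $\divinfty{\seqadv{v}}{\seqadv{u}} \le \divinfty{W^{(v)}}{W^{(u)}}$.

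The main obstacle is getting the phase variable right: one must (i) confirm that a phase transition kills the \emph{currently active} node $u$, rather than the node $u$ would otherwise have forwarded to --- this is exactly what makes the diagonal $\rho$-term of \equationref{eq:block-form-absorbing-markov} reappear unchanged --- and (ii) check that revealing the whole phase sequence $\{\anonphase_t^{(s)}\}_{t \ge \tadv^{(s)}}$ to the adversary leaks nothing beyond $W^{(s)}$, because the oracle's token is driven by randomness independent of the source and is already summarised by $(\round_{\tau^{(s)}}^{(s)}, \anonphase_{\tau^{(s)}}^{(s)})$ at the stopping time. Every remaining step is then identical to the cobra-walk proof.
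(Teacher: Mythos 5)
Your proposal is correct and follows essentially the same route as the paper: the same phase-aware definition of safe rounds, the same identification of the stopped process with the absorbing chain of \equationref{eq:block-form-absorbing-markov} (with the $\rho$ diagonal term coming from the phase flip killing the currently active node), the same strong-Markov plus data-processing chain, and the same invocation of \lemmaref{lemma:absorbtion-prob} for the equality. The only cosmetic difference is that the paper conditions explicitly on $\anonphase_{t-1}^{(s)}=1$ in the transition identities, whereas you carry that conditioning implicitly through the safety of the preceding rounds.
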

\begin{proof}
Let $s\in \honest$ be arbitrary. Note that the right-hand equality follows from the proof of (\ref{lemma:sufficient-varepsilon-0}). To prove the left-hand side inequality, we will first introduce a notion of ``safe'' rounds, in a similar way as for the proof of \lemmaref{lemma:sufficient-varepsilon-0}. We call a round safe, if the execution is still in the anonymity phase (i.e., $\anonphase = 1$), and nodes which communicated during the current round are non-curious. We also introduce an indicator variable $\nice_t$ corresponding to a round $t$ being safe. Formally, 
\begin{equation}
\label{eq:nice-dandelion}
    \nice_t^{(s)} = \ind_{\left\{\anonphase_{t}^{(s)} = 1 \land \exists v,u \in \honest \colon \activeset_t^{(s)} = \{u\}\land\commset_t^{(s)} = \{(u \to v)\}\right\}}.
\end{equation}

\paragraph*{(i) Relating safe rounds to an absorbing Markov chain as in (\ref{eq:block-form-absorbing-markov}).}

If $\anonphase_{t-1}^{(s)} = 1$, then the execution of Dandelion does not enter a spreading phase in round $t$ with probability $1-\rho$. Also, if the execution is still in the anonymity phase, there is only one active node $u$. Since graph is $d$-regular, $u$ contacts any fixed $v \in N(u)$ with probability $1/d$. Hence, we have the following for any $u, v\in \honest, v\in N(u)$
\begin{align*}
    \Pr\left[\activeset_{t+1}^{(s)} = \{v\} \land \nice_t^{(s)} = 1 \mid \activeset_t^{(s)} = \{u\} \land \anonphase_{t-1}^{(s)}=1 \right] &= \frac{1 - \rho}{d} = (1-\rho) \vec Q_{vu}
\end{align*}
Note that for any $t$ and any $v,u \in \honest$ such that $v\not \in N(u)$, we have
\begin{align*}
\Pr[\activeset_{t+1}^{(s)} = \{v\} \land \nice_t^{(s)} = 1 \mid \activeset_t^{(s)} = \{u\} \land \anonphase_{t-1}^{(s)} =1]= 0 = (1 - \rho)\vec Q_{vu}.
\end{align*}
Then, for any $v,u \in \honest$, we get
\begin{align}
\label{eq:transient-flag-up-dandelion}
\Pr[\activeset_{t+1}^{(s)} = \{v\} \land \nice_t^{(s)} = 1 \mid \activeset_t^{(s)} = \{u\} \land \anonphase_{t-1}^{(s)}=1 ] = (1 - \rho)\vec Q_{vu},
\end{align}
which is equal to the transition probability between two transient states $v$ and $u$ of (\ref{eq:block-form-absorbing-markov}). Also, for any $u$, we have 
\begin{align}
\Pr[\nice_t^{(s)} = 0 \mid \activeset_t^{(s)} = \{u\}\land \anonphase_{t-1}^{(s)}=1 ] &= 1 - (1-\rho)\frac{\deg_{\honest}(u)}{d} \notag\\
                                               &= \rho + \frac{\deg_F(u)}{d} (1-\rho) \notag\\
                                               &= \rho(\vec I_{n-f})_{uu} + (1 - \rho)\vec R_{uu} \label{eq:absorbing-flag-down-dandelion},
\end{align}
which is equal to the probability of being absorbed at $\sink(u)$ from state $u$ in (\ref{eq:block-form-absorbing-markov}). Let $\tau^{(s)}$ be the first unsafe round, i.e., 
\begin{equation}
    \label{eq:tau-s-dandelion}
    \tau^{(s)} = \min\{t \colon \nice_t^{(s)} = 0\}.
\end{equation}
Then, $\tau^{(s)}$ is the first round in which either the execution enters the spreading phase (i.e., $\anonphase_{\tau^{(s)}-1} = 1$, but $\anonphase_{\tau^{(s)}} = 0$), or a curious node is contacted during the round. Then, from (\ref{eq:transient-flag-up-dandelion}) and (\ref{eq:absorbing-flag-down-dandelion}), at time $\tau^{(s)}$, we have $\activeset_{\tau^{(s)}}^{(s)} = \{W^{(s)}\}$, where $W^{(s)}$ is an absorbing state of the chain defined in (\ref{eq:block-form-absorbing-markov}).

\paragraph*{(ii) Applying the Data Processing inequality.}
Let $\round_t^{(s)} = (\activeset_t^{(s)}, \commset_t^{(s)}, \anonphase_t^{(s)})$ for every $t$, i.e., $\round_t^{(s)}$ describes a round $t$ of the execution of Dandelion. Consider a sequence $\{\round_t^{(s)}\}_{t \ge \starttime}$. Note that $\{\round_t^{(s)}\}_{t \ge \starttime}$ is \emph{Markovian}. Indeed, as we mention in \sectionref{section:gossip}, the oracle determines the value of $\anonphase_{t}^{(s)}$ in a randomized way based on the value of $\anonphase_{t-1}^{(s)}$; additionally, $\activeset_t^{(s)}, \commset_t^{(s)}$ only depend on the previous round $\round_{t-1}^{(s)}$ and the phase of the execution $\anonphase_t^{(s)}$.
Then $\tau^{(s)}$ is a \emph{stopping time} for a Markov chain $\{\round_t^{(s)}\}_{t\ge \starttime}$ by definition of $\tau^{(s)}$ in (\ref{eq:tau-s-dandelion}) (for definition of a stopping time, see Section 6.2 of~\cite{LevinPeresWilmer2006}).

Then, by Strong Markov Property (Proposition A.19 of~\cite{LevinPeresWilmer2006}), the law of $\{\round_t^{(s)}\}_{t \ge \tau^{(s)}}$ only depends on $\round_{\tau^{(s)}}^{(s)}$. Hence, by the Data Processing Inequality (Theorem 14 of~\cite{dataprocessing}), we have
\begin{align}
\label{eq:data-processing-one-dandelion}
\divinfty{\{\round_t^{(v)}\}_{t \ge \tau^{(v)}}}{\{\round_t^{(u)}\}_{t \ge \tau^{(u)}}} \le \divinfty{\round_{\tau^{(v)}}^{(v)}}{\round_{\tau^{(u)}}^{(u)}}.
\end{align}
Note that, by definition of $\nice_t^{(s)}$ in (\ref{eq:nice-dandelion}), the pair $(\commset_{\tau^{(s)}}^{(s)}, \anonphase_{\tau^{(s)}}^{(s)})$ characterizes the round of Dandelion in the anonymity phase in which it has active set $\activeset_{\tau^{(s)}}^{(s)}$ and is conditioned to hit a curious node or enter a spreading phase. Then, since Dandelion is Markovian, the pair $(\commset_{\tau^{(s)}}^{(s)}, \anonphase_{\tau^{(s)}}^{(s)})$ only depends on $\left(\activeset_{\tau^{(s)}}^{(s)}, \anonphase_{\tau^{(s)} - 1}^{(s)}\right) = \left(\activeset_{\tau^{(s)}}^{(s)}, 1\right)$. Hence, by the Data Processing Inequality (Theorem 14 of~\cite{dataprocessing}), we have \[\divinfty{\round_{\tau^{(v)}}^{(v)}}{\round_{\tau^{(u)}}^{(u)}} \le \divinfty{\left(\activeset_{\tau^{(v)}}^{(v)}, 1\right)}{\left(\activeset_{\tau^{(u)}}^{(u)}, 1\right)} = \divinfty{\activeset_{\tau^{(v)}}^{(v)}}{\activeset_{\tau^{(u)}}^{(u)}}.\] Combining this with (\ref{eq:data-processing-one-dandelion}), we get
\begin{align}
\label{eq:data-processing-two-dandelion}
\divinfty{\{\round_t^{(v)}\}_{t \ge \tau^{(v)}}}{\{\round_t^{(u)}\}_{t \ge \tau^{(u)}}} \le \divinfty{\activeset_{\tau^{(v)}}^{(v)}}{\activeset_{\tau^{(u)}}^{(u)}}.
\end{align}
Finally, recall that we showed in (i) that $\activeset_{\tau^{(s)}}^{(s)} = \{W^{(s)}\}$, where $W^{(s)}$ is an absorbing state of the Markov chain (\ref{eq:block-form-absorbing-markov}). Hence, from (\ref{eq:data-processing-two-dandelion}), we have
\begin{align}
\label{eq:data-processing-three-dandelion}
\divinfty{\{\round_t^{(v)}\}_{t \ge \tau^{(v)}}}{\{\round_t^{(u)}\}_{t \ge \tau^{(u)}}} \le \divinfty{W^{(v)}}{W^{(u)}}. 
\end{align}

Note also that since no curious node is contacted before time $\tau^{(s)}$, $\seqadv{s}$ can be obtained from $\{\round_t^{(s)}\}_{t \ge \tau^{(s)}} = \{(\activeset_t^{(s)}, \commset_t^{(s)}, \anonphase_t^{(s)})\}_{t \ge \tau^{(s)}}$ via a deterministic mapping by definition of $\seqadv{s}$ above in \appendixref{sec:dandelion-privacy}. Then, applying the  Data Processing Inequality (Theorem 14 of~\cite{dataprocessing}) again, for any $v,u \in \honest$ we have
\begin{equation}
\label{eq:data-process-seqadv-xt-dandelion}
\divinfty{\seqadv{v}}{\seqadv{u}} \le \divinfty{\{\round_t^{(v)}\}_{t \ge \tau^{(v)}}}{\{\round_t^{(u)}\}_{t \ge \tau^{(u)}}}.
\end{equation}
Combining the above with (\ref{eq:data-processing-three-dandelion}), we get
\begin{equation}
\label{eq:data-process-seqadv-w-dandelion}
        \divinfty{\seqadv{v}}{\seqadv{u}} \leq \divinfty{W^{(v)}}{W^{(u)}},
\end{equation}
which concludes the proof.
\end{proof}

\subsection{Upper bounding max divergence}

\begin{restatable}
{lemma}{SpectralQ}
\label{lemma:spectral-Q}
Consider a $(d,\lambda)$-expander graph $G = (V,E)$ of size $n$, and a set of curious nodes $F \subseteq V$, with $|F| = f$. Then, the following assertions hold true
\begin{enumerate}[(a)]
    \item $\max\{1 - \alpha_F, 1 - (1 + \lambda)\frac{f}{n}\} \le \lambda_1(\vec Q) \le 1 - (1 - \lambda)\frac{f}{n}$.
    \item $\max\{\left|\lambda_2(\vec Q)\right|, \left|\lambda_{n-f}(\vec Q)\right|\} \le \lambda$.
\end{enumerate}
\end{restatable}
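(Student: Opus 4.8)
Part (b) I would dispatch immediately via the Cauchy interlacing law: since $\vec Q = \Hat{\vec A}[\honest]$ is a principal submatrix of $\Hat{\vec A}$ of size $n-f$, \lemmaref{lemma:cauchy-interlace} gives $\lambda_2(\vec Q) \le \lambda_2(\Hat{\vec A})$ and $\lambda_{n-f}(\vec Q) \ge \lambda_n(\Hat{\vec A})$. Every eigenvalue $\lambda_i(\vec Q)$ with $i \ge 2$ lies between these two, and $\max\{\lambda_2(\Hat{\vec A}),\, -\lambda_n(\Hat{\vec A})\} \le \lambda(G) \le \lambda$ by definition of the spectral expansion, so $|\lambda_i(\vec Q)| \le \lambda$ for every $i \ge 2$, which is exactly (b). The bound $\lambda_1(\vec Q) \ge 1 - \alpha_F$ in part (a) I would then read off the Rayleigh quotient at the test vector $\vec 1_{n-f}$: the $v$-th row sum of $\vec Q$ equals $\degset{\honest}{v}/d = 1 - \degset{F}{v}/d \ge 1 - \alpha_F$ by the definition of adversarial density, hence $\vec 1_{n-f}^\top \vec Q \vec 1_{n-f} \ge (n-f)(1-\alpha_F)$, and dividing by $\norm{\vec 1_{n-f}}_2^2 = n-f$ gives the claim.

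The remaining two estimates I would prove together through the splitting $\Hat{\vec A} = \tfrac{1}{n}\vec J_n + \vec E$. Since $G$ is $d$-regular, $\vec 1_n$ is the Perron eigenvector of $\Hat{\vec A}$ with eigenvalue $1$, hence $\vec E \vec 1_n = \vec 0$ and the remaining eigenvalues of $\vec E$ are $\lambda_2(\Hat{\vec A}),\dots,\lambda_n(\Hat{\vec A})$; in particular $\norm{\vec E}_{op} = \lambda(G) \le \lambda$. For a unit vector $\vec x$ supported on $\honest$, I would lift it to $\vec z \in \mathbb R^n$ (zeros outside $\honest$), set $c = \langle \vec z, \vec 1_n/\sqrt n\rangle$, and split $\vec z = c\,\vec 1_n/\sqrt n + \vec z_\perp$ with $\vec z_\perp \perp \vec 1_n$. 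Using $\vec E \vec 1_n = \vec 0$ one obtains
\[
\vec x^\top \vec Q \vec x = \vec z^\top \Hat{\vec A}\vec z = c^2 + \vec z_\perp^\top \vec E \vec z_\perp, \qquad \bigl|\vec z_\perp^\top \vec E \vec z_\perp\bigr| \le \lambda\,\norm{\vec z_\perp}_2^2 = \lambda(1-c^2),
\]
so $(1+\lambda)c^2 - \lambda \le \vec x^\top \vec Q \vec x \le (1-\lambda)c^2 + \lambda$ (we may assume $\lambda \le 1$, as $\lambda(G)\le 1$ always). By Cauchy--Schwarz $c^2 \le \norm{(\vec 1_n/\sqrt n)|_{\honest}}_2^2 = (n-f)/n$; plugging this into the upper estimate and taking the supremum over $\vec x$ gives $\lambda_1(\vec Q) \le 1 - (1-\lambda)\tfrac{f}{n}$. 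For the matching lower bound I would instead take the specific vector $\vec x = \vec 1_{n-f}/\sqrt{n-f}$, for which $c^2 = (n-f)/n$ \emph{exactly}, whence $\lambda_1(\vec Q) \ge (1+\lambda)\tfrac{n-f}{n} - \lambda = 1 - (1+\lambda)\tfrac{f}{n}$.

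The one step requiring care — and the reason I would not simply invoke a Weyl-type perturbation inequality, which only yields the weaker $\lambda_1(\vec Q) \ge (n-f)/n - \lambda$ — is that the error term $\vec z_\perp^\top \vec E \vec z_\perp$ must be controlled by $\norm{\vec z_\perp}_2^2$ rather than by $\norm{\vec z}_2^2$. Viewed in $\mathbb R^n$, the test vector $\vec 1_{n-f}$ has only an $f/n$ fraction of its $\ell_2$-mass orthogonal to the Perron direction $\vec 1_n$ of $G$; this is precisely what shrinks the error from $\lambda$ to $\lambda f/n$ and makes the bounds $1-(1-\lambda)\tfrac fn$ and $1-(1+\lambda)\tfrac fn$ meet. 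Everything else is routine arithmetic, and (b) feeds directly into the later one-rank approximation of $\vec Q^t$.
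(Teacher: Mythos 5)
Your proof is correct, and for part (a) it takes a genuinely different route from the paper, while part (b) coincides with the paper's argument (Cauchy interlacing via \lemmaref{lemma:cauchy-interlace}). For $\lambda_1(\vec Q) \ge 1-\alpha_F$ the paper does not use a Rayleigh quotient: it notes that $\vec Q + \vec R$ has row sums $1$, hence top eigenvalue $1$, and applies the eigenvalue stability inequality (\lemmaref{lemma:eigenvalue-stability}) with $\norm{\vec R}_{op}\le \alpha_F$; your row-sum/Rayleigh argument at $\vec 1_{n-f}$ is shorter and yields the same constant. For $\lambda_1(\vec Q)\ge 1-(1+\lambda)\tfrac{f}{n}$ the paper bounds $\norm{\vec Q\vec 1_{n-f}}_1$ and invokes the expander mixing lemma to control $|\partial F|$, whereas you essentially re-prove that lemma inline through the splitting $\Hat{\vec A}=\tfrac1n \vec J_n+\vec E$ with $\vec E\vec 1_n=\vec 0$ and $\norm{\vec E}_{op}=\lambda(G)\le\lambda$ — same quantitative content, but self-contained. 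The largest divergence is the upper bound $\lambda_1(\vec Q)\le 1-(1-\lambda)\tfrac{f}{n}$: the paper argues probabilistically, interpreting $\tfrac{1}{n-f}\vec 1^\top\vec Q^t\vec 1$ as the probability that a random walk avoids $F$ for $t$ steps, quoting Theorem 4.17 of~\cite{Pseudorandomness} and letting $t\to\infty$, while you obtain it in one step from the same $\vec J/\vec E$ decomposition, using Cauchy--Schwarz to bound the overlap $c^2\le (n-f)/n$ of any unit vector supported on $\honest$ with the Perron direction. Your version is more elementary and unified — a single decomposition produces both the upper bound and the matching lower bound (with $c^2=(n-f)/n$ attained by $\vec 1_{n-f}/\sqrt{n-f}$), and you rightly note that a crude Weyl-type perturbation would only give the weaker $1-\tfrac{f}{n}-\lambda$ — whereas the paper outsources the work to standard cited results. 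Two points you gloss over are indeed harmless: if $\lambda>1$ both claimed bounds are trivial since $\lambda_1(\vec Q)\le\lambda_1(\Hat{\vec A})=1$ by interlacing, and $\norm{\vec E}_{op}=\lambda(G)$ holds because $\vec E$ kills $\vec 1_n$ and agrees with $\Hat{\vec A}$ on its orthogonal complement.
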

\begin{proof}
We first show the left-hand side inequality of assertion (a). Then, we show the right-hand side of assertion (a). Finally, we show assertion (b).
\paragraph*{(i) Left-hand side inequality of (a).} We start by showing the left inequality of (a). 
Consider matrices $\vec Q + \vec R$ and $\vec Q$. Note that, by definition of $\vec Q$ and $\vec R$, each row of $\vec Q + \vec R$ sums up to 1. Since $\vec Q$ and $\vec R$ have non-negative entries, this implies that $1$ is the largest eigenvalue of $\vec Q + \vec R$. Additionally, since $\degset{F}{v} \le \alpha_F d$, we know $\norm{\vec R}_{op} \le \alpha_F$. Then, by eigenvalue stability inequality (\lemmaref{lemma:eigenvalue-stability}) we have
\[
\left|\lambda_1(\vec Q + \vec R) - \lambda_1(\vec Q)\right| \le \norm{\vec R}_{op} \le \alpha_F.
\]
Since $\lambda_1(\vec Q + \vec R) = 1$, we have
\begin{equation}
\label{eq:lambda1-alphaF}
\left|1 - \lambda_1(\vec Q)\right| \le \alpha_F.
\end{equation}
Furthermore, 
\begin{align}
\lambda_1(\vec Q) &= \sup_{\vec x \in \mathbb R^{n-f}\setminus \{0\}} \frac{\norm{\vec Qx}_2}{\norm{x}_2} \notag \\
             &\ge \frac{\norm{\vec Q \vec 1_{n-f}}_2}{\norm{\vec 1_{n-f}}_2} \notag \\ 
             &\ge \frac{\norm{\vec Q \vec 1_{n-f}}_1}{\sqrt{n-f}\norm{\vec 1_{n-f}}_2} \notag \\
             &= \frac{\norm{\vec Q \vec 1_{n-f}}_1}{n-f}.\label{eq:lambda1-through-1norm} 
\end{align}
Note that
\[
\norm{\vec Q \vec 1_{n-f}}_1 = \sum_{u \in \honest} \frac{\degset{\honest}{u}}{d} = (n - f) - \frac{|\partial F|}{d}.
\]
Additionally, by expander mixing lemma (\cite{Pseudorandomness}, Lemma 4.15) we have
\[
|\partial F| \le \frac{d}{n} f(n-f) (1+\lambda).
\]
Hence,
\begin{equation}
\label{eq:1norm-mixing-lemma}
\norm{\vec Q \vec 1_{n-f}}_1 = (n - f) - \frac{|\partial F|}{d} \ge (n-f) - (n-f) \frac{f}{n}(1 + \lambda) = (n-f)\left(1 - \frac{f}{n}(1 + \lambda)\right).
\end{equation}
By combining \equationref{eq:1norm-mixing-lemma} and \equationref{eq:lambda1-through-1norm}, we get
\begin{align*}
\lambda_1(\vec Q) &\ge \frac{\norm{\vec Q \vec 1_{n-f}}_1}{n-f} \\
             &\ge \frac{(n-f)\left(1 - \frac{f}{n}(1 + \lambda)\right)}{n-f} \\
             &= 1 - \frac{f}{n}(1 + \lambda),
\end{align*}
which together with \equationref{eq:lambda1-alphaF} establishes the left inequality of (a).

\paragraph*{(ii) Right-hand side inequality of (a).} Now, we show the right inequality of (a). To upper bound $\lambda_1(\vec Q)$, we use a more involved approach. Recall that the powers of $\vec Q$ characterize probability of a random walk going from one node to another while staying within $\honest$. Then, $\frac{1}{n-f}\vec 1^{T}_{n-f} \vec Q^t \vec 1_{n-f}$ is the probability of a random walk staying within $\honest$ after $t$ steps when the initial position is chosen uniformly at random. By Theorem 4.17 of~\cite{Pseudorandomness}, such probability can be upper bounded as follows
\[
\frac{1}{n-f}\vec 1^{T}_{n-f} \vec Q^t \vec 1_{n-f} \le \left(1 - \frac{f}{n} + \lambda\left(\frac{f}{n}\right)\right)^t = \left(1 - (1 - \lambda)\frac{f}{n}\right)^t.
\]
Hence,
\[
\vec 1_{n-f} \vec Q^t \vec 1_{n-f} \le (n-f)\left(1 - (1 - \lambda)\frac{f}{n}\right)^t.
\]
Note that since entries of $\vec Q^t$ are non-negative, the maximum row sum of $\vec Q^t$ upper bounds $\lambda_1(\vec Q^t)$. As $\vec 1_{n-f} \vec Q^t \vec 1_{n-f}$ is the sum of all entries of $\vec Q^t$, it also upper bounds $\lambda_1(\vec Q^t)$. Then
\[
\lambda_1(\vec Q)^t = \lambda_1(\vec Q^t) \le (n-f)\left(1 - (1 - \lambda)\frac{f}{n}\right)^t,
\]
which implies that for arbitrary $t \ge 0$
\[
\left(\frac{\lambda_1(\vec Q)}{1 - (1 - \lambda)\frac{f}{n}}\right)^t \le n-f.
\]
Since the above holds true for arbitrary large $t$, we must have $\frac{\lambda_1(\vec Q)}{1 - (1 - \lambda)\frac{f}{n}} \le 1$, which establishes the right inequality of (a).

\paragraph*{(iii) Assertion (b).} Now, we establish part (b). Since $\vec Q$ is a principal submatrix of normalized adjacency matrix $\Hat{\vec A}$ of $G$, by Cauchy interlacing law (\lemmaref{lemma:cauchy-interlace}) we know $\max\{\left|\lambda_2(\vec Q)\right|, \left|\lambda_{n-f}(\vec Q)\right|\} \le \lambda$. Thus part (b) holds true.
\end{proof}

To bound the coordinates of the first eigenvector of $\vec Q$, we introduce two perturbations of $\vec Q$, namely $\overline{\vec Q}$ and $\underline{\vec Q}$. They have spectral properties similar to $\vec Q$, and have an explicit first eigenvector. Moreover, all entries of $\overline{\vec Q}$ and $\underline{\vec Q}$ provide a lower and an upper bound respectively of the corresponding entries of $\vec Q$. Later, we will use them to derive bounds for the first eigenvector of $\vec Q$.
\begin{restatable}{lemma}{SpectralQHat}
\label{lemma:spectral-Q-hat}
Let $G, F, \vec Q, \vec R$ and $\alpha_F$ be as in \lemmaref{lemma:spectral-Q}. We define $\overline{\vec Q} =(\vec I_{n-f} - \vec R)^{-1/2}\vec Q(\vec I_{n-f} - \vec R)^{-1/2}$ and $\underline{\vec Q} = (1 - \alpha_F)\overline{\vec Q}$. Then, the following assertions hold
\begin{enumerate}[(a)]
    \item $\lambda_1\left(\overline{\vec Q}\right) = 1$ with $\ell_2$-normalized eigenvector $\overline{\vec \eigen}_1$, where $\overline{\eigen}_{1w}^2 = \frac{\degset{\honest}{w}}{\sum_{u\in \honest} \degset{\honest}{u}}$ for all $w \in \honest$.
    \item $\lambda_1\left(\underline{\vec Q}\right) = 1 - \alpha_F$ with $\ell_2$-normalized eigenvector $\underline{\vec \eigen}_1 = \overline{\vec \eigen}_1$.
    \item $\max\left\{\left|\lambda_2\left(\overline{\vec Q}\right)\right|, \left|\lambda_{n-f}\left(\overline{\vec Q}\right)\right|\right\} \le \frac{\lambda}{1-\alpha_F}$.
    \item $\max\left\{\left|\lambda_2\left(\underline{\vec Q}\right)\right|, \left|\lambda_{n-f}\left(\underline{\vec Q}\right)\right|\right\} \le \lambda$.
    \item $\vec O_{(n-f) \times (n-f)} \preceq \underline{\vec Q} \preceq \vec Q \preceq \overline{\vec Q}$.
\end{enumerate}
\end{restatable}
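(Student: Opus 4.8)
Proof proposal for \lemmaref{lemma:spectral-Q-hat}.

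The plan is to identify $\overline{\vec Q}$ with the normalised adjacency matrix of the subgraph $G[\honest]$ and then read off the five assertions. Set $\vec\Delta \triangleq \vec I_{n-f}-\vec R$. Since $R_{ww}=\degset{F}{w}/d$ and $G$ is $d$-regular, $\vec\Delta$ is diagonal with $\Delta_{ww}=\degset{\honest}{w}/d$, hence $1-\alpha_F\le\Delta_{ww}\le 1$ for every $w\in\honest$; throughout I use the standing hypothesis $\alpha_F<1$ (the only regime in which the lemma is invoked, since \theoremref{thm:main} requires $\lambda<1-\alpha$), which makes $\vec\Delta$ positive definite. Writing $\vec B$ for the adjacency matrix and $\vec D_B$ for the diagonal matrix of degrees $\degset{\honest}{w}$ of $G[\honest]$, we have $\vec Q=\vec B/d$ and $\vec\Delta=\vec D_B/d$, so the factors of $d$ cancel and $\overline{\vec Q}=\vec\Delta^{-1/2}\vec Q\vec\Delta^{-1/2}=\vec D_B^{-1/2}\vec B\vec D_B^{-1/2}$. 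For (a) I would check directly that the entrywise-nonnegative vector $w\mapsto\sqrt{\degset{\honest}{w}}$ is fixed by $\overline{\vec Q}$ and that $\overline{\vec Q}$ is similar to the row-stochastic random-walk matrix $\vec D_B^{-1}\vec B$, so all eigenvalues of $\overline{\vec Q}$ are real and lie in $[-1,1]$ and, since $1$ is attained, $\lambda_1(\overline{\vec Q})=1$; normalising that vector in $\ell_2$ gives $\overline{\vec\eigen}_1$ with $\overline{\eigen}_{1w}^2=\degset{\honest}{w}/\sum_{u\in\honest}\degset{\honest}{u}$. Assertion (b) is immediate from $\underline{\vec Q}=(1-\alpha_F)\overline{\vec Q}$ with $1-\alpha_F>0$ (eigenvalues scale by $1-\alpha_F$, order preserved, eigenvectors unchanged); the same scaling turns (c) into (d).

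Assertion (e) is an entrywise comparison. Because $Q_{wu}\ge 0$ and $\Delta_{ww},\Delta_{uu}\le 1$, we get $\overline Q_{wu}=(\Delta_{ww}\Delta_{uu})^{-1/2}Q_{wu}\ge Q_{wu}$, i.e. $\vec Q\preceq\overline{\vec Q}$. Because $\Delta_{ww},\Delta_{uu}\ge 1-\alpha_F$ we have $(1-\alpha_F)(\Delta_{ww}\Delta_{uu})^{-1/2}\le 1$, so $\underline Q_{wu}=(1-\alpha_F)(\Delta_{ww}\Delta_{uu})^{-1/2}Q_{wu}\le Q_{wu}$, i.e. $\underline{\vec Q}\preceq\vec Q$; and $\underline{\vec Q}\succeq\vec O_{(n-f)\times(n-f)}$ since all factors are nonnegative.

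The crux is (c), bounding $\lambda_2(\overline{\vec Q})$ and $\lambda_{n-f}(\overline{\vec Q})$. The naive attempt — writing $\overline{\vec Q}-\overline{\vec\eigen}_1\overline{\vec\eigen}_1^\top=\vec\Delta^{-1/2}(\vec Q-c\,\vec\Delta\vec J_{n-f}\vec\Delta)\vec\Delta^{-1/2}$ for the appropriate scalar $c$ and bounding the parenthesised factor — fails, because the Perron vector of $\vec Q$ is not proportional to $\vec\Delta\vec 1_{n-f}$, so that factor is not small. The fix is to subtract an arbitrary rank-one \emph{positive semidefinite} term rather than the exact Perron projector, and let eigenvalue interlacing absorb the mismatch. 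Let $\vec\eigen_1$ be a unit first eigenvector of $\vec Q$ and $\vec E_0\triangleq\vec Q-\lambda_1(\vec Q)\vec\eigen_1\vec\eigen_1^\top$, so that $\norm{\vec E_0}_{op}=\max\{|\lambda_2(\vec Q)|,|\lambda_{n-f}(\vec Q)|\}\le\lambda$ by \lemmaref{lemma:spectral-Q}(b). Conjugating by $\vec\Delta^{-1/2}$ gives $\overline{\vec Q}=\vec z\vec z^\top+\vec E_1$ with $\vec z\triangleq\sqrt{\lambda_1(\vec Q)}\,\vec\Delta^{-1/2}\vec\eigen_1$ (real since $\lambda_1(\vec Q)\ge 1-\alpha_F>0$ by \lemmaref{lemma:spectral-Q}(a), so $\vec z\vec z^\top$ is rank-one PSD) and $\vec E_1\triangleq\vec\Delta^{-1/2}\vec E_0\vec\Delta^{-1/2}$, whence $\norm{\vec E_1}_{op}\le\norm{\vec\Delta^{-1/2}}_{op}^2\norm{\vec E_0}_{op}\le\frac{\lambda}{1-\alpha_F}$ using $\Delta_{ww}\ge 1-\alpha_F$. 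As $\vec z\vec z^\top$ has rank one, its second and all lower eigenvalues vanish, so Weyl's inequalities (equivalently, interlacing for rank-one PSD updates) give $\lambda_2(\overline{\vec Q})\le\lambda_1(\vec E_1)\le\norm{\vec E_1}_{op}$ and $\lambda_{n-f}(\overline{\vec Q})\ge\lambda_{n-f}(\vec E_1)\ge-\norm{\vec E_1}_{op}$; combined with $\lambda_{n-f}(\overline{\vec Q})\le\lambda_2(\overline{\vec Q})$ this yields $\max\{|\lambda_2(\overline{\vec Q})|,|\lambda_{n-f}(\overline{\vec Q})|\}\le\frac{\lambda}{1-\alpha_F}$, which is (c); multiplying through by $1-\alpha_F$ gives (d).

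The main obstacle is precisely this asymmetry: $\overline{\vec Q}$ is a two-sided diagonal rescaling of $\vec Q$, so its non-top eigenvalues are not directly comparable with those of $\vec Q$, and one cannot simply subtract the Perron projector of $\overline{\vec Q}$ and hope the remainder is a conjugate of the tail of $\vec Q$. Peeling off an \emph{arbitrary} rank-one PSD term and letting Weyl/interlacing handle the discrepancy is what produces the factor $(1-\alpha_F)^{-1}$; everything else is bookkeeping with $1-\alpha_F\le\Delta_{ww}\le 1$.
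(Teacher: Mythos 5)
Your proposal is correct, and for the heart of the lemma, assertions (c)--(d), it takes a genuinely different route from the paper. The paper bounds $\max\{|\lambda_2(\overline{\vec Q})|,|\lambda_{n-f}(\overline{\vec Q})|\}$ by identifying it, via Perron--Frobenius, with the second \emph{singular value} of the non-negative symmetric matrix $\overline{\vec Q}$ and then invoking a singular-value product inequality (Theorem H.1.c of Marshall--Olkin) for the triple product $(\vec I_{n-f}-\vec R)^{-1/2}\vec Q(\vec I_{n-f}-\vec R)^{-1/2}$, which yields $\sigma_2(\overline{\vec Q})\le (1-\alpha_F)^{-1/2}\,\lambda\,(1-\alpha_F)^{-1/2}$. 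You instead peel off the rank-one term $\lambda_1(\vec Q)\vec\eigen_1\vec\eigen_1^\top$ from $\vec Q$ (so the remainder has operator norm at most $\lambda$ by \lemmaref{lemma:spectral-Q}(b)), conjugate by $\vec\Delta^{-1/2}$, and control $\lambda_2$ and $\lambda_{n-f}$ of $\overline{\vec Q}=\vec z\vec z^\top+\vec E_1$ through Weyl's inequalities for a rank-one PSD update; submultiplicativity of the operator norm produces the same factor $(1-\alpha_F)^{-1}$. Both arguments are sound and give identical constants; yours is more self-contained (no singular-value machinery, no need to argue that eigenvalues of $\overline{\vec Q}$ coincide with its singular values), at the cost of the small bookkeeping that $\lambda_1(\vec Q)\ge 0$ and $n-f\ge 2$, both of which hold here. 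Your treatment of (a) is also slightly stronger than the paper's: by recognizing $\overline{\vec Q}$ as the normalized adjacency matrix of the induced subgraph $G[\honest]$ and using similarity to the row-stochastic walk matrix, you get $\lambda_1(\overline{\vec Q})=1$ directly (given $\alpha_F<1$, so no vertex loses all its neighbours), whereas the paper only verifies that $1$ is \emph{an} eigenvalue with the stated eigenvector and leaves maximality to Perron--Frobenius considerations. Parts (b), (d) by scaling and (e) by the entrywise bounds $1-\alpha_F\le\Delta_{ww}\le 1$ coincide with the paper's argument.
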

\begin{proof}
    We prove different assertions separately.
    
    \paragraph*{(i) Assertions (a), (b).}
    Recall that by definition of $\vec Q$ and $\vec R$, each row of the matrix $\vec Q + \vec R$ sums up to $1$. Then
    \[
    (\vec Q + \vec R)\vec 1_{n-f} = \vec 1_{n-f} \Rightarrow \vec Q\vec 1_{n-f} = (\vec I_{n-f} - \vec R)\vec 1_{n-f}.
    \]
    Hence,
    \[
    (\vec I_{n-f} - \vec R)^{-1/2}\vec Q (\vec I_{n-f} - \vec R)^{-1/2} (\vec I_{n-f} - \vec R)^{1/2}\vec 1_{n-f} = (\vec I_{n-f} - \vec R)^{1/2}\vec 1_{n-f}.
    \]
    which implies
    \[
    \overline{\vec Q} (\vec I_{n-f} - \vec R)^{1/2}\vec 1_{n-f} = (\vec I_{n-f} - \vec R)^{1/2}\vec 1_{n-f}.
    \]
    In other words, $1$ is an eigenvalue of $\overline{\vec Q}$ with eigenvector $(\vec I_{n-f} - \vec R)^{1/2}\vec 1_{n-f}$. Note that the $w^\text{th}$ coordinate of this vector is $\sqrt{\frac{\degset{\honest}{w}}{d}}$ by definition of $\vec R$. Denote $\overline{\vec \eigen}_1$ to be $\ell_2$-normalized version of this vector. Then, we have $\overline{\eigen}_{1w} = \sqrt{\frac{\degset{\honest}{w}}{\sum_{u \in \honest}\degset{\honest}{u}}}$. This establishes (a). Part (b) follows since $\underline{\vec Q} = (1 - \alpha_F)\overline{\vec Q}$.
    
   \paragraph*{(ii) Assertions (c), (d).}
   To show (c) and (d), we use Theorem H.1.c. of~\cite{Marshall1980InequalitiesTO}. Note that entries of $\vec Q$ and $(\vec I_{n-f} - \vec R)^{-1/2}$ are non-negative, which implies $\overline{\vec Q}$ is a matrix with non-negative entries. Then, by part (a) of \lemmaref{lemma:perron}, $\lambda_1\left(\overline{\vec Q}\right)$ is the largest eigenvalue in absolute value, i.e. $\lambda_1\left(\overline{\vec Q}\right)$ is the first singular value of $\overline{\vec Q}$. In this case, $\max\left\{\left|\lambda_2\left(\overline{\vec Q}\right)\right|, \left|\lambda_{n-f}\left(\overline{\vec Q}\right)\right|\right\}$ is the second  singular value of $\overline{\vec Q}$. Also, since $\vec R \preceq \alpha_F \vec I_{n-f}$, we have $\vec I_{n-f} - \vec R \succeq (1 - \alpha_F)\vec I_{n-f}$. Hence, $(\vec I_{n-f} - \vec R)^{-1/2}$ is diagonal with every element upper bounded by $(1 - \alpha_F)^{-1/2}$. Theorem H.1.c. of~\cite{Marshall1980InequalitiesTO} with $k = 1$ and $i_1 = 2$ implies
    \begin{align*}
    \max\left\{\left|\lambda_2\left(\overline{\vec Q}\right)\right|, \left|\lambda_{n-f}\left(\overline{\vec Q}\right)\right|\right\} &= \sigma_2\left(\overline{\vec Q}\right) \\
                          &\le \sigma_1\left((\vec I_{n-f} - \vec R)^{-1/2}\right) \sigma_2(\vec Q) \sigma_1\left((\vec I_{n-f} - \vec R)^{-1/2}\right)\\
                          &\le (1 - \alpha_F)^{-1/2} \lambda (1 - \alpha_F)^{-1/2} \\
                          &\le \frac{\lambda}{1 - \alpha_F}.
    \end{align*}
    This establishes (c). Part (d) follows since $\underline{\vec Q} = (1 - \alpha_F)\overline{\vec Q}$.
    
    \paragraph*{(iii) Assertion (e).} First, we show the leftmost inequality. We express $\underline{\vec Q}$ as follows 
    \begin{equation}
    \label{eq:underline-q-rewritten}
    \underline{\vec Q} = \left(\frac{\vec I_{n-f} - \vec R}{1 - \alpha_F}\right)^{-1/2}\vec Q\left(\frac{\vec I_{n-f} - \vec R}{1 - \alpha_F}\right)^{-1/2}.
    \end{equation}
    Since $\alpha_F < 1 - \lambda \le 1$, we have $\vec R \preceq \alpha_F \vec I_{n-f} \prec I_{n-f}$. Hence, entries of $\left(\frac{\vec I_{n-f} - \vec R}{1 - \alpha_F}\right)^{-1/2}$ are positive. Additionally, $\vec Q$ has non-negative entries, hence 
    \[
    \underline{\vec Q} \succeq \vec O_{(n-f) \times (n-f)}.
    \]
    
    Now we show the middle inequality. Since $\vec I_{n-f} - \vec R \succeq (1 - \alpha_F)\vec I_{n-f}$, every diagonal entry of $\left(\frac{\vec I_{n-f} - \vec R}{1 - \alpha_F}\right)$ is lower bounded by $1$. Therefore, $\left(\frac{\vec I_{n-f} - \vec R}{1 - \alpha_F}\right)^{-1/2} \preceq \vec I_{n-f}$. Then, by \lemmaref{lemma:partial-order-mult} and \equationref{eq:underline-q-rewritten}, we get $\underline{\vec Q} \preceq \vec Q$. 
    
    Finally, we show the rightmost inequality. Similarly to \equationref{eq:underline-q-rewritten}, we can express $\overline{\vec Q}$ as
    \[
    \overline{\vec Q} = \left(\vec I_{n-f} - \vec R\right)^{-1/2} \vec Q \left(\vec I_{n-f} - \vec R\right)^{-1/2}.
    \]
    Since $\vec I_{n-f} - \vec R \preceq \vec I_{n-f}$, we have $\overline{\vec Q} \succeq \vec I_{n-f}^{-1/2} \vec Q \vec I_{n-f}^{-1/2} = \vec Q$.
\end{proof}

We now provide upper and lower bounds on the entries of the first eigenvector of $\vec Q$. They will be used to upper and lower bound the infinite series in \equationref{eq:Q-series}. 
\begin{restatable}
{lemma}{QDelocalization}
\label{lemma:Q-delocalization}
Consider a $(d,\lambda)$-expander graph $G = (V,E)$ of size $n$, and a set of curious nodes $F \subset V$ with $|F| = f$ and adversarial density $\alpha_F < 1-\lambda$. Let $\vec Q$ and $\vec R$ be as in \eqref{eq:block-form-absorbing-markov}. Then, there exists an $\ell_2$-normalized eigenvector of $\vec Q$ with non-negative coordinates, denoted $\vec \eigen_1$. Moreover, for any $v\in \honest$, the following holds true
\[
(1 - \alpha_F)^{T + 1} \frac{1}{2(n-f)} \le \eigen_{1v}^2 \le (1 - \alpha_F)^{-T - 1} \frac{2}{n-f},
\]
where $T = \left\lceil\log_{\frac{\lambda}{1 - \alpha_F}}\left(\frac{1 - \alpha_F}{4(n-f)}\right)\right\rceil$.
\end{restatable}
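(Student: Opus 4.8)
The plan is to deduce the near‑uniformity of the first eigenvector of $\vec Q$ from a two‑sided sandwich of its high matrix powers between explicitly understood matrices. First I would settle existence: since $\vec Q=\Hat{\vec A}[\honest]\succeq\vec O$, Perron--Frobenius (\lemmaref{lemma:perron}) provides $\lambda_1(\vec Q)\ge 0$ together with a non‑negative eigenvector, which I $\ell_2$‑normalize to get $\vec\eigen_1$. Write $\mu=\lambda_1(\vec Q)$ and $\bar\lambda=\lambda/(1-\alpha_F)$. By \lemmaref{lemma:spectral-Q} we have $1-\alpha_F\le\mu\le 1$ and $\max\{|\lambda_2(\vec Q)|,|\lambda_{n-f}(\vec Q)|\}\le\lambda$, and since $\alpha_F<1-\lambda$ we get $\lambda<1-\alpha_F$, hence $\bar\lambda<1$ and $\mu\ge 1-\alpha_F>\lambda\ge|\lambda_2(\vec Q)|$ so the top eigenvalue of $\vec Q$ is simple (and similarly $\lambda_1(\overline{\vec Q})=1$ is simple by \lemmaref{lemma:spectral-Q-hat}(c)). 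The engine is \lemmaref{lemma:matrix-powers-first-eigenvector}: for any $t\ge 0$ it gives $\big|(\vec Q^t)_{vv}-\mu^t\eigen_{1v}^2\big|\le\lambda^t$, so dividing by $\mu^t>0$ reduces the problem to sandwiching the diagonal entry $(\vec Q^t)_{vv}$.

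For the sandwich I would invoke \lemmaref{lemma:spectral-Q-hat}: $\vec O\preceq\underline{\vec Q}\preceq\vec Q\preceq\overline{\vec Q}$ with $\underline{\vec Q}=(1-\alpha_F)\overline{\vec Q}$, so by \lemmaref{lemma:partial-order-power} we get $(1-\alpha_F)^t(\overline{\vec Q}^t)_{vv}\le(\vec Q^t)_{vv}\le(\overline{\vec Q}^t)_{vv}$. Since $\overline{\vec Q}$ is symmetric with simple top eigenvalue $1$, $\ell_2$‑normalized first eigenvector $\overline{\vec\eigen}_1$ of the explicit form in \lemmaref{lemma:spectral-Q-hat}(a), and second‑largest‑modulus eigenvalue $\le\bar\lambda$, a second application of \lemmaref{lemma:matrix-powers-first-eigenvector} gives $\big|(\overline{\vec Q}^t)_{vv}-\overline{\eigen}_{1v}^2\big|\le\bar\lambda^t$. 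The degree bounds $(1-\alpha_F)d\le\deg_{\honest}(v)\le d$ and $(n-f)(1-\alpha_F)d\le\sum_{u\in\honest}\deg_{\honest}(u)\le(n-f)d$ then pin $\overline{\eigen}_{1v}^2$ between $(1-\alpha_F)/(n-f)$ and $1/((1-\alpha_F)(n-f))$. Combining, and using $\lambda^t=(1-\alpha_F)^t\bar\lambda^t$ together with $\lambda\le\bar\lambda$, I obtain $\eigen_{1v}^2\le(\overline{\eigen}_{1v}^2+2\bar\lambda^t)/\mu^t$ and $\eigen_{1v}^2\ge((1-\alpha_F)^t\overline{\eigen}_{1v}^2-2\lambda^t)/\mu^t$; then I would use $\mu^t\ge(1-\alpha_F)^t$ in the denominator of the upper bound, and $\mu^t\le 1$ (with a non‑negative numerator) in the lower bound.

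Finally I would plug in $t=T=\big\lceil\log_{\bar\lambda}\big(\tfrac{1-\alpha_F}{4(n-f)}\big)\big\rceil$. Since $0<\bar\lambda<1$, raising to a larger exponent only decreases the value, so $\bar\lambda^T\le\tfrac{1-\alpha_F}{4(n-f)}$ --- this is exactly the inequality that makes the error terms subordinate to the main ones. For the lower bound, $2\lambda^T=2(1-\alpha_F)^T\bar\lambda^T\le(1-\alpha_F)^{T+1}/(2(n-f))$, while $(1-\alpha_F)^T\overline{\eigen}_{1v}^2\ge(1-\alpha_F)^{T+1}/(n-f)$, so $\eigen_{1v}^2\ge(1-\alpha_F)^{T+1}/(n-f)-(1-\alpha_F)^{T+1}/(2(n-f))=(1-\alpha_F)^{T+1}\tfrac{1}{2(n-f)}$. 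For the upper bound, $\tfrac{2\bar\lambda^T}{(1-\alpha_F)^T}\le\tfrac{1-\alpha_F}{2(n-f)(1-\alpha_F)^T}\le\tfrac{1}{2(1-\alpha_F)^{T+1}(n-f)}$ (using $(1-\alpha_F)^2\le 1$), hence $\eigen_{1v}^2\le\tfrac{1}{(1-\alpha_F)^{T+1}(n-f)}+\tfrac{1}{2(1-\alpha_F)^{T+1}(n-f)}=\tfrac{3}{2(1-\alpha_F)^{T+1}(n-f)}\le(1-\alpha_F)^{-T-1}\tfrac{2}{n-f}$. The main obstacle is the bookkeeping around the fact that $\mu=\lambda_1(\vec Q)$ is strictly below $1$: the one‑rank approximant $\mu^t\eigen_{1v}^2$ itself decays with $t$, so the error $\lambda^t$ must be controlled relative to $\mu^t$ rather than relative to a constant, and the whole argument rests on the crude‑but‑sufficient bound $\mu\ge 1-\alpha_F$ from \lemmaref{lemma:spectral-Q} together with the precise calibration of $T$ --- taking $t$ larger than $T$ would break the lower bound, and smaller would break the error control.
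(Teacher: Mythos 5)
Your proposal is correct and follows essentially the same route as the paper's proof: Perron–Frobenius for existence, the sandwich $\underline{\vec Q}\preceq\vec Q\preceq\overline{\vec Q}$ raised to powers, the rank-one approximation of \lemmaref{lemma:matrix-powers-first-eigenvector} on both $\vec Q^t$ and $\overline{\vec Q}^t$, the bounds $1-\alpha_F\le\lambda_1(\vec Q)\le 1$, and evaluation at $t=T$. Your handling of the ceiling via $\bigl(\lambda/(1-\alpha_F)\bigr)^T\le\frac{1-\alpha_F}{4(n-f)}$ is in fact slightly more careful than the paper's stated equality, but the argument and constants are otherwise the same.
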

\begin{remark} Note that \lemmaref{lemma:Q-delocalization} can be seen as a statement on \emph{delocalization} (as in, e.g.,~\cite{Rudelson2015Delocalization}) of the first eigenvector of an arbitrary subgraph of $G$ in the worst-case, or of a \emph{random} subgraph of $G$ in the average-case adversary setting. In that sense, \lemmaref{lemma:Q-delocalization} is of independent interest for research on delocalization. \end{remark}

\begin{proof}
Note that since $\vec Q \succeq \vec O_{(n-f) \times (n-f)}$, by part (b) of \lemmaref{lemma:perron} there exists the first eigenvector of $\vec Q$ with non-negative coordinates. Let $\vec\eigen_1$ be $\ell_2$ normalized version of such eigenvector. We establish upper and lower bounds for coordinates of $\vec \eigen_1$ using matrices $\overline{\vec Q}$ and $\underline{\vec Q}$ from \lemmaref{lemma:spectral-Q-hat}. From \lemmaref{lemma:spectral-Q-hat}, we know that
\[
\vec O_{(n-f) \times (n-f)} \preceq \underline{\vec Q} \preceq \vec Q \preceq \overline{\vec Q}.
\]
Then, using \lemmaref{lemma:partial-order-power}, for any $t \ge 0$ we get
\begin{equation}
\label{eq:chain-ineq-Q}
\underline{\vec Q}^t \preceq \vec Q^t \preceq \overline{\vec Q}^t.
\end{equation}
Using \lemmaref{lemma:matrix-powers-first-eigenvector} for $\vec Q^t$ we get
\begin{equation}
    \label{eq:sandwich-Q}
    \lambda_1(\vec Q)^t \vec\eigen_1\vec\eigen_1^\top - \lambda^t\vec J_{n-f} \preceq \vec Q^t \preceq \lambda_1(\vec Q)^t \vec\eigen_1\vec\eigen_1^\top + \lambda^t\vec J_{n-f}.
\end{equation}
We will first show a lower bound on the coordinates of $\vec\eigen_1$, and then an upper bound.

\paragraph*{(i) Lower bound on coordinates of $\vec\eigen_1$.}
Using \lemmaref{lemma:matrix-powers-first-eigenvector} for $\underline{\vec Q}^t$ with part (d) of \lemmaref{lemma:spectral-Q-hat}, we get
\begin{equation}
    \label{eq:sandwich-Q-underline}
    \lambda_1\left(\underline{\vec Q}\right)^t \underline{\vec\eigen}_1\underline{\vec\eigen}_1^\top - \lambda^t\vec J_{n-f} \preceq \underline{\vec Q}^t.
\end{equation}
Combining the left inequality of \equationref{eq:chain-ineq-Q} with \equationref{eq:sandwich-Q} and \equationref{eq:sandwich-Q-underline}, we have
\[
\lambda_1\left(\underline{\vec Q}\right)^t \underline{\vec\eigen}_1\underline{\vec\eigen}_1^\top - \lambda^t\vec J_{n-f} \preceq \underline{\vec Q}^t  \preceq \vec Q^t \preceq \lambda_1(\vec Q)^t \vec\eigen_1\vec\eigen_1^\top + \lambda^t\vec J_{n-f}.
\]
By part (a) of \lemmaref{lemma:spectral-Q}, we know $\lambda_1(\vec Q) \le 1$ and by part (b) of \lemmaref{lemma:spectral-Q-hat}, we know $\lambda_1\left(\underline{\vec Q}\right) = 1 -\alpha_F$. The above implies
\[
(1 - \alpha_F)^t \underline{\vec\eigen}_1\underline{\vec\eigen}_1^\top - \lambda^t\vec J_{n-f} \preceq \vec\eigen_1\vec\eigen_1^\top + \lambda^t\vec J_{n-f},
\]
hence,
\[
(1 - \alpha_F)^t \underline{\vec\eigen}_1\underline{\vec\eigen}_1^\top - 2\lambda^t\vec J_{n-f} \preceq \vec\eigen_1\vec\eigen_1^\top.
\]
Then, for every $v \in V\setminus F$, we have
\begin{equation}
\label{eq:lower-eigen1}
(1 - \alpha_F)^t \underline{\eigen}_{1v}^2 - 2\lambda^t \le \eigen_{1v}^2.
\end{equation}
Using part (b) of \lemmaref{lemma:spectral-Q-hat}, we have
\begin{align}
\underline{\eigen}_{1v}^2 &= \frac{\degset{\honest}{v}}{\sum_{u\in \honest} \degset{\honest}{u}}. \notag\\
\intertext{Note that $d(1-\alpha_F) \le \degset{\honest}{u} \le d$ for every $u\in\honest$ by definition of $\alpha_F$, which implies} 
\underline{\eigen}_{1v}^2 &\ge \frac{d(1-\alpha_F)}{(n-f)d} \notag \\
                          &\ge \frac{1 - \alpha_F}{n-f}. \label{eq:eigen-underline}
\end{align}
Combining \equationref{eq:lower-eigen1} and \equationref{eq:eigen-underline}, we have
\[
\frac{(1 - \alpha_F)^{t+1}}{n-f} - 2\lambda^t \le \eigen_{1v}^2.
\]
After rearranging, the above becomes 
\[
(1 - \alpha_F)^t\left(\frac{1 - \alpha_F}{n-f} - 2\left(\frac{\lambda}{1  - \alpha_F}\right)^t\right) \le \eigen_{1v}^2.
\]
Recall that $T = \left\lceil\log_{\frac{\lambda}{1 - \alpha_F}}\left(\frac{1 - \alpha_F}{4(n-f)}\right)\right\rceil$. Then, $\left(\frac{\lambda}{1  - \alpha_F}\right)^T = \frac{1 - \alpha_F}{4(n-f)}$. Then, by plugging $t = T$ in the above we get
\[
(1 - \alpha_F)^T \frac{1 - \alpha_F}{2(n-f)} \le \eigen_{1v}^2.
\]
This gives the desired lower bound. 

\paragraph*{(ii) Upper bound on coordinates of $\vec\eigen_1$.} We now show an upper bound on the coordinates of $\vec\eigen_1$. The proof is similar to the proof of the lower bound. Using \lemmaref{lemma:matrix-powers-first-eigenvector} for $\overline{\vec Q}$ with part (c) of \lemmaref{lemma:spectral-Q-hat}, we get
\begin{equation}
    \label{eq:sandwich-Q-overline}
     \overline{\vec Q}^t \preceq \lambda_1\left(\overline{\vec Q}\right)^t \underline{\vec\eigen}_1\underline{\vec\eigen}_1^\top + \left(\frac{\lambda}{1 - \alpha_F}\right)^t\vec J_{n-f}.
\end{equation}
Combining the right inequality of \equationref{eq:chain-ineq-Q} with \equationref{eq:sandwich-Q} and \equationref{eq:sandwich-Q-overline}, we get
\[
\lambda_1(\vec Q)^t \vec\eigen_1\vec\eigen_1^\top - \lambda^t\vec J_{n-f} \preceq \vec Q^t \preceq \overline{\vec Q}^t \preceq  \lambda_1\left(\overline{\vec Q}\right)^t \overline{\vec\eigen}_1\overline{\vec\eigen}_1^\top + \left(\frac{\lambda}{1 - \alpha_F}\right)^t\vec J_{n-f}.
\]
By part (a) of \lemmaref{lemma:spectral-Q}, we know $\lambda_1(\vec Q) \ge 1 - \alpha_F$ and by part (a) of \lemmaref{lemma:spectral-Q-hat}, we know $\lambda_1\left(\overline{\vec Q}\right) = 1$. Therefore, the above implies
\[
(1 - \alpha_F)^t \vec\eigen_1\vec\eigen_1^\top - \lambda^t\vec J_{n-f} \preceq\overline{\vec\eigen}_1\overline{\vec\eigen}_1^\top + \left(\frac{\lambda}{1 - \alpha_F}\right)^t\vec J_{n-f}.
\]
Hence,
\begin{align*}
(1 - \alpha_F)^t \vec\eigen_1\vec\eigen_1^\top &\preceq \overline{\vec\eigen}_1\overline{\vec\eigen}_1^\top + \left(\frac{\lambda}{1 - \alpha_F}\right)^t\vec J_{n-f} + \lambda^t\vec J_{n-f}\\ 
&\preceq \overline{\vec\eigen}_1\overline{\vec\eigen}_1^\top + 2\left(\frac{\lambda}{1 - \alpha_F}\right)^t\vec J_{n-f}.
\end{align*}
Then, for every $v \in V\setminus F$, we have
\[
(1 - \alpha_F)^t \eigen_{1v}^2 \le \overline{\eigen}_{1v}^2 + 2\left(\frac{\lambda}{1 - \alpha_F}\right)^t,
\]
hence,
\begin{equation}
\label{eq:upper-eigen1}
 \eigen_{1v}^2 \le (1 - \alpha_F)^{-t} \left( \overline{\eigen}_{1v}^2 + 2\left(\frac{\lambda}{1 - \alpha_F}\right)^t\right).
\end{equation}
From part (a) of \lemmaref{lemma:spectral-Q-hat}, we have
\begin{align}
\overline{\eigen}_{1v}^2 &= \frac{\degset{\honest}{v}}{\sum_{u\in \honest} \degset{\honest}{u}}. \notag\\
\intertext{Note that $d(1-\alpha_F)\degset{\honest}{u} \le d$ for every $u\in\honest$ by definition of $\alpha_F$, which implies} 
\overline{\eigen}_{1v}^2 &\le \frac{d}{(n-f)d(1-\alpha_F)} \notag \\
                         &\le \frac{1}{(n-f)(1-\alpha_F)}. \label{eq:eigen-overline}
\end{align}
Combining \equationref{eq:upper-eigen1} and \equationref{eq:eigen-overline}, we get
\[
\eigen_{1v}^2 \le (1 - \alpha_F)^{-t} \left( \frac{1}{(n-f)(1-\alpha_F)} + 2\left(\frac{\lambda}{1 - \alpha_F}\right)^t\right).
\]
Recall that $T = \left\lceil\log_{\frac{\lambda}{1 - \alpha_F}}\left(\frac{1 - \alpha_F}{4(n-f)}\right)\right\rceil$. Then, $\left(\frac{\lambda}{1  - \alpha_F}\right)^T = \frac{1 - \alpha_F}{4(n-f)}$. Then, by plugging in $t = T$ in the above we get
\begin{align*}
\eigen_{1v}^2 &\le (1 - \alpha_F)^{-T} \left( \frac{1}{(n-f)(1-\alpha_F)} + \frac{1 - \alpha_F}{2(n-f)}\right)\\
               &\le (1 - \alpha_F)^{-T}  \frac{2}{(n-f)(1-\alpha_F)}.
\end{align*}
This gives the desired upper bound.
\end{proof}

We now provide an upper and a lower bound for the matrix $(\vec I_{n-f} - (1-\rho)\vec Q)^{-1}$ from \lemmaref{lemma:reduction-rw}.
\begin{restatable}{lemma}{QPowersUpperBound}
\label{lemma:Q-powers-upper-bound}
\label{lemma:Q-powers-lower-bound}
\label{lemma:Q-powers-bound}
Consider a $(d,\lambda)$-expander graph $G = (V,E)$ of size $n$ and a set $F \subseteq V$ of size $f$ and adversarial density $\alpha_F$. Let $\vec Q$ and $\vec R$ be as in \eqref{eq:block-form-absorbing-markov}. Consider an arbitrary $\rho \in [0,1]$. Then, the following asserions hold true
\begin{enumerate}[(a)]
    \item $(\vec I_{n-f} - (1-\rho)\vec Q)^{-1} \preceq  \vec I_{n-f} + (1-\lambda)^{-1}\left(\frac{2n (1 - \alpha_F)^{-\Tilde{T}}}{(\rho(n-f) + f)(n-f)}  + \lambda\right) \vec J_{n-f}$.
    \item $(\vec I_{n-f} - (1-\rho)\vec Q)^{-1} \succeq \vec I_{n-f} + \frac{n(1-\alpha_F)^{\Tilde{T}}(1-\rho)^{\Tilde{T}}}{8(\rho(n-f) + f)(n-f)} \vec J_{n-f}$.
\end{enumerate}
where $\Tilde{T} = \left\lceil\log_{\frac{\lambda}{1 - \alpha_F}}\left(\frac{1 - \alpha_F}{4(n-f)}\right)\right\rceil \left(\log_{\frac{\lambda}{1 - \alpha_F}}(1 - \alpha_F) + 2\right) + 2$.
\end{restatable}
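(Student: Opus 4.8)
The plan is to start from the Neumann series of \equationref{eq:Q-series},
\[
(\vec I_{n-f} - (1-\rho)\vec Q)^{-1} = \sum_{t \geq 0}(1-\rho)^t \vec Q^t,
\]
peel off the $t=0$ term (which contributes $\vec I_{n-f}$), and sandwich each power $\vec Q^t$ with $t \geq 1$ between its rank-one approximations. By \lemmaref{lemma:matrix-powers-first-eigenvector}, applied with the bound $\lambda \geq \max\{|\lambda_2(\vec Q)|, |\lambda_{n-f}(\vec Q)|\}$ from \lemmaref{lemma:spectral-Q}(b),
\[
\lambda_1(\vec Q)^t \vec\eigen_1\vec\eigen_1^\top - \lambda^t \vec J_{n-f} \preceq \vec Q^t \preceq \lambda_1(\vec Q)^t \vec\eigen_1\vec\eigen_1^\top + \lambda^t \vec J_{n-f},
\]
where $\vec\eigen_1$ is the non-negative $\ell_2$-normalized first eigenvector of $\vec Q$ supplied by \lemmaref{lemma:Q-delocalization}. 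That lemma also controls its entries, $\frac{(1-\alpha_F)^{T+1}}{2(n-f)} \leq \eigen_{1v}^2 \leq \frac{2(1-\alpha_F)^{-T-1}}{n-f}$ for all $v \in \honest$, with $T = \lceil\log_{\lambda/(1-\alpha_F)}((1-\alpha_F)/(4(n-f)))\rceil$; since the entries of $\vec\eigen_1$ are non-negative, the products $\eigen_{1v}\eigen_{1w}$ obey the same two-sided bound, and hence $\frac{(1-\alpha_F)^{T+1}}{2(n-f)}\vec J_{n-f} \preceq \vec\eigen_1\vec\eigen_1^\top \preceq \frac{2(1-\alpha_F)^{-T-1}}{n-f}\vec J_{n-f}$. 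Everything then reduces to summing geometric series in $(1-\rho)\lambda_1(\vec Q)$ and in $(1-\rho)\lambda$.

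\textbf{Upper bound (a).} Combining the upper sandwich with $\lambda_1(\vec Q) \leq 1 - (1-\lambda)\frac{f}{n}$ from \lemmaref{lemma:spectral-Q}(a) gives $\vec Q^t \preceq \big(\lambda_1(\vec Q)^t\,\frac{2(1-\alpha_F)^{-T-1}}{n-f} + \lambda^t\big)\vec J_{n-f}$ for $t \geq 1$. Then $\sum_{t\geq 1}((1-\rho)\lambda_1(\vec Q))^t \leq \frac{1}{1-(1-\rho)\lambda_1(\vec Q)}$, and the elementary identity $\rho n + (1-\rho)f = \rho(n-f)+f$ yields $1 - (1-\rho)\lambda_1(\vec Q) \geq \rho + (1-\rho)(1-\lambda)\frac{f}{n} \geq (1-\lambda)\frac{\rho(n-f)+f}{n}$, so that sum is at most $\frac{n}{(1-\lambda)(\rho(n-f)+f)}$; also $\sum_{t\geq 1}((1-\rho)\lambda)^t \leq \frac{\lambda}{1-\lambda}$. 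Adding the two contributions and using $(1-\alpha_F)^{-T-1} \leq (1-\alpha_F)^{-\Tilde{T}}$ — valid since $\alpha_F < 1-\lambda$ makes the base $\lambda/(1-\alpha_F)$ less than $1$, whence $\Tilde{T} = T(\log_{\lambda/(1-\alpha_F)}(1-\alpha_F)+2)+2 \geq 2T+2 \geq T+1$ — gives exactly the claimed upper bound.

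\textbf{Lower bound (b).} I discard the first $t_1-1$ terms, each $\succeq \vec O_{(n-f)\times(n-f)}$ by \lemmaref{lemma:partial-order-power}, and keep $\sum_{t \geq t_1}(1-\rho)^t\vec Q^t$ with $t_1 := \lceil\log_{\lambda/(1-\alpha_F)}((1-\alpha_F)^{T+1}/(4(n-f)))\rceil$, chosen exactly so that the error term is absorbed: for $t \geq t_1$, using $\lambda/\lambda_1(\vec Q) \leq \lambda/(1-\alpha_F)$ and $\lambda_1(\vec Q) \geq 1-\alpha_F$ from \lemmaref{lemma:spectral-Q}(a), one has $\lambda^t \leq \frac12\lambda_1(\vec Q)^t\frac{(1-\alpha_F)^{T+1}}{2(n-f)}$, so the lower sandwich gives $\vec Q^t \succeq \lambda_1(\vec Q)^t\frac{(1-\alpha_F)^{T+1}}{4(n-f)}\vec J_{n-f}$. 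Summing, $\sum_{t\geq t_1}(1-\rho)^t\vec Q^t \succeq \frac{(1-\alpha_F)^{T+1}}{4(n-f)}\cdot\frac{((1-\rho)\lambda_1(\vec Q))^{t_1}}{1-(1-\rho)\lambda_1(\vec Q)}\vec J_{n-f}$. I lower-bound the reciprocal by $\frac{n}{2(\rho(n-f)+f)}$ using $\lambda_1(\vec Q) \geq 1 - (1+\lambda)\frac{f}{n}$ (the other half of \lemmaref{lemma:spectral-Q}(a)) together with $\lambda \leq 1$, and $((1-\rho)\lambda_1(\vec Q))^{t_1} \geq ((1-\rho)(1-\alpha_F))^{t_1}$; this yields $\frac{n(1-\alpha_F)^{T+1+t_1}(1-\rho)^{t_1}}{8(\rho(n-f)+f)(n-f)}\vec J_{n-f}$, and since $T+1+t_1 \leq \Tilde{T}$ and $t_1 \leq \Tilde{T}$ we may lower both exponents to $\Tilde{T}$, which finishes the proof.

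\textbf{Main obstacle.} The crux is the bookkeeping certifying $T+1+t_1 \leq \Tilde{T}$ (and $t_1 \leq \Tilde{T}$). Writing $c := \log_{\lambda/(1-\alpha_F)}(1-\alpha_F) + 2 \geq 2$, the definition of $T$ gives $\log_{\lambda/(1-\alpha_F)}\big(\frac{1-\alpha_F}{4(n-f)}\big) \leq T$, and since $\log_{\lambda/(1-\alpha_F)}(1-\alpha_F) = c-2$ we get $t_1 \leq T(c-1)+1$, hence $T+1+t_1 \leq Tc+2 = \Tilde{T}$ — which is exactly why $\Tilde{T}$ carries the coefficient $c$. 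A secondary point requiring care is that getting $\rho(n-f)+f$ in the denominator of the \emph{lower} bound forces one to estimate $1-(1-\rho)\lambda_1(\vec Q)$ from above through $\lambda_1(\vec Q) \geq 1-(1+\lambda)\frac{f}{n}$ rather than through the cruder $\lambda_1(\vec Q) \geq 1-\alpha_F$, since $\alpha_F$ may exceed $\frac{f}{n}$.
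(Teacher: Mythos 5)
Your proposal is correct and follows essentially the same route as the paper's proof: the Neumann series with the $t=0$ term peeled off, the rank-one sandwich of $\vec Q^t$ via \lemmaref{lemma:matrix-powers-first-eigenvector} combined with the delocalization bounds of \lemmaref{lemma:Q-delocalization}, geometric summation using the two bounds on $\lambda_1(\vec Q)$ from \lemmaref{lemma:spectral-Q}, and for the lower bound discarding the first $t_1-1$ powers with $t_1$ chosen exactly as the paper's $T_2$ so that the $\lambda^t$ error is absorbed into half the main term, ending with the same bookkeeping $T+1+t_1 \le \Tilde{T}$. The only differences are cosmetic reorganizations of the same estimates.
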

\begin{proof}
First, we establish the upper bound (a), and then the lower bound (b).
\paragraph*{(i) Assertion (a).} First, we show an upper bound. Let $\vec\eigen_1$ be the $\ell_2$-normalized first eigenvector of $\vec Q$ as given in \lemmaref{lemma:Q-delocalization}. Then, by \lemmaref{lemma:matrix-powers-first-eigenvector} we have for any $t \ge 0$
\begin{align}
\vec Q^t &\preceq \lambda_1(\vec Q)^t \vec \eigen_{1} \vec \eigen_{1}^\top + \max\{\left|\lambda_2(\vec Q)\right|, \left|\lambda_{n - f}(\vec Q)\right|\}^t \vec J_{n-f}. \notag \\
\intertext{By part (b) of \lemmaref{lemma:spectral-Q} , $\max\{\left|\lambda_2(\vec Q)\right|, \left|\lambda_{n - f}(\vec Q)\right|\} \le \lambda$, hence,}
\vec Q^t &\preceq \lambda_1(\vec Q)^t \vec \eigen_{1} \vec \eigen_{1}^\top + \lambda^t\vec J_{n-f}. \label{eq:qt-upper}
\end{align}
Then, by \equationref{eq:Q-series}, we get 
\begin{align}
    (\vec I_{n-f} - (1-\rho)\vec Q)^{-1} 
    &=\sum_{t = 0}^{\infty} (1-\rho)^t\vec Q^t \notag\\
    &\preceq \vec I_{n-f} + \sum_{t = 1}^{\infty} (1-\rho)^t \vec Q^t. \notag
    \intertext{From \equationref{eq:qt-upper}, we have}
    (\vec I_{n-f} - (1-\rho)\vec Q)^{-1}  &\preceq \vec I_{n-f} + \sum_{t = 1}^{\infty} (1- \rho)^t\left(\lambda_1(\vec Q)^t \vec \eigen_1 \vec \eigen_1^\top + \lambda^t \vec J_{n-f}\right) \notag\\
    &\preceq \vec I_{n-f} + \vec \eigen_1 \vec \eigen_1^\top  \sum_{t = 1}^{\infty} (1- \rho)^t \lambda_1(\vec Q)^t + \vec J_{n-f} \sum_{t = 1}^{\infty} (1-\rho)^t \lambda^t. \notag
    \intertext{By reducing the infinite geometric series, we get}
    (\vec I_{n-f} - (1-\rho)\vec Q)^{-1} &\preceq \vec I_{n-f} + \frac{(1-\rho)\lambda_1(\vec Q)}{1 - (1-\rho)\lambda_1(\vec Q)} \vec \eigen_1 \vec \eigen_1^\top + \frac{\lambda(1-\rho)}{1 - \lambda(1-\rho)} \vec J_{n-f}. \notag
    \intertext{Since $\lambda_1(\vec Q) \le 1$ and $1 - \rho \le 1$, we have}
    (\vec I_{n-f} - (1-\rho)\vec Q)^{-1} &\preceq \vec I_{n-f} + \frac{1}{1 - (1-\rho)\lambda_1(\vec Q)} \vec \eigen_1 \vec \eigen_1^\top + \frac{\lambda}{1 - \lambda} \vec J_{n-f}. \notag
    \intertext{We know $\lambda_1(\vec Q) \le 1 - (1 - \lambda)\frac{f}{n}$ by part (a) of \lemmaref{lemma:spectral-Q}, hence}
    (\vec I_{n-f} - (1-\rho)\vec Q)^{-1} &\preceq \vec I_{n-f} + \left(\frac{1}{1 - (1-\rho)(1 - (1-\lambda)f/n)} \vec \eigen_1 \vec \eigen_1^\top + \frac{\lambda}{1 - \lambda}\right) \vec J_{n-f} \notag\\
    &= \vec I_{n-f} + \left(\frac{1}{\rho(1 - (1-\lambda)f/n) + (1-\lambda)f/n} \vec \eigen_1 \vec \eigen_1^\top + \frac{\lambda}{1 - \lambda}\right) \vec J_{n-f}\notag\\
    &\preceq \vec I_{n-f} + \left(\frac{1}{(1-\lambda)(\rho(1 - f/n) + f/n)} \vec \eigen_1 \vec \eigen_1^\top + \frac{\lambda}{1 - \lambda}\right) \vec J_{n-f}\notag\\
    &\preceq \vec I_{n-f} + \frac{1}{1-\lambda}\left(\frac{n}{\rho(n - f) + f} \vec \eigen_1 \vec \eigen_1^\top + \lambda\right) \vec J_{n-f}. \label{eq:series-qt-upper}
\end{align}
Set $T_1 = \left\lceil\log_{\frac{\lambda}{1 - \alpha_F}}\left(\frac{1 - \alpha_F}{4(n-f)}\right)\right\rceil$. Recall that, from \lemmaref{lemma:Q-delocalization} we have
\[
\vec \eigen_1 \vec \eigen_1^\top \preceq  \frac{2 (1 - \alpha_F)^{-T_1 - 1}}{(n-f)}  \vec J_{n-f}.
\]
Substituting the above into \equationref{eq:series-qt-upper}, we get 
\begin{align*}
    (\vec I_{n-f} - (1-\rho)\vec Q)^{-1}
    &\preceq \vec I_{n-f} + (1-\lambda)^{-1}\left(\frac{2n (1 - \alpha_F)^{-T_1 - 1}}{(\rho(n-f) + f)(n-f)}  + \lambda\right) \vec J_{n-f}.
    \intertext{Since $T_1 + 1 \le \Tilde{T} = T_1\left(\log_{\frac{\lambda}{1 - \alpha_F}}(1 - \alpha_F) + 2\right) + 2$, we finally get}
    (\vec I_{n-f} - (1-\rho)\vec Q)^{-1}
    &\preceq \vec I_{n-f} + (1-\lambda)^{-1}\left(\frac{2n (1 - \alpha_F)^{-\Tilde{T}}}{(\rho(n-f) + f)(n-f)}  + \lambda\right) \vec J_{n-f},
\end{align*}
which concludes the proof of assertion (a).

\paragraph*{(ii) Assertion (b).} Now, we derive a lower bound in a similar way. Note that by \lemmaref{lemma:matrix-powers-first-eigenvector} we have for any $t \ge 0$
\begin{align}
\vec Q^t &\succeq \lambda_1(\vec Q)^t \vec \eigen_{1} \vec \eigen_{1}^\top - \max\{\left|\lambda_2(\vec Q)\right|, \left|\lambda_{n - f}(\vec Q)\right|\}^t \vec J_{n-f}. \notag \\
\intertext{By part (b) of \lemmaref{lemma:spectral-Q}, $\max\{\left|\lambda_2(\vec Q)\right|, \left|\lambda_{n - f}(\vec Q)\right|\} \le \lambda$, hence,}
\vec Q^t &\succeq \lambda_1(\vec Q)^t \vec \eigen_{1} \vec \eigen_{1}^\top - \lambda^t\vec J_{n-f}. \label{eq:qt-lower}
\end{align}

Note that all entries of $\vec Q$ are non-negative, hence entries of $\vec Q^t$ will be non-negative for any $t \ge 1$. By \equationref{eq:Q-series}, we get for any $T_2 \ge 1$
\begin{align}
     (\vec I_{n-f} - (1-\rho) \vec Q)^{-1} 
     &=\sum_{t = 0}^{\infty} \vec (1-\rho)^t \vec Q^t \notag\\ 
     &=\vec Q^0 + \sum_{t = 1}^{T_2 - 1} (1 - \rho)^t\vec Q^t + \sum_{t=T_2}^\infty \vec Q^t \vec (1-\rho)^t \vec Q^t \notag\\ 
     &\succeq \vec I_{n-f} + \sum_{t = T_2}^{\infty} (1-\rho)^t \vec Q^t. \notag
     \intertext{From \equationref{eq:qt-lower}, we have}
     (\vec I_{n-f} - (1-\rho) \vec Q)^{-1}  
     &\succeq \vec I_{n-f} + \sum_{t = T_2}^{\infty} (1-\rho)^t\left(\lambda_1(\vec Q)^t \vec \eigen_1 \vec \eigen_1^\top - \lambda^t \vec J_{n-f}\right). \label{eq:series-qt-lower}
\end{align}
Recall that, by \lemmaref{lemma:Q-delocalization} we have
\[
\vec \eigen_1 \vec \eigen_1^\top \succeq  \frac{(1 - \alpha_F)^{T_1 + 1}}{2(n-f)}  \vec J_{n-f}.
\]
Substituting this into \equationref{eq:series-qt-lower}, we get
\begin{align}
     (\vec I_{n-f} - (1-\rho) \vec Q)^{-1} &\succeq \vec I_{n-f} + \sum_{t = T_2}^{\infty} (1-\rho)^t\left(\lambda_1(\vec Q)^t \vec \eigen_1 \vec \eigen_1^\top - \lambda^t \vec J_{n-f}\right) \notag\\
     &\succeq \vec I_{n-f} + \sum_{t = T_2}^{\infty} (1-\rho)^t\left(\lambda_1(\vec Q)^t \frac{(1 - \alpha_F)^{T_1 + 1}}{2(n-f)} - \lambda^t\right) \vec J_{n-f}. \label{eq:series-qt-lower-eigen}
\end{align}
Set $T_2 = \left\lceil\log_{\frac{\lambda}{1 - \alpha_F}} \left( \frac{(1-\alpha_F)^{T_1 + 1}}{4(n-f)} \right)\right\rceil$. Then, for any $t \ge T_2$ we have 
\begin{align}
\lambda_1(\vec Q)^t \frac{(1 - \alpha_F)^{T_1 + 1}}{2(n-f)} - \lambda^t 
&= \lambda_1(\vec Q)^t \left(\frac{(1 - \alpha_F)^{T_1 + 1}}{2(n-f)} - \left(\frac{\lambda}{\lambda_1(\vec Q)}\right)^t\right). \notag \\
\intertext{By part (a) \lemmaref{lemma:spectral-Q}, $\lambda_1(\vec Q) \ge 1 - \alpha_F$, hence}
\lambda_1(\vec Q)^t \frac{(1 - \alpha_F)^{T_1 + 1}}{2(n-f)} - \lambda^t 
&\ge \lambda_1(\vec Q)^t \left(\frac{(1 - \alpha_F)^{T_1 + 1}}{2(n-f)} - \left(\frac{\lambda}{1-\alpha_F}\right)^t\right). \notag
\intertext{Since $t \ge T_2$ and $1 - \alpha_F > \lambda$, we get}
\lambda_1(\vec Q)^t \frac{(1 - \alpha_F)^{T_1 + 1}}{2(n-f)} - \lambda^t 
&\ge \lambda_1(\vec Q)^t \left(\frac{(1 - \alpha_F)^{T_1 + 1}}{2(n-f)} - \left(\frac{\lambda}{1-\alpha_F}\right)^{T_2}\right). \notag
\intertext{With $T_2$ defined as above, we get}
\lambda_1(\vec Q)^t \frac{(1 - \alpha_F)^{T_1 + 1}}{2(n-f)} - \lambda^t 
&\ge \lambda_1(\vec Q)^t \left(\frac{(1 - \alpha_F)^{T_1 + 1}}{2(n-f)} - \frac{(1-\alpha_F)^{T_1 + 1}}{4(n-f)}\right) \notag \\
&=\lambda_1(\vec Q)^t \frac{(1 - \alpha_F)^{T_1 + 1}}{4(n-f)}. \label{eq:series-qt-subexpession}
\end{align}
Substituting \equationref{eq:series-qt-subexpession} into \equationref{eq:series-qt-lower-eigen}, we get
\begin{align}
    (\vec I_{n-f} - (1-\rho) \vec Q)^{-1}
   &\succeq \vec I_{n-f} + \sum_{t = T_2}^{\infty} (1-\rho)^t\lambda_1(\vec Q)^t \frac{(1 - \alpha_F)^{T_1 + 1}}{4(n-f)} \vec J_{n-f}.\notag
    \intertext{By reducing the infinite geometric series, we get}
    (\vec I_{n-f} - (1-\rho) \vec Q)^{-1}
    &\succeq \vec I_{n-f} + \frac{(1-\rho)^{T_2}\lambda_1(\vec Q)^{T_2}}{1 - (1-\rho)\lambda_1(\vec Q)} \cdot \frac{(1 - \alpha_F)^{T_1 + 1}}{4(n-f)} \vec J_{n-f}.\notag
    \intertext{From part (a) of \lemmaref{lemma:spectral-Q}, we have $\lambda_1(\vec Q) \ge 1 - \alpha_F$, hence} 
    (\vec I_{n-f} - (1-\rho) \vec Q)^{-1}
    &\succeq \vec I_{n-f} + \frac{1}{1 - (1-\rho)\lambda_1(\vec Q)} \cdot \frac{(1 - \alpha_F)^{T_2 + T_1 + 1} (1-\rho)^{T_2}}{4(n-f)} \vec J_{n-f}. \notag
    \intertext{From part (a) of \lemmaref{lemma:spectral-Q}, we also have $\lambda_1(\vec Q) \ge 1 - (1 + \lambda)f/n$, hence} 
    (\vec I_{n-f} - (1-\rho) \vec Q)^{-1}
    &\succeq \vec I_{n-f} + \frac{1}{1 - (1-\rho)(1 - (1 + \lambda)f/n)} \cdot \frac{(1 - \alpha_F)^{T_2 + T_1 + 1} (1-\rho)^{T_2}}{4(n-f)} \vec J_{n-f}\notag\\
    &= \vec I_{n-f} + \frac{1}{(1-\rho)(1 + \lambda)f/n + \rho} \cdot \frac{(1 - \alpha_F)^{T_2 + T_1 + 1} (1-\rho)^{T_2}}{4(n-f)} \vec J_{n-f}. \notag
    \intertext{As $\lambda + 1 \ge 1$, we have}
    (\vec I_{n-f} - (1-\rho) \vec Q)^{-1}
    &\succeq \vec I_{n-f} + \frac{1}{(1 + \lambda)((1-\rho)f/n + \rho)} \cdot \frac{(1 - \alpha_F)^{T_2 + T_1 + 1} (1-\rho)^{T_2}}{4(n-f)} \vec J_{n-f}. \notag
    \intertext{As $\lambda + 1 \le 2$, we have}
    (\vec I_{n-f} - (1-\rho) \vec Q)^{-1}
    &\succeq \vec I_{n-f} +  \frac{n(1-\alpha_F)^{T_2 + T_1 + 1}(1-\rho)^{T_1}}{8(\rho(n-f) + f)(n-f)} \vec J_{n-f}. \label{eq:series-qt-almost}
\end{align}
Also, note that, by definition of $T_1$ and $T_2$, we have
\begin{align*}
T_2 + T_1 + 1 &= \left\lceil\log_{\frac{\lambda}{1 - \alpha_F}} \left( \frac{(1-\alpha_F)^{T_1 + 1}}{4(n-f)} \right)\right\rceil + T_1 + 1\\
              &\le \log_{\frac{\lambda}{1 - \alpha_F}} \left( \frac{(1-\alpha_F)^{T_1 + 1}}{4(n-f)} \right) + T_1 + 2 \\
              &= T_1\log_{\frac{\lambda}{1 - \alpha_F}}\left(1 - \alpha_F\right) + \log_{\frac{\lambda}{1 - \alpha_F}}\left(\frac{1 - \alpha_F}{4(n-f)}\right) + T_1 + 2\\
              &= T_1\log_{\frac{\lambda}{1 - \alpha_F}}\left(1 - \alpha_F\right) + 2T_1 + 2\\
              &= \Tilde{T}.
\end{align*}
Combining this and $\Tilde{T} \ge T_1$ with \equationref{eq:series-qt-almost}, we finally get
\[
(\vec I_{n-f} - (1-\rho) \vec Q)^{-1} \succeq \vec I_{n-f} +  \frac{n(1-\alpha_F)^{\Tilde{T}}(1-\rho)^{\Tilde{T}}}{8(\rho(n-f) + f)(n-f)} \vec J_{n-f},
\]
which concludes the proof of assertion (b).
\end{proof}

By putting everything together, we obtain the desired bound on the maximal divergence.
\begin{restatable}{lemma}{BoundDInftyJoint}
\label{lemma:bound-d-infty-joint}
Consider an undirected connected $(d,\lambda)$-expander graph $G=(V,E)$ of size $n$, a set of curious nodes $F \subset V$ with $|F| = f$ and adversarial density $\alpha_F < 1-\lambda$. Consider a $(1+\rho)$-cobra walk (or $\rho$-Dandelion) with $\rho < 1$. Then, for every $v,u \in \honest$, the following holds true
\[
\divinfty{\seqadv{v}}{\seqadv{u}} \le \ln(\rho(n-f) + f) - 2\Tilde{T}\ln(1 - \alpha_F) - \Tilde{T}\ln(1 - \rho) - \ln(1-\lambda) + \ln(24) ,
\]
where $\Tilde{T}  = \left\lceil\log_{\frac{\lambda}{1 - \alpha_F}}\left(\frac{1 - \alpha_F}{4(n-f)}\right)\right\rceil \left(\log_{\frac{\lambda}{1 - \alpha_F}}(1 - \alpha_F) + 2\right) + 2$.
\end{restatable}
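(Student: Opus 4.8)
The plan is to chain the reduction to a random walk with probabilistic die out together with the spectral control of $(\vec I_{n-f} - (1-\rho)\vec Q)^{-1}$ obtained in \lemmaref{lemma:Q-powers-bound}. We may assume $v \neq u$, since otherwise the two laws coincide and the divergence vanishes. Since $\alpha_F < 1 - \lambda$, \lemmaref{lemma:connectivity-condition} guarantees that the subgraph of $G$ induced by $\honest$ is connected, so \lemmaref{lemma:reduction-rw} applies for a $(1+\rho)$-cobra walk and \lemmaref{lemma:dandelion-reduction} for $\rho$-Dandelion. Either way we get
\[
\divinfty{\seqadv{v}}{\seqadv{u}} \;\le\; \divinfty{W^{(v)}}{W^{(u)}} \;=\; \max_{w\in \honest} \ln\frac{(\vec I_{n-f} - (1 - \rho)\vec Q)^{-1}_{vw}}{(\vec I_{n-f} - (1 - \rho)\vec Q)^{-1}_{uw}},
\]
so it suffices to bound the right-hand side entrywise.

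Write $\vec M = (\vec I_{n-f} - (1-\rho)\vec Q)^{-1}$. I would read off from \lemmaref{lemma:Q-powers-bound}(a) that $M_{vw} \le \ind_{\{v=w\}} + a$ with $a = (1-\lambda)^{-1}\big(\tfrac{2n(1-\alpha_F)^{-\Tilde{T}}}{(\rho(n-f)+f)(n-f)} + \lambda\big)$, and from \lemmaref{lemma:Q-powers-bound}(b) that $M_{uw} \ge \ind_{\{u=w\}} + b$ with $b = \tfrac{n(1-\alpha_F)^{\Tilde{T}}(1-\rho)^{\Tilde{T}}}{8(\rho(n-f)+f)(n-f)}$, using that every off-diagonal entry of $\vec J_{n-f}$ is $1$ and that $\vec I_{n-f}$ contributes the indicator on the diagonal. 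Since $v \neq u$, exactly one of $w=v$, $w=u$, $w\notin\{v,u\}$ occurs, and comparing $\tfrac{1+a}{b}$, $\tfrac{a}{1+b}$, $\tfrac{a}{b}$ shows that the ratio is largest in the case $w=v$, hence $\max_{w\in\honest} M_{vw}/M_{uw} \le (1+a)/b$.

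It then remains to simplify $(1+a)/b$. The identity $1 + \lambda(1-\lambda)^{-1} = (1-\lambda)^{-1}$ rewrites $1 + a$ as $(1-\lambda)^{-1}\big(1 + \tfrac{2n(1-\alpha_F)^{-\Tilde{T}}}{(\rho(n-f)+f)(n-f)}\big)$, so that
\[
\frac{1+a}{b} = \frac{8(\rho(n-f)+f)(n-f)}{(1-\lambda)\, n\,(1-\alpha_F)^{\Tilde{T}}(1-\rho)^{\Tilde{T}}} + \frac{16}{(1-\lambda)(1-\alpha_F)^{2\Tilde{T}}(1-\rho)^{\Tilde{T}}}.
\]
Bounding $n-f \le n$ and $(1-\alpha_F)^{\Tilde{T}} \ge (1-\alpha_F)^{2\Tilde{T}}$ in the first summand, and $1 \le \rho(n-f)+f$ in the second, makes each summand at most $\tfrac{8(\rho(n-f)+f)}{(1-\lambda)(1-\alpha_F)^{2\Tilde{T}}(1-\rho)^{\Tilde{T}}}$ and $\tfrac{16(\rho(n-f)+f)}{(1-\lambda)(1-\alpha_F)^{2\Tilde{T}}(1-\rho)^{\Tilde{T}}}$ respectively, so that $(1+a)/b \le \tfrac{24(\rho(n-f)+f)}{(1-\lambda)(1-\alpha_F)^{2\Tilde{T}}(1-\rho)^{\Tilde{T}}}$; taking logarithms yields the claimed bound.

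All the conceptual weight sits in the earlier results — the reduction (Step I) and the delocalisation/norm estimates for $\vec Q$ (Step II) feeding \lemmaref{lemma:Q-powers-bound}. The only point requiring care here is the bookkeeping around the diagonal: one must handle the case $w=v$, where the numerator carries an extra $+1$ rather than just the off-diagonal estimate, and then collapse the two parts of the fraction onto the single constant $24$ without degrading the exponential factors in $\Tilde{T}$, which is exactly where the crude inequalities $n-f\le n$, $\rho(n-f)+f\ge 1$, $1-\alpha_F\le 1$ are used. I expect this final algebraic collapse — and verifying that it is genuinely the $w=v$ case that dominates — to be the only mildly delicate step.
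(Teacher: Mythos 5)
Your proof is correct and follows essentially the same route as the paper: the reduction via \lemmaref{lemma:reduction-rw} and \lemmaref{lemma:dandelion-reduction} to the max divergence of death sites, then the entrywise bounds of \lemmaref{lemma:Q-powers-bound} combined with the same elementary estimates ($n-f\le n$, $\rho(n-f)+f\ge 1$, $1-\alpha_F\le 1$) to obtain the constant $24$ and the $-2\Tilde{T}\ln(1-\alpha_F)$ term. Your handling of the $(1-\alpha_F)^{-2\Tilde{T}}$ factor is in fact slightly more careful than the paper's intermediate display (which momentarily drops it), and your final bound matches the lemma statement exactly; the extra case analysis around the diagonal is harmless but unnecessary, since bounding $\ind_{\{v=w\}}\le 1$ uniformly gives the same $(1+a)/b$.
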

\begin{proof}
From part (a) of \lemmaref{lemma:Q-powers-upper-bound}, for any $v,w \in \honest$, we have 
\begin{align}
    (\vec I_{n-f} - (1-\rho)\vec Q)^{-1}_{vw} 
    &\le 1 + (1-\lambda)^{-1}\left(\frac{2n (1 - \alpha_F)^{-\Tilde{T}}}{(\rho(n-f) + f)(n-f)}  + \lambda\right)\notag\\
    &= (1-\lambda)^{-1}\left(\frac{2n (1 - \alpha_F)^{-\Tilde{T}}}{(\rho(n-f) + f)(n-f)}  + 1\right).\label{eq:upper-vw}
\end{align}
Also, from part (b) of \lemmaref{lemma:Q-powers-upper-bound}, for any $u,w \in \honest$, we have 
\begin{align}
    (\vec I_{n-f} - (1-\rho)\vec Q)^{-1}_{uw} \ge \frac{n(1-\alpha_F)^{\Tilde{T}}(1-\rho)^{\Tilde{T}}}{8(\rho(n-f) + f)(n-f)}. \label{eq:lower-uw}
\end{align}
Then, by \lemmaref{lemma:sufficient-varepsilon-0} and \lemmaref{lemma:dandelion-reduction} we have for both $\rho$-Dandelion and $(1 + \rho)$-cobra walk. 
\begin{align*}
    \divinfty{\seqadv{v}}{\seqadv{u}}
    &\le \max_{w \in \honest}\ln\frac{(\vec I_{n-f} - (1 - \rho)\vec Q)^{-1}_{vw}}{(\vec I_{n-f} - (1 - \rho)\vec Q)^{-1}_{uw}}.
    \intertext{By substituting \equationref{eq:upper-vw} and \equationref{eq:lower-uw}, we get}
    \divinfty{\seqadv{v}}{\seqadv{u}}&\le \ln\left(\frac{(1-\lambda)^{-1}\left(\frac{2n (1 - \alpha_F)^{-\Tilde{T}}}{(\rho(n-f) + f)(n-f)}  + 1\right)}{\frac{n(1-\alpha_F)^{\Tilde{T}}(1-\rho)^{\Tilde{T}}}{8(\rho(n-f) + f)(n-f)}}\right).
    \intertext{By rearranging the terms, we get}
    \divinfty{\seqadv{v}}{\seqadv{u}} &\le \ln\left(16(1 - \rho)^{-\Tilde{T}} + \frac{8(\rho(n-f) + f)(n-f)}{n(1-\alpha_F)^{\Tilde{T}}(1-\rho)^{\Tilde{T}}}\right) - \ln(1 - \lambda) .
    \intertext{Since $n-f\le n$, we have}
    \divinfty{\seqadv{v}}{\seqadv{u}}
    &\le \ln\left(16(1 - \rho)^{-\Tilde{T}} + \frac{8(\rho(n-f) + f)}{(1-\alpha_F)^{\Tilde{T}}(1-\rho)^{\Tilde{T}}}\right) - \ln(1 - \lambda) \\
    &=\ln\left(16(1 - \rho)^{-\Tilde{T}} + 8(\rho(n-f) + f))(1-\alpha_F)^{-\Tilde{T}}(1-\rho)^{-\Tilde{T}}\right) - \ln(1 - \lambda).
    \intertext{Since $\rho(n-f) + f \ge 1$ and $(1-\alpha_F)^{-\Tilde{T}} \ge 1$, have }
    \divinfty{\seqadv{v}}{\seqadv{u}}&\le \ln\left(24(\rho(n-f) + f))(1-\alpha_F)^{-\Tilde{T}}(1-\rho)^{-\Tilde{T}}\right) - \ln(1 - \lambda) \\
    &\le \ln(\rho(n-f) + f) - \Tilde{T}\ln(1-\alpha_F) - \Tilde{T}\ln(1-\rho) - \ln(1 - \lambda) + \ln(24),
\end{align*}
which concludes the proof.
\end{proof}

\subsection{Proof of \theoremref{thm:main}}

\MainThm*
\begin{proof}
    First, put $\alpha = f/d$ and consider the worst-case adversary. The adversarial density $\alpha_F$ is upper bounded by $\alpha = f/d$ since the neighbourhood of every node has size $d$ and contains at most $f$ curious nodes. Then $\mathcal P$ is $\varepsilon$-DP with $\alpha = f/d$ against the worst-case adversary by \lemmaref{lemma:bound-d-infty-joint}.
    
    Now, put $\alpha$ as in \lemmaref{lemma:adversarial-density} and consider the average-case adversary. For the average-case adversary, by \lemmaref{lemma:adversarial-density}, $\alpha_F$ is upper bounded by $\alpha$ with high probability. Therefore $\mathcal P$ is $\varepsilon$-DP with $\alpha$ as in \lemmaref{lemma:adversarial-density} against the average-case adversary by \lemmaref{lemma:bound-d-infty-joint} and \equationref{eq:average-guarantee-probability}.
\end{proof}

\section{Dissemination time vs. privacy trade-off: proofs of Section~\ref{sec:analysis}}
\label{sec:appendix-tradeoff}

In this part of the appendix, we present detailed proofs of theorems from Section~\ref{sec:analysis}.

\subsection{Privacy guarantees on near-Ramanujan graphs (Proof of \corollaryref{corollary:privacy-nr})}\label{subsec:appendix-near-Ramunujan}

The proof of \corollaryref{corollary:privacy-nr} relies on the two following observations. First, we upper bound the value of $\alpha$ in \lemmaref{lemma:adversarial-density} for dense near-Ramanujan graphs.
\begin{lemma}
\label{lemma:alpha-nr}
Let $\mathcal G$ be a family of $d$-regular near-Ramanujan graphs with $n$ nodes and $d \in n^{\Omega_n(1)}$. Suppose $f/n \in 1 - \Omega_n(1)$. Let $G \in \mathcal G$ and let $\alpha$ be as in \lemmaref{lemma:adversarial-density}. Then
\[
\alpha \in 1 - \Omega_n(1).
\]
\end{lemma}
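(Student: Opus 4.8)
The plan is to read off the bound from \lemmaref{lemma:adversarial-density}, splitting into two regimes depending on whether $\beta \triangleq f/n$ is bounded away from $0$ by a constant, and using the density hypothesis $d \in n^{\Omega_n(1)}$ to make the auxiliary parameter $\gamma \triangleq \ln(n)/(ed)$ negligible. So the statement is purely a matter of feeding good constants into the two cases of \lemmaref{lemma:adversarial-density}.

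First I would fix the constants coming from the hypotheses. Since $d \in n^{\Omega_n(1)}$, there is $c_1 > 0$ with $d \ge n^{c_1}$ for all large $n$, hence $\gamma \le \ln(n)/(e n^{c_1}) \in o_n(1)$; in particular $\gamma \le 1/(16e)$ eventually. Since $f/n \in 1 - \Omega_n(1)$, there is a constant $c_0 \in (0,1)$ with $\beta \le 1 - c_0$ for all large $n$. I will use the fixed threshold $c \triangleq 1/(16e)$ to separate the two regimes.

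Next, the two cases. If $\beta \le c$, then for $n$ large enough also $\gamma \le c$, so $\max\{\gamma,\beta\} \le 1/(16e)$; since the denominator in the first bound of \lemmaref{lemma:adversarial-density} is at least $1$, this yields $\alpha \le 4e \max\{\gamma,\beta\} \le 1/4$, which lies in $1 - \Omega_n(1)$. If instead $\beta > c$, I apply the ``furthermore'' clause of \lemmaref{lemma:adversarial-density} with this $c$ and with $\delta \triangleq c_0/2$: the requirement $f/n > c$ holds by assumption, and $d > \ln(n)/(c^2\delta^2)$ holds for all large $n$ because the right-hand side is $\Theta_n(\ln n)$ whereas $d \ge n^{c_1}$. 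Hence we may take $\alpha \le (1+\delta)\beta \le (1 + c_0/2)(1 - c_0) \le 1 - c_0/2$, again in $1 - \Omega_n(1)$. Combining the two cases gives $\alpha \le 1 - \min\{3/4,\, c_0/2\}$ for all large $n$, i.e.\ $\alpha \in 1 - \Omega_n(1)$.

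The only point requiring care is the case distinction itself: the first bound of \lemmaref{lemma:adversarial-density} becomes vacuous (it can exceed $1$) precisely when $\beta$ is constant-order, so one is forced to fall back on the multiplicative-Chernoff refinement in that regime, and that refinement is usable only because $d = \omega_n(\ln n)$ — guaranteed here by $d \in n^{\Omega_n(1)}$ — which lets $d$ dominate the constant factor $1/(c^2\delta^2)$. Everything else is routine asymptotic bookkeeping.
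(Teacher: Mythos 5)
Your proof is correct and follows essentially the same route as the paper's: split on whether $\beta = f/n$ is below a small constant threshold, use the first bound of \lemmaref{lemma:adversarial-density} (which then gives $\alpha \le 4e\max\{\gamma,\beta\}$, a constant strictly below $1$) in the sparse-curious regime, and invoke the ``furthermore'' multiplicative-Chernoff clause with constant $c$ and a suitable $\delta$ (made admissible by $d \in n^{\Omega_n(1)} \supseteq \omega_n(\log n)$) when $\beta$ is of constant order. The only differences are cosmetic choices of constants (your threshold $1/(16e)$ and $\delta = c_0/2$ versus the paper's $1/(8e)$ and $\delta = (n/f-1)/2$), and the asymptotic bookkeeping is carried out correctly.
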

\begin{proof}
Set $\beta = f/n$ and $\gamma = \ln(n)/(ed)$, as in \lemmaref{lemma:adversarial-density}. Note that since $d \in n^{\Omega_n(1)}$, we have $\gamma \in o_n(1)$. Select $n$ large enough so that $\gamma < 1/(8e)$. We consider two cases: when $\beta < 1/(8e)$ and when $\beta \ge 1/(8e)$. In the first case, by the first part of \lemmaref{lemma:adversarial-density}, we have
\begin{equation}
   \label{eq:alpha-ub-nr-1} 
\alpha \le 4e\frac{\max\{\gamma, \beta\}}{1 + \max \{\ln(\gamma) - \ln(\beta), 0\}} \le 4e\max\{\gamma, \beta\} < 1/2,
\end{equation}
hence, $\alpha \in 1 - \Omega_n(1)$ in this case. Now, consider the case when $\beta \ge 1/(8e)$. Set $c = 1/(8e)$ and $\delta = \frac{n/f - 1}{2} \in O_n(1)$. Then, $\beta \ge c$. Note that since $d \in n^{\Omega_n(1)}$, we have $d \in \omega_n(\log(n))$. Select $n$ large enough so that $d > 64e^2\ln(n)/\delta^2$, i.e., $d > \frac{\ln(n)}{c^2\delta^2}$. Then, by second part of \lemmaref{lemma:adversarial-density}, we have
\begin{equation}
\label{eq:alpha-ub-nr-2}
\alpha \le (1 + \delta)\beta =  \left(\frac{n/f - 1}{2} + 1\right) \cdot \frac{f}{n} = \frac{f/n + 1}{2} \in 1 - \Omega_n(1),
\end{equation}
where the last transition follows from the fact that $f/n \in 1 -\Omega_n(1)$. Combining \equationref{eq:alpha-ub-nr-1} and \equationref{eq:alpha-ub-nr-2} concludes the proof.
\end{proof}

Now, we bound from above the value of $\Tilde{T}$ from \theoremref{thm:main}.
\begin{lemma}
\label{lemma:t-tilde-nr}
Let $\mathcal G$ be a family of $d$-regular near-Ramanujan graphs with $n$ nodes and $d \in n^{\Omega_n(1)}$. Suppose $f/d \in 1 - \Omega_n(1)$ (resp. $f/n \in 1 - \Omega_n(1)$). Let $G \in \mathcal G$ and let $\Tilde{T}$ be as in \theoremref{thm:main}. Then, both against a worst-case and an average-case adversary, we have
\[
\Tilde{T} \in O_n(1).
\]
\end{lemma}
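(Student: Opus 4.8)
The plan is to reduce the claim to the observation that, under the stated hypotheses, $1-\alpha$ is bounded below by a positive constant while $\lambda$ decays polynomially in $n$, so that $\Tilde T$ becomes a ratio of quantities of order $\ln n$, hence $O_n(1)$. First I would record the two inputs. Since $\mathcal G$ is near-Ramanujan and $d \in n^{\Omega_n(1)}$, say $d \ge n^{\delta_0}$ for a constant $\delta_0 > 0$ and all large $n$, \definitionref{def:nr-graphs} gives $\lambda(G) \le c\, d^{-1/2} \le c\, n^{-\delta_0/2}$, so $\lambda \in n^{-\Omega_n(1)}$ and in particular $-\ln\lambda \in \Omega_n(\ln n)$. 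For $1-\alpha$: in the worst-case setting $\alpha = f/d$, and the hypothesis $f/d \in 1 - \Omega_n(1)$ gives $1-\alpha \ge c_1$ for a constant $c_1 > 0$; in the average-case setting $\alpha$ is the bound from \lemmaref{lemma:adversarial-density}, and \lemmaref{lemma:alpha-nr} (applicable since $f/n \in 1 - \Omega_n(1)$) again yields $1-\alpha \in \Omega_n(1)$, so $1-\alpha \ge c_1$. Thus in both cases $c_1 \le 1-\alpha \le 1$, and for large $n$ the base $b := \lambda/(1-\alpha) \le \lambda/c_1$ lies in $(0,1)$ and satisfies $-\ln b = \ln\bigl((1-\alpha)/\lambda\bigr) \ge \ln(c_1/c) + \tfrac{\delta_0}{2}\ln n \in \Omega_n(\ln n)$; this also certifies the hypothesis $\lambda < 1-\alpha$ of \theoremref{thm:main}.

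Next I would bound the two factors of $\Tilde T$ separately, writing each $\log_b(\cdot)$ as $\ln(\cdot)/\ln b$ and keeping track of the fact that $\ln b < 0$. For the first factor, $\log_b\!\bigl(\tfrac{1-\alpha}{4(n-f)}\bigr) = \ln\!\bigl(4(n-f)/(1-\alpha)\bigr)\big/(-\ln b)$, whose numerator is at most $\ln 4 + \ln n - \ln c_1 = \ln n + O_n(1)$ (using $n-f \le n$ and $1-\alpha \ge c_1$) and whose denominator is $\Omega_n(\ln n)$; hence this logarithm is a nonnegative quantity bounded above by $\tfrac{2}{\delta_0} + o_n(1)$, so its ceiling is $O_n(1)$. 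For the second factor, $\log_b(1-\alpha) + 2 = \ln(1-\alpha)/\ln b + 2$; since $\ln c_1 \le \ln(1-\alpha) \le 0$ the numerator is $O_n(1)$ in absolute value while $|\ln b| \in \Omega_n(\ln n)$, so $\log_b(1-\alpha) \in o_n(1)$ and the factor equals $2 + o_n(1) = O_n(1)$. Multiplying the two bounded factors and adding $2$ gives $\Tilde T \in O_n(1)$, uniformly over $G \in \mathcal G$ of large enough size, which is the assertion.

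There is essentially no deep obstacle: the statement is the asymptotic bookkeeping step that packages the estimates used in \corollaryref{corollary:privacy-nr}. The only points requiring care are (i) tracking signs, since both numerator and denominator of each $\log_b$ are negative and one must not flip an inequality, and (ii) checking that $b = \lambda/(1-\alpha)$ indeed lies in $(0,1)$ for large $n$, so that $\log_b$ of a number in $(0,1)$ is positive and the expression for $\Tilde T$ is well behaved — both follow immediately from $\lambda \in n^{-\Omega_n(1)}$ and $1-\alpha \ge c_1$. I would also state explicitly that the constants $\delta_0, c_1, c$ are independent of $n$, so the final $O_n(1)$ bound is genuinely uniform over the family $\mathcal G$.
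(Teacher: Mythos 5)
Your proof is correct and follows essentially the same route as the paper: both derive $\lambda \in n^{-\Omega_n(1)}$ from the near-Ramanujan property with $d \in n^{\Omega_n(1)}$, obtain $1-\alpha \in \Omega_n(1)$ from the worst-case hypothesis $f/d \in 1-\Omega_n(1)$ or from \lemmaref{lemma:alpha-nr} in the average case, and then bound each logarithmic factor of $\Tilde{T}$ as a ratio of an $O_n(\ln n)$ numerator over an $\Omega_n(\ln n)$ denominator. Your treatment is just a more explicit bookkeeping of the same estimates (even giving $\log_b(1-\alpha)\in o_n(1)$ where the paper only records $O_n(1)$), so no further comment is needed.
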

\begin{proof}
By definition of near-Ramanujan graphs, we have $\lambda \in O_n\left(d^{-1/2}\right)$. Since $d \in n^{\Omega_n(1)}$, this yields 
\begin{equation}
\lambda \in n^{-\Omega_n(1)}. 
\label{eq:lambda-nr}\end{equation} 
Also, recall that
\[
\Tilde{T} = \left\lceil\log_{\frac{\lambda}{1 - \alpha}}\left(\frac{1 - \alpha}{4(n-f)}\right)\right\rceil \left(\log_{\frac{\lambda}{1 - \alpha}}(1 - \alpha) + 2\right) + 2.
\]
Note that against a worst-case adversary we have $\alpha = f/d \in 1 - \Omega_n(1)$. We also know that $\alpha \in 1 - \Omega_n(1)$ against an average-case adversary, as per \lemmaref{lemma:alpha-nr}. Then, regardless of the type of the adversary, we have
\begin{equation}
\label{eq:alpha-bounded-away-nr}
\alpha \in 1 -\Omega_n(1).
\end{equation}
From \equationref{eq:lambda-nr} and (\ref{eq:alpha-bounded-away-nr}), we get $\frac{\lambda}{1 - \alpha} \in n^{-\Omega_n(1)}$ and $\frac{1 - \alpha}{4(n-f)} \in n^{-O_n(1)}$. Hence, the following holds true
\begin{equation}
\label{eq:tilde-t-expression-ub}
\log_{\frac{\lambda}{1 - \alpha}}\left(\frac{1 - \alpha}{4(n-f)}\right) = \frac{\ln\left(\frac{1 - \alpha}{4(n-f)}\right)}{\ln\left(\frac{\lambda}{1 - \alpha}\right)} \in O_n(1).
\end{equation}
Additionally, since $\alpha \in 1 - \Omega_n(1)$ as per \equationref{eq:alpha-bounded-away-nr}, using \equationref{eq:lambda-nr} we get $\log_{\frac{\lambda}{1 - \alpha}}(1 - \alpha) \in O_n(1)$. Combining this with \equationref{eq:tilde-t-expression-ub}, we get
\[
\Tilde{T} = \left\lceil\log_{\frac{\lambda}{1 - \alpha}}\left(\frac{1 - \alpha}{4(n-f)}\right)\right\rceil \left(\log_{\frac{\lambda}{1 - \alpha}}(1 - \alpha) + 2\right) + 2\in O_n(1),
\]
which concludes the proof.
\end{proof}

\PrivacyNRWorst*
\begin{proof}
Recall, that by \theoremref{thm:main}
, $\mathcal P$ satisfies $\varepsilon$-DP with 
\[
\varepsilon = \ln(\rho(n-f) + f) - 2\Tilde{T}\ln(1 - \alpha) - \Tilde{T}\ln(1 - \rho) - \ln(1-\lambda) + \ln(24),
\]
and $\Tilde{T}  = \left\lceil\log_{\frac{\lambda}{1 - \alpha}}\left(\frac{1 - \alpha}{4(n-f)}\right)\right\rceil \left(\log_{\frac{\lambda}{1 - \alpha}}(1 - \alpha) + 2\right) + 2.$ Note that for both the worst and the average-case adversary, we have $\alpha \in 1 - \Omega_n(1)$ (\lemmaref{lemma:alpha-nr}) and $\Tilde{T} \in O_n(1)$ (\lemmaref{lemma:t-tilde-nr}). Then 
\begin{equation}
\label{eq:t-tilde-alpha}
- 2\Tilde{T}\ln(1 - \alpha) \in O_n(1).
\end{equation}
Since we are given $1 - \rho \in \Omega_n(1)$, we also have
\begin{equation}
\label{eq:t-tilde-rho}
- \Tilde{T}\ln(1 - \rho) \in O_n(1).
\end{equation}
Finally, for dense near-Ramanujan graphs we have $\lambda \in O_n(d^{-1/2}) \subseteq n^{-\Omega_n(1)}$, hence,
\[
- \ln(1-\lambda) \in O_n(1).
\]
Combining this with \equationref{eq:t-tilde-alpha} and \equationref{eq:t-tilde-rho}, for both adversaries we have
\begin{align*}
\varepsilon &= \ln(\rho(n-f) + f) - 2\Tilde{T}\ln(1 - \alpha) - \Tilde{T}\ln(1 - \rho) - \ln(1-\lambda) + \ln(24) \\ 
             &\in \ln(\rho(n-f) + f) + O_n(1),
\end{align*}
which concludes the proof.
\end{proof}

\subsection{Trade-off for cobra walks}
\subsubsection{Proof of the tightness of \corollaryref{corollary:privacy-nr}}
\label{apppendix:tighter-LB-for-cobra}

In this section, we show the following result.
\begin{restatable}{theorem}{CobraPrivacyLower}
\label{thm:cobra-privacy-lower} 
Let $\mathcal P$ be a $(1 + \rho)$-cobra walk with $\rho \in [0,1]$ and let $\mathcal G$ be a family of $d$-regular near-Ramanujan graphs with $n$ nodes, $f$ of which are curious and $d \in n^{\Omega_n(1)}$. Suppose $f \in n^{\Omega_n(1)}$ and $f/n \in 1-\Omega_n(1)$, and $\mathcal P$ satisfies $\varepsilon$-DP against either an average-case or a worst-case adversary on $G \in \mathcal G$. Then, 
\[\varepsilon \in \ln{(\rho (n-f) + f)}+\Omega_n(1).\]
\end{restatable}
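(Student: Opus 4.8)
\subsection{Proof plan for \theoremref{thm:cobra-privacy-lower}}

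Since $\divinfty{\seqadv{v}}{\seqadv{u}}$ is a supremum over adversary-observable events, it suffices to exhibit, for every admissible placement of the curious set $F$ (in the worst case we pick $F$; in the average case we condition on the probability-$(1-1/n)$ event of \lemmaref{lemma:adversarial-density} so that $\alpha_F\le\alpha\in 1-\Omega_n(1)$ by \lemmaref{lemma:alpha-nr}, losing only an additive $1/n$ in the DP guarantee), two honest sources $v,u$ and an event $\sigma$ with likelihood ratio $\Pr[\seqadv{v}\in\sigma]/\Pr[\seqadv{u}\in\sigma]\ge\Omega_n\!\big(\rho(n-f)+f\big)$; this immediately yields $\varepsilon\ge\ln(\rho(n-f)+f)+\Omega_n(1)$. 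I would split according to which of $f$ and $\rho(n-f)$ dominates. When $f\ge\rho(n-f)$ we have $\rho(n-f)+f\le 2f$, so it is enough to show $\varepsilon\ge\ln f+\Omega_n(1)$; this comes from re-running the averaging argument behind \theoremref{thm:universal-impossibility} (the event ``the adversary's first observation reveals $v$'') and observing that on a near-Ramanujan graph with $f/n\in 1-\Omega_n(1)$ every constant appearing there — the factor $(n-f-1)/(n-2)$ and the expander-mixing estimate $|\partial F|=\Theta\!\big(d f(n-f)/n\big)$ — is bounded away from $0$, upgrading $\ln(f-1)$ to $\ln f+\Omega_n(1)$.

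The heart is the regime $\rho(n-f)>f$, where the gain must come from branching. Fix $v\in\honest$ maximizing $\deg_F(v)$ and let $\sigma_v$ be the event that in round $\tadv$ the only visible communications have the form $(v\to\cdot)$ — i.e.\ $v$ is the first (and that round the unique) honest node seen touching $F$. If the source is $v$, then already at round $\tstar$ the walk sits at the single vertex $v$, which contacts $F$ — whether or not it branches — with probability $\ge\deg_F(v)/d$; that round is then $\tadv$ and $\sigma_v$ holds, so $\Pr[\seqadv{v}\in\sigma_v]\ge\deg_F(v)/d$. If the source is $u\ne v$, then $\sigma_v$ forces $v$ to be active at round $\tadv$ and to be the unique honest contact, so crudely
\[
\Pr[\seqadv{u}\in\sigma_v]\ \le\ 2\,\frac{\deg_F(v)}{d}\sum_{t\ge 1}\Pr\!\big[v\in\activeset_t^{(u)},\ \tadv\ge t\big].
\]
The $\deg_F(v)/d$ factors cancel in the ratio, so everything reduces to proving $\sum_{t\ge1}\Pr[v\in\activeset_t^{(u)},\tadv\ge t]=O_n\!\big(1/(\rho(n-f)+f)\big)$. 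I would prove this by dominating the cobra walk restricted to $\honest$ and stopped before any $F$-contact by a branching random walk on $G[\honest]$ (each step multiplies $\vec Q=\Hat{\vec A}[\honest]$ by $1+\rho$), giving $\Pr[v\in\activeset_t^{(u)},\tadv\ge t]\le(1+\rho)^t(\vec Q^t)_{uv}$; since this series diverges one truncates at $T^\star\approx\log_{1+\rho}\!\big((n-f)/f\big)$, using that on a dense expander the active set reaches size $\approx(n-f)/f$ by round $T^\star$ so that $\tadv\le T^\star$ with probability $1-n^{-\Omega_n(1)}$, and controls $(\vec Q^t)_{uv}$ for $t\le T^\star$ by the spectral input of the positive result — \lemmaref{lemma:spectral-Q} together with the near-uniformity of the first eigenvector (\lemmaref{lemma:Q-delocalization}) — which on near-Ramanujan graphs give $(\vec Q^t)_{uv}\le O_n(1)/(n-f)+\lambda^t$. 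Summing these estimates delivers the claimed bound, hence likelihood ratio $\Omega_n(\rho(n-f)+f)$ under $\sigma_v$; combining with the previous paragraph and $\max\{a,b\}\ge(a+b)/2$ gives $\varepsilon\ge\ln(\rho(n-f)+f)+\Omega_n(1)$ in all cases.

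The step I expect to be the main obstacle is exactly the ``noise'' estimate $\sum_t\Pr[v\in\activeset_t^{(u)},\tadv\ge t]=O_n(1/(\rho(n-f)+f))$: the naive branching-random-walk domination blows up, so one genuinely needs the truncation argument (quantifying that $\tadv$ is small on a dense graph, which itself requires a lower-tail bound on the growth of the active set) together with the fine spectral control of the honest subgraph already developed for the positive result. A secondary technical point is handling the average case, where $F$ cannot be chosen adversarially; there one additionally needs that a uniformly random $F$ still leaves some honest $v$ with $\deg_F(v)=\Omega(\alpha_F d)=\Omega(fd/n)$, which follows from the expander mixing lemma.
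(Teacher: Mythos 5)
Your plan breaks down at exactly the step you flag as the main obstacle, and the failure is not a matter of technique but of the event you chose. The estimate $\sum_{t\ge 1}\Pr\bigl[v\in\activeset_t^{(u)},\ \tadv\ge t\bigr]=O_n\bigl(1/(\rho(n-f)+f)\bigr)$ is false in the regime $\rho(n-f)\gg f$ that this case is supposed to handle. Heuristically, the total number of active node-rounds before a curious node is hit is $\Theta_n(n/f)$ no matter how much the walk branches (each active node-round is detected with probability about $f/n$), and these are spread essentially uniformly over $\honest$ on an expander, so the expected number of rounds at which the specific node $v$ is active before $\tadv$ is $\Theta_n(1/f)$ — not $O_n(1/(f+\rho(n-f)))$. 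Even your own truncated computation does not give the claim: with $(1+\rho)^{T^\star}\approx (n-f)/f$ you get $\sum_{t\le T^\star}(1+\rho)^t\bigl(O_n(1)/(n-f)+\lambda^t\bigr)\approx 1/(\rho f)+O_n(1)\cdot\lambda$, and $1/(\rho f)\gg 1/(f+\rho(n-f))$ whenever $f\ll n$. The root cause is that your event $\sigma_v$ (``the first visible communication comes from $v$'') imposes no penalty on branching: when the source is $u\ne v$, the walk may branch early, maintain a large active set, and still realize $\sigma_v$ whenever $v$ merely happens to be active at detection time. Consequently this event can certify a likelihood ratio of order at most $f$, i.e.\ $\varepsilon\ge\ln f+O_n(1)$, which is what your other regime already gives; the $\rho(n-f)$ term — the whole content of the theorem when $\rho(n-f)>f$ — cannot be extracted this way.

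The paper's proof gets around precisely this by using an observation that certifies the active set was the \emph{singleton} $\{v\}$ at the moment of first contact: it picks (with probability $>1/n$ over a random $F$, which also covers the worst case) two adjacent curious nodes $w_1,w_2$ with an honest neighbour $v$, and takes the event that $v$ contacts $w_1$ twice in one round and then $w_1,w_2$ exchange messages for $t$ further rounds with nothing else observed; letting $t\to\infty$ kills the contribution of executions with additional active branches (any surviving branch would eventually hit $F$). This reduces the lower bound to the \emph{passage probability} $\passage{u}{v}$ that the cobra walk coalesces back to exactly $\{v\}$ before touching $F$, and that probability is shown to be $O_n\bigl(1/(f+\rho(n-f))\bigr)$ via the anaconda-walk coupling and a split by the number of branchings (``bandwidth''): few branchings are unlikely to survive long (a geometric-series/first-eigenvector computation), and $b$ branchings coalescing before detection costs $O_n\bigl((1/d+1/f)^b\bigr)$, which is where the hypotheses $d,f\in n^{\Omega_n(1)}$ enter. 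So to repair your plan you would need to replace $\sigma_v$ by an event that forces coalescence (or otherwise penalizes branching) and then prove a coalescence bound of this type; the spectral inputs you cite (\lemmaref{lemma:spectral-Q}, \lemmaref{lemma:Q-delocalization}) are indeed the right tools, but only after that reduction. A secondary, smaller gap: in your first regime you assert that the argument of \theoremref{thm:universal-impossibility} can be upgraded from $\ln(f-1)$ to $\ln f+\Omega_n(1)$; since the difference you need is a positive additive constant, this requires an actual strengthening of that averaging argument (or a protocol-specific one), not just bookkeeping of the existing constants.
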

This effectively means that \corollaryref{corollary:privacy-nr} is tight for cobra walks. Note that the statement of \theoremref{thm:cobra-privacy-lower} with $\rho = 0$ follows from \theoremref{thm:universal-impossibility}. Without loss of generality, we assume $\rho > 0$ in the remaining.

First, we prove two helpful auxiliary lemmas. 
\begin{lemma}
    \label{lemma:two-curious-prob}
    Let $G = (V,E)$ be an undirected connected graph of size $n \ge 3$. Let $F \sim \uniformsubset{f}{V}$ be a random subset of $V$ of size $f \ge 3$. Then, with probability at least $2/(n-1)$, $F$ contains two nodes connected by an edge.
\end{lemma}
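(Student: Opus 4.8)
The plan is to recast the event probabilistically in terms of independent sets and then bound the number of independent sets of a fixed size in a connected graph by a short induction. The complement of ``$F$ contains two adjacent nodes'' is exactly ``$F$ is an independent set of $G$'', so since $F \sim \uniformsubset{f}{V}$ we have $\Pr[F \text{ is independent}] = i_f(G)/\binom{n}{f}$, writing $i_k(H)$ for the number of independent sets of size $k$ in a graph $H$. Hence it suffices to show $i_f(G)/\binom{n}{f} \le 1-\tfrac{2}{n-1}$, and I would in fact aim for the cleaner bound $i_f(G) \le \binom{n-1}{f}$.

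The key step is the following claim, which I would prove by induction on $m$: for every connected graph $H$ on $m \ge 2$ vertices and every integer $k \ge 2$, one has $i_k(H) \le \binom{m-1}{k}$ (with equality for the star $K_{1,m-1}$). The base case $m = 2$ is immediate, since the only connected graph on two vertices is a single edge, which has no independent set of size $k \ge 2$, so $i_k = 0 = \binom{1}{k}$. For the inductive step, take $m \ge 3$, fix a spanning tree $T$ of $H$, and let $\ell$ be a leaf of $T$ with (tree-)neighbour $u$; then $\{\ell,u\}\in E(H)$ and $H-\ell$ is connected because $T-\ell$ spans it. Split the independent $k$-sets of $H$ by whether they contain $\ell$: those avoiding $\ell$ are precisely the independent $k$-sets of $H-\ell$, hence at most $\binom{m-2}{k}$ by the induction hypothesis; those containing $\ell$ must avoid $u$, so each consists of $\{\ell\}$ together with a $(k-1)$-subset of the $m-2$ vertices in $V(H)\setminus\{\ell,u\}$, at most $\binom{m-2}{k-1}$ of them. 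Pascal's rule then gives $i_k(H) \le \binom{m-2}{k}+\binom{m-2}{k-1}=\binom{m-1}{k}$.

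Applying this claim with $H=G$, $m=n$, $k=f$ (legitimate since $n\ge 3$ and $f\ge 3\ge 2$) yields $\Pr[F\text{ is independent}] \le \binom{n-1}{f}/\binom{n}{f} = (n-f)/n$, hence $\Pr[F\text{ contains an edge}] \ge f/n \ge 3/n$. Finally $3/n \ge 2/(n-1)$ is equivalent to $3(n-1)\ge 2n$, i.e.\ $n\ge 3$, which holds by hypothesis, so this gives the stated bound (with room to spare whenever $n>3$).

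The only substantive part is the inductive inequality $i_k(H)\le\binom{m-1}{k}$, and the point to be careful about there is to delete a \emph{leaf of a spanning tree} rather than an arbitrary vertex, so that the graph remains connected for the recursive call, and to carry the hypothesis $k\ge 2$ throughout (the ``$\ell\notin F$'' branch never lowers $k$, so this is automatic, and $k\ge 2$ is exactly what makes the base case work — the statement is false for $k=1$ and $m=2$). The reformulation via independent sets, Pascal's rule, and the closing comparison $3/n\ge 2/(n-1)$ are all routine.
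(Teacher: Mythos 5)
Your proof is correct, but it takes a genuinely different route from the paper's. The paper argues directly on the sampling: writing $F=\{w_1,\ldots,w_f\}$ for nodes drawn uniformly without replacement, it conditions on the value of $w_1$, fixes a single neighbour $u$ of $w_1$ (which exists since $G$ is connected and $n\ge 3$), and observes that $u$ falls among the remaining $f-1$ draws with probability $(f-1)/(n-1)\ge 2/(n-1)$, which already gives the claim. You instead pass to the complement event that $F$ is an independent set and prove the extremal bound $i_f(G)\le\binom{n-1}{f}$ for connected graphs by induction on the number of vertices, deleting a leaf of a spanning tree so that connectivity is preserved for the recursive call; this yields $\Pr[F\text{ independent}]\le (n-f)/n$, hence $\Pr[F\text{ contains an edge}]\ge f/n\ge 3/n\ge 2/(n-1)$ for $n\ge 3$. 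Your induction is sound (the base case $m=2$ with $k\ge 2$, the leaf-deletion step, the upper bound $\binom{m-2}{k-1}$ on sets containing the leaf, and Pascal's rule all check out), and your intermediate bound $f/n$ is in fact slightly stronger than the paper's intermediate bound $(f-1)/(n-1)$, with tightness for stars as a bonus; the trade-off is that you prove a self-contained extremal statement about independent sets in connected graphs where a two-line conditional-probability computation suffices.
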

\begin{proof}
   
    Let $w_1, \ldots, w_f$ be nodes of $F$, i.e., $w_1, \ldots, w_f$ are sampled uniformly at random without replacement. Fix an arbitrary node $v \in V$. Since the graph is of size at least $3$ and is connected, $v$ has at least one neighbour. Let $u \in N(v)$ be a fixed neighbour of $v$. 
    Note that 
\begin{align*}
 \Pr[u \in F \mid w_1 = v] &= \Pr[w_2=u \vee w_3 =u \vee...\vee w_f = u  \mid w_1 = v]\\
 \intertext{Since events $w_i = u$ are disjoint, we get}
    \Pr[u \in F \mid w_1 = v] & = \sum_{i = 2}^f \Pr[w_i = u\mid w_1 = v] = (f-1)\frac{1}{n-1} \ge \frac{2}{n-1}.
\end{align*}
    Then, regardless of the value of $w_1$, it has a neighbour in set $F$ with probability $\ge 2/(n-1)$. Then, $F$ contains two nodes connected by an edge with probability $\ge 2/(n-1)$.
    
\end{proof}
\begin{lemma}
    \label{lemma:two-curious-with-neighbour}
    Let $G = (V,E)$ be an undirected connected graph of size $n \ge 3$. Let $F\subseteq V$ be of size $2\le f \le n-1$, such that there are two nodes in $F$ connected by an edge. Then, there exist $v \in \honest$ and $w_1, w_2 \in F$ such that $\{v,w_1\}, \{w_1, w_2\} \in E$.
\end{lemma}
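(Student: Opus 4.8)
The plan is to argue by contradiction, using connectedness of $G$ to force the existence of a ``boundary'' curious node that simultaneously has a curious neighbour and a non-curious neighbour. By hypothesis, fix $a, b \in F$ with $\{a,b\} \in E$. Suppose the conclusion fails, i.e.\ there is no triple $v \in \honest$, $w_1, w_2 \in F$ with $\{v, w_1\} \in E$ and $\{w_1, w_2\} \in E$. Reformulating this failure: every curious node that has at least one curious neighbour must have \emph{no} non-curious neighbour.

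Next I would consider the connected component of the induced subgraph $G[F]$ that contains $a$, and call its vertex set $F_0$. Since $\{a,b\} \in E$ with both endpoints curious, we get $b \in F_0$, so $|F_0| \ge 2$. Hence $G[F_0]$ is a connected graph on at least two vertices, which means every $w \in F_0$ has a neighbour inside $F_0 \subseteq F$; in particular every $w \in F_0$ has a curious neighbour. Applying the assumed failure of the conclusion, no $w \in F_0$ has a neighbour in $\honest$. Moreover, because $F_0$ is a connected component of $G[F]$, there is no edge between $F_0$ and $F \setminus F_0$ either. Combining these two facts, $F_0$ has no edge leaving it at all.

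Finally, since $f \le n-1$ we have $F \subsetneq V$, so $\emptyset \ne F_0 \subseteq F \subsetneq V$; a non-empty proper vertex subset of $G$ with empty edge boundary contradicts the connectedness of $G$. This contradiction establishes the lemma. The argument is elementary throughout; the only point needing a little care is the reformulation of ``the conclusion fails'' into the statement that each curious node misses either all of its curious neighbours or all of its non-curious neighbours, and then the choice of the right set to examine — namely a full connected component $F_0$ of $G[F]$ rather than just the pair $\{a,b\}$ — so that one can certify that \emph{every} vertex of it has a curious neighbour. I do not expect any genuinely hard step.
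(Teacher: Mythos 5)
Your proof is correct and follows essentially the same route as the paper's: both take the connected component of $G[F]$ containing the given curious edge and use connectedness of $G$ together with $F\subsetneq V$ to produce an edge leaving that component, whose outside endpoint must be non-curious and whose inside endpoint has a curious neighbour. The only difference is that you phrase it as a contradiction while the paper constructs $v,w_1,w_2$ directly, which is purely cosmetic.
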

\begin{proof}

    Let $w_1', w_2'$ be two nodes in $F$ connected by an edge. Consider a subgraph $G'$ of $G$ induced on vertices of $F$, and consider a connected component of this graph containing $w_1'$ and $w_2'$. Let $F' \subseteq F$ be vertices of this component. Since $G$ is connected and $2\le|F|\le n -1$, there is an edge in $G$ between a vertex from $F'$ and a vertex from $V\setminus F'$. Let $v \in V\setminus F'$ and $w_1 \in F'$ be these vertices. Note that $v \not \in F$, because otherwise $v$ would belong to a connected component $F'$ of graph $G'$ (i.e., $v \in F'$). Then, we must have $v \in \honest$. Finally, notice that subgraph of $G$ induced on $F'$ is connected, hence, $w_1$ has a neighbour $w_2 \in F'$. Hence, we selected $v,w_1,w_2$ such that $v \in \honest$ and $w_1, w_2 \in F$ such that $\{v,w_1\}, \{w_1, w_2\} \in E$, which concludes the proof.
    
\end{proof}

Now, we give a definition of passage probability, which will help us to bound $D_\infty$ from below.
\begin{definition}
    \label{def:passage}
    Consider an execution of a $(1 + \rho)$-cobra walk with $\rho \in (0,1]$ on an undirected connected graph $G = (V,E)$. Let $F\subseteq V$ be a subset of curious nodes. For $v,u \in\honest$, define the \emph{passage probability} $\passage{u}{v}$ from $u$ to $v$ as the probability of a protocol started from $u$ to reach node $v$ while not interacting with any curious nodes
    \[
    \passage{u}{v} = \Pr\left[\exists t \ge \starttime: \activeset_t = \{v\} \land (\activeset_i \cap F = \emptyset, \forall i \le t) \mid \activeset_{\starttime} = \{u\}\right].
    \]
\end{definition}

We now bound $\divinfty{\seqadv{v}}{\seqadv{u}}$ from below using passage probabilities between $u$ and $v$. Recall that we say cobra walk \emph{branches} if an active node communicates gossip to two of its neighbours. Accordingly, we use the term ``\emph{branch}'' (as a noun) to refer to a sequence of nodes $v_1,v_2,\ldots$ of a cobra walk such that for some $t\ge \starttime$, $v_1$ communicated the gossip to $v_2$ at round $t$, $v_2$ communicated the gossip to $v_3$ at round $t+1$, etc. Note that active nodes of a branch of a cobra walk follow the same law as a simple random walk.
\begin{restatable}{lemma}{NecessaryPassage}
    \label{lemma:privacy-lb-necessary-passage}
        Let $\mathcal P$ be a $(1 + \rho)$-cobra walk where $\rho \in (0,1]$, and consider an undirected connected graph $G = (V,E)$ of size $n \ge 3$. Let $F\subseteq V$ be a set of curious nodes of size $f \ge 1$ such that there exist two curious nodes connected by an edge. Then, there exist $v \in \honest$ such that for any $u \in \honest$ we have
        \[
        \divinfty{\seqadv{v}}{\seqadv{u}} \ge \ln\left(\passage{u}{v}^{-1}\right).
        \]
\end{restatable}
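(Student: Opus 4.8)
The plan is to exhibit, for the vertex $v$ furnished by \lemmaref{lemma:two-curious-with-neighbour}, a family of observation events whose likelihood ratio between the two candidate sources tends to $\passage{u}{v}^{-1}$. First I would invoke \lemmaref{lemma:two-curious-with-neighbour} (applicable since $F$ contains an edge gives $f\ge 2$, and we may assume $\honest\ne\emptyset$ so $f\le n-1$, the other case being vacuous) to fix $v\in\honest$ and $w_1,w_2\in F$ with $\{v,w_1\},\{w_1,w_2\}\in E$; this $v$ is the one in the statement. For each integer $k\ge 1$, let $\sigma_k$ be the event that the adversary's view $S_\textsc{adv}$ opens, during its first $k$ rounds $\tadv,\dots,\tadv+k-1$, with the ``ping-pong'' pattern along the path $v-w_1-w_2$: at round $\tadv$ every observed communication is a copy of $(v\to w_1)$, and at round $\tadv+j$ for $1\le j\le k-1$ every observed communication is a copy of $(w_1\to w_2)$ when $j$ is odd and of $(w_2\to w_1)$ when $j$ is even, with no constraint on later rounds. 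Writing $q'=(1-\rho)/d+\rho/d^2>0$, the source $v$ realizes $\sigma_k$ by performing, for $k$ consecutive rounds, the singleton walk $v\to w_1\to w_2\to w_1\to\cdots$ in which at each step the active node either does not branch and moves to the prescribed neighbor (probability $(1-\rho)/d$) or branches and sends twice to that neighbor (probability $\rho/d^2$); hence $\Pr[\seqadv{v}\in\sigma_k]\ge (q')^k>0$. Since $\divinfty{\seqadv{v}}{\seqadv{u}}$ is the supremum of $\ln\bigl(\Pr[\seqadv{v}\in\sigma]/\Pr[\seqadv{u}\in\sigma]\bigr)$ over events $\sigma$ with $\Pr[\seqadv{v}\in\sigma]>0$, it will be enough to bound $\Pr[\seqadv{u}\in\sigma_k]$ from above and then let $k\to\infty$.

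For the upper bound, note that on $\{\seqadv{u}\in\sigma_k\}$ the vertex $v$ is active at round $\tadv$ and $\activeset_{\tadv}\subseteq\honest$, so $\activeset_{\tadv}=\{v\}\cup Z$ with $Z\subseteq\honest\setminus\{v\}$, and I would split on whether $Z$ is empty. If $Z=\emptyset$, let $\tau$ be the first round with $\activeset_\tau=\{v\}$; then $\tau\le\tadv<\infty$ and $\activeset_i\cap F=\emptyset$ for all $i\le\tau$, and by the strong Markov property of the active-set chain ($\tau$ being a stopping time) the evolution from time $\tau$ on is a fresh $(1+\rho)$-cobra walk started at $v$ whose adversarial view coincides with $\seqadv{u}$; therefore $\Pr[\seqadv{u}\in\sigma_k,\ \activeset_{\tadv}=\{v\}]\le \passage{u}{v}\cdot\Pr[\seqadv{v}\in\sigma_k]$. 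If $Z\ne\emptyset$, the event $\sigma_k$ forces (i) the ping-pong line to stay a singleton among $\{w_1,w_2\}$ and never branch to a third vertex over the $k-1$ relevant rounds, which has probability at most $(q')^{k-1}$; and (ii) the process spawned by the non-empty set $Z$ to send the gossip to no curious vertex for $k$ rounds, and tracking any first-child lineage inside that process exhibits it as a simple random walk on $G$, which, $G$ being connected and $F\ne\emptyset$, avoids $F$ for $k$ rounds with probability at most $C's^k$ for constants $C'\ge 1$ and $s\in(0,1)$ depending only on $(G,F)$ (exponential tail of the hitting time of $F$). Because the line only ever touches $\{w_1,w_2\}$ while the $Z$-process must stay off $F\supseteq\{w_1,w_2\}$, the two evolve independently given $\activeset_{\tadv}$, so these probabilities multiply and $\Pr[\seqadv{u}\in\sigma_k,\ Z\ne\emptyset]\le C'(q's)^{k-1}=:E_k$.

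Combining the two cases, $\Pr[\seqadv{u}\in\sigma_k]\le\passage{u}{v}\,\Pr[\seqadv{v}\in\sigma_k]+E_k$, and since $\Pr[\seqadv{v}\in\sigma_k]\ge(q')^k$ we have $E_k/\Pr[\seqadv{v}\in\sigma_k]\le (C'/q')\,s^{k-1}\to 0$; hence $\divinfty{\seqadv{v}}{\seqadv{u}}\ge\ln\bigl(\Pr[\seqadv{v}\in\sigma_k]/\Pr[\seqadv{u}\in\sigma_k]\bigr)\ge\ln\bigl(1/(\passage{u}{v}+(C'/q')s^{k-1})\bigr)$ for every $k$, and letting $k\to\infty$ gives $\divinfty{\seqadv{v}}{\seqadv{u}}\ge\ln(\passage{u}{v}^{-1})$, both sides being $+\infty$ when $\passage{u}{v}=0$. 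The step I expect to be the main obstacle is the $Z\ne\emptyset$ case: a single finite observation event cannot by itself certify that the walk from $u$ ever passed through the \emph{singleton} configuration $\{v\}$ (extra active mass can remain invisible to the adversary for a while), so one must quantify that any such hidden mass stays $F$-avoiding with probability decaying by a strictly larger factor (the extra $s^k$) than the $(q')^k$ cost of producing the ping-pong, which is exactly what makes the limit over $k$ collapse the slack; formalizing the line/$Z$-process decomposition and arguing their conditional independence is the delicate part.
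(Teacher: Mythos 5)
Your proposal is correct and follows essentially the same route as the paper's proof: pick $v$ adjacent to an intra-$F$ edge via \lemmaref{lemma:two-curious-with-neighbour}, build a family of ``ping-pong'' observation events along $v$--$w_1$--$w_2$ of growing length, split the probability under source $u$ into the clean-passage part (bounded by $\passage{u}{v}\cdot\Pr[\seqadv{v}\in\sigma_k]$ via the strong Markov property) plus a residual in which hidden active mass must avoid $F$, and kill the residual in the limit $k\to\infty$. The only differences are cosmetic: the paper forces doubled (branching) communications so the pattern costs exactly $(\rho/d^2)$ per round instead of your $q'$, conditions on the passage event rather than on $\activeset_{\tadv}=\{v\}$, and only uses that the $F$-avoidance probability of a branch tends to $0$ rather than an explicit exponential tail.
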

\begin{proof}
   Note that, by \lemmaref{lemma:two-curious-with-neighbour} we can select $v \in \honest$ and $w_1, w_2 \in F$ such that $\{v, w_1\}, \{w_1, w_1\} \in E$. Let $\Pi_{u\Rightarrow v}$ be the event such that 
    \[
    \Pi_{u\Rightarrow v} = \left\{\exists t \ge \starttime: \activeset_t = \{v\} \land (\activeset_i \cap F = \emptyset, \forall i \le t) \mid \activeset_{\starttime} = \{u\}\right\}.
    \]
    Then, by \definitionref{def:passage}
    \begin{equation}
    \label{eq:huv}
    \Pr[\Pi_{u\Rightarrow v}] = \passage{u}{v}.
    \end{equation}
    Recall that $w_1$ and $w_2$ are connected. Then, set $\fakecomm_0 = \{(v\to w_1), (v\to w_1)\}$, $\fakecomm_i = \{(w_1\to w_2), (w_1\to w_2)\}$ for odd $i \ge 1$, and $\fakecomm_i = \{(w_2\to w_1), (w_2\to w_1)\}$ for even $i \ge 1$. Then, define $\sigma_{v,t}$ to be a set of adversarial observations which begins with a prefix $\left(\fakecomm_0, \fakecomm_1, \ldots, \fakecomm_t\right)$. In other words, in $\sigma_{v,t}$, the first contact with the curious nodes is $v$ contacting $w_1$ twice in the same round ($\fakecomm_0$), and then, during the following $t$ steps, $w_1$ and $w_2$ communicate to each other twice per round in an alternating manner ($\fakecomm_i$). Moreover, no other communication involving curious nodes happens in these $t$ rounds. Then, by the law of total probability, we can write
    \[
    \Pr[\seqadv{u} \in \sigma_{v,t}] = \Pr[\seqadv{u} \in \sigma_{v,t} \mid \Pi_{u\Rightarrow v}] \Pr[\Pi_{u\Rightarrow v}] + \Pr[\seqadv{u} \in \sigma_{v,t} \land \neg \Pi_{u\Rightarrow v}].
    \]
    Note that $\seqadv{v}$ and $\seqadv{u} \mid \Pi_{u\Rightarrow v}$ are equal in law, since cobra walk conditioned on $\Pi_{u\Rightarrow v}$ passes through an active set $\{v\}$ and cobra walk is Markovian. Then, the above becomes
    \begin{equation}
    \label{eq:seqadv-u-expansion}
    \Pr[\seqadv{u} \in \sigma_{v,t}] = \Pr[\seqadv{v} \in \sigma_{v,t}] \Pr[\Pi_{u\Rightarrow v}] + \Pr[\seqadv{u} \in \sigma_{v,t} \land \neg\Pi_{u \Rightarrow v}].
    \end{equation}    
    Now we will bound $\Pr[\seqadv{u} \in \sigma_{v,t} \land \neg\Pi_{u \Rightarrow v}]$. 
    First, for $t \ge 0$, let us denote by $\event^{(a)}_t$ the event that a cobra walk started from source $a$ did not communicate to curious nodes for the first $t$ rounds. Consider one of the branches of a cobra walk that started at $a$. It behaves like a simple random walk. Note that, from Theorem 4.17 of~\cite{Pseudorandomness}, we know that the probability of a random walk not hitting a given non-empty set in $t$ rounds approaches $0$ as $t \to \infty$. In particular, we have
    \begin{equation}
    \label{eq:prob-not-hitting}
    \lim_{t\to \infty}\Pr[\event^{(a)}_t] = 0.
    \end{equation}
    Now, note that, if $\seqadv{u} \in \sigma_{v,t}$, but $\Pi_{u \Rightarrow v}$ does not hold, there are at least two active nodes in $\honest$ when $v$ communicates to $w_1$. We know that $v$ is one of them, let $a \in \honest$ be another active node. By construction of $\sigma_{v,t}$, if $\seqadv{u} \in \sigma_{v,t}$, then for at least $t$ rounds after $v$ communicated with $w_1$, there is no communication between nodes of $\honest$ and $F$. 
    
    Since $a$ is active at the beginning of these $t$ rounds, the probability of non-curious nodes not communicating to curious nodes for $t$ rounds can be upper bounded by $\Pr[\event_t^{(a)}]$ by definition of $\event_t^{(a)}$. Then, as $(\rho/d^2)^t$ is the probability of $w_1$ and $w_2$ communicating with each other for $t$ rounds (twice in each round), we have
    \[
     \Pr[\seqadv{u} \in \sigma_{v,t} \land \neg \Pi_{u \Rightarrow v}] \le \left(\frac{\rho}{d^2}\right)^t\Pr[\event^{(a)}_t].
    \] 
    On the other hand, we have $\Pr[\seqadv{v} \in \sigma_{v,t}] \ge (\rho/d^2)^{t+1}$, since, with probability $\rho/d^2$, the node $v$ communicates to $w_1$ in the first round of the protocol, $w_1$ and $w_2$ exchange messages for $t$ rounds (twice in each round) with probability $(\rho/d^2)^t$. Then, by \equationref{eq:prob-not-hitting}
    \begin{equation}
    \label{eq:limsup-seqadv-u-residue}
    \limsup_{t\to \infty} \frac{ \Pr[\seqadv{u} \in \sigma_{v,t} \land \neg \Pi_{u \Rightarrow v}] }{\Pr[\seqadv{v} \in \sigma_{v,t}]} \le \limsup_{t\to \infty} \Pr[\event_t^{(a)}] \frac{d^2}{\rho} = 0.
    \end{equation}
    Also, by \equationref{eq:seqadv-u-expansion}
    \begin{align*}
    \limsup_{t\to \infty} \frac{\Pr[\seqadv{u} \in \sigma_{v,t}]}{\Pr[\seqadv{v} \in \sigma_{v,t}]} 
    &= \limsup_{t\to \infty} \frac{ \Pr[\seqadv{v} \in \sigma_{v,t}] \Pr[\Pi_{u\Rightarrow v}] + \Pr[\seqadv{u} \in \sigma_{v,t} \land \neg\Pi_{u \Rightarrow v}]}{\Pr[\seqadv{v} \in \sigma_{v,t}]} \\
    &= \Pr[\Pi_{u\Rightarrow v}] + \limsup_{t\to \infty} \frac{ \Pr[\seqadv{u} \in \sigma_{v,t} \land \neg\Pi_{u \Rightarrow v}] }{\Pr[\seqadv{v} \in \sigma_{v,t}]}.
    \intertext{Finally, by \equationref{eq:limsup-seqadv-u-residue}, we have}
    \limsup_{t\to \infty} \frac{\Pr[\seqadv{u} \in \sigma_{v,t}]}{\Pr[\seqadv{v} \in \sigma_{v,t}]} 
    &\le \Pr[\Pi_{u\Rightarrow v}] + 0 = \passage{u}{v}.
    \end{align*}
    Note that $\left(\limsup_{t \to \infty} x_t\right)^{-1} = \liminf_{t \to \infty} 1/x_t$ for any non-negative sequence $\{x_t\}$ since $x \mapsto 1/x$ is decreasing. Then, by definition of max divergence,
    \[
    \divinfty{\seqadv{v}}{\seqadv{u}} \ge \ln\left(\liminf_{t\to\infty} \frac{\Pr[\seqadv{v} \in \sigma_{v,t}]}{\Pr[\seqadv{u} \in \sigma_{v,t}]}\right) \ge \ln\left(\passage{u}{v}^{-1}\right),
    \]
    which concludes the proof.
\end{proof}

The rest of the proof boils down to upper bounding $\passage{u}{v}$. To do so, we introduce a new process, called \emph{$(b,\rho)$-anaconda walk} where $\rho \in [0,1]$ and $b \in \mathbb N$. Essentially, an anaconda walk resembles a cobra walk with two restrictions: (i) only one node (which we call ``the head'') is allowed to branch, and (ii) the total number of times the head branches is limited to $b$. Note that we only introduce anaconda walk to help us derive the upper bound for passage probabilities of cobra walk. 

\paragraph*{Anaconda walk} Now, we give a detailed description of $(b,\rho)$-anaconda. An example of an execution of anaconda walk can be found in \figureref{fig:anaconda}. Let $G = (V,E)$ be an undirected connected graph, and let $s \in \honest$ be the source node. Then, a $(b,\rho)$-anaconda walk started from $s$ can be described as follows. At time $t$, a $(b,\rho)$-anaconda walk is characterized by a triplet $(\anaconda_t, h_t, c_t)$, where $\anaconda_t\subseteq V$ is the set of active nodes, $h_t \in Y_t$ is an active node which we refer to as \emph{the head}, $c_t \in \mathbb Z$ is a counter storing the number the times head branched up to time $t$. Initially, we have $\anaconda_{\starttime} = \{s\}$, $h_{\starttime} = s$ and $c_{\starttime} = 0$ where $\starttime$ is the start time of the protocol. Consider a round $t \ge \starttime$. If $c_t < b$, the current head node $h_t$, with probability $1-\rho$, samples a random neighbor $v_1$ and communicates gossip to $v_1$, or, with probability $\rho$, samples two neighbors $v_1$ and $v_2$ with replacement and successively communicates gossip to each of them (in this case, we will say the head \emph{branches}). If $c_t = b$, the head $h_t$ simply samples one random neighbor $v_1$ and communicates the gossip to $v_1$. In both cases, $v_1$ becomes a new head node ($h_{t+1}$). If the head branches in round $t$, the value of $c_t$ is updated to $c_{t+1} = c_t + 1$; otherwise, we simply have $c_{t+1} = c_t$. Every node $u \in \anaconda_t \setminus \{h_t\}$ communicates the gossip to one random neighbour. As usual, $\anaconda_{t+1}$ includes all the nodes which received the gossip in round $t$, i.e., nodes which send a message in round $t$ but do not receive any will deactivate. 

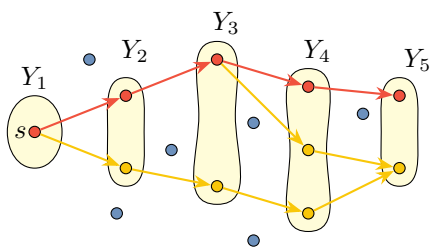
\begin{figure}[!ht]
\centering
    \begin{tikzpicture}
\begin{scope}[scale=1.2]
    
        \begin{scope}[scale=2.0]
            \path[draw,use Hobby shortcut,closed=true, fill=yellow, fill opacity=0.15]
             (1.4, 0).. (1.5,-0.3).. (1.6, 0) .. (1.5, 0.3);
            \node[circle] (curious) at (1.6, 0.4) {$Y_5$};
        \end{scope}

        \begin{scope}[scale=2.0]
            \path[draw,use Hobby shortcut,closed=true, fill=yellow, fill opacity=0.15]
             (0.9, -0.1).. (1,0.35).. (1.1, -0.1) .. (1, -0.55);
            \node[circle] (curious) at (1.05, 0.45) {$Y_4$};
        \end{scope}

        \begin{scope}[scale=2.0]
            \path[draw,use Hobby shortcut,closed=true, fill=yellow, fill opacity=0.15]
             (0.4, 0.1).. (0.5,0.5).. (0.6, 0.1) .. (0.5, -0.4);
            \node[circle] (curious) at (0.55, 0.6) {$Y_3$};
        \end{scope}

        \begin{scope}[scale=2.0]
            \path[draw,use Hobby shortcut,closed=true, fill=yellow, fill opacity=0.15]
             (-0.1, 0).. (0,0.3).. (0.1, 0) .. (0, -0.3);
            \node[circle] (curious) at (0.05, 0.45) {$Y_2$};
        \end{scope}

         \begin{scope}[scale=2.0]
            \path[draw, use Hobby shortcut, closed=true, fill=yellow, fill opacity=0.15]
             (-0.65, 0).. (-0.5,0.2).. (-0.35, 0) .. (-0.5, -0.2);
            \node[circle] (curious) at (-0.5, 0.3) {$Y_1$};
        \end{scope}
    
        \begin{scope}[every node/.style={circle, draw, inner sep=0pt,text width=1.5mm,fill=curious}, scale=2.0]
            \node[label=left:{$s$}] (B) at (-0.5,0) {}; 
            \node (C) at (0,0.2) {};
            \node (D) at (0.5,0.4) {};
            \node (E) at (1,0.25) {};
            \node (F) at (1.5,0.2) {};
        \end{scope}

        \begin{scope}[every node/.style={circle, draw, inner sep=0pt,text width=1.5mm,fill=cobra}, scale=2.0]
            \node (H) at (0,-0.2) {};
            \node (I) at (0.5,-0.3) {};
            \node (J) at (1,-0.1) {};
            \node (K) at (1,-0.45) {};
            \node (L) at (1.5,-0.2) {};
        \end{scope}
        \begin{scope}[every node/.style={circle, draw, inner sep=0pt,text width=1.5mm,fill=node}, scale=2.0]
            
            \node (1) at (0.7,0.05) {};
            \node (2) at (-0.05,-0.45) {}; 
            \node (4) at (1.3, 0.1) {};
            \node (10) at (0.25,-0.1) {};
            \node (6) at (0.7,-0.6) {};
            \node (9) at (-0.2,0.4) {};


        \end{scope}

        \begin{scope}[>={Stealth[curious]},
                      every edge/.style={draw=curious, thick}]
            \path[->] (B) edge (C);
            \path[->] (C) edge (D);
            \path[->] (D) edge (E);
            \path[->] (E) edge (F);
        \end{scope}
        \begin{scope}[>={Stealth[cobra]},
                      every edge/.style={draw=cobra, thick}]

            \path[->] (B) edge (H);
            \path[->] (H) edge (I);
            \path[->] (D) edge (J);
            \path[->] (I) edge (K); 
            \path[->] (J) edge (L);
            \path[->] (K) edge (L);

        \end{scope}
       \end{scope}

    \end{tikzpicture}

    \caption{Illustration of an anaconda walk. $s$ is the source of dissemination, $Y_i$ corresponds to nodes active at round $i$. Red nodes correspond to the main branch: each red node in $Y_i$ is a head $h_i$ respectively. As the head branches, the branching counter is updated at each step: $c_1=0; c_2 = c_3= 1; c_4=c_5=2$. Blue nodes are those that did not receive the gossip yet.}
    \label{fig:anaconda}
\end{figure}

We call a ``branch'' of an anaconda walk a sequence of nodes $u_1,u_2,\ldots$ such that $u_1$ was a head in some round $t$, then it communicated gossip to $u_2$ in round $t$,  $u_2$ communicated gossip to $u_3$ in round $t+1$, etc.

\begin{restatable}{lemma}{AnacondaCoupling}
    \label{lemma:anaconda-coupling}
    Consider an arbitrary graph $G = (V,E)$. Let $s \in V$ be the source node and $b \in \mathbb N$. Consider a $(1+\rho)$-cobra walk with active set $\{\activeset_t\}_{t\ge \starttime}$. Then, there exists $\left\{(\anaconda_t, h_t, c_t)\right\}_{t\ge \starttime}$ such that $\anaconda_t \subseteq \activeset_t$ for every $t\ge \starttime$ and $\left\{(\anaconda_t, h_t, c_t)\right\}_{t\ge \starttime}$ has the same law as the triplet of a $(b,\rho)$-anaconda walk.
\end{restatable}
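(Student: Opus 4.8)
The plan is to build an explicit coupling in which the $(b,\rho)$-anaconda walk is realized as a distinguished sub-collection of branches of the cobra walk. Run the $(1+\rho)$-cobra walk on its own probability space, and realize its randomness so that in each round $t$ every $u \in \activeset_t$ independently draws a token $\theta_u \sim \mathrm{Bernoulli}(\rho)$, a uniform neighbour $w_1^u \in N(u)$, and, only if $\theta_u = 1$, a second independent uniform neighbour $w_2^u \in N(u)$; the node $u$ then transmits to $w_1^u$ and, when $\theta_u = 1$, also to $w_2^u$. I will define $(\anaconda_t, h_t, c_t)$ by induction on $t$, starting from $\anaconda_{\starttime} = \{s\}$, $h_{\starttime} = s$, $c_{\starttime} = 0$, maintaining the invariant $h_t \in \anaconda_t \subseteq \activeset_t$ with $c_t$ equal to the number of past rounds in which the head branched. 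For the step at round $t$: the head $h_t$ reuses its cobra randomness $(\theta_{h_t}, w_1^{h_t}, w_2^{h_t})$; if $c_t < b$ we declare that the head \emph{branches} precisely when $\theta_{h_t} = 1$ (gossip sent to both $w_1^{h_t}$ and $w_2^{h_t}$) and otherwise only to $w_1^{h_t}$, whereas if $c_t = b$ the head always sends only to $w_1^{h_t}$ (discarding $\theta_{h_t}$ and $w_2^{h_t}$); every node $u \in \anaconda_t \setminus \{h_t\}$ sends only to $w_1^u$. Then $\anaconda_{t+1}$ is the set of recipients of these anaconda transmissions, $h_{t+1} = w_1^{h_t}$, and $c_{t+1} = c_t + \ind_{\{c_t < b\ \text{and}\ \theta_{h_t} = 1\}}$.

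Two claims then remain. First, the invariant is preserved: every node put into $\anaconda_{t+1}$ received the gossip from some $u \in \anaconda_t \subseteq \activeset_t$ along an edge ($u \to w_1^u$, or $u \to w_2^u$ when $u$ branches) along which the cobra node $u$ also transmitted, so it belongs to $\activeset_{t+1}$ as well; since $h_{t+1} = w_1^{h_t}$ is such a recipient, $h_{t+1} \in \anaconda_{t+1}$. Hence $\anaconda_t \subseteq \activeset_t$ for all $t$ by induction. Second, $\{(\anaconda_t, h_t, c_t)\}_{t \ge \starttime}$ has the law of a $(b,\rho)$-anaconda walk: conditioned on the history through round $t$, the round-$t$ cobra randomness $\{(\theta_u, w_1^u, w_2^u): u \in \activeset_t\}$ is fresh and mutually independent, and restricting it to $u \in \anaconda_t$ and applying the rules above reproduces exactly the one-step anaconda transition kernel — when $c_t < b$ the branch indicator $\theta_{h_t}$ is a genuine $\mathrm{Bernoulli}(\rho)$ independent of everything used so far and the one or two children are uniform; when $c_t = b$ the new head $w_1^{h_t}$ is a uniform neighbour regardless of $\theta_{h_t}$; and each non-head active node emits a single uniform child. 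A routine induction on $t$ promotes this one-step agreement of kernels to equality of the laws of the whole trajectories.

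The main obstacle is the bookkeeping in this last step: one must argue carefully that the fragments of cobra randomness recycled as anaconda randomness are genuinely unused and independent — so the embedded process is Markov with the intended kernel with respect to the $\sigma$-algebra of executions — and that discarding the ``extra'' branches (the second child $w_2^u$ of every non-head node, and of the head once $c_t$ saturates at $b$) corrupts neither the containment $\anaconda_t \subseteq \activeset_t$ nor the marginal law of the retained branches. Both points are ultimately resolved by the $w_1/w_2$ structure of the cobra step, since $w_1^u$ is uniform on its own independently of the token, but making the independence argument precise is the part that needs the most care; the rest is direct verification.
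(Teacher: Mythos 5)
Your proposal is correct and follows essentially the same route as the paper's proof: an inductive coupling that embeds the anaconda triplet inside the cobra walk, verifying the containment $\anaconda_t \subseteq \activeset_t$ by construction and matching the one-step transition law (head branches with probability $\rho$ while $c_t < b$, uniform single child otherwise, non-head nodes never branch). The only cosmetic difference is that you take the new head to be the first sampled child $w_1^{h_t}$ (matching the anaconda definition directly), whereas the paper picks one of the two children uniformly at random, which is the same in law by exchangeability.
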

\begin{proof}
        We will construct such a sequence $\{(\anaconda_t,h_t,c_t)\}_{t\ge \starttime}$ iteratively. First, define $\anaconda_\starttime = \{s\}$, $h_\starttime = \{s\}$ and $c_\starttime = 0$. Then $\anaconda_\starttime \subseteq \activeset_\starttime$. Suppose, for some $t \ge \starttime$, we have $\anaconda_t \subseteq \activeset_t$. 
        
        Consider the head node $h_t \in \anaconda_t \subseteq \activeset_t$. First, consider the case when $c_t = b$, or $h_t$ did not branch in round $t$ of a cobra walk. Let $u \in N(h_t)$ be one of the nodes to which $h_t$ communicated the gossip in round $t$ of the cobra walk. Then set $h_{t+1} = u$ and add $u$ to $\anaconda_{t+1}$. Additionally, set $c_{t+1} = c_t$. Second, consider the case when $c_t < b$ and $h_t$ branches in round $t$ of a cobra walk. Let $u_1, u_2 \in N(h_t)$ be the two neighbours of $h_t$ to which it communicated the gossip in round $t$ of the cobra. Then set $h_{t+1}$ to be either $u_1$ or $u_2$ chosen at random, and add both $u_1$ and $u_2$ to $\anaconda_{t+1}$. Also, set $c_{t+1} = c_t + 1$.
        
        Now, consider a node $v \in \anaconda_t\setminus\{h_t\}$. Let $u \in N(v)$ be one of the nodes to which $v$ communicated the gossip in round $t$ of cobra walk. Add $u$ to $\anaconda_{t+1}$ for every $v \in \anaconda_t\setminus\{h_t\}$. 
        
        Then, $\anaconda_{t+1} \subseteq \activeset_{t+1}$ by construction. Also, notice that, while $c_t < b$, the head $h_t$ branches with probability $\rho$ at each step. After the value of $c_t$ reaches $b$, the head does not branch. Additionally, nodes in $\anaconda_t\setminus \{h_t\}$ never branch. Hence, $\left\{(\anaconda_t, h_t, c_t)\right\}_{t\ge \starttime}$ has the same law as the triplet of a $(b,\rho)$-anaconda walk.
\end{proof}

Similarly to the study of a cobra walk, we define passage probabilities for an anaconda walk.
\begin{definition}
    Consider an execution of a $(b,\rho)$-anaconda walk where $\rho \in [0,1]$ on a graph $G = (V,E)$. Let $F\subseteq V$ be the subset of curious nodes of size $f$. For $v,u \in\honest$, we denote by \emph{passage probability} $\passagehat{u}{v}$ of an anaconda walk the probability that an anaconda walk started from $u$ reached an active set $\{v\}$ while not interacting with any curious nodes. More formally,
    \[
    \passagehat{u}{v} = \Pr[\exists t \ge \starttime: \anaconda_t = \{v\} \land (\anaconda_i \cap F = \emptyset, \forall i \le t) \mid \anaconda_\starttime = \{u\}].
    \]
\end{definition}
Now, we can use passage probabilities of an anaconda walk to upper bound the passage probabilities of a cobra walk. 

\begin{restatable}{lemma}{AnacondaReduction}
    \label{lemma:anaconda-reduction}
    Let $\rho \in [0,1]$ and $b\in \mathbb N$. Consider a $(1 + \rho)$-cobra walk and a $(b,\rho)$-anaconda walk on an undirected connected graph $G= (V,E)$. Then, for any set of curious nodes $F \subset V$ and  $v,u \in \honest$, we have
    \[
    \passage{u}{v} \le \passagehat{u}{v}.
    \]
\end{restatable}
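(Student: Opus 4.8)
The plan is to deduce the inequality directly from the coupling in \lemmaref{lemma:anaconda-coupling}, with essentially no additional probabilistic estimate. First I would apply that lemma with source node $u$: it produces, on a single probability space, a pair of processes $\left(\{\activeset_t\}_{t\ge\starttime},\ \{(\anaconda_t,h_t,c_t)\}_{t\ge\starttime}\right)$ where $\{\activeset_t\}$ is distributed as the active set of a $(1+\rho)$-cobra walk started at $u$, the triplet $\{(\anaconda_t,h_t,c_t)\}$ is distributed as that of a $(b,\rho)$-anaconda walk started at $u$, and crucially $\anaconda_t\subseteq\activeset_t$ holds for every $t\ge\starttime$ on the whole space. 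I would also record the elementary fact that $\anaconda_t\ne\emptyset$ for all $t$: the head satisfies $h_t\in\anaconda_t$ and always communicates to at least one neighbour, so the anaconda's active set never empties.

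Next I would carry out a pointwise event inclusion in the coupled space. Let $\Pi$ be the event that there exists $t\ge\starttime$ with $\activeset_t=\{v\}$ and $\activeset_i\cap F=\emptyset$ for all $i\le t$; by \definitionref{def:passage} this event has probability $\passage{u}{v}$ under the cobra marginal. Let $\hat\Pi$ be the analogous event formulated with $\anaconda_t$ in place of $\activeset_t$, whose probability under the anaconda marginal is $\passagehat{u}{v}$. On $\Pi$, pick a witnessing round $t$. Then $\anaconda_t\subseteq\activeset_t=\{v\}$ together with $\anaconda_t\ne\emptyset$ forces $\anaconda_t=\{v\}$, and for every $i\le t$ the inclusion $\anaconda_i\subseteq\activeset_i$ gives $\anaconda_i\cap F\subseteq\activeset_i\cap F=\emptyset$; hence the same $t$ witnesses $\hat\Pi$, so $\Pi\subseteq\hat\Pi$ as subsets of the coupled space.

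Finally I would take probabilities in the coupled space: since the two coordinate processes have the prescribed marginal laws, $\passage{u}{v}=\Pr[\Pi]\le\Pr[\hat\Pi]=\passagehat{u}{v}$, which is the claim. I do not expect any genuine obstacle here; the coupling already does the heavy lifting. The only point demanding a moment of care is that ``passage'' is an existential over rounds bundled with an ``avoided $F$ up to that round'' constraint, so one must check that a witnessing round for the cobra event transfers verbatim to the anaconda event — and it does, because both the singleton condition and the $F$-avoidance condition are monotone under the set inclusion $\anaconda_i\subseteq\activeset_i$. (The degenerate case $\rho=0$ is trivial, as both processes are then ordinary random walks with a single token and the passage probabilities coincide.)
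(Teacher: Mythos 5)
Your proposal is correct and follows essentially the same route as the paper: invoke the coupling of \lemmaref{lemma:anaconda-coupling} started at $u$, observe that $\anaconda_i\subseteq\activeset_i$ makes the cobra passage event imply the anaconda passage event, and compare probabilities of the two marginals. In fact you make explicit one small point the paper leaves implicit, namely that $\anaconda_t\neq\emptyset$ (the head is always active), which is needed to upgrade $\anaconda_t\subseteq\activeset_t=\{v\}$ to $\anaconda_t=\{v\}$.
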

\begin{proof}
    Consider $(1+\rho)$-cobra walk $\{\activeset_t\}_{t\ge \starttime}$ started at $u$. Let $\left\{(\anaconda_t, h_t, c_t)\right\}_{t\ge \starttime}$ be as in \lemmaref{lemma:anaconda-coupling}. By definition of passage probabilities (\definitionref{def:passage}), we have
    \begin{align*}
     \passage{u}{v} &= \Pr[\exists t \ge \starttime: \activeset_t = \{v\} \land (\activeset_i \cap F = \emptyset, \forall i \le t) \mid \activeset_{\starttime} = \{u\}].
     \intertext{By \lemmaref{lemma:anaconda-coupling}, we have $\anaconda_i \subseteq \activeset_i$ for every $i$. Then, $\activeset_i \cap F = \emptyset$ implies $\anaconda_i \cap F = \emptyset$ and $\activeset_t = \{v\}$ implies $\anaconda_t = \{v\}$, hence}
     \passage{u}{v} &\le \Pr[\exists t \ge \starttime: \anaconda_t = \{v\} \land (\anaconda_i \cap F = \emptyset, \forall i \le t) \mid \anaconda_{\starttime} = \{u\}] \\
     &= \passagehat{u}{v},
    \end{align*}
    which concludes the proof.
\end{proof}

Then, to obtain an upper bound on $\passage{u}{v}$, it is enough to upper bound $\passagehat{u}{v}$. To do that, we introduce the notion of \emph{passage bandwidth}. We define bandwidth $\bandwidth$ of the passage from $u$ to $v$ to be the number of times head branched during that passage. Accordingly, $\passagehat[\bandwidth < x]{u}{v}$ and $\passagehat[\bandwidth = x]{u}{v}$ are probabilities of a passage with bandwidth less than $x$ and exactly $x$ respectively.

We first upper bound passage probability with small bandwidth.
\begin{restatable}{lemma}{ProbLowBandwidth}
    \label{lemma:passage-low-bandwidth}
    Consider an execution of a $(b,\rho)$-anaconda walk where $\rho \in (0,1]$ on $d$-regular graph $G = (V,E)$ of size $n$. Let $F\subseteq V$ be the set of curious nodes of size $f \ge 1$. Let $\vec \eigen_1 \in \mathbb R^{n-f}$ be a vector as in \lemmaref{lemma:Q-delocalization}. Then, for every $v \in \honest$, there exists $u_\star \in \honest$ such that
    \[
    \passagehat[\bandwidth < b]{u_\star}{v} \le \frac{\eigen_{1v}}{\norm{\vec\eigen_1}_1}\cdot \frac{b+1}{(f/n + \rho(1 - f/n))(1-\lambda) }.
    \]
\end{restatable}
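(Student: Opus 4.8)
The plan is to exploit two structural features of the $(b,\rho)$-anaconda walk. First, the head performs a simple random walk on $G$ that is \emph{independent} of the branching decisions: whether or not the head branches at a step, its next position is a uniformly random neighbour (when it branches it picks two neighbours with replacement and promotes one of them uniformly). Second, the head is always active, so whenever $\anaconda_t=\{v\}$ the head must sit at $v$. Hence, for any $u\in\honest$, a passage from $u$ to $v$ of bandwidth $<b$ forces the head's walk, started at $u$, to reach $v$ at some time $\starttime+T$ while never having left $\honest$ (since every active set up to that time avoids $F$, and the head is always active), having branched at most $b-1$ times by then. Since the branching decisions are a function of an i.i.d.\ $\mathrm{Ber}(\rho)$ sequence truncated at the $b$-th success, conditioning on the head's trajectory the number of branches in $T$ steps has the law of $\min(\Bin{T}{\rho},b)$ and is independent of the trajectory; in particular this number is $\le b-1$ exactly when $\Bin{T}{\rho}<b$. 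Using that $(\vec Q^T)_{vu}$ is precisely the probability that the head's walk is at $v$ at time $T$ having stayed in $\honest$, and a union bound over $T\ge 0$, I would obtain for every $u\in\honest$
\[
\passagehat[\bandwidth < b]{u}{v}\ \le\ a_u\ :=\ \sum_{T\ge 0}(\vec Q^T)_{vu}\,\mathbb P\!\left[\Bin{T}{\rho}<b\right].
\]

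Next I would pick $u_\star$ by averaging against the first eigenvector $\vec\eigen_1$ of $\vec Q$ from \lemmaref{lemma:Q-delocalization} (nonnegative by \lemmaref{lemma:perron}, and $\norm{\vec\eigen_1}_1>0$). Interchanging the finite-measure sums,
\[
\sum_{u\in\honest}\eigen_{1u}\,a_u\ =\ \sum_{T\ge 0}\mathbb P\!\left[\Bin{T}{\rho}<b\right]\,(\vec Q^T\vec\eigen_1)_v\ =\ \eigen_{1v}\sum_{T\ge 0}\lambda_1(\vec Q)^T\,\mathbb P\!\left[\Bin{T}{\rho}<b\right],
\]
using $\vec Q^T\vec\eigen_1=\lambda_1(\vec Q)^T\vec\eigen_1$. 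Dividing by $\norm{\vec\eigen_1}_1$ exhibits the right-hand side as a convex combination of the nonnegative numbers $a_u$, so some $u_\star\in\honest$ satisfies $a_{u_\star}\le \tfrac{\eigen_{1v}}{\norm{\vec\eigen_1}_1}\sum_{T\ge 0}\lambda_1(\vec Q)^T\,\mathbb P[\Bin{T}{\rho}<b]$, and by the displayed bound this already controls $\passagehat[\bandwidth < b]{u_\star}{v}$.

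It remains to estimate the scalar series with $\mu:=\lambda_1(\vec Q)$, which lies in $[0,1)$ since $f\ge 1$ and \lemmaref{lemma:spectral-Q}(a) give $\mu\le 1-(1-\lambda)f/n<1$. Writing $\mathbb P[\Bin{T}{\rho}<b]=\mathbb P[N>T]$, where $N$ is the number of trials for the $b$-th success (a sum of $b$ i.i.d.\ $\mathrm{Geom}(\rho)$ variables), I would swap the order of summation to get $\sum_{T\ge 0}\mu^T\mathbb P[N>T]=\tfrac{1-\mathbb E[\mu^N]}{1-\mu}$, compute $\mathbb E[\mu^N]=g^b$ with $g=\tfrac{\rho\mu}{1-(1-\rho)\mu}$, and use $1-g=\tfrac{1-\mu}{1-(1-\rho)\mu}$ together with $1-g^b\le b(1-g)$ to reach $\sum_{T\ge 0}\mu^T\mathbb P[\Bin{T}{\rho}<b]\le \tfrac{b}{1-(1-\rho)\mu}$. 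Finally, \lemmaref{lemma:spectral-Q}(a) gives $1-(1-\rho)\mu\ge \rho+(1-\rho)(1-\lambda)\tfrac fn$, and the elementary identity
\[
\rho+(1-\rho)(1-\lambda)\tfrac fn\ -\ (1-\lambda)\bigl(\tfrac fn+\rho(1-\tfrac fn)\bigr)\ =\ \rho\lambda\ \ge\ 0
\]
upgrades this to $1-(1-\rho)\mu\ge(1-\lambda)(f/n+\rho(1-f/n))$. Combining the three displays and loosening $b\le b+1$ yields exactly the claimed bound.

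The main obstacle is the reduction in the first paragraph: one must argue carefully, from the precise definition of the anaconda walk and its coupling to the cobra walk (\lemmaref{lemma:anaconda-coupling}), that the head's marginal law is that of a simple random walk regardless of branching, and that $\anaconda_t=\{v\}$ forces $h_t=v$, so that the intricate branch structure may be discarded and replaced by a random walk plus an independent (truncated) branch counter. Once that is in place, the independence of the branch count from the head trajectory, the union bound over arrival times, and the generating-function estimate of the last paragraph are routine.
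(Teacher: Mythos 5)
Your proof is correct and follows essentially the same route as the paper's: bound the bandwidth-$<b$ passage probability by a sum over arrival times of $(\vec Q^{t})_{vu}$ weighted by the probability of fewer than $b$ branchings (using that the head is a simple random walk independent of the branch coins and must sit at $v$ when $\anaconda_t=\{v\}$), then average over $u$ against the $\ell_1$-normalized first eigenvector of $\vec Q$ to extract $u_\star$, and finally sum the resulting series. The only minor differences are cosmetic: you evaluate the scalar series through the negative-binomial generating function, where the paper sums ${\tau\choose k}\rho^k(1-\rho)^{\tau-k}$ over $k<b$ via the identity $\sum_{n\ge k}{n\choose k}x^{n-k}=(1-x)^{-k-1}$, and you explicitly carry out the last conversion $1-(1-\rho)\lambda_1(\vec Q)\ge(1-\lambda)\left(f/n+\rho(1-f/n)\right)$ using \lemmaref{lemma:spectral-Q}, a step the paper's write-up leaves implicit.
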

\begin{proof}
Let $u\in \honest$ be arbitrary and consider a $(b,\rho)$-anaconda walk $\{(\anaconda_t, h_t,c_t)\}_{t\ge \starttime}$ that started at $u$. Consider the head $h_t$. It follows the distribution of a random walk on $G$. Let $\vec e_i\in \mathbb R^{n-f}$ be the $i^\text{th}$ coordinate unit vector (i.e., a vector where $i^\text{th}$ coordinate is equal to one, and the rest are $0$), and let $\vec Q$ be as in \equationref{eq:block-form-absorbing-markov}. Then, the probability that $h_t$ reaches $v$ in exactly $\tau$ steps without hitting any curious nodes is given by 
$(\vec Q)^\tau_{uv} = \vec e_u^\top \vec Q^\tau \vec e_v$
by definition of $\vec Q$. Also, since $h_t$ branches at each step with probability $\rho$, the probability that $h_t$ branches exactly $k < b$ times in $\tau$ steps is given by ${\tau\choose k} \rho^k (1 -\rho)^{\tau-k}$. Then
\begin{equation}
    \label{eq:passage-k-u-v}
    \passagehat[\bandwidth = k]{u}{v} \le \sum_{\tau = 0}^\infty \vec e_u ^\top\vec Q^\tau \vec e_v \cdot {\tau\choose k} \rho^k (1 -\rho)^{\tau-k}.
\end{equation}
By \lemmaref{lemma:Q-delocalization}, all coordinates of $\vec \eigen_1$ are non-negative. Define the distribution $\Phi$ over the nodes in $\honest$ so that $\Phi(w) = \frac{\eigen_{1w}}{\norm{\vec\eigen_1}_1}$. Consider sampling $u$ according to $\Phi$. Then, by \equationref{eq:passage-k-u-v}, we have
\begin{align}
    \E_{u\sim \Phi}\left[\passagehat[\bandwidth = k]{u}{v}\right] &\le \E_{u\sim \Phi}\left[\sum_{\tau = 0}^\infty \vec e_u^\top \vec Q^\tau \vec e_v \cdot {\tau\choose k} \rho^k (1 -\rho)^{\tau-k}\right] \notag\\
    &= \sum_{\tau = 0}^\infty \sum_{u\in\honest} \Phi(u) \vec e_u^\top \vec Q^\tau \vec e_v \cdot {\tau\choose k} \rho^k (1 -\rho)^{\tau-k} \notag\\
    &= \sum_{\tau = 0}^\infty \sum_{u\in\honest} \frac{\eigen_{1u}}{\norm{\vec\eigen_1}_1} \vec e_u^\top \vec Q^\tau \vec e_v \cdot {\tau\choose k} \rho^k (1 -\rho)^{\tau-k} \notag \\
    &= \sum_{\tau = 0}^\infty \frac{\vec \eigen_{1}^\top}{\norm{\vec\eigen_1}_1} \vec Q^\tau \vec e_v \cdot {\tau\choose k} \rho^k (1 -\rho)^{\tau-k}. \notag
    \intertext{By choice of $\vec \eigen_1$ from \lemmaref{lemma:Q-delocalization}, it is a first eigenvector of $\vec Q$, hence}
    \E_{u\sim \Phi}\left[\passagehat[\bandwidth = k]{u}{v}\right]&\le\sum_{\tau = 0}^\infty  \lambda_1(\vec Q)^{\tau} \frac{\vec \eigen_{1}^\top}{\norm{\vec\eigen_1}_1} \vec e_v \cdot {\tau\choose k} \rho^k (1 -\rho)^{\tau-k} \notag\\
    &= \sum_{\tau = 0}^\infty \lambda_1(\vec Q)^{\tau} \frac{\eigen_{1v}}{\norm{\vec\eigen_1}_1}  \cdot {\tau\choose k} \rho^k (1 -\rho)^{\tau-k}.    \label{eq:expect-passage-k-u-v}
\end{align}
Also,
\begin{align*}
    \E_{u\sim \Phi}\left[\passagehat[\bandwidth < b]{u}{v}\right] 
    &= \sum_{k = 0}^{b-1}\E_{u\sim \Phi}\left[\passagehat[\bandwidth < b]{u}{v}\right].
    \intertext{By \equationref{eq:expect-passage-k-u-v}, we have}
    \E_{u\sim \Phi}\left[\passagehat[\bandwidth < b]{u}{v}\right] 
    &\le \sum_{k = 0}^{b-1} \sum_{\tau = 0}^\infty \lambda_1(\vec Q)^{\tau} \frac{\eigen_{1v}}{\norm{\vec\eigen_1}_1}\cdot {\tau\choose k} \rho^k (1 -\rho)^{\tau-k}\\
    &= \frac{\eigen_{1v}}{\norm{\vec\eigen_1}_1} \sum_{k = 0}^{b-1} \sum_{\tau = 0}^\infty \lambda_1(\vec Q)^{\tau} {\tau\choose k} \rho^k (1 -\rho)^{\tau-k}.
    \intertext{Using the identity $\sum_{n = k}^\infty {n \choose k} x^{n-k} = (1-x)^{-k-1}$ for $|x| < 1$, we have}
    \E_{u\sim \Phi}\left[\passagehat[\bandwidth < b]{u}{v}\right] 
    &\le \frac{\eigen_{1v}}{\norm{\vec\eigen_1}_1} \sum_{k = 0}^{b-1} \frac{\rho^k}{(1 - \lambda_1(\vec Q)(1 -\rho))^{k+1}}.
    \intertext{Note that since $\lambda_1(\vec Q)\le 1$ by \lemmaref{lemma:spectral-Q}, we have $\frac{\rho}{1 - \lambda_1(\vec Q)(1 -\rho)} \le \frac{\rho}{1 - (1-\rho)} = 1$, hence}
    \E_{u\sim \Phi}\left[\passagehat[\bandwidth < b]{u}{v}\right] 
    &\le \frac{\eigen_{1v}}{\norm{\vec\eigen_1}_1}\cdot \frac{1}{1 - \lambda_1(\vec Q)(1-\rho)} \sum_{k = 0}^{b-1} 1\\
    &= \frac{\eigen_{1v}}{\norm{\vec\eigen_1}_1} \cdot \frac{b}{1 - \lambda_1(\vec Q)(1-\rho)}.
    \end{align*}
    Then, there exists $u_\star \in \honest$ such that
    \[
    \left[\passagehat[\bandwidth < b]{u_\star}{v}\right] \le \E_{u\sim \Phi}\left[\passagehat[\bandwidth < b]{u}{v}\right] \le \frac{\eigen_{1v}}{\norm{\vec\eigen_1}_1} \cdot \frac{b}{1 - \lambda_1(\vec Q)(1-\rho)},
    \]
    which concludes the proof.
\end{proof}

\begin{restatable}{lemma}{QPowersNR}
    \label{lemma:q-powers-nr}
    Let $\mathcal G$ be an infinite family of $d$-regular near-Ramanujan graphs of size $n$ with $d \in n^{\Omega_n(1)}$. Let $G = (V,E) \in \mathcal G$ be a graph of large enough size. Let $F\subseteq V$ be a set of $f$ curious nodes such that $f/n, \alpha_F \in 1 - \Omega_n(1)$. Let $\vec Q$ be as in \equationref{eq:block-form-absorbing-markov}. Then
    \[
    \sum_{t = 1}^\infty \max_{v,u \in\honest} \left(\vec Q^t\right)_{vu} \in O_n\left(\frac{1}{d} + \frac{1}{f}\right).
    \]
\end{restatable}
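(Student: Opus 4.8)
The plan is to split the series at $t=1$, treating the first term by a direct combinatorial estimate and the tail $t\ge 2$ spectrally. For $t=1$, since $\vec Q=\Hat{\vec A}[\honest]$ is a principal submatrix of the normalized adjacency matrix, each entry satisfies $(\vec Q)_{vu}=A_{vu}/d\in\{0,1/d\}$, so $\max_{v,u\in\honest}(\vec Q)_{vu}\le 1/d$. Isolating the $t=1$ term is the crucial step: the crude spectral error term used below controls the $t=1$ contribution only by $\lambda$, which for near-Ramanujan graphs is of order $d^{-1/2}$ — far too weak for the desired $O_n(1/d)$.

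For the tail, first note that since $\mathcal G$ is near-Ramanujan and $d\in n^{\Omega_n(1)}$ we have $\lambda\in n^{-\Omega_n(1)}$, while $\alpha_F\in 1-\Omega_n(1)$ means $1-\alpha_F\in\Omega_n(1)$; hence for $n$ large $\alpha_F<1-\lambda$, so by \lemmaref{lemma:connectivity-condition} the subgraph induced by $\honest$ is connected and $\vec Q$ is a symmetric, nonnegative, irreducible matrix whose top eigenvalue is simple. Let $\vec\eigen_1$ be its $\ell_2$-normalized first eigenvector, with nonnegative entries. Applying \lemmaref{lemma:matrix-powers-first-eigenvector} together with part (b) of \lemmaref{lemma:spectral-Q} gives, for every $t\ge 0$ and $v,u\in\honest$,
\[
(\vec Q^t)_{vu}\le \lambda_1(\vec Q)^t\,\eigen_{1v}\eigen_{1u}+\lambda^t ,
\]
and therefore, summing over $t\ge 2$,
\[
\sum_{t=2}^{\infty}\max_{v,u\in\honest}(\vec Q^t)_{vu}\ \le\ \frac{1}{1-\lambda_1(\vec Q)}\,\max_{v\in\honest}\eigen_{1v}^2\ +\ \frac{\lambda^2}{1-\lambda}.
\]

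It then remains to bound the two summands. For the first, \lemmaref{lemma:Q-delocalization} gives $\max_{v\in\honest}\eigen_{1v}^2\le (1-\alpha_F)^{-T-1}\cdot\frac{2}{n-f}$ with $T=\lceil\log_{\lambda/(1-\alpha_F)}((1-\alpha_F)/(4(n-f)))\rceil$; the same elementary estimates as in \lemmaref{lemma:t-tilde-nr} (using $\lambda/(1-\alpha_F)\in n^{-\Omega_n(1)}$ and $1-\alpha_F\in\Omega_n(1)$) yield $T\in O_n(1)$ and $(1-\alpha_F)^{-T-1}\in O_n(1)$, hence $\max_{v}\eigen_{1v}^2\in O_n(1/(n-f))$. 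Combining this with $1-\lambda_1(\vec Q)\ge (1-\lambda)\tfrac{f}{n}$ from part (a) of \lemmaref{lemma:spectral-Q} and $n-f\ge\Omega_n(1)\cdot n$ (which follows from $f/n\le 1-\Omega_n(1)$), the first summand is at most $\frac{n}{(1-\lambda)f}\cdot O_n(1/n)=O_n(1/f)$. For the second summand, the near-Ramanujan bound $\lambda\le c\,d^{-1/2}$ gives $\lambda^2/(1-\lambda)\in O_n(1/d)$. Adding back the $t=1$ term, we obtain
\[
\sum_{t=1}^{\infty}\max_{v,u\in\honest}(\vec Q^t)_{vu}\ \le\ \frac{1}{d}+O_n\!\left(\frac1f\right)+O_n\!\left(\frac1d\right)\ \in\ O_n\!\left(\frac1d+\frac1f\right),
\]
as claimed. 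I expect the main (indeed only genuine) obstacle to be the realization that the $t=1$ term must be extracted and estimated by hand rather than through the spectral error bound; everything else is bookkeeping with the already-established spectral lemmas. A secondary point requiring care is checking, in the asymptotic regime considered, the hypothesis $\alpha_F<1-\lambda$ of the delocalization lemma and the simplicity of $\lambda_1(\vec Q)$ underpinning the rank-one approximation.
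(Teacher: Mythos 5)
Your proof is correct and follows essentially the same route as the paper: both treat an initial segment of the series by the entrywise bound $\max_{v,u}(\vec Q^t)_{vu}\le 1/d$ and the tail by the rank-one approximation of \lemmaref{lemma:matrix-powers-first-eigenvector} combined with \lemmaref{lemma:spectral-Q}(a)--(b) and the delocalization bound of \lemmaref{lemma:Q-delocalization}, yielding the $O_n(1/f)$ and $O_n(1/d)$ contributions. The only difference is bookkeeping: you split at $t=1$ and absorb the spectral error tail via $\lambda^2/(1-\lambda)\in O_n(1/d)$, whereas the paper splits at a constant $L$ chosen so that $\lambda^{L}<1/n$ and picks up an extra $O_n(1/n)$ term — an inessential variation.
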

\begin{proof}
    Since $\mathcal G$ is near-Ramanujan with $d \in n^{\Omega_n(1)}$, we have 
    \begin{equation}
    \label{eq:lambda-nr-2}
    \lambda \in O_n(d^{-1/2}) \subseteq n^{-\Omega_n(1)}.
    \end{equation} 
    Then, there exists $L \in O_n(1)$ such that $\lambda^L < 1/n$ for all large enough $n$. 
    
    Note that since $\vec Q$ is a principal submatrix of 
    a normalized adjacency matrix $\Hat{\vec A}$ of $G$, every entry of $\vec Q$ is upper bounded by $1/d$. Also, since $\vec Q \succeq \vec O_{(n-f)\times(n-f)}$ and each row of $\vec Q$ has sum less or equal to $1$, we have for every $t \ge 1$
    \begin{equation}
    \label{eq:qt-vu-ub}
    (\vec Q^t)_{vu} \le \frac{1}{d}.
    \end{equation}
    Let $T = \left\lceil\log_{\frac{\lambda}{1 - \alpha_F}}\left(\frac{1 - \alpha_F}{4(n-f)}\right)\right\rceil$ be as in \lemmaref{lemma:Q-delocalization}. Recall that $\alpha_F \in 1 - \Omega_n(1)$ as given in the lemma statement. Combining this with \equationref{eq:lambda-nr-2}, we have $\frac{\lambda}{1 - \alpha_F} \in n^{-\Omega_n(1)}$. Since also $\frac{1 - \alpha_F}{4(n-f)} \in n^{-O_n(1)}$, we have 
    \begin{equation}
    \label{eq:t-is-o1}
    T \in O_n(1).
    \end{equation}
    By \lemmaref{lemma:matrix-powers-first-eigenvector}, for every $t \ge 1$ we have
    \begin{align*}
    (\vec Q^t)_{vu} &\le \lambda_1(\vec Q)^t \eigen_{1v} \eigen_{1u} + \lambda^t.
    \intertext{By \lemmaref{lemma:Q-delocalization}, we get }
    (\vec Q^t)_{vu} &\le \lambda_1(\vec Q)^t \frac{2(1 - \alpha_F)^{-T-1}}{n-f} + \lambda^t.
    \end{align*}
    Combining this with \equationref{eq:qt-vu-ub}, we get 
    \begin{align}
            \sum_{t = 1}^\infty \max_{v,u} \left(\vec Q^t\right)_{vu} 
            &= \sum_{t = 1}^L \max_{v,u} \left(\vec Q^t\right)_{vu} + \sum_{t = L + 1}^\infty \max_{v,u} \left(\vec Q^t\right)_{vu}\notag\\
            &\le \frac{L}{d} + \sum_{t = L + 1}^\infty \left(\lambda_1(\vec Q)^t \frac{2(1 - \alpha_F)^{-T-1}}{n-f} + \lambda^t\right)\notag\\
            &= \frac{L}{d} + \frac{2(1 - \alpha_F)^{-T-1}}{n-f}\sum_{t = L + 1}^\infty \lambda_1(\vec Q)^t + \lambda^L \sum_{t = 1}^\infty\lambda^t. \notag
            \intertext{By definition of $L$, we have}
            \sum_{t = 1}^\infty \max_{v,u} \left(\vec Q^t\right)_{vu} &\le \frac{L}{d} + \frac{2(1 - \alpha_F)^{-T-1}}{n-f}\sum_{t = L + 1}^\infty \lambda_1(\vec Q)^t + \frac{1}{n} \sum_{t = 1}^\infty\lambda^t. \notag
            \intertext{From part (a) of \lemmaref{lemma:spectral-Q}, we get}
            \sum_{t = 1}^\infty \max_{v,u} \left(\vec Q^t\right)_{vu} &\le \frac{L}{d} + \frac{2(1 - \alpha_F)^{-T-1}}{n-f}\sum_{t = L + 1}^\infty \left(1 - \frac{f}{n}(1-\lambda)\right)^t + \frac{1}{n} \sum_{t = 1}^\infty\lambda^t. \notag
            \intertext{Since $1 - \frac{f}{n}(1-\lambda) > 0$ and $\lambda > 0$, we get}
            \sum_{t = 1}^\infty \max_{v,u} \left(\vec Q^t\right)_{vu} &\le \frac{L}{d} + \frac{2(1 - \alpha_F)^{-T-1}}{n-f}\sum_{t = 0}^\infty \left(1 - \frac{f}{n}(1-\lambda)\right)^t + \frac{1}{n} \sum_{t = 0}^\infty\lambda^t. \notag
            \intertext{By reducing the infinite geometric series, we get}
            \sum_{t = 1}^\infty \max_{v,u} \left(\vec Q^t\right)_{vu} &\le \frac{L}{d} + \frac{2(1 - \alpha_F)^{-T-1}}{n-f}\cdot \frac{n}{f(1-\lambda)} + \frac{1}{n(1-\lambda)}.\label{eq:sum-maxvu-ub}
    \end{align}
    Since $f/n \in 1 - \Omega_n(1)$ as given by the lemma statement, we have $n/(n-f) \in O_n(1)$. Also, since $\alpha_F \in 1 - \Omega_n(1)$ as given in the lemma statement and using \equationref{eq:t-is-o1}, we have $(1 - \alpha_F)^{-T-1} \in O_n(1)$. Additionally, since $\lambda \in 1 - \Omega_n(1)$ as given by the lemma statement, then $(1-\lambda)^{-1}\in O_n(1)$. Finally, since $L \in O_n(1)$, by \equationref{eq:sum-maxvu-ub}, we have
    \[
    \sum_{t = 1}^\infty \max_{v,u} \left(\vec Q^t\right)_{vu}\in O_n\left(\frac{1}{d} + \frac{1}{f}\right),
    \]
    which concludes the proof.
\end{proof}
    
We now upper bound passage probability with large bandwidth.
\begin{restatable}{lemma}{PassageLargeBandwidth}
    \label{lemma:passage-large-bandwidth}
    Let $\mathcal G$ be an infinite family of $d$-regular near-Ramanujan graphs with $d \in n^{\Omega_n(1)}$. Let $G = (V,E) \in \mathcal G$ be a graph of large enough size. Let $F\subseteq V$ be a set of $f$ curious nodes such that $f/n, \alpha_F \in 1 - \Omega_n(1)$. Consider an execution of $(b,\rho)$-anaconda walk on $G$ with $b \in O_n(1)$. Then, for any $v,u$ the following holds true
        \[
        \passagehat[\bandwidth = b]{u}{v} \in O_n\left(\left(\frac{1}{d} + \frac{1}{f}\right)^b\right).
        \]
\end{restatable}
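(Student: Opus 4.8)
I would bound $\passagehat[\bandwidth=b]{u}{v}$ by the probability that the $b+1$ tokens of the anaconda walk — the head together with the $b$ side branches — all coalesce into a single active node while every token stays inside $\honest$; dropping the requirement that the final node be exactly $v$ only weakens the bound and is precisely what makes it tractable. The point is to show that each coalescence event costs a factor $O_n(1/d+1/f)$, so that $b$ of them cost $O_n((1/d+1/f)^{b})$.

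If $\anaconda_t=\{v\}$ for some $t$ at which the head has branched exactly $b$ times, then $b$ side branches were created and the set of active-token positions shrank from $b+1$ values to one, which forces a chronological sequence of exactly $b$ pairwise coalescences, i.e.\ a rooted merge forest on the tokens. Since $b\in O_n(1)$ there are only $O_n(1)$ possible shapes and chronological orderings of this forest, so by a union bound it suffices to bound, for one fixed structure, the probability that all $b$ prescribed coalescences occur with every participating walk confined to $\honest$.

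For a fixed structure I would integrate the coalescences out in chronological order, using the Strong Markov Property. Until two tokens meet they evolve as independent simple random walks restricted to $\honest$, whose $t$-step transition probabilities are the entries of $\vec Q^{t}$, with $\vec Q=\Hat{\vec A}[\honest]$ as in \equationref{eq:block-form-absorbing-markov} a column-substochastic matrix. Conditioning on everything strictly before a given coalescence and on the trajectory of the older (``host'') token involved, the partner token — which, given its birth time and location, starts a fresh walk, its birth being either a branching step of the head or the node of an earlier merge — must reach the host's then-position; the probability of covering that distance in exactly $m$ further steps inside $\honest$ is an entry of $\vec Q^{m+1}$, the extra step absorbing a branching move when present, so summing over the meeting round and bounding the host's position by a worst case leaves $\sum_{k\ge 1}\max_{x,y}(\vec Q^{k})_{xy}$. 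The remaining weights — the geometric factors $(1-\rho)^{\cdot}\rho$ locating the head's $b$ branches and the $\vec Q$-factors describing the head's travel between them — telescope to at most $1$, since $\vec Q$ is substochastic and $\sum_{j\ge 0}(1-\rho)^{j}\rho=1$. Thus each coalescence contributes at most $\sum_{k\ge1}\max_{x,y}(\vec Q^{k})_{xy}$, and multiplying the $b$ factors and summing over the $O_n(1)$ structures gives
\[
\passagehat[\bandwidth=b]{u}{v}\ \le\ O_n(1)\cdot\Bigl(\sum_{k\ge1}\max_{x,y}(\vec Q^{k})_{xy}\Bigr)^{b}.
\]
By \lemmaref{lemma:q-powers-nr} — whose hypotheses $d\in n^{\Omega_n(1)}$ and $f/n,\alpha_F\in 1-\Omega_n(1)$ are exactly those assumed here — the inner sum is $O_n(1/d+1/f)$, and since $b\in O_n(1)$ this is $O_n\bigl((1/d+1/f)^{b}\bigr)$, as claimed.

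The main obstacle is the coalescence bookkeeping: justifying the merge-forest reduction, and above all verifying that conditioning on the earlier coalescences and on the host trajectories does not spoil the per-event bound — one needs the Strong Markov Property in the form where the host's position is a frozen target while the partner supplies the only fresh randomness — together with checking that all the infinite sums over branch times and meeting rounds converge and telescope to constants, which rests throughout on $\vec Q$ being substochastic.
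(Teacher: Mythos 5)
Your proposal is correct and follows essentially the same route as the paper: bound $\passagehat[\bandwidth = b]{u}{v}$ by the probability that all branches coalesce before touching $F$, bound each coalescence event—conditioned on the trajectories of the older branches—by $\sum_{t\ge 1}\max_{x,y}(\vec Q^{t})_{xy}$, and invoke \lemmaref{lemma:q-powers-nr} together with $b\in O_n(1)$. The only difference is organizational: where you enumerate merge forests, the paper simply requires each branch $i$ to merge into \emph{some} earlier branch $j<i$, handles the choice of $j$ by a union bound, and chains the $b$ conditional bounds via the filtration of older branches, which avoids the forest bookkeeping you flag as the main obstacle.
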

\begin{proof}
    Note that by definition of a bandwidth, in a passage with $\bandwidth = b$, the anaconda head branches $b$ times. Let $t_1 < t_2<\ldots < t_b$ be the times at which the head branches. For every $i \in [b]$ and $t \ge t_i$, let $y_{t}^{(i)}$ be the active node that belongs to the branch number $i$ at time $t$. For convenience, also denote $y_t^{(0)} = h_t$ to be a position of anaconda head at time $t$, and denote $t_0 = \starttime$ to be the start time of the anaconda walk. We will say that branch $i$ \emph{coalesces} with branch $j$ at time $t'$ if $y_{{t'}}^{(i)} = y_{{t'}}^{(j)}$. In that case, we also have $y_{t}^{(i)} = y_{t}^{(j)}$ for all $t \ge t'$.

    We will upper bound the passage probability $\passagehat[\bandwidth = b]{u}{v}$ by the probability that, starting at node $u$, all branches coalesce with each other before hitting the curious set. All branches coalesce before time $t > t_b$ if we have $y^{(0)}_t = y^{(1)}_t = \ldots = y^{(b)}_t$. If branch $i$ coalesces with branch $j < i$, we will say that $i$ merges into $j$. Then, all branches coalesce before the time $t > t_b$ if and only if for every $i \in [b]$, the branch $i$ merges into the branch $j$ for some $j < i$ before time $t$. Let $\mathcal A_{ij}$ be the event that the branch $i$ merges into the branch $j < i$ before any of the branches $0,1,\ldots,i$ hits the curious set, and let $\mathcal A_i =
    \bigvee_{j = 0}^{i-1} \mathcal A_{ij}$ (i.e. $\mathcal A_i$ corresponds to branch $i$ merging into one of the branches $\{0,1,\ldots, i-1\}$). Then, all branches coalesce before hitting the curious set if and only if the event $\bigwedge_{i = 1}^b \mathcal A_i$ holds true. 
    
    Now, we will show that the probability of event $\mathcal A_i$ can be upper bounded regardless of the positions of the nodes corresponding to branches $0,1,\ldots,i-1$.
    For $i \ge 0$, let $\Sigma_i$ be a sigma algebra generated by sequences $\{y_t^{(i)}\}_{t\ge t_i}$, $\{y_t^{(i-1)}\}_{t\ge t_{i-1}}$, $\ldots$, $\{y_t^{(0)}\}_{t\ge t_{0}}$. Then, $\mathcal A_{1}, \mathcal A_{2},\ldots, \mathcal A_{i} \in \Sigma_i$ by definition of $\mathcal A_j$. Consider $\Pr[\mathcal A_{ij} \mid \Sigma_{i-1}]$. Let $\vec Q$ be as in \equationref{eq:block-form-absorbing-markov}. Note that, conditioned on $\{y_t^{(j)}\}_{t\ge t_j}$, the probability of branch $i$ merging into $j$ at time $t > t_i$ is upper bounded by $(\vec Q^{t-t_i})_{y_{t_i}^{(i)}y_{t}^{(j)}}$. Then
    \begin{align}
    \Pr[\mathcal A_{ij} \mid \Sigma_{i-1}] &\le \sum_{t = t_i + 1}^\infty (\vec Q^{t-t_i})_{y_{t_i}^{(i)}y_{t}^{(j)}}\notag \\
                                           &\le \sum_{t = t_i + 1}^\infty \max_{v,u\in V\setminus F} (\vec Q^{t-t_i})_{vu}.\notag
                                           \intertext{By shifting the summation index, we get}
    \Pr[\mathcal A_{ij} \mid \Sigma_{i-1}] &\le \sum_{t = 1}^\infty \max_{v,u\in V\setminus F} (\vec Q^{t})_{vu}. \notag
                                           \intertext{From \lemmaref{lemma:q-powers-nr}, we have}
    \Pr[\mathcal A_{ij} \mid \Sigma_{i-1}] &\in O_n\left(\frac{1}{d} + \frac{1}{f}\right).\label{eq:aij-given-sigma-ub}
    \end{align}
    Then, by union bound, we have
    \begin{align}
    \Pr[\mathcal A_{i} \mid \Sigma_{i-1}] 
    &= \Pr\left[\bigvee_{j = 0}^{i-1} \mathcal A_{ij} \mid \Sigma_{i-1}\right] \notag\\
    &\le \sum_{j = 0}^{i-1} \Pr\left[\mathcal A_{ij} \mid \Sigma_{i-1}\right].\notag
    \intertext{Using \equationref{eq:aij-given-sigma-ub} and since $i < b \in O_n(1)$, we get}
    \Pr[\mathcal A_{i} \mid \Sigma_{i-1}] &\in O_n\left(\frac{1}{d} + \frac{1}{f}\right).\label{eq:ai-given-sigma-ub}
    \end{align}
    Then, we have
    \begin{align}
         \Pr\left[\bigwedge_{i=1}^b \mathcal A_i\mid\Sigma_0\right]&= \Pr\left[\bigwedge_{i=1}^b \mathcal A_i\mid \Sigma_0\right ]\notag\\
                                                                   &= \prod_{i=1}^b \Pr\left[\mathcal A_i \mid \mathcal A_{i-1},\ldots,\mathcal A_1, \Sigma_0\right]\notag.
                                                                   \intertext{Since $\mathcal A_{i-1},\ldots, \mathcal A_1 \in \Sigma_{i-1}$ for $i \ge 1$, we get}
         \Pr\left[\bigwedge_{i=1}^b \mathcal \mathcal A_i\mid\Sigma_0\right]&= \prod_{i=1}^b \Pr[A_i \mid \Sigma_{i-1}].\notag
                                                                   \intertext{Since $b \in O_n(1)$, using \equationref{eq:ai-given-sigma-ub}, we get}
         \Pr\left[\bigwedge_{i=1}^b \mathcal A_i\mid\Sigma_0\right]&\in O_n\left(\frac{1}{d} + \frac{1}{f}\right). \label{eq:union-ai-ub}
    \end{align}
    Recall that $\passagehat[\bandwidth = b]{u}{v}$ can be upper bounded by probability of $\bigwedge_{i=1}^b \mathcal A_i$, i.e., the event that all $b$ branches coalesce before hitting the curious set. 
    Then
    \[
    \passagehat[\bandwidth = b]{u}{v}\le \Pr\left[\bigwedge_{i=1}^b \mathcal A_i\right] = \E\left[\Pr\left[\bigwedge_{i=1}^b \mathcal A_i \mid \Sigma_0\right]\right] \in O_n\left(\frac{1}{d} + \frac{1}{f}\right),
    \]
    which concludes the proof.
\end{proof}

\CobraPrivacyLower*
\begin{proof}
First, for $G \in \mathcal G$, we describe such sets of curious nodes $F_\star$, such that proving a lower bound on $D_\infty$ for $F_\star$ is sufficient for a lower bound on $\varepsilon$ for both worst-case and average-case adversaries. Then, we show a lower bound on max divergence for two carefully chosen source nodes in $V\setminus F_\star$. Finally, we show that this yields a desired lower bound on $\varepsilon$ for both the worst-case and the average-case adversaries.

\paragraph*{(i) Selecting a set of curious nodes $F_\star$.}

First, let $n$ be large enough such that we have $f \ge 3$ (since $f \in n^{\Omega_n(1)}$, such $n$ exists). Let $G = (V,E)$ be an arbitrary element of $\mathcal G$ of size $n$. Consider $F \sim\uniformsubset{f}{V}$, a set of $f$ curious nodes sampled at random. Let $\alpha$ be as in \lemmaref{lemma:adversarial-density}. Since $\mathcal G$ is a near-Ramanujan family, by \lemmaref{lemma:alpha-nr} we have $\alpha \in 1 - \Omega_n(1)$. 

By \lemmaref{lemma:adversarial-density}, we have $\alpha_F \le \alpha$ with probability $1 - \frac{1}{n}$. Also, since $f\ge 3$, by \lemmaref{lemma:two-curious-prob}, with probability at least $2/(n-1)$, there is an edge between two nodes of $F$. Hence, we have both (1) $\alpha_F \le \alpha$ and (2) there is an edge between two nodes of $F$ with $\frac{2}{n-1} - \frac{1}{n} > \frac{1}{n}$. Let $F_\star$ be any set of curious nodes for which both (1) and (2) hold.

\paragraph*{(ii) Lower bounding max divergence.} Define $T$ and $\vec\eigen_1$ as in \lemmaref{lemma:Q-delocalization} for set $F_\star$. By \lemmaref{lemma:Q-delocalization}, we have for every $v \in V\setminus F_\star$, 
\[
\frac{(1-\alpha_{F_\star})^{(T+1)/2}}{\sqrt{2(n-f)}}\le \eigen_{1v} \le \frac{\sqrt{2}(1-\alpha_{F_\star})^{-(T+1)/2}}{\sqrt{(n-f)}}.
\]
Then, for every $v \in V\setminus F_\star$, we have
\[
\norm{\vec\eigen_1}_1 = \sum_{v \in V\setminus F_\star} \left|\eigen_{1v}\right| \ge \frac{(1-\alpha_{F_\star})^{(T+1)/2}\sqrt{n-f}}{\sqrt{2}}.
\]
This implies that for every $v \in V\setminus F_\star$
\begin{align}
\frac{\eigen_{1v}}{\norm{\vec\eigen_1}_1} &\le \frac{\frac{\sqrt{2}(1-\alpha_{F_\star})^{-(T+1)/2}}{\sqrt{(n-f)}}}{\frac{(1-\alpha_{F_\star})^{(T+1)/2}\sqrt{n-f}}{\sqrt{2}}}\notag\\
&= \frac{2(1-\alpha_{F_\star})^{(T+1)}}{n-f}.\notag
\intertext{By \equationref{eq:t-is-o1}, we have $T \in O_n(1)$, and also $\alpha_{F_\star} \le \alpha \in 1 - \Omega_n(1)$, therefore, }
\frac{\eigen_{1v}}{\norm{\vec\eigen_1}_1} &\in O_n\left(\frac{1}{n-f}\right).\notag
\intertext{Since $f/n\in 1-\Omega_n(1)$, we have $n-f \in \Theta_n(n)$, hence}
\frac{\eigen_{1v}}{\norm{\vec\eigen_1}_1} &\in O_n\left(\frac{1}{n}\right).
\label{eq:eigen-over-1-norm}
\end{align}
Recall that $f,d \in n^{\Omega_n(1)}$ as given by the lemma statement. Let $b \in O_n(1)$ be such that $(1/d + 1/f)^b\in O_n(1/n^2)$, and consider a $(b,\rho)$-anaconda walk.  

By our choice of $F_\star$ in (ii), $F_\star$ contains at least two nodes connected by an edge. Let $v_\star$ be as given by \lemmaref{lemma:privacy-lb-necessary-passage}. Then, by \lemmaref{lemma:passage-low-bandwidth}, there exist  $u_\star \in V\setminus F_\star$ such that
\begin{align}
    \passagehat[\bandwidth < b]{u_\star}{v_\star} &\le \frac{\eigen_{1v_\star}}{\norm{\vec\eigen_1}_1}\cdot \frac{b+1}{(f/n + \rho(1 - f/n))(1-\lambda)}.
    \intertext{By \equationref{eq:eigen-over-1-norm}, we get}
    \passagehat[\bandwidth < b]{u_\star}{v_\star}&\in O_n\left(\frac{1}{n}\cdot \frac{b+1}{(f/n + \rho(1 - f/n))(1-\lambda) } \right).
    \intertext{Hence,}
    \passagehat[\bandwidth < b]{u_\star}{v_\star} &\in O_n\left(\frac{b+1}{(f + \rho(n - f))(1-\lambda) } \right).
    \intertext{Recall that we have $b \in O_n(1)$ and $\mathcal G$ is near-Ramanujan. Hence,}
    \passagehat[\bandwidth < b]{u_\star}{v_\star} &\in O_n\left(\frac{1}{f + \rho(n - f)} \right).
    \label{eq:thm-proof-low-bandwidth}
\end{align}
From \lemmaref{lemma:passage-large-bandwidth}, we have
\begin{align}
    \passagehat[\bandwidth = b]{u_\star}{v_\star} 
    &\in O_n\left(\left(\frac{1}{d} + \frac{1}{f}\right)^b\right).\notag
    \intertext{Recall that $b$ is chosen such that $\left(1/d + 1/f\right)^b \in O_n(1/n^2)$, hence,}
    \passagehat[\bandwidth = b]{u_\star}{v_\star} &\in O_n\left(\frac{1}{n^2}\right).
    \label{eq:thm-proof-large-bandwidth}
\end{align}
Then, by \lemmaref{lemma:anaconda-reduction}, we have
\begin{align}
\passage{u_\star}{v_\star} 
&\le \passagehat{u_\star}{v_\star}\notag \\
&\le \passagehat[\bandwidth = b]{u_\star}{v_\star} + \passagehat[\bandwidth < b]{u_\star}{v_\star}. \notag
\intertext{Using \equationref{eq:thm-proof-low-bandwidth} and \equationref{eq:thm-proof-large-bandwidth}, we get}
\passage{u_\star}{v_\star}  &\in O_n\left(\frac{1}{f + \rho(n - f)} + \frac{1}{n^2}\right).\notag
\intertext{Since $f + \rho(n-f) \le n < n^2$, we have}
\passage{u_\star}{v_\star} &\in O_n\left(\frac{1}{f + \rho(n - f)}\right).\notag
\end{align}
Recall that $v_\star$ is chosen as in \lemmaref{lemma:privacy-lb-necessary-passage}. Hence, we have
\begin{equation*}
    \divinfty{\seqadv{v_\star}}{\seqadv{u_\star}} \ge \ln\left(\passage{u_\star}{v_\star}^{-1}\right) \in \ln\left(f + \rho(n - f)\right) + \Omega_n(1),
\end{equation*}
which yields
\begin{equation}
\label{eq:divinfty-in-omega}
\max_{v,u\in V\setminus F_\star}\divinfty{\seqadv{v}}{\seqadv{u}} \in \ln\left(f + \rho(n - f)\right) + \Omega_n(1),
\end{equation}

\paragraph*{(iii) Conclusion.} Recall that the above holds true for any set of curious nodes $F_\star$ for which both of the following conditions hold: (1) $\alpha_F \le \alpha$, and (2) there is an edge between two nodes of $F_\star$. As we explained in (i), this happens with probability $> 1/n$ as we select $F_\star$ randomly from the distribution $\uniformsubset{f}{V}$. Hence, as $F_\star$ is sampled uniformly at random, the lower bound in (\ref{eq:divinfty-in-omega}) holds with probability $>1/n$. Hence, as per equation (\ref{eq:average-guarantee-probability}) we have for average-case adversary 
\begin{equation}
\label{eq:epsilon-in-omega}
\varepsilon \in \ln\left(f + \rho(n - f)\right) + \Omega_n(1).
\end{equation} 
Since in the worst-case we require $\varepsilon$-DP for \emph{every} set of curious nodes, (\ref{eq:epsilon-in-omega}) applies to the worst-case adversary as well, which concludes the proof.
\end{proof}

\subsubsection{Dissemination time of cobra walks (proof of \theoremref{thm:cobra-speed})} \label{subsec:appendix-speed-vs-privacy}

First, we establish the following lemma, which we use to show that the expected dissemination time of $(1+\rho)$-cobra walk is a non-increasing  function of $\rho$.
\begin{lemma}
\label{lemma:cobra-coupling}
Consider an undirected connected graph $G = (V,E)$ of size $n$ and let $s \in V$ be an arbitrary source node. Let $\{\activeset_{t}\}_{t \ge \starttime}$ be an active set of $(1+\rho)$-cobra walk started at $s$. Then, for every $\rho' < \rho$, there exists $\{\activeset_{t}'\}_{t \ge \starttime}$ such that $\activeset_{t}' \subseteq \activeset_t$ for every $t \ge \starttime$, and $\{\activeset_{t}'\}_{t \ge \starttime}$ has the same law as the active set of a $(1+\rho')$-cobra walk.
\end{lemma}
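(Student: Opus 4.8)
The plan is to build the $(1+\rho')$-cobra walk coupled to the given $(1+\rho)$-cobra walk by \emph{thinning} its branching events, in the same iterative style as the proof of \lemmaref{lemma:anaconda-coupling}. Set $\activeset_\starttime' = \{s\} = \activeset_\starttime$ and maintain the invariant $\activeset_t' \subseteq \activeset_t$. Assuming it holds at round $t$, take any $u \in \activeset_t' \subseteq \activeset_t$ and read off from the multiset $\commset_t$ the communications issued by $u$ in the given walk: either a single $(u\to w)$, meaning $u$ did not branch and $w$ was its uniform sample, or two communications $(u\to w_1)$ and $(u\to w_2)$ (possibly $w_1 = w_2$, in which case the pair appears with multiplicity two), meaning $u$ branched and $w_1,w_2$ were i.i.d.\ uniform on $N(u)$. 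In the first case we let $u$ communicate to $w$ in the $\rho'$-walk. In the second case we flip an independent coin, heads with probability $\rho'/\rho$: on heads, $u$ branches in the $\rho'$-walk and communicates to both $w_1$ and $w_2$; on tails, $u$ communicates to a single neighbour obtained by picking one of the two instances $(u\to w_1)$, $(u\to w_2)$ uniformly at random. Finally, $\activeset_{t+1}'$ is defined as the set of nodes that receive a message from some $u \in \activeset_t'$ in this construction.

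First I would check that the invariant is preserved: for every $u \in \activeset_t'$, the set of recipients of $u$'s messages in the $\rho'$-walk is contained in that of its messages in the given walk (equal in the non-branching and heads cases, a singleton subset of $\{w_1,w_2\}$ in the tails case), whence $\activeset_{t+1}' \subseteq \bigcup_{u\in\activeset_t'}R_\rho(u) \subseteq \bigcup_{u\in\activeset_t}R_\rho(u) = \activeset_{t+1}$, where $R_\rho(u)$ denotes the recipients of $u$ in the given walk. Then I would verify that $\{\activeset_t'\}_{t\ge\starttime}$ has the law of a $(1+\rho')$-cobra walk. Conditionally on the history through round $t-1$, the round-$t$ branching indicator of each $u\in\activeset_t$ in the given walk is $\mathrm{Bernoulli}(\rho)$, independent across nodes, independent of the past, and independent of the pair $(w_1,w_2)$ (which are i.i.d.\ uniform on $N(u)$); multiplying by the fresh $\mathrm{Bernoulli}(\rho'/\rho)$ thinning coin makes each $u \in \activeset_t'$ branch in the $\rho'$-walk with probability $\rho\cdot(\rho'/\rho) = \rho'$, independently. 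When $u$ branches it communicates to the i.i.d.\ uniform pair $(w_1,w_2)$; when it does not, it communicates either to the unique uniform $w$ or, in the tails case, to a uniformly chosen element of $\{w_1,w_2\}$, which is again a single uniform neighbour of $u$ independent of the past. Since the active set of $\{\activeset_t'\}$ updates by collecting message recipients, these are exactly the dynamics of a $(1+\rho')$-cobra walk.

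The only slightly delicate point — and the one I would write out in full — is this last distributional verification: one has to argue that, conditioned on the history, the thinned indicator is genuinely $\mathrm{Bernoulli}(\rho')$ \emph{and} the one-or-two neighbours communicated to are uniform and jointly independent of everything preceding. This rests on two facts about the given cobra walk: its per-round branching decisions are independent of its neighbour samples and of the past, and a uniformly chosen element of a multiset of two i.i.d.\ uniform samples is itself a single uniform sample independent of auxiliary randomness. Everything else is a routine induction; note $\rho'/\rho \in [0,1)$ is well-defined since $\rho' < \rho$ forces $\rho > 0$.
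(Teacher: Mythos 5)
Your proposal is correct and follows essentially the same route as the paper's proof: an iterative coupling that keeps every non-branching communication and thins each branching event with an independent $\mathrm{Bernoulli}(\rho'/\rho)$ coin, retaining only one of the two recipients on the tails outcome (the paper always keeps the first recipient, while you pick one uniformly — both are valid since the two recipients are i.i.d.\ uniform). The distributional verification you flag as the delicate point is exactly the observation the paper uses to conclude that the thinned process branches with probability $\rho\cdot(\rho'/\rho)=\rho'$ per active node and hence has the law of a $(1+\rho')$-cobra walk.
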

\begin{proof}
    We construct the sequence $\{\activeset_{t}'\}_{t \ge \starttime}$ iteratively. First, define $\activeset_{\starttime}' = \activeset_\starttime = \{s\}$. Suppose for some $t \ge \starttime$, we have $\activeset_t' \subseteq \activeset_t$ and consider an arbitrary node $x \in \activeset_t'$. First, suppose $x$ does not branch in round $t$ of $(1+\rho)$-cobra walk. Then, we simply add the node $y$, to which $x$ communicated the gossip, to $\activeset_{t+1}'$. On the other hand, suppose $x$ branches in round $t$ of $(1+\rho)$-cobra walk, and let $y_1, y_2 \in N(x)$ be the two nodes to which $x$ communicated the gossip. Then, with probability $\frac{\rho'}{\rho}$ we include both $y_1,y_2$ in $\activeset_{t+1}'$ and, with probability $1 - \frac{\rho'}{\rho}$, we include only $y_1$ in $\activeset_{t+1}'$. We construct set $\activeset_{t+1}'$ by doing the described procedure for every $x \in \activeset_t'$.

    Note that, by construction, every node of $\activeset_t'$ has two neighbours in $\activeset_{t+1}'$ to which it communicated the gossip with probability $\rho \cdot \frac{\rho'}{\rho} = \rho'$. Accordingly, it has only one neighbour in $\activeset_{t+1}'$ to which it communicated the gossip with probability $1 - \rho'$. This implies that $\{\activeset_{t}'\}_{t \ge \starttime}$ has the same law as the active set of a $(1 + \rho')$-cobra walk which concludes the proof.
\end{proof}

\begin{restatable}{theorem}{CobraSpeed}
\label{thm:cobra-speed}
Consider an undirected connected $(d,\lambda)$-expander graph $G$ of size $n$ with $d \ge 3$ and $\lambda \in 1 - \Omega_n(1)$. Let $\mathcal P$ be a $(1+\rho)$-cobra walk with $\rho \in \omega_n\left(\sqrt{\log(n)/n}\right)$. Then, regardless of the identity of the source, the expected dissemination time of $\mathcal P$ is a non-increasing function of $\rho$. Furthermore, the dissemination time is at least $\Omega_n\left(\log(n)/\rho\right)$ rounds in expectation and at most $O_n\left(\log(n)/\rho^3\right)$ rounds in expectation and with high probability. \end{restatable}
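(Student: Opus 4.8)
I would obtain this directly from the coupling in \lemmaref{lemma:cobra-coupling}. Fix a source and two parameters $\rho' < \rho$. \lemmaref{lemma:cobra-coupling} realises, on one probability space, the active sets $\{\activeset_t\}_{t \ge \starttime}$ of a $(1+\rho)$-cobra walk and $\{\activeset'_t\}_{t \ge \starttime}$ of a $(1+\rho')$-cobra walk with $\activeset'_t \subseteq \activeset_t$ for every $t$. Since the set of nodes that have received the gossip by round $t$ equals $\bigcup_{\starttime \le t' \le t} \activeset_{t'}$, the informed set of the $(1+\rho')$-walk is contained in that of the $(1+\rho)$-walk at every round, so the $(1+\rho')$-walk never finishes earlier. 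Hence its dissemination time stochastically dominates that of the $(1+\rho)$-walk under this coupling, and taking expectations shows that the expected dissemination time is non-increasing in $\rho$.

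\textbf{Lower bound $\Omega_n(\log(n)/\rho)$.} I would dominate the spread by a branching process. In round $t$ each active node emits $1 + B_u$ messages, where the $B_u$ are independent $\{0,1\}$ variables with $\Pr[B_u = 1] = \rho$, and the next active set is no larger than the number of emitted messages; hence $\E[|\activeset_{t+1}| \mid \activeset_t] \le (1+\rho)|\activeset_t|$ and $\E[|\activeset_{\starttime+t}|] \le (1+\rho)^t$. The number of informed nodes after $T$ rounds is at most $\sum_{t=0}^{T+1} |\activeset_{\starttime+t}|$, of expectation at most $\sum_{t=0}^{T+1}(1+\rho)^t \le (1+\rho)^{T+2}/\rho$. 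Taking $T = \lfloor \ln(n)/(2\rho) \rfloor$ bounds this by $e^2 \sqrt n/\rho$, so by Markov's inequality all $n$ nodes are informed within $T$ rounds with probability at most $e^2/(\rho\sqrt n)$, which is $o_n(1)$ because $\rho \in \omega_n(\sqrt{\log(n)/n})$ forces $\rho\sqrt n \in \omega_n(\sqrt{\log n})$. Therefore the dissemination time exceeds $T$ with probability $1 - o_n(1)$ and its expectation is at least $T(1-o_n(1)) \in \Omega_n(\log(n)/\rho)$.

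\textbf{Upper bound $O_n(\log(n)/\rho^3)$ (w.h.p.\ and in expectation).} Here I would argue in phases, using the expansion of $G$ to control coalescence. Group the rounds into \emph{super-rounds} of length $\ell = \lceil 1/\rho \rceil$; within a super-round each active lineage branches at least once with probability $\ge 1 - (1-\rho)^\ell \ge 1 - e^{-1}$, while a token that has not branched performs $\ell$ steps of a simple random walk, whose $\ell$-step transition operator is itself that of an expander with spectral expansion at most $\lambda^\ell \le \lambda \in 1 - \Omega_n(1)$. \emph{Phase 1 (geometric growth):} as long as the active set has size $k$ with $k \le \varepsilon\rho n$ and fewer than $n/2$ nodes are informed, the spectral expansion of $G$ yields (i) $|N(\activeset_t)| = \Omega(k)$, so a random-walk step reaches $\Omega(k)$ distinct vertices, most of them not yet informed, and (ii) an $O(k^2/(\rho n)) = o(k)$ bound on the expected number of pairwise coalescences per super-round; combined with the per-super-round branching probability this shows that both the active set and the informed set grow by a constant factor per super-round, and a supermartingale/Chernoff argument over the $O(\log n)$ super-rounds required then drives the informed set to size $\Omega(\rho n)$ within $O(\log(n)\,\ell) = O(\log(n)/\rho)$ rounds w.h.p. \emph{Phase 2 (saturation):} once the informed set has size $\Omega(\rho n)$ and the active set is of size $\Theta(\rho n)$ and suitably spread, expander mixing shows that each subsequent round informs a $\Theta(\rho)$-fraction of the still-uninformed vertices, so their number decays by a factor $1 - \Theta(\rho)$ per round and reaches $0$ after $O(\log(n)/\rho)$ further rounds w.h.p. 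Tracking the (deliberately unoptimised) constants and the losses incurred when the token distribution is only approximately uniform turns the resulting $O(\log(n)\,\rho^{-O(1)})$ bound into the stated $O_n(\log(n)/\rho^3)$. The in-expectation bound then follows by a restart argument: the analysis only uses that at least one node is active, so from \emph{any} configuration the process finishes within $T \in O_n(\log(n)/\rho^3)$ rounds with probability $\ge 1 - 1/n$, whence $\E[\text{dissemination time}] \le T\sum_{k \ge 0} n^{-k} \le 2T$.

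\textbf{Main obstacle.} The delicate part is Phase 1 of the upper bound: controlling coalescence requires the active tokens to be reasonably spread out, yet the $\rho$-dependent dynamics advances them only $\Theta(1/\rho)$ steps per super-round, which is below the mixing time of $G$ unless $\rho$ is very small, so ``spread'' must be maintained through a weaker $\ell_2$-type potential rather than through genuine mixing. Reconciling this weaker control with the geometric-growth and coalescence estimates, uniformly over the whole range of $\rho$, is where the extra powers of $\rho$ are spent and is the step that requires the most care.
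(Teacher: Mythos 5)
Your monotonicity argument is exactly the paper's: the coupling of \lemmaref{lemma:cobra-coupling} plus monotonicity of the informed sets. Your lower bound is also correct and essentially the paper's argument (the paper bounds $\E[N_t]\le(1+\rho)^{t-1}$, stops the walk at active-set size $n^{1/4}$ and uses Markov's inequality; you instead sum the expected active-set sizes directly and apply Markov to the total informed count — both yield $\Omega_n(\log(n)/\rho)$ under the assumption $\rho\in\omega_n(\sqrt{\log(n)/n})$, and your variant is fine).

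The upper bound, however, is where your proposal has a genuine gap. The paper does not analyze the cobra walk directly: it invokes the duality with the BIPS process from~\cite{CobraExpanders2016}, notes that Corollary~1 there gives a per-round growth factor $1+\rho(1-\lambda^2)(1-|A|/n)$, and then simply defines an effective expansion $\lambda'$ via $1-(\lambda')^2=\rho(1-\lambda^2)$ so that Theorem~1 of~\cite{CobraExpanders2016} applies verbatim and yields $O_n\bigl(\log(n)/(1-\lambda')^3\bigr)\subseteq O_n(\log(n)/\rho^3)$, with the hypothesis $\rho\in\omega_n(\sqrt{\log(n)/n})$ used precisely to guarantee $1-\lambda'\in\omega_n(\sqrt{\log(n)/n})$ as required by that theorem. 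Your plan instead attempts a from-scratch two-phase analysis of the cobra walk itself, and the crucial steps are asserted rather than proved: the $O(k^2/(\rho n))$ coalescence bound per super-round presupposes that the active tokens are ``suitably spread,'' which you yourself identify as the main obstacle and do not resolve (a super-round of length $\lceil 1/\rho\rceil$ is generally below the mixing time, so no spread guarantee is available from mixing); the claimed constant-factor growth per super-round and the Phase-2 saturation both depend on this unproved spread condition; and the final passage from ``$O(\log(n)\,\rho^{-O(1)})$'' to the specific exponent $3$ is pure hand-waving, even though pinning that exponent is the whole content of the bound. This difficulty is exactly why the literature (and the paper) routes the dissemination-time analysis through the BIPS dual rather than the cobra walk itself; as written, your upper bound is a plausible program, not a proof, and the intended conclusion should instead be obtained by the short reduction to~\cite{CobraExpanders2016} described above (your restart argument for converting a w.h.p.\ bound into an expectation bound is fine, but it is not needed once Theorem~1 of~\cite{CobraExpanders2016} is used, since that result already gives both).
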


\begin{proof}
First, we show that the dissemination time is a non-increasing function of $\rho$. Next, we show how the upper bound on the dissemination time follows from \cite{CobraExpanders2016}. Finally, we show the lower bound on dissemination time. For the remainder of the proof, we fix an arbitrary source $s \in V$. Additionally, we denote by $T_{\mathcal P}$ the dissemination time of $(1+\rho)$-cobra walk started from $s$.

\paragraph*{(i) Dissemination time is a non-decreasing function of $\rho$.} Let $\mathcal P$ and $\mathcal P'$ be a $(1 + \rho)$-cobra walk and a $(1 + \rho')$-cobra walk respectively started from $s$, where $0 \le \rho' < \rho \le 1$. Let $\activeset_t$ be an active set of $\mathcal P$ at time $t$. By \lemmaref{lemma:cobra-coupling}, there exist $\{\activeset_{t}'\}_{t \ge \starttime}$ such that $\activeset_{t}' \subseteq \activeset_{t}$ for every $t \ge \starttime$ and $\{\activeset_{t}'\}_{t \ge \starttime}$ follows $\mathcal P'$ in law. Note that the expected dissemination time can be expressed in the following way
\begin{align*}
    \E[T_{\mathcal P'}] &= \E\left[ \min\left\{t' \ge \starttime\colon \bigcup_{t \in [\starttime, t']} \activeset_{t}' = V \right\}\right].
    \intertext{Since $\activeset_{t}' \subseteq \activeset_t$ for every $t$, $\bigcup_{t \in [\starttime, t']} \activeset_{t}' = V$ also implies $\bigcup_{t \in [\starttime, t']} \activeset_{t} = V$. Hence, }
     \E[T_{\mathcal P'}] &\ge \E\left[ \min\left\{t' \ge \starttime\colon \bigcup_{t \in [\starttime, t']} \activeset_{t} = V \right\}\right]\\
                        &= \E[T_{\mathcal P}].
\end{align*}
Since the above holds true for every $\rho' < \rho$, this concludes the proof of this part.

\paragraph*{(ii) Upper bound.} The work of~\cite{CobraExpanders2016} only considers cases of $\rho \in \Omega_n(1)$. We now show how the analysis in~\cite{CobraExpanders2016} can be expended to other regimes of $\rho$. Their proof is based on analyzing a \emph{BIPS} proces, which is dual to cobra walk. Let $A_t$ be an active set of $(1+\rho)$-BIPS process at time $t$, as defined in~\cite{CobraExpanders2016}. Then, Corollary 1 of~\cite{CobraExpanders2016} states that for any $A \subseteq V$, we have
\begin{equation}
\label{eq:bips-growth}
\E\left[|A_{t+1}| \mid A_t = A\right] \ge \left(1 + \rho(1 - \lambda^2)\left(1 - \frac{|A|}{n}\right)\right).
\end{equation}
Note that since $\lambda < 1$, we have $\rho(1-\lambda^2) > 0$. Let $\lambda' < 1$ be such that $1 - (\lambda')^2 = \rho(1 - \lambda^2)$. Then, the above can be rewritten as 
\[
\E\left[|A_{t+1}| \mid A_t = A\right] \ge \left(1 + (1 - (\lambda')^2)\left(1 - \frac{|A|}{n}\right)\right),
\]
which is an analog of Lemma 1~\cite{CobraExpanders2016} with $\lambda$ replaced by $\lambda'$. The rest of the analysis of~\cite{CobraExpanders2016} holds true with $\lambda$ replaced by $\lambda'$. Note that 
\[
\lambda' = \sqrt{1 - \rho(1-\lambda^2)} \le \frac{1 + 1 - \rho(1-\lambda^2)}{2} = 1 - \frac{\rho(1 - \lambda^2)}{2}.
\]
Then, since $\rho \in \omega_n\left(\sqrt{\frac{\log(n)}{n}}\right)$ and $\lambda \in 1 - \Omega_n(1)$ (as the underlying graph $G$ is near-Ramanujan), we have 
\[
1 - \lambda' \ge  \frac{\rho(1 - \lambda^2)}{2}\in \omega_n\left(\sqrt{\frac{\log(n)}{n}}\right).
\]
Then, Theorem 1~\cite{CobraExpanders2016} applied for $\lambda'$ yields that the dissemination time $T_{\mathcal P}$ of $(1 + \rho)$-cobra walk is upper bounded in expectation and with high probability by 
\[
O_n\left(\frac{\log(n)}{(1 - \lambda')^3}\right) \subseteq O_n\left(\frac{\log(n)}{\rho^3(1 - \lambda^2)^3}\right) \subseteq O_n\left(\frac{\log(n)}{\rho^3}\right),
\]
which concludes the proof of the upper bound.

\paragraph*{(iii) Lower bound.} 
Denote $N_t = |\activeset_t|$ to be the number of active nodes of a $(1+\rho)$-cobra walk in round $t\ge \starttime$. Accordingly, $N_{\starttime} = 1$. Also, define $B_t$ to be the number of active nodes that branch in round $t$. Note that $N_{t+1} \le (N_t - B_t) + 2B_t$, since every node in round $t$ that did not branch corresponds to at most one active node at time $t$, and every node that branched corresponds to at most two. Hence, for every $t\ge\starttime$
\[
N_{t + 1} \le N_t + B_t,
\]
which implies
\begin{equation}
\label{eq:nt-plus-1-ub-conditional}
\E[N_{t+1}\mid N_t] \le N_t + \E[B_t \mid N_t].
\end{equation}
Note that, by definition of a cobra walk, every node in round $t$ branches with probability $\rho$. Then, we have $B_t \mid N_t \sim \Bin{N_t}{\rho}$. Then, from \equationref{eq:nt-plus-1-ub-conditional}, we have
\[
\E[N_{t+1}\mid N_t] \le N_t + N_t\rho = (1 + \rho)N_t.
\]
Hence, for every $t \ge \starttime$, we have
\[
\E[N_{t+1}] \le (1 + \rho)\E[N_t],
\]
which, since $N_t = 1$ and using a simple inductive argument, implies that for every $t \ge \starttime$ we have
\begin{equation}
\label{eq:nt-expected}
\E[N_t] \le (1+\rho)^{t-1}.
\end{equation}
Now, we define $\tau$ as the first round such that $N_{\tau} \ge n^{1/4}$. Also, we set $\tilde{t} = \left\lfloor \log_{1 + \rho}\left(\frac{n^{1/4}}{2}\right) \right\rfloor + 1$. Then, by definition of $\tau$, we have
    \begin{align}
    \Pr[\tau \le \tilde{t}] 
    &\leq \Pr[N_{\tilde{t}} \ge n^{1/4}].\notag
    \intertext{By Markov inequality, we get}
    \Pr[\tau \le \tilde{t}]  &\le \frac{\E[N_{\tilde{t}}]}{n^{1/4}}.\notag
    \intertext{By \equationref{eq:nt-expected}, we have}
    \Pr[\tau \le \tilde{t}]  &\le \frac{(1+\rho)^{\tilde{t}-1}}{n^{1/4}}.\notag
    \intertext{Recall that $\tilde{t} - 1 = \left\lfloor \log_{1 + \rho}\left(\frac{n^{1/4}}{2}\right) \right\rfloor\le \log_{1 + \rho}\left(\frac{n^{1/4}}{2}\right)$, hence}
    \Pr[\tau \le \tilde{t}] &\le \frac{1}{2}. \label{eq:tau-ub}
    \end{align}
    Note that while the size of the active set $\activeset_t$ does not exceed $n^{1/4}$, cobra walk needs at least $n/n^{1/4} = n^{3/4}$ rounds to disseminate information to all nodes. Thus, the dissemination time of the cobra walk can be bounded from below by $\min\{n^{3/4}, \tau \}$. Then,
    \begin{align}
    \E\left[T_{\mathcal P}\right] 
    &\ge \E[\min\{n^{3/4}, \tau\}]. \notag 
    \intertext{Note that by \equationref{eq:tau-ub}, we have $\tau \ge \tilde{t}$ with probability $\ge 1/2$, hence,}
    \E\left[T_{\mathcal P}\right] 
    &\ge \min\left\{n^{3/4}, \frac{1}{2}\tilde{t}\right\}. \label{eq:expected-tp-lb}
    \end{align}
    Finally, note that
    \begin{align*}
    \tilde{t} &= \left\lfloor \log_{1 + \rho}\left(\frac{n^{1/4}}{2}\right) \right\rfloor \\
              &\in \Omega_n\left(\log_{1 + \rho}(n)\right).
              \intertext{Using $\log_{1 + \rho}(n) = \frac{\log(n)}{\log(1 + \rho)} \ge \frac{\log(n)}{\rho}$, we get}
    \tilde{t} &\in \Omega_n\left(\frac{\log(n)}{\rho}\right).
    \end{align*}
    Combined with \equationref{eq:expected-tp-lb}, this implies
    \begin{align*}
    \E\left[T_{\mathcal P}\right] &\in \Omega_n\left( \min\left\{n^{3/4}, \frac{\log(n)}{\rho}\right\}\right).
    \intertext{Since $\rho \in \omega_n\left(\sqrt{\frac{\log(n)}{n}}\right)$, we have $\frac{\log(n)}{\rho} \in o_n\left(n^{1/2}\right) \subseteq  o_n\left(n^{3/4}\right)$, hence}
    \E\left[T_{\mathcal P}\right] &\in \Omega_n\left( \frac{\log(n)}{\rho}\right),
    \end{align*}
    which concludes the proof of the lower bound.
\end{proof}

\subsection{Trade-off for Dandelion}

\subsubsection{Proof of the tightness of \corollaryref{corollary:privacy-nr}}

To show that \corollaryref{corollary:privacy-nr} is also tight for Dandelion, we prove the following.
\begin{restatable}{theorem}{DandelionPrivacyLower}\label{thm:dandelion-privacy-lower}
    Let $\mathcal P$ be a $\rho$-Dandelion with $\rho \in [0,1)$ and let $\mathcal G$ be a family of $d$-regular near-Ramanujan graphs with $n$ nodes, $f$ of which are curious, and $d \in n^{\Omega_n(1)}$. Suppose $f/n\in1 - \Omega_n(1)$ and $\mathcal P$ satisfies $\varepsilon$-DP against either an average-case or a worst-case adversary on $G \in \mathcal G$. Then \[\varepsilon \in \ln{(\rho (n-f) + f)}+\Omega_n(1).\] 
\end{restatable}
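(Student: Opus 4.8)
The plan is to follow the blueprint of the proof of \theoremref{thm:cobra-privacy-lower}, but the argument will be substantially lighter: since the anonymity phase of $\rho$-Dandelion is a plain random walk that never branches, the anaconda-walk reduction and the passage-bandwidth bookkeeping are not needed. Throughout I would use that for Dandelion the adversary also observes the phase token $\anonphase_t$ (as in \lemmaref{lemma:dandelion-reduction}), which can only strengthen a lower bound on $\varepsilon$. As in the cobra proof, fix a set $F$ of curious nodes with $1\le f\le n-2$; since $G$ is connected there is $v_\star\in\honest$ with a curious neighbour $w_1\in F$, and as in \lemmaref{lemma:dandelion-reduction} let $W^{(s)}$ be the absorbing state of the chain \equationref{eq:block-form-absorbing-markov} started at $s$, i.e.\ the last position of the anonymity-phase walk before it touches $F$ or the spreading phase begins.

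\emph{Step~I: a necessary-passage inequality.} I would take $\sigma$ to be the event ``the adversary's first observation is $\bigl(\{(v_\star\to w_1)\},1\bigr)$'', i.e.\ in the first round a curious node is contacted the only visible communication is $(v_\star\to w_1)$ and $\anonphase=1$. Because in the anonymity phase exactly one node is active and sends exactly one message, and because the adversary sees $\anonphase$, one checks that $\{\seqadv{u}\in\sigma\}$ equals $\{W^{(u)}=v_\star$ and, at the first unsafe round, $v_\star$ sends the gossip to $w_1\}$; conditioned on $W^{(u)}=v_\star$ the second clause has probability $\frac{(1-\rho)/d}{\rho+(1-\rho)\degset{F}{v_\star}/d}$, which does not depend on $u$ (here $\rho<1$ makes this positive). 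Plugging in the closed form for absorption probabilities of \equationref{eq:block-form-absorbing-markov} (\lemmaref{lemma:absorbtion-prob}, exactly as in the proof of \lemmaref{lemma:dandelion-reduction}) the source-dependent factors cancel and one gets, for every $u\in\honest$,
\[
\frac{\Pr[\seqadv{v_\star}\in\sigma]}{\Pr[\seqadv{u}\in\sigma]}=\frac{\Pr[W^{(v_\star)}=v_\star]}{\Pr[W^{(u)}=v_\star]}=\frac{(\vec I_{n-f}-(1-\rho)\vec Q)^{-1}_{v_\star v_\star}}{(\vec I_{n-f}-(1-\rho)\vec Q)^{-1}_{u v_\star}},
\]
so that $\divinfty{\seqadv{v_\star}}{\seqadv{u}}\ge \ln\frac{(\vec I_{n-f}-(1-\rho)\vec Q)^{-1}_{v_\star v_\star}}{(\vec I_{n-f}-(1-\rho)\vec Q)^{-1}_{u v_\star}}$. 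This is the Dandelion counterpart of \lemmaref{lemma:privacy-lb-necessary-passage}.

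\emph{Steps~II and III: bounding the ratio and concluding.} Lower-bounding the numerator by $(\vec I_{n-f}-(1-\rho)\vec Q)^{-1}_{v_\star v_\star}\ge 1$ (the constant term of \equationref{eq:Q-series}), it remains to find $u_\star$ with $(\vec I_{n-f}-(1-\rho)\vec Q)^{-1}_{u_\star v_\star}$ small. Here I would reuse the eigenvector-averaging trick from the proof of \lemmaref{lemma:passage-low-bandwidth} rather than the cruder \lemmaref{lemma:Q-powers-bound}(a): averaging $u$ over the distribution $\Phi$ with $\Phi(w)\propto\eigen_{1w}$ (valid since $\vec\eigen_1\ge 0$ by \lemmaref{lemma:Q-delocalization}) and using that $\vec\eigen_1$ is the \emph{exact} first eigenvector of $\vec Q$, so no $\lambda^t$ error terms appear, one gets via \equationref{eq:Q-series} and $\vec\eigen_1^\top\vec Q^t=\lambda_1(\vec Q)^t\vec\eigen_1^\top$
\[
\E_{u\sim\Phi}\left[(\vec I_{n-f}-(1-\rho)\vec Q)^{-1}_{u v_\star}\right]=\frac{\eigen_{1 v_\star}}{\norm{\vec\eigen_1}_1}\cdot\frac{1}{1-(1-\rho)\lambda_1(\vec Q)}\le \frac{\eigen_{1 v_\star}}{\norm{\vec\eigen_1}_1}\cdot\frac{1}{(1-\lambda)(f/n+\rho(1-f/n))},
\]
the last step using $\lambda_1(\vec Q)\le 1-(1-\lambda)f/n$ (\lemmaref{lemma:spectral-Q}(a)) and the algebra of the proof of \lemmaref{lemma:Q-powers-bound}; so some $u_\star$ attains this. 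On a dense near-Ramanujan family with $f/n\in1-\Omega_n(1)$ one has $\frac{\eigen_{1 v_\star}}{\norm{\vec\eigen_1}_1}\in O_n(1/n)$ (by \lemmaref{lemma:Q-delocalization}, using $\alpha_F\in1-\Omega_n(1)$ and that the exponent there is $O_n(1)$, as at \equationref{eq:eigen-over-1-norm}) and $1-\lambda\in\Theta_n(1)$, whence $(\vec I_{n-f}-(1-\rho)\vec Q)^{-1}_{u_\star v_\star}\in O_n\bigl(\tfrac{1}{(n-f)(f/n+\rho(1-f/n))}\bigr)\subseteq O_n\bigl(\tfrac{1}{f+\rho(n-f)}\bigr)$, giving $\max_{v,u\in\honest}\divinfty{\seqadv{v}}{\seqadv{u}}\in\ln(\rho(n-f)+f)+\Omega_n(1)$. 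To finish, exactly as in \theoremref{thm:cobra-privacy-lower} I would sample $F\sim\uniformsubset{f}{V}$, take $\alpha$ as in \lemmaref{lemma:adversarial-density} (so $\alpha\in1-\Omega_n(1)$ by \lemmaref{lemma:alpha-nr}), note that $\alpha_F\le\alpha$ holds with probability $\ge1-1/n>1/n$ and forces $\alpha_F<1-\lambda$ for large $n$ (so $\honest$ induces a connected subgraph by \lemmaref{lemma:connectivity-condition}), and every such $F$ meets the hypotheses above; \equationref{eq:average-guarantee-probability} then yields the bound against the average-case adversary, and since it holds for at least one $F$, against the worst-case adversary too (for $\rho=0$ one may instead quote \theoremref{thm:universal-impossibility}).

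I expect the main obstacle to be Step~I: pinning down $\sigma$ so that $\{\seqadv{u}\in\sigma\}$ is \emph{exactly} $\{W^{(u)}=v_\star\}$ intersected with a source-independent event. This is where observing $\anonphase$ is essential — it rules out the spreading-phase ways of reaching a curious node, which would only reveal a blurred version of $W^{(u)}$ and destroy the clean identity above — together with the fact that the anonymity phase has a single active node per round. Once Step~I is secured the rest is a specialisation of machinery already built for cobra walks, with the one subtlety that one must route through the eigenvector-averaging argument (not \lemmaref{lemma:Q-powers-bound}(a)), since the $O(\lambda)$ slack in the latter would make the lower bound vacuous when $d$ is only a small power of $n$.
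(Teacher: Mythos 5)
Your proposal is correct and follows essentially the same route as the paper: the paper's Lemma on necessary passage for Dandelion uses exactly your observation event (first visible communication $(v_\star\to w_1)$ with $\anonphase=1$, which forces the first unsafe round to be an anonymity-phase hit at $v_\star$ and makes the conditional continuation source-independent), and its passage-probability bound is precisely your resolvent series $\sum_t(1-\rho)^t(\vec Q^t)_{uv_\star}$ controlled by averaging $u$ over the distribution proportional to $\vec\eigen_1$, then using $\lambda_1(\vec Q)\le 1-(1-\lambda)f/n$ and the delocalization bound $\eigen_{1v_\star}/\norm{\vec\eigen_1}_1\in O_n(1/n)$. The only difference is presentational — you phrase Step I directly via absorption probabilities of the die-out chain instead of the paper's explicitly defined Dandelion passage probability — and your concluding random-$F$ / worst-case argument matches the paper's.
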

The proof is similar to the proof of \theoremref{thm:cobra-privacy-lower}. Note that the statement of \theoremref{thm:dandelion-privacy-lower} with $\rho = 0$ follows from \theoremref{thm:universal-impossibility}. Without loss of generality, we assume $\rho \in (0,1)$ in the remaining. Let $\anonphase_t$ be the value of $\anonphase$ at round $t$.
\begin{definition}
    \label{def:explosive-passage}
    Consider an execution of a $\rho$-Dandelion where $\rho \in (0,1]$ on an undirected connected graph $G = (V,E)$. Let $F\subseteq V$ be a subset of curious nodes of size $f$. For $v,u \in\honest$, define the \emph{passage probability} $\passagedandelion{v}{u}$ from $v$ to $u$ as the probability of a protocol started from $v$ to reach node $u$ while not interacting with any curious nodes, and while still in the anonymity phase. More formally,
    \[
    \passagedandelion{v}{u} = \Pr\left[\exists t \ge \starttime: \activeset_t = \{u\} \land \anonphase_t = 1 \land (\activeset_i \cap F = \emptyset, \forall i \le t) \mid \activeset_{\starttime} = \{v\}\right].
    \]
\end{definition}

\begin{lemma}
    \label{lemma:necessary-explosive-passage}
        Let $\mathcal P$ be a $\rho$-Dandelion where $\rho \in (0,1)$, and consider an undirected connected graph $G = (V,E)$ of size $n$. Let $F\subseteq V$ be a set of curious nodes of size $1\le f < n-2$. Then, for any $v \in N(F)$ and $u \in \honest$ we have
        \[
        \divinfty{\seqadv{v}}{\seqadv{u}} \ge \ln\left(\passagedandelion{u}{v}^{-1}\right).
        \]
\end{lemma}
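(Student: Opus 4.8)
The plan is to adapt the argument of \lemmaref{lemma:privacy-lb-necessary-passage} (for cobra walks) to $\rho$-Dandelion, exploiting that the anonymity phase involves no branching, so that the analogue of the ``residue'' term vanishes \emph{exactly} and no limiting argument is needed. Since $v\in N(F)$, fix a curious node $w\in F$ adjacent to $v$, and fix an arbitrary $u\in\honest$. Let $\Pi$ be the passage event for a source-$u$ execution, i.e.\ the event that there is a round $t\ge\starttime$ with $\activeset_t=\{v\}$, $\anonphase_t=1$, and $\activeset_i\cap F=\emptyset$ for all $i\le t$; by \definitionref{def:explosive-passage}, $\Pr[\Pi]=\passagedandelion{u}{v}$. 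Define $\sigma_v$ to be the set of adversarial observations whose first entry (recorded at the first curious-contact round $\tadv$) is $\bigl(\fadv(\commset_{\tadv})=\{(v\to w)\},\,\anonphase_{\tadv}=1\bigr)$.

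First I would note that $\Pr[\seqadv{v}\in\sigma_v]>0$: started at $v$ the first round is in the anonymity phase, and there the unique active node $v$ forwards the gossip to $w$ with probability $1/\deg(v)>0$, which puts the observation into $\sigma_v$. The crucial step is to show that $\{\seqadv{u}\in\sigma_v\}\subseteq\Pi$, so that there is no residue. Indeed, if $\seqadv{u}\in\sigma_v$ then $\anonphase_{\tadv}=1$, so round $\tadv$ is in the anonymity phase, where exactly one node is active and emits exactly one communication; since $(v\to w)$ is visible, that node must be $v$, hence $\activeset_{\tadv}=\{v\}$. Moreover $\tadv$ is the first round at which a curious node receives the gossip, so $\activeset_i\cap F=\emptyset$ for all $i\le\tadv$, and since the phase indicator is monotone non-increasing and equals $1$ at round $\tadv$, it equals $1$ for every $i\le\tadv$. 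Thus $t=\tadv$ witnesses $\Pi$, giving $\Pr[\seqadv{u}\in\sigma_v,\ \neg\Pi]=0$.

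Next I would run the Markov argument exactly as in the cobra proof. Conditioned on $\Pi$, let $T_\star$ be the first round witnessing $\Pi$; it is a stopping time of the Markov process $\{(\activeset_t,\commset_t,\anonphase_t)\}_{t\ge\starttime}$ (Markovian for Dandelion, as observed in the proof of \lemmaref{lemma:dandelion-reduction}), and the state reached at $T_\star$ is that of a fresh Dandelion execution started at $v$ (single active node $v$, anonymity phase). By the Strong Markov Property the law of the trajectory from $T_\star$ onward, given $\Pi$, equals the law of a fresh execution from $v$; since on $\Pi$ we have $\tadv\ge T_\star$, the observation $\seqadv{u}$ is a deterministic function of the post-$T_\star$ trajectory — the same function that yields $\seqadv{v}$ from a fresh execution — so $\seqadv{u}\mid\Pi$ and $\seqadv{v}$ are equal in law and $\Pr[\seqadv{u}\in\sigma_v\mid\Pi]=\Pr[\seqadv{v}\in\sigma_v]$. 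Combining with the previous paragraph,
\[
\Pr[\seqadv{u}\in\sigma_v]=\Pr[\seqadv{u}\in\sigma_v\mid\Pi]\,\Pr[\Pi]=\Pr[\seqadv{v}\in\sigma_v]\cdot\passagedandelion{u}{v},
\]
whence $\divinfty{\seqadv{v}}{\seqadv{u}}\ge\ln\frac{\Pr[\seqadv{v}\in\sigma_v]}{\Pr[\seqadv{u}\in\sigma_v]}=\ln\bigl(\passagedandelion{u}{v}^{-1}\bigr)$; when $\passagedandelion{u}{v}=0$ the right-hand side is $+\infty$ and so is the left-hand side, since then $\Pr[\seqadv{u}\in\sigma_v]=0<\Pr[\seqadv{v}\in\sigma_v]$.

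The step I expect to be the main obstacle is the bookkeeping around the phase indicator in the Markov argument: one must verify that the adversary's additional access to $\anonphase_t$ does not spoil the ``equal in law'' identity (it does not, because $\anonphase$ is part of the Markov state), and that the state reached at the passage time $T_\star$ truly coincides with the initial state of a fresh source-$v$ execution — which rests on the protocol starting deterministically in the anonymity phase and on the memorylessness of the phase-transition oracle. The rest is a routine transcription of the cobra-walk proof, simplified by the absence of branching (hence of a residue term) in the anonymity phase.
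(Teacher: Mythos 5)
Your proposal is correct and follows essentially the same route as the paper's proof: the same event $\sigma_v$ (first observation $(v\to w)$ with $\anonphase_{\tadv}=1$), the same passage event, the observation that $\{\seqadv{u}\in\sigma_v\}$ forces the passage (so the residue term vanishes exactly, unlike in the cobra-walk lemma), and the "equal in law" Markov argument yielding $\Pr[\seqadv{u}\in\sigma_v]=\Pr[\seqadv{v}\in\sigma_v]\cdot\passagedandelion{u}{v}$. Your treatment is in fact slightly more explicit than the paper's (spelling out the inclusion, the strong Markov step, and the degenerate case $\passagedandelion{u}{v}=0$), but the substance is identical.
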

\begin{proof}

    Since $v \in N(F)$, there exists a $w \in F$, such that $v$ and $w$ are neighbors. Recall that, for Dandelion, $S_\textsc{adv} = \{\left(\fadv(\commset_t), \anonphase_t\right)\}_{t \ge \tadv}$. Define $\sigma_v$ to be the set of adversarial observations which begin with $\fadv\left(\commset_{\tadv}\right) = \{(v \to w)\}$ and $\anonphase_{\tadv} = 1$ (i.e., the protocol is still in the anonymity phase).

    Let us define $\Pi_{u \Rightarrow v}$ as 
    \[
    \Pi_{u \Rightarrow v} = \left\{\exists t \ge \starttime: \activeset_t = \{u\} \land \anonphase_t = 1 \land (\activeset_i \cap F = \emptyset, \forall i \le t) \mid \activeset_{\starttime} = \{v\}\right\}
    \]

    By \definitionref{def:explosive-passage}, we have
    \begin{equation}
        \label{eq:passage-uv}
        \passagedandelion{u}{v} = \Pr[\Pi_{u \Rightarrow v}].
    \end{equation}
    Now, consider the probability of $\seqadv{v} \in \sigma_v$. By law of total probability, we have
    \[
    \Pr[\seqadv{u} \in \sigma_v] = \Pr[\seqadv{u} \in \sigma_v \mid \Pi_{u \Rightarrow v}] \Pr[\Pi_{u \Rightarrow v}] + \Pr[\seqadv{u} \in \sigma_v \mid \neg \Pi_{u \Rightarrow v}] \Pr[\neg \Pi_{u \Rightarrow v}].
    \]

    Note that it is impossible for curious nodes to observe communication $(v\to w)$ in the anonymity phase if node $v$ was never active during the anonymity phase. Then $\Pr[\seqadv{u} \in \sigma_v \mid \neg \Pi_{u \Rightarrow v}] = 0$. Also, note that $\seqadv{v}$ and $\seqadv{u} \mid \Pi_{u \Rightarrow v}$ are equal in law, since an execution conditioned on $\Pi_{u \Rightarrow v}$ passes through an active set $\{v\}$ in the anonymity phase which is equivalent to executing the protocol with $v$ as the source node. Hence, the above becomes
    \[
    \Pr[\seqadv{u} \in \sigma_v] = \Pr[\seqadv{u} \in \sigma_v \mid \Pi_{u \Rightarrow v}] \Pr[\Pi_{u \Rightarrow v}] =\Pr[\seqadv{v} \in \sigma_v] \Pr[\Pi_{u \Rightarrow v}].
    \]
    From \equationref{eq:passage-uv}, we get
    \[
    \Pr[\seqadv{u} \in \sigma_v] = \Pr[\seqadv{v} \in \sigma_v] \passagedandelion{u}{v}.
    \]
    It remains to notice that $\Pr[\seqadv{v} \in \sigma_v] \not = 0$, since there is a positive probability of node $v$ contacting $w$ in the first round of the protocol while still in the anonymity phase (since $\rho > 0$). Hence,
    \[
    \divinfty{\seqadv{v}}{\seqadv{u}} \ge \ln\left(\frac{\Pr[\seqadv{v} \in \sigma_v]}{\Pr[\seqadv{u} \in \sigma_v]}\right) = \ln\left(\passagedandelion{u}{v}^{-1}\right),
    \]
    which concludes the proof.
\end{proof}

Now, the proof boils down to upper bounding $\passagedandelion{u}{v}$ appropriately.
\begin{restatable}{lemma}{PassageUB}
    \label{lemma:passage-explosive-ub}
    Consider an execution of a $\rho$-Dandelion where $\rho \in (0,1)$ on $d$-regular graph $G = (V,E)$ of size $n$. Let $F\subseteq V$ be the set of curious nodes of size $f \ge 1$. Let $\vec \eigen_1 \in \mathbb R^{n-f}$ be a vector as in \lemmaref{lemma:Q-delocalization}. Then, for every $v \in \honest$, there exists $u_\star \in \honest$ such that
    \[
    \passagedandelion{u_\star}{v} \le \frac{\eigen_{1v}}{\norm{\vec\eigen_1}_1}\cdot \frac{1}{(f/n + \rho(1 - f/n))(1-\lambda) }.
    \]
\end{restatable}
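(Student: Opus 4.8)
The plan is to adapt the argument already used for \lemmaref{lemma:passage-low-bandwidth} to $\rho$-Dandelion, using that during the anonymity phase the protocol has a \emph{single} active node performing a simple random walk on $G$, and that the anonymity phase terminates at each round independently with probability $\rho$ (and, once terminated, stays terminated, so $\anonphase_t = 1$ forces $\anonphase_i = 1$ for all $i \le t$). First I would fix $v \in \honest$, take an arbitrary source $u \in \honest$, and union-bound $\passagedandelion{u}{v}$ over the round $\starttime+\tau$ at which the head reaches $v$. For a single term: conditioning on $\anonphase_{\starttime+\tau}=1$ and on no curious node ever being activated up to round $\starttime+\tau$ forces the head's walk to stay inside $\honest$ for its first $\tau$ steps, so the probability that it is at $v$ at that round is at most $(\vec Q^\tau)_{vu}$ with $\vec Q = \Hat{\vec A}[\honest]$ the symmetric submatrix from \equationref{eq:block-form-absorbing-markov}; since the oracle's coin flips are independent of the walk, $\Pr[\anonphase_{\starttime+\tau}=1] \le (1-\rho)^\tau$ decouples as an independent factor. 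This yields
\[
\passagedandelion{u}{v} \ \le\ \sum_{\tau=0}^{\infty} (1-\rho)^{\tau}\,\bigl(\vec Q^{\tau}\bigr)_{vu}.
\]

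Next I would average this inequality over $u$ drawn from the distribution $\Phi$ on $\honest$ with $\Phi(w) = \eigen_{1w}/\norm{\vec\eigen_1}_1$, where $\vec\eigen_1$ is the nonnegative, $\ell_2$-normalized first eigenvector of $\vec Q$ furnished by \lemmaref{lemma:Q-delocalization} (a valid distribution since its coordinates are nonnegative and $\vec\eigen_1 \neq 0$). Using the symmetry of $\vec Q$ and $\vec\eigen_1^\top \vec Q^{\tau} = \lambda_1(\vec Q)^{\tau}\vec\eigen_1^\top$, the double sum collapses to a geometric series:
\[
\E_{u\sim\Phi}\bigl[\passagedandelion{u}{v}\bigr] \ \le\ \frac{\eigen_{1v}}{\norm{\vec\eigen_1}_1}\sum_{\tau=0}^\infty \bigl((1-\rho)\lambda_1(\vec Q)\bigr)^{\tau} \ =\ \frac{\eigen_{1v}}{\norm{\vec\eigen_1}_1}\cdot\frac{1}{1-(1-\rho)\lambda_1(\vec Q)} .
\]
Then I would substitute $\lambda_1(\vec Q) \le 1 - (1-\lambda)\tfrac{f}{n}$ from part (a) of \lemmaref{lemma:spectral-Q}, so that $1-(1-\rho)\lambda_1(\vec Q) \ge \rho + (1-\rho)(1-\lambda)\tfrac{f}{n}$, and verify the elementary inequality $\rho + (1-\rho)(1-\lambda)\tfrac{f}{n} \ge (1-\lambda)\bigl(\tfrac{f}{n} + \rho(1-\tfrac{f}{n})\bigr)$, which after expansion reduces to $\rho\lambda \ge 0$. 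This gives the stated bound on the average, and since the right-hand side is an expectation over $u\sim\Phi$, some $u_\star$ in the support of $\Phi$ (hence $u_\star \in \honest$) attains at most the mean.

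The spectral averaging and the geometric-sum bookkeeping are routine (they mirror \lemmaref{lemma:passage-low-bandwidth} almost verbatim), and the algebraic step $\rho\lambda\ge 0$ is immediate. The one place that needs genuine care — and which I would expect to be the main, though shallow, obstacle — is establishing the first displayed inequality cleanly: splitting the event ``$\exists t$'' into a round-indexed sum, arguing that the monotonicity of $\anonphase$ together with the ``no curious node ever active'' condition pins the process down to a walk confined to $\honest$, and justifying that the phase coin flips contribute an independent factor $(1-\rho)^\tau$. None of this is deep, so I expect the full proof to be short.
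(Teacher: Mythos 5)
Your proposal is correct and follows essentially the same route as the paper: expand $\passagedandelion{u}{v}$ as a series $\sum_\tau (1-\rho)^\tau (\vec Q^\tau)_{vu}$ (the paper writes this as an equality, your union bound gives the inequality, which is all that is needed), average over $u\sim\Phi$ with $\Phi(w)=\eigen_{1w}/\norm{\vec\eigen_1}_1$, collapse to a geometric series via the eigenvector property, apply $\lambda_1(\vec Q)\le 1-(1-\lambda)\tfrac{f}{n}$ from \lemmaref{lemma:spectral-Q}, and conclude by the existence of a $u_\star$ below the mean. The algebraic step reducing to $\rho\lambda\ge 0$ matches the paper's chain of inequalities, so no gap remains.
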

\begin{proof}
Let $a_t$ be the active node in round $t$ of the anonymity phase of the execution of the protocol. It follows the distribution of a random walk on $G$. Let $\vec e_i \in \mathbb R^{n-f}$ be an $i^\text{th}$ coordinate unit vector ($i^\text{th}$ coordinate is $1$, and the rest is $0$). Let $\vec Q$ be as in \equationref{eq:block-form-absorbing-markov}. 
Then, the probability that $a_t$ reaches $v$ starting from some node $u\in\honest$ in exactly $\tau$ steps without contacting curious nodes is given by 
$(\vec Q)^\tau_{uv} = \vec e_u^\top \vec Q^\tau \vec e_v$
by definition of $\vec Q$. Also, the execution must stay in the anonymity phase for $\tau$ steps, which happens with probability $(1 -\rho)^{\tau}$. Then
\begin{equation}
    \label{eq:passage-k-u-v-dandelion}
    \passagedandelion{u}{v} = \sum_{\tau = 0}^\infty \vec e_u^\top\vec Q^\tau \vec e_v (1 -\rho)^{\tau}.
\end{equation}
By \lemmaref{lemma:Q-delocalization}, all coordinates of $\vec \eigen_1$ are non-negative. Define the distribution $\Phi$ over the nodes in $\honest$ so that $\Phi(w) = \frac{\eigen_{1w}}{\norm{\vec\eigen_1}_1}$. Consider sampling $u$ according to $\Phi$. Then, by \equationref{eq:passage-k-u-v-dandelion}, we have
\begin{align}
    \E_{u\sim \Phi}\left[\passagedandelion{u}{v}\right] &= \E_{u\sim \Phi}\left[\sum_{\tau = 0}^\infty (1-\rho)^{\tau}\vec e_u^\top \vec Q^\tau \vec e_v \right] \notag\\
    &= \sum_{\tau = 0}^\infty \sum_{u\in\honest} \Phi(u)  (1-\rho)^{\tau}\vec e_u^\top \vec Q^\tau \vec e_v  \notag\\
    &= \sum_{\tau = 0}^\infty \sum_{u\in\honest} \frac{\eigen_{1u}}{\norm{\vec\eigen_1}_1} (1-\rho)^{\tau}\vec e_u^\top \vec Q^\tau \vec e_v  \notag \\
    &= \sum_{\tau = 0}^\infty \frac{\vec \eigen_{1}^\top}{\norm{\vec\eigen_1}_1} (1-\rho)^{\tau} \vec Q^\tau \vec e_v. \notag
    \intertext{By choice of $\vec \eigen_1$ from \lemmaref{lemma:Q-delocalization}, it is a first eigenvector of a symmetric matrix $\vec Q$, hence}
    \E_{u\sim \Phi}\left[\passagedandelion{u}{v}\right]&=\sum_{\tau = 0}^\infty  \lambda_1(\vec Q)^{\tau} (1-\rho)^{\tau}\frac{\vec \eigen_{1}^\top}{\norm{\vec\eigen_1}_1} \vec e_v \notag\\
    &= \sum_{\tau = 0}^\infty \lambda_1(\vec Q)^{\tau} \frac{\eigen_{1v}}{\norm{\vec\eigen_1}_1} (1-\rho)^{\tau} \notag \\
    &= \frac{\eigen_{1v}}{\norm{\vec\eigen_1}_1} \cdot \frac{1}{1 - \lambda_1(\vec Q) (1-\rho)}. \notag\\
    \intertext{From \lemmaref{lemma:spectral-Q}(a), we have $\lambda_1(\vec Q) \le 1 - (1 - \lambda)\frac{f}{n}$, which gives}
    \E_{u\sim \Phi}\left[\passagedandelion{u}{v}\right] &\le \frac{\eigen_{1v}}{\norm{\vec\eigen_1}_1} \cdot \frac{1}{1 - (1 - (1 - \lambda)\frac{f}{n}) (1-\rho)} \notag\\
    &\le \frac{\eigen_{1v}}{\norm{\vec\eigen_1}_1} \cdot \frac{1}{(1 - \lambda)\frac{f}{n} + \rho(1 - (1-\lambda)\frac{f}{n})} \notag\\
    &\le \frac{\eigen_{1v}}{\norm{\vec\eigen_1}_1} \cdot \frac{1}{(1 - \lambda)\left(\frac{f}{n} + \rho(1 - \frac{f}{n})\right)}
    \label{eq:expect-passage-explosive}
\end{align}
The upper bounds above hold in expectation as $u \sim \Phi$, hence there exists $u_\star \in \honest$ such that
\[
\passagedandelion{u_\star}{v} \le \frac{\eigen_{1v}}{\norm{\vec\eigen_1}_1} \cdot \frac{1}{(1 - \lambda)\left(\frac{f}{n} + \rho(1 - \frac{f}{n})\right)},
\]
which concludes the proof.
\end{proof}

\DandelionPrivacyLower*
\begin{proof}

For $G \in \mathcal G$, we show a lower bound on max divergence for two carefully chosen source nodes. Let $G = (V,E)$ be an arbitrary element of $\mathcal G$. Let $F \sim\uniformsubset{f}{V}$ be a set of $f$ curious nodes sampled at random. Let $\alpha$ be as in \lemmaref{lemma:adversarial-density}. Since $\mathcal G$ is a near-Ramanujan family, by \lemmaref{lemma:alpha-nr} we have $\alpha \in 1 - \Omega_n(1)$. Also, by \lemmaref{lemma:adversarial-density}, we have $\alpha_F \le \alpha$ with high probability. Let $F_\star \subset V$ be an arbitrary set of size $f$ for which $\alpha_{F_\star} \le \alpha$ holds true. Since $\mathcal G$ is near-Ramanujan with $d \in n^{\Omega_n(1)}$, we have 
\begin{equation}
    \label{eq:lambda-nr-explosive}
    \lambda \in O_n(d^{-1/2}) \subseteq n^{-\Omega_n(1)}.
\end{equation} 
Let $T = \left\lceil\log_{\frac{\lambda}{1 - \alpha_{F_\star}}}\left(\frac{1 - \alpha_{F_\star}}{4(n-f)}\right)\right\rceil$ be as in \lemmaref{lemma:Q-delocalization} for set $F_\star$. Recall that $\alpha_{F_\star} \le \alpha \in 1 - \Omega_n(1)$. Combining this with \equationref{eq:lambda-nr-explosive}, we have $\frac{\lambda}{1 - \alpha_F} \in n^{-\Omega_n(1)}$. Since also $\frac{1 - \alpha_F}{4(n-f)} \in n^{-O_n(1)}$, we have 
\begin{equation}
    \label{eq:t-is-o1-explosive}
    T \in O_n(1).
\end{equation}
By \lemmaref{lemma:Q-delocalization}, we have for every $v \in V\setminus F_\star$, 
\[
\frac{(1-\alpha_{F_\star})^{(T+1)/2}}{\sqrt{2(n-f)}}\le \eigen_{1v} \le \frac{\sqrt{2}(1-\alpha_{F_\star})^{-(T+1)/2}}{\sqrt{(n-f)}}.
\]
Then, for every $v \in V\setminus F_\star$, we have
\[
\norm{\vec\eigen_1}_1 = \sum_{v \in V\setminus F_\star} \left|\eigen_{1v}\right| \ge (n-f) \frac{(1-\alpha_{F_\star})^{(T+1)/2}}{\sqrt{2(n-f)}} = \frac{(1-\alpha_{F_\star})^{(T+1)/2}\sqrt{n-f}}{\sqrt{2}}.
\]
This implies that for every $v \in V\setminus F_\star$
\begin{align}
\frac{\eigen_{1v}}{\norm{\vec\eigen_1}_1} &\le \frac{\frac{\sqrt{2}(1-\alpha_{F_\star})^{-(T+1)/2}}{\sqrt{(n-f)}}}{\frac{(1-\alpha_{F_\star})^{(T+1)/2}\sqrt{n-f}}{\sqrt{2}}} = \frac{2(1-\alpha_{F_\star})^{-(T+1)}}{n-f}.\notag
\intertext{By \equationref{eq:t-is-o1-explosive}, we have $T \in O_n(1)$, and also $\alpha_{F_\star} \le \alpha \in 1 - \Omega_n(1)$, therefore, }
\frac{\eigen_{1v}}{\norm{\vec\eigen_1}_1} &\in O_n\left(\frac{1}{n-f}\right).\notag
\intertext{Since $f/n\in 1-\Omega_n(1)$, we have $n-f \in \Theta_n(n)$, hence}
\frac{\eigen_{1v}}{\norm{\vec\eigen_1}_1} &\in O_n\left(\frac{1}{n}\right).
\label{eq:eigen-over-1-norm-explosive}
\end{align}
Since $G$ is connected and $F_\star \not = V$, $N(F_\star)\setminus F_\star$ has at least one node. Let $v_\star \in N(F_\star)\setminus F_\star$ be an arbitrary non-curious node in a neighborhood of $F$. Then, by \lemmaref{lemma:passage-explosive-ub}, there exists $u_\star \in V\setminus F_\star$ such that
\begin{align*}
    \passagedandelion{u_\star}{v_\star} &\le \frac{\eigen_{1v_\star}}{\norm{\vec\eigen_1}_1}\cdot \frac{1}{(f/n + \rho(1 - f/n))(1-\lambda)}.
    \intertext{By \equationref{eq:eigen-over-1-norm-explosive}, we get}
    \passagedandelion{u_\star}{v_\star}&\in O_n\left(\frac{1}{n}\cdot \frac{1}{(f/n + \rho(1 - f/n))(1-\lambda) } \right).
    \intertext{Hence,}
    \passagedandelion{u_\star}{v_\star} &\in O_n\left(\frac{1}{(f + \rho(n - f))(1-\lambda) } \right).
    \intertext{Recall that $\mathcal G$ is near-Ramanujan. Hence,}
    \passagedandelion{u_\star}{v_\star} &\in O_n\left(\frac{1}{f + \rho(n - f)} \right).
\end{align*}
Then, by \lemmaref{lemma:necessary-explosive-passage}, we have
\begin{align*}
\divinfty{\seqadv{v_\star}}{\seqadv{u_\star}} \ge \ln\left(\passage{u_\star}{v_\star}^{-1}\right) \in \ln\left(f + \rho(n - f)\right) + \Omega_n(1),
\end{align*}
which yields
\begin{align}
\label{eq:divinfty-lower-dandelion}
\max_{v,u\in V\setminus F_\star}\divinfty{\seqadv{v}}{\seqadv{u}} \in \ln\left(f + \rho(n - f)\right) + \Omega_n(1).
\end{align}
Note that the above holds for arbitrary set of curious nodes $F_\star$ for which $\alpha_{F_\star} \le \alpha$ holds. By \lemmaref{lemma:adversarial-density}, $\alpha_{F_\star} \le \alpha$ holds true with high probability as $F_\star$ is selected uniformly at random. Hence, (\ref{eq:divinfty-lower-dandelion}) implies that for both worst-case and average-case adversaries we have 
\[
\varepsilon \in \ln\left(f + \rho(n - f)\right) + \Omega_n(1).
\]
\end{proof}

\subsubsection{Dissemination time of Dandelion (proof of \theoremref{thm:dandelion-speed})}

\begin{restatable}{theorem}{DandelionSpeed}
\label{thm:dandelion-speed}
Consider an undirected connected graph $G$ of size $n$ and diameter $D$. Let $\mathcal P$ be a $\rho$-Dandelion protocol with $\rho \le 1$ and $\rho \in \Omega_n(1/n)$. Then, regardless of the source node, the dissemination time of $\mathcal P$ is in order of $\Theta_n\left(1/\rho + D\right)$ rounds in expectation. 
\end{restatable}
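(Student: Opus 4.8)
The plan is to split an execution of $\rho$-Dandelion into its anonymity phase and its spreading phase and to bound the number of rounds contributed by each. Let $L$ be the (random) number of rounds during which the oracle keeps $\anonphase=1$; by construction $L$ is geometric with $\Pr[L\ge m]=(1-\rho)^m$, so $\E[L]=(1-\rho)/\rho\le 1/\rho$. During these first $L$ rounds exactly one node is active at a time, and the set of nodes that have received the gossip is precisely the trace of a (non-lazy) simple random walk started at the source $s$. When the spreading phase begins, the active set is a single node $v$ (with $v=s$ if $L=0$), and from then on the gossip is flooded by broadcast: by a straightforward induction on distance, a node at distance $d$ from $v$ has received the gossip after $d$ spreading rounds, and already-infected trace nodes can only accelerate this. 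Hence the spreading phase lasts at most $\mathrm{ecc}(v)\le D$ rounds, so the total number of rounds $T$ satisfies $T\le L+D$ almost surely, and $\E[T]\le\E[L]+D=O_n(1/\rho+D)$ for every source. This gives the upper bound.

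For the lower bound I would prove two separate bounds on $T$ and take their maximum. First, since in each round the infected set can only grow by neighbours of currently active nodes, after $t$ rounds the infected set lies in the ball of radius $t$ around $s$; full dissemination therefore forces $t\ge\mathrm{ecc}(s)\ge\mathrm{rad}(G)\ge D/2$, so $T\ge D/2$ deterministically and $\E[T]=\Omega_n(D)$. Second, during the anonymity phase at most one fresh node is infected per round, so dissemination cannot finish before round $\min(L,n-1)$, that is, $T\ge\min(L,n-1)$. Here the hypothesis $\rho\in\Omega_n(1/n)$ is essential: writing $\rho\ge c/n$ for a fixed $c>0$ gives $n-1=\Omega_n(1/\rho)$, and the tail identity $\Pr[L\ge m]=(1-\rho)^m$ gives $\Pr\bigl[L\ge\tfrac12\min(n-1,1/\rho)\bigr]=\Omega_n(1)$ once $\rho$ is below a fixed constant; hence $\E[T]\ge\E[\min(L,n-1)]=\Omega_n(1/\rho)$ in that regime, while for $\rho$ above that constant $1/\rho=O_n(1)$ and the bound $\E[T]=\Omega_n(D)$ already suffices. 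Combining, $\E[T]\ge\max\{D/2,\ \Omega_n(1/\rho)\}=\Omega_n(1/\rho+D)$, which together with the upper bound proves $\E[T]=\Theta_n(1/\rho+D)$.

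The geometric tail estimates and the "the infected ball grows by one per round" arguments are routine; the only delicate point, and the main obstacle, is the $\Omega_n(1/\rho)$ lower bound, which genuinely fails without $\rho\in\Omega_n(1/n)$: if $\rho$ were polynomially smaller than $1/n$, the walk in the anonymity phase would typically cover all of $G$ well before the phase ends, and the dissemination time would then be governed by the cover time of $G$ rather than by $1/\rho$. The proof must therefore use $\rho=\Omega_n(1/n)$ precisely to guarantee that $\min(L,n-1)$ is of order $1/\rho$, invoking nothing more than the trivial fact that a single-token walk needs at least $n-1$ rounds to visit all $n$ vertices.
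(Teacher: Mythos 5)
Your proposal is correct and follows essentially the same route as the paper: the upper bound via the geometric length of the anonymity phase plus at most $D$ broadcast rounds, and the lower bound by combining the radius bound $T\ge D/2$ with the observation that the anonymity phase infects at most one new node per round, truncating at $n-1$ and using the geometric tail together with $\rho\in\Omega_n(1/n)$ to get $\Omega_n(1/\rho)$. The only difference is cosmetic: you prove both lower bounds unconditionally and take their maximum, whereas the paper splits into the cases $1/\rho< D/4$ and $1/\rho\ge D/4$.
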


\begin{proof}
Let $\tau$ be the number of rounds it takes for the protocol to transition into a spreading phase. Then $\tau \sim \Geom(\rho)$, and hence $\E[\tau] = 1/\rho$.

\paragraph*{(i) Upper bound.} Note that broadcast spreads the gossip to all nodes in at most $D$ rounds. Since the protocol spends $1/\rho$ rounds in expectation in the anonymity phase, the upper bound of $1/\rho + D$ follows.

\paragraph*{(ii) Lower bound.} First, note that for any $v \in V$ there exists $u \in V$ at distance at least $D/2$ away from $v$. Indeed, if every vertex is at distance $< D/2$ away from $v$, then for any two $w_1$, $w_2$ we have $d(w_1, w_2) \le d(w_1, v) + d(v, w_2) < D$ by triangle inequality, i.e., the diameter of the graph is $<D$, which is a contradiction. With this in mind, consider the two following cases.

\paragraph*{(ii.1) Case $1/\rho < D/4$.} Let $s$ be the source, and let $u$ be the node at distance at least $D/2$ away from $s$. Note that, regardless of the phase of the execution, the protocol requires at least $D/2$ rounds to propagate the gossip from $s$ to $u$. 
Then, the expected dissemination time can be lower bounded by 
\[
D/2 \in \Omega_n(D) \subseteq \Omega_n(D + 1/\rho),
\]
where the last transition follows from the fact that $1/\rho < D/4$.

\paragraph*{(ii.2) Case $1/\rho \ge D/4$.} Note that, if have $\rho > 1/2$, the lower bound is trivial, since it amounts to asserting that dissemination time is in $\Omega_n(1)$. In the remaining, assume $\rho \le 1/2$. Recall that also $\rho \in \Omega_n(1/n)$. Define $t = \min\{1/\rho, n/2\}$. Since $\rho \in \Omega_n(1/n)$, we have $t = \Theta_n(1/\rho)$. Since also $\rho \le 1/2$, we have $\Pr[\tau \ge t] = (1-\rho)^{t} \in (1-\rho)^{\Theta_n(1/\rho)} \subseteq \Omega_n(1)$. Finally, note that if $\tau \le t \le n/2$, then after $t$ round of the protocol, there are nodes which have not received the gossip yet. Hence, the expected dissemination time is lower bounded by 
\[
\Pr[\tau \ge t] t \in \Omega_n(t) \subseteq \Omega_n(1/\rho) \subseteq \Omega_n(1/\rho + D),
\]
where the last transition follows from the fact that $1/\rho \ge D/4$. This concludes the proof of the lower bound.

\end{proof}

\section{Extension to muting push}
\label{sec:muting-privacy}

\subsection{Protocol description}

\label{appendix:muting-def}

Muting push is a protocol introduced in~\cite{who_started_this_rumor}, which constitutes a parameterized version of standard push protocol~\cite{pittel1987spreading}. This is a gossip protocol where at every round $t \geq \starttime$, each node active node $u \in \activeset_t$ pushes the gossip $g$ to a neighbour $v \in N(u)$ chosen uniformly at random (i.e., $(u \to v)$ is added to $\commset_t$). Additionally, $u$ samples a token from a Bernoulli distribution with parameter $\rho$. If the token equals one, $u$ stays active for the next round (i.e., $(u \to u)$ is added to $\commset_t$). If the token equals zero, $u$ deactivates (``mutes''). Note that, in our setting, $\rho$ corresponds to the probability of a node \emph{not} muting and staying active for a subsequent round.
Note that, when $\rho = 0$, this protocol degenerates into a classic random walk on the graph. On the other hand, $\rho = 1$ matches the standard push gossip~\cite{pittel1987spreading}.

\subsection{Privacy guarantees}

Now, we extend our positive results in \theoremref{thm:main} and \corollaryref{corollary:privacy-nr} to muting push~\cite{who_started_this_rumor}. We will prove the following results.

\begin{restatable}{theorem}{MainThmMuting}
\label{thm:main-muting}
Consider an undirected connected $(d,\lambda)$-expander graph $G=(V,E)$ of size $n$, let $f$ be the number of curious nodes, and let $\mathcal P$ be a $\rho$-muting push with $\rho < 1$. Set $\alpha = f/d$ (resp. set $\alpha$ as in \lemmaref{lemma:adversarial-density}). If $\lambda < 1 - \alpha$, then $\mathcal P$ satisfies $\varepsilon$-DP against a worst-case adversary (resp. an average-case adversary) with 
\[
\varepsilon = \ln(\rho(n-f) + f) - 2\Tilde{T}\ln(1 - \alpha) - \Tilde{T}\ln(1 - \rho) - \ln(1-\lambda) + \ln(24),
\]
and $\Tilde{T}  = \left\lceil\log_{\frac{\lambda}{1 - \alpha}}\left(\frac{1 - \alpha}{4(n-f)}\right)\right\rceil \left(\log_{\frac{\lambda}{1 - \alpha}}(1 - \alpha) + 2\right) + 2.$
\end{restatable}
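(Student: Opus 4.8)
The plan is to exploit the fact that the proof of \theoremref{thm:main} is modular: Step~I (the reduction of a gossip protocol to a random walk with probabilistic die out, carried out for cobra walks in \lemmaref{lemma:sufficient-varepsilon-0} and for Dandelion in \lemmaref{lemma:dandelion-reduction}) is protocol-specific, whereas Step~II (the spectral analysis of that walk, culminating in \lemmaref{lemma:bound-d-infty-joint}) is completely protocol-agnostic — it uses only the absorbing Markov chain \eqref{eq:block-form-absorbing-markov} and the expander structure of $G$. Consequently, to prove \theoremref{thm:main-muting} it suffices to establish, for $\rho$-muting push, the analogue of \lemmaref{lemma:sufficient-varepsilon-0}: for every $v,u \in \honest$,
\[
\divinfty{\seqadv{v}}{\seqadv{u}} \le \divinfty{W^{(v)}}{W^{(u)}} = \max_{w \in \honest} \ln \frac{(\vec I_{n-f} - (1-\rho)\vec Q)^{-1}_{vw}}{(\vec I_{n-f} - (1-\rho)\vec Q)^{-1}_{uw}},
\]
where $W^{(s)}$ is the absorbing state of the chain \eqref{eq:block-form-absorbing-markov} with exactly the same $\vec Q = \Hat{\vec A}[\honest]$ and $\vec R$ as before. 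Once this is in hand, \theoremref{thm:main-muting} follows verbatim from the proof of \theoremref{thm:main} by substituting this inequality into \lemmaref{lemma:bound-d-infty-joint} and bounding $\alpha_F$ by $f/d$ (worst case) or by \lemmaref{lemma:adversarial-density} (average case).

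For the reduction I would reuse the notion of a \emph{safe round} with the exact same indicator as in \eqref{eq:nice-cobra}, namely $\nice_t^{(s)} = \ind_{\{\exists v,u \in \honest \colon \activeset_t^{(s)} = \{u\} \wedge \commset_t^{(s)} = \{(u \to v)\}\}}$: a round is safe iff a single non-curious node is active and it pushes the gossip to a single non-curious neighbour \emph{and mutes}, so that $\commset_t^{(s)}$ is a singleton with no self-loop. The only computation that differs from the cobra case is the transition law of safe rounds. On a $d$-regular graph, conditioned on $\activeset_t^{(s)} = \{u\}$ with $u \in \honest$, the active node mutes with probability $1-\rho$ and, independently, pushes to a fixed neighbour $v$ with probability $1/d$, so $\Pr[\activeset_{t+1}^{(s)} = \{v\} \wedge \nice_t^{(s)} = 1 \mid \activeset_t^{(s)} = \{u\}] = (1-\rho)/d = (1-\rho)\vec Q_{vu}$ (and $0$ when $v \notin N(u)$), while $\Pr[\nice_t^{(s)} = 0 \mid \activeset_t^{(s)} = \{u\}] = 1 - (1-\rho)\degset{\honest}{u}/d = \rho + (1-\rho)\degset{F}{u}/d = \rho(\vec I_{n-f})_{uu} + (1-\rho)\vec R_{uu}$. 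These match the transient-to-transient and transient-to-absorbing probabilities of \eqref{eq:block-form-absorbing-markov}. Letting $\tau^{(s)} = \min\{t \colon \nice_t^{(s)} = 0\}$ be the first round at which the active set grows (a node both pushes and stays active) or a curious node is contacted, we obtain $\activeset_{\tau^{(s)}}^{(s)} = \{W^{(s)}\}$ with $W^{(s)}$ an absorbing state of the chain.

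The remaining argument is identical to part~(i) of the proof of \lemmaref{lemma:sufficient-varepsilon-0}. The sequence $\round_t^{(s)} = (\activeset_t^{(s)}, \commset_t^{(s)})$ is Markovian (muting push has no global oracle, so this is even simpler than Dandelion), $\tau^{(s)}$ is a stopping time, and no curious node is contacted before round $\tau^{(s)}$, hence $\seqadv{s}$ is a deterministic function of $\{\round_t^{(s)}\}_{t \ge \tau^{(s)}}$. Three applications of the Data Processing Inequality — via the Strong Markov Property for the first, then collapsing $\round_{\tau^{(s)}}^{(s)}$ to $\activeset_{\tau^{(s)}}^{(s)} = \{W^{(s)}\}$, then mapping $\{\round_t^{(s)}\}_{t \ge \tau^{(s)}}$ onto $\seqadv{s}$ — give $\divinfty{\seqadv{v}}{\seqadv{u}} \le \divinfty{W^{(v)}}{W^{(u)}}$, and \lemmaref{lemma:absorbtion-prob} together with the symmetry of $(\vec I_{n-f} - (1-\rho)\vec Q)^{-1}$ yields the claimed closed form.

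I expect this to be essentially routine; the only point requiring care is confirming that the first unsafe round truly coincides with absorption, i.e.\ that ``pushing while staying active'' is the correct muting-push counterpart of the cobra walk's branching event and produces the same bottom-left block $\rho\vec I_{n-f} + (1-\rho)\vec R$ of $\vec P$. One might be tempted to also declare a round safe when a single active node stays active and its push lands on a non-curious neighbour, but this would break the singleton structure of the active set and the Markovian reduction; defining $\nice_t$ through $\commset_t$ being a singleton with no self-loop rules this out and keeps the reduction clean. All the genuinely hard work — the spectral domination of $\vec Q$ by its first eigenvalue (\lemmaref{lemma:spectral-Q}), the delocalization of its first eigenvector (\lemmaref{lemma:Q-delocalization}), and the resulting bounds on $(\vec I_{n-f} - (1-\rho)\vec Q)^{-1}$ (\lemmaref{lemma:Q-powers-bound}) — is reused without any change.
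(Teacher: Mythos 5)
Your proposal is correct and follows essentially the same route as the paper's own proof: the paper's \appendixref{sec:muting-privacy} establishes exactly the reduction you describe (same safe-round indicator requiring a singleton $\commset_t$ with no self-loop, same transition computations matching the chain in \eqref{eq:block-form-absorbing-markov}, same stopping time and Data Processing Inequality argument), and then invokes \lemmaref{lemma:bound-d-infty-joint} with the worst-case or average-case bound on $\alpha_F$ unchanged. Your remark that ``push while staying active'' must count as unsafe is precisely how the paper handles it as well.
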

\begin{restatable}{corollary}{PrivacyNRMuting}
    \label{cor:muting}
    Let $\mathcal P$ be a $\rho$-muting push and let $\mathcal G$ be a family of $d$-regular near-Ramanujan graphs with $n$ nodes and $d \in n^{\Omega_n(1)}$. Suppose $f/d \in 1 - \Omega_n(1)$ (resp. $f/n \in 1 - \Omega_n(1)$). Then, for any $G \in \mathcal G$ of large enough size $n$ and any $\rho \in 1 - \Omega_n(1)$, $\mathcal P$ satisfies $\varepsilon$-DP against a worst-case adversary (resp. an average-case adversary) for some 
    \[
    \varepsilon \in \ln\left(\rho(n-f) + f\right) + O_n(1).
    \]
\end{restatable}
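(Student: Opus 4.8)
The plan is to derive this corollary from \theoremref{thm:main-muting} in exactly the same way that \corollaryref{corollary:privacy-nr} is obtained from \theoremref{thm:main}: the explicit expression for $\varepsilon$ is already in hand, and all that remains is to track which of its additive terms collapse to $O_n(1)$ under the near-Ramanujan and density hypotheses. Concretely, \theoremref{thm:main-muting} gives that $\mathcal P$ satisfies $\varepsilon$-DP with
\[
\varepsilon = \ln(\rho(n-f) + f) - 2\Tilde{T}\ln(1 - \alpha) - \Tilde{T}\ln(1 - \rho) - \ln(1-\lambda) + \ln(24),
\]
where $\alpha = f/d$ against the worst-case adversary and $\alpha$ is as in \lemmaref{lemma:adversarial-density} against the average-case one, and $\Tilde T$ is the usual logarithmic quantity.

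First I would observe that $\alpha \in 1 - \Omega_n(1)$ in both regimes: against a worst-case adversary this is precisely the hypothesis $f/d \in 1 - \Omega_n(1)$, while against an average-case adversary it follows from \lemmaref{lemma:alpha-nr} applied with $f/n \in 1 - \Omega_n(1)$ and $d \in n^{\Omega_n(1)}$. Next, \lemmaref{lemma:t-tilde-nr} yields $\Tilde T \in O_n(1)$ for both adversary types (its proof uses only $\lambda \in n^{-\Omega_n(1)}$, which holds since $\mathcal G$ is near-Ramanujan with $d \in n^{\Omega_n(1)}$, together with $\alpha \in 1 - \Omega_n(1)$). Combining these, $-2\Tilde T \ln(1-\alpha) \in O_n(1)$. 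The hypothesis $\rho \in 1 - \Omega_n(1)$ gives $1 - \rho \in \Omega_n(1)$, hence $-\Tilde T \ln(1-\rho) \in O_n(1)$. Finally $\lambda \in O_n(d^{-1/2}) \subseteq n^{-\Omega_n(1)}$, so $-\ln(1-\lambda) \in O_n(1)$, and $\ln(24)$ is a constant; summing yields $\varepsilon \in \ln(\rho(n-f)+f) + O_n(1)$. For the average case one additionally invokes \equationref{eq:average-guarantee-probability}, exactly as in the proof of \theoremref{thm:main}, so that the bound on $\alpha_F$ — and therefore on $\varepsilon$ — holds with high probability. This is all valid once $n$ is large enough for \lemmaref{lemma:alpha-nr} and \lemmaref{lemma:t-tilde-nr} to apply.

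At the level of this corollary there is no genuinely hard step — it is a bookkeeping refinement of \theoremref{thm:main-muting}. The only non-routine ingredient sits one level up, in \theoremref{thm:main-muting} itself, which rests on extending the Step I reduction of \lemmaref{lemma:sufficient-varepsilon-0} to $\rho$-muting push: one must verify that the appropriate ``safe round'' indicator — exactly one active non-curious node which pushes to a single non-curious neighbour and then mutes — again identifies $W^{(s)}$ with the death site of a random walk killed at each step with probability $\rho$ or upon hitting $F$, so that the block-form absorbing chain of \equationref{eq:block-form-absorbing-markov} is recovered verbatim and the universal Step II analysis (Lemmas \ref{lemma:spectral-Q}--\ref{lemma:bound-d-infty-joint}) applies unchanged. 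With that reduction assumed, as it is here, the corollary follows as above.
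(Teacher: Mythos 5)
Your proposal is correct and follows essentially the same route as the paper: the paper proves the corollary by establishing the safe-round reduction of \lemmaref{lemma:sufficient-varepsilon-0} for $\rho$-muting push and then reusing verbatim the argument of \corollaryref{corollary:privacy-nr}, i.e., \theoremref{thm:main-muting} combined with \lemmaref{lemma:alpha-nr} and \lemmaref{lemma:t-tilde-nr} to show that all terms besides $\ln(\rho(n-f)+f)$ are $O_n(1)$. Your bookkeeping of $\alpha \in 1-\Omega_n(1)$, $\Tilde T \in O_n(1)$, $1-\rho \in \Omega_n(1)$, and $\lambda \in n^{-\Omega_n(1)}$ matches the paper's proof exactly.
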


To establish the results above, it is sufficient to show that a reduction in \lemmaref{lemma:sufficient-varepsilon-0} applies to $\rho$-muting push. The rest of the proof of \theoremref{thm:main-muting} and \corollaryref{cor:muting} will be exactly the same as for \theoremref{thm:main} and \corollaryref{corollary:privacy-nr}. Formally, we have the following.
\begin{restatable}{lemma}{MutingPushReduction}
    Conside $\rho$-muting push on a $d$-regular graph $G = (V,E)$. Let $F \subset V$ be a set of curious nodes such that the subgraph of $G$ induced by $\honest$ is connected. Let $\vec Q = \Hat{\vec A}[\honest]$ and, for $s \in \honest$, let $W^{(s)}$ be the absorbing state of the Markov chain as in \equationref{eq:block-form-absorbing-markov}. Then, for any  $v, u \in \honest$, the following holds true
    \[ \divinfty{\seqadv{v}}{\seqadv{u}} \le \divinfty{W^{(v)}}{W^{(u)}} = \max_{w\in \honest} \ln\frac{(\vec I_{n-f} - (1 - \rho)\vec Q)^{-1}_{vw}}{(\vec I_{n-f} - (1 - \rho)\vec Q)^{-1}_{uw}}.\]
\end{restatable}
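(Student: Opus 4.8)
The plan is to closely follow the two-part structure of the proofs of \lemmaref{lemma:sufficient-varepsilon-0} and \lemmaref{lemma:dandelion-reduction}. The right-hand equality is again an immediate consequence of \lemmaref{lemma:absorbtion-prob} applied to the absorbing Markov chain \eqref{eq:block-form-absorbing-markov}, with $\vec Q = \Hat{\vec A}[\honest]$ and $\vec R$ the diagonal matrix $R_{ww} = \degset{F}{w}/d$ exactly as there; so the entire content lies in the left-hand inequality $\divinfty{\seqadv{v}}{\seqadv{u}} \le \divinfty{W^{(v)}}{W^{(u)}}$, which I obtain by reducing $\rho$-muting push to a random walk with probabilistic die out.

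First I would reuse the notion of a safe round, with the \emph{same} indicator as in \eqref{eq:nice-cobra}, namely $\nice_t^{(s)} = \ind_{\{\exists v,u\in\honest \,:\, \activeset_t^{(s)} = \{u\} \land \commset_t^{(s)} = \{(u\to v)\}\}}$. For $\rho$-muting push, $\nice_t^{(s)} = 1$ means that round $t$ has a single active non-curious node $u$ that pushes the gossip to a single non-curious neighbour $v$ \emph{and mutes}; in particular $u$ does not remain active, since otherwise $(u\to u)$ would also lie in $\commset_t$ and the communication multiset would have size two. Then I would check the two identities that identify safe rounds with \eqref{eq:block-form-absorbing-markov}. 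Using $d$-regularity: for $v,u\in\honest$, $\Pr[\activeset_{t+1}^{(s)} = \{v\} \land \nice_t^{(s)} = 1 \mid \activeset_t^{(s)} = \{u\}] = \frac{1-\rho}{d}\ind_{\{u,v\}\in E} = (1-\rho)\vec Q_{vu}$, because $u$ mutes with probability $1-\rho$ and then pushes to each fixed neighbour with probability $1/d$; and $\Pr[\nice_t^{(s)} = 0 \mid \activeset_t^{(s)} = \{u\}] = 1 - (1-\rho)\frac{\deg_{\honest}(u)}{d} = \rho + (1-\rho)\frac{\deg_F(u)}{d} = \rho(\vec I_{n-f})_{uu} + (1-\rho)\vec R_{uu}$, which is precisely the probability of being absorbed at $\sink(u)$ from $u$ in \eqref{eq:block-form-absorbing-markov}. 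Defining $\tau^{(s)}$ as the first unsafe round, these two identities show by induction that $\activeset_{\tau^{(s)}}^{(s)} = \{W^{(s)}\}$ with $W^{(s)}$ an absorbing state of \eqref{eq:block-form-absorbing-markov}.

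Second, I would run the identical data-processing chain as in \lemmaref{lemma:sufficient-varepsilon-0}. The sequence of rounds $\round_t^{(s)} = (\activeset_t^{(s)}, \commset_t^{(s)})$ is Markovian by the definition of $\rho$-muting push, and $\tau^{(s)}$ is a stopping time of this chain, so by the strong Markov property the law of $\{\round_t^{(s)}\}_{t\ge\tau^{(s)}}$ depends only on $\round_{\tau^{(s)}}^{(s)}$, which in turn depends only on $\activeset_{\tau^{(s)}}^{(s)} = \{W^{(s)}\}$ (the communications of round $\tau^{(s)}$ being a randomized function of the active set). Since no curious node is contacted strictly before $\tau^{(s)}$, $\seqadv{s}$ is a deterministic function of $\{\round_t^{(s)}\}_{t\ge\tau^{(s)}}$, and chaining the data processing inequality (Theorem 14 of~\cite{dataprocessing}) through these steps yields $\divinfty{\seqadv{v}}{\seqadv{u}} \le \divinfty{W^{(v)}}{W^{(u)}}$.

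Muting (rather than branching) now plays exactly the structural role that made the top-left block of \eqref{eq:block-form-absorbing-markov} equal to $(1-\rho)\vec Q$, so there is no genuinely new difficulty; the only point needing care is the bookkeeping of the communication multiset when a node stays active, so that the event ``$u$ pushes to $v$ and stays active'' is correctly classified as unsafe (it contributes $(u\to u)$ in addition to $(u\to v)$, hence $|\commset_t| = 2$). This is what produces the extra $\rho$ in the unsafe probability $\rho + (1-\rho)\deg_F(u)/d$ and hence the factor $(1-\rho)$ in front of $\vec Q$. Once the reduction is established, \theoremref{thm:main-muting} and \corollaryref{cor:muting} follow from \lemmaref{lemma:bound-d-infty-joint} and \corollaryref{corollary:privacy-nr} word for word, as indicated in the excerpt.
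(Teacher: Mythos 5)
Your proposal is correct and follows essentially the same route as the paper's proof: the same safe-round indicator (with the same observation that $(u\to u)\in\commset_t$ when the node stays active, so a safe round forces muting), the same two transition identities matching \eqref{eq:block-form-absorbing-markov}, the same stopping-time/strong-Markov/data-processing chain, and the same appeal to \lemmaref{lemma:absorbtion-prob} for the right-hand equality. No gaps to flag.
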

\begin{proof}
Let $s\in \honest$ be arbitrary. Note that the right-hand side equality follows from the proof of (\ref{lemma:sufficient-varepsilon-0}). To prove the left-hand side inequality, we will first introduce a notion of ``safe'' rounds, in a similar way as for the proof of \lemmaref{lemma:sufficient-varepsilon-0}. We call a round of the execution safe if the set of active nodes then has size $1$, and the only active node contacts a non-curious node during the current round and then deactivates (``mutes''). We also introduce an indicator variable $\nice_t^{(s)}$ corresponding to a round $t$ being safe. Formally, 
\begin{equation}
\label{eq:nice-muting}
    \nice_t^{(s)} = \ind_{\left\{\exists v,u \in \honest \colon \activeset_t^{(s)} = \{u\}\land \commset_t^{(s)} = \{(u \to v)\}\right\}}.
\end{equation}
Note that since $\commset_t^{(s)}$ does not contain $(u \to u)$ in the event under the indicator in (\ref{eq:nice-muting}), $u$ will deactivate after round $t$ (i.e., $u \not \in \activeset_{t+1}^{(s)}$) according to the formalism we defined in \sectionref{sec:preliminaries}.

\paragraph*{(i) Relating safe rounds to an absorbing Markov chain in (\ref{eq:block-form-absorbing-markov}).}

If $\activeset_t^{(s)} = \{u\}$ for some $u \in \honest$, then $u$ will deactivate after round $t$ probability $1-\rho$. Additionally, since we consider $d$-regular graphs, $u$ will contact a fixed node $v \in N(u)$ with probability $1/d$. Hence, for any $u, v \in \honest$ such that $v \in N(u)$ we have
\begin{align*}
    \Pr\left[\activeset_{t+1}^{(s)} = \{v\} \land \nice_t^{(s)} = 1 \mid \activeset_t^{(s)} = \{u\} \right] &= \frac{1 - \rho}{d} = (1-\rho) \vec Q_{vu}
\end{align*}
Note that for any $t$ and any $u, v \in \honest$ such that $v \notin N(u)$, we have
\begin{align*}
\Pr[\activeset_{t+1}^{(s)} = \{v\} \land \nice_t^{(s)} = 1 \mid \activeset_t^{(s)} = \{u\}]= 0 = (1 - \rho)\vec Q_{vu}.
\end{align*}
Then, for any $v,u \in \honest$, we get
\begin{align}
\label{eq:transient-flag-up-muting}
\Pr[\activeset_{t+1}^{(s)} = \{v\} \land \nice_t^{(s)} = 1 \mid \activeset_t^{(s)} = \{u\}] = (1 - \rho)\vec Q_{vu},
\end{align}
which is equal to the transition probability between two transient states $v$ and $u$ of (\ref{eq:block-form-absorbing-markov}). Also, for any $u$, we have 
\begin{align}
\Pr[\nice_t^{(s)} = 0 \mid \activeset_t^{(s)} = \{u\}] &= 1 - (1-\rho)\frac{\deg_{\honest}(u)}{d} \notag\\
                                               &= \rho + \frac{\deg_F(u)}{d} (1-\rho) \notag\\
                                               &= \rho(\vec I_{n-f})_{uu} + (1 - \rho)\vec R_{uu} \label{eq:absorbing-flag-down-muting},
\end{align}
which is equal to the probability of being absorbed at $\sink(u)$ from state $u$ in (\ref{eq:block-form-absorbing-markov}). Let $\tau^{(s)}$ be the first unsafe round, i.e., 
\begin{equation}
    \label{eq:tau-s-muting}
    \tau^{(s)} = \min\{t \colon \nice_t^{(s)} = 0\}.
\end{equation}
Then, $\tau^{(s)}$ is the first round in which the active node either does not mute (i.e., $(u\to u)\in \commset_t$ for $u \in \activeset_t^{(s)}$) or contacts a curious node. Then, from (\ref{eq:transient-flag-up-muting}) and (\ref{eq:absorbing-flag-down-muting}), at time $\tau^{(s)}$, we have $\activeset_{\tau^{(s)}}^{(s)} = \{W^{(s)}\}$, where $W^{(s)}$ is an absorbing state of chain defined in (\ref{eq:block-form-absorbing-markov}).

\paragraph*{(ii) Applying the Data Processing inequality.}
Let $\round_t^{(s)} = (\activeset_t^{(s)}, \commset_t^{(s)})$ for every $t$, i.e., $\round_t^{(s)}$ describes a round $t$ of the execution. Consider a sequence $\{\round_t^{(s)}\}_{t \ge \starttime}$. Note that $\{\round_t^{(s)}\}_{t \ge \starttime}$ is \emph{Markovian} by definition of muting push in \appendixref{appendix:muting-def}.
Then $\tau^{(s)}$ is a \emph{stopping time} for a Markov chain $\{\round_t^{(s)}\}_{t\ge \starttime}$ by definition of $\tau^{(s)}$ in (\ref{eq:tau-s-muting}) (for definition of a stopping time, see Section 6.2 of~\cite{LevinPeresWilmer2006}).

Then, by Strong Markov Property (Proposition A.19 of~\cite{LevinPeresWilmer2006}), the law of $\{\round_t^{(s)}\}_{t \ge \tau^{(s)}}$ only depends on $\round_{\tau^{(s)}}^{(s)}$. Hence, by the Data Processing Inequality (Theorem 14 of~\cite{dataprocessing}), we have
\begin{align}
\label{eq:data-processing-one-muting}
\divinfty{\{\round_t^{(v)}\}_{t \ge \tau^{(v)}}}{\{\round_t^{(u)}\}_{t \ge \tau^{(u)}}} \le \divinfty{\round_{\tau^{(v)}}^{(v)}}{\round_{\tau^{(u)}}^{(u)}}.
\end{align}
By definition of $\tau^{(s)}$ in (\ref{eq:tau-s-muting}), for all $t \le \tau^{(s)}$, we have $|\activeset_t^{(s)}| = 1$. Also, note that the random variable $\commset_{\tau^{(s)}}^{(s)}$ characterizes a round of muting push in which it has active set $\activeset_{\tau^{(s)}}^{(s)}$ and is conditioned on the only active node either not muting for next round (i.e., $(u\to u)\in \commset_t$ for $u \in \activeset_t^{(s)}$) or contacting a curious node. Then, since muting push is Markovian, communications $\commset_{\tau^{(s)}}^{(s)}$ that happen in round $\tau^{(s)}$ only depend on $\activeset_{\tau^{(s)}}^{(s)}$. Hence, by the Data Processing Inequality (Theorem 14 of~\cite{dataprocessing}), we have $\divinfty{\round_{\tau^{(v)}}^{(v)}}{\round_{\tau^{(u)}}^{(u)}} \le \divinfty{\activeset_{\tau^{(v)}}^{(v)}}{\activeset_{\tau^{(u)}}^{(u)}}$. Then
\begin{align}
\label{eq:data-processing-two-muting}
\divinfty{\{\round_t^{(v)}\}_{t \ge \tau^{(v)}}}{\{\round_t^{(u)}\}_{t \ge \tau^{(u)}}} \le \divinfty{\activeset_{\tau^{(v)}}^{(v)}}{\activeset_{\tau^{(u)}}^{(u)}}.
\end{align}
Finally, recall that we showed in (i) that $\activeset_{\tau^{(s)}}^{(s)} = \{W^{(s)}\}$, where $W^{(s)}$ is an absorbing state of the Markov chain (\ref{eq:block-form-absorbing-markov}). Hence,
\begin{align}
\label{eq:data-processing-three-muting}
\divinfty{\{\round_t^{(v)}\}_{t \ge \tau^{(v)}}}{\{\round_t^{(u)}\}_{t \ge \tau^{(u)}}} \le \divinfty{W^{(v)}}{W^{(u)}}. 
\end{align}

    Note also that since no curious node is contacted before time $\tau^{(s)}$, $\seqadv{s}$ can be obtained from $\{\round_t^{(s)}\}_{t \ge \tau^{(s)}} = \{(\commset_{t}^{(s)}, \activeset_{t}^{(s)})\}_{t \ge \tau^{(s)}}$ via a deterministic mapping from definition of $\seqadv{s}$ in \sectionref{sec:source-anonymity-with-DP}. Then, applying the  Data Processing Inequality (Theorem 14 of~\cite{dataprocessing}) again, for any $v,u \in \honest$ we have 
    \begin{equation}
    \label{eq:data-process-seqadv-xt-muting}
     \divinfty{\seqadv{v}}{\seqadv{u}} \le \divinfty{\{\round_t^{(v)}\}_{t \ge \tau^{(v)}}}{\{\round_t^{(u)}\}_{t \ge \tau^{(u)}}}.
    \end{equation}
    By combining the above with (\ref{eq:data-processing-three-muting}), we get
    \begin{equation}
    \label{eq:data-process-seqadv-w-muting}
         \divinfty{\seqadv{v}}{\seqadv{u}} \leq \divinfty{W^{(v)}}{W^{(u)}},
    \end{equation}
    which concludes the proof.
\end{proof}

\section{Semantic for source anonymity: proofs of \sectionref{subsection:DPsemantics}}

\label{appendix:bayesian}
\label{appendix:semantic}

\subsection{Proof of (\ref{eq:map-success})}

\begin{proof}[Proof of (\ref{eq:map-success})]
    First, note that by definition of $\varepsilon$-DP, we have for any $v \in \honest$.
    \begin{equation*}
        \frac{\Pr\left[\seqadv{\smap} \in \sigmaevent\right]}{\Pr\left[\seqadv{v} \in \sigmaevent\right]} \le \exp\left(\divinfty{\seqadv{\smap}}{\seqadv{v}}\right) \le \exp(\varepsilon).
    \end{equation*}
    Then, we have
    \begin{equation}
        \label{eq:smap-ratio-bound}
        \frac{\Pr\left[\seqadv{\smap} \in \sigmaevent\right]}{\min_{v\in\honest}\Pr\left[\seqadv{v} \in \sigmaevent\right]} \le \exp(\varepsilon).
    \end{equation}
    By Bayes law, we get 
    \begin{align*}
    \Pr_{s \sim \prior}\left[\smap = s \mid \seqadv{s} \in \sigmaevent \right] &= \frac{\Pr\left[\seqadv{\smap} \in \sigmaevent\right] p(\smap)}{\sum_{v \in \honest} \Pr\left[\seqadv{v} \in \sigmaevent\right] p(v)}. \\
    &\le \frac{\Pr\left[\seqadv{\smap} \in \sigmaevent\right] p(\smap)}{\min_{v\in \honest} \Pr\left[\seqadv{v} \in \sigmaevent\right] \sum_{v \in \honest} p(v)}. \\
    \intertext{Using (\ref{eq:smap-ratio-bound}), we get}
    \Pr_{s \sim \prior}\left[\smap = s \mid \seqadv{s} \in \sigmaevent \right] &\le \frac{\exp(\varepsilon) p(\smap)}{\sum_{v \in \honest} p(v)} \\
    &\le \exp(\varepsilon) p(\smap),
    \end{align*}
    which concludes the proof.
\end{proof}

\subsection{Proof of (\ref{eq:mle-success})}

\begin{proof}[Proof of (\ref{eq:mle-success})]
    Plugging $p = \uniform{\honest}$ into (\ref{eq:map-success}), we have
    \[
    \Pr_{s \sim \uniform{\honest}} \left[\smle = s \mid \seqadv{s} \in \sigmaevent \right] \le \exp(\varepsilon) p(\smle) = \frac{\exp(\varepsilon)}{n - f},
    \]
    as desired.
\end{proof}

\end{document}